\documentclass[11pt]{article}

\title{Fully Dynamic $k$-Median with  
Near-Optimal \\  Update Time and Recourse}

\author{
Sayan Bhattacharya 
\\University of Warwick 
\\\texttt{s.bhattacharya@warwick.ac.uk}  
\and 
Mart\'{i}n Costa 
\\University of Warwick
\\\texttt{martin.costa@warwick.ac.uk}  
\and
Ermiya Farokhnejad
\\University of Warwick
\\\texttt{ermiya.farokhnejad@warwick.ac.uk} 
}

\date{}

\usepackage[utf8]{inputenc}

\usepackage[ruled,vlined,linesnumbered,noend]{algorithm2e}
\SetKwInOut{Parameter}{Parameters}

\usepackage{amsthm}
\usepackage{amsmath}
\usepackage{amssymb}
\usepackage{enumerate}
\usepackage{graphicx}
\usepackage[margin=1in]{geometry}
\usepackage{dsfont}
\usepackage{booktabs} 
\usepackage{tabularx}
\usepackage{multirow}
\usepackage{enumitem}
\usepackage{tablefootnote}

\usepackage[dvipsnames]{xcolor}
\definecolor{ForestGreen}{rgb}{0.1333,0.5451,0.1333}
\definecolor{DarkRed}{rgb}{0.65,0,0}

\usepackage{hyperref}
\hypersetup{
    colorlinks=true,
    linkcolor=DarkRed,
    filecolor=magenta,      
    urlcolor=cyan,
    citecolor=ForestGreen,
}

\usepackage{cleveref}

\usepackage{framed}
\usepackage[framemethod=tikz]{mdframed}
\usepackage{tikz-cd}

\DeclareMathAlphabet{\mathmybb}{U}{bbold}{m}{n}

\newtheorem{theorem}{Theorem}[section]
\newtheorem{lemma}[theorem]{Lemma}

\newtheorem{corollary}[theorem]{Corollary}

\newtheorem{definition}[theorem]{Definition}
\newtheorem{invariant}[theorem]{Invariant}
\newtheorem{remark}[theorem]{Remark}

\newtheorem{claim}[theorem]{Claim}

\newcommand{\ground}[0]{\mathbf{P}}
\newcommand{\calP}[0]{\mathcal{P}}
\newcommand{\calU}[0]{\mathcal{U}}
\newcommand{\calV}[0]{\mathcal{V}}
\newcommand{\calW}[0]{\mathcal{W}}
\newcommand{\calR}[0]{\mathcal{R}}
\newcommand{\calT}[0]{\mathcal{T}}

\newcommand{\calS}[0]{\mathcal{S}}
\newcommand{\calF}[0]{\mathcal{F}}

\newcommand{\calC}[0]{\mathcal{C}}

\newcommand{\cost}[1]{\textnormal{\textsf{Cost}} 
\left( #1 \right)}
\newcommand{\avcost}[1]{\textnormal{\textsf{AverageCost}} \left( #1 \right)}
\newcommand{\costd}[1]{\textsf{Cost}^d\left( #1 \right)}
\newcommand{\costdp}[1]{\textsf{Cost}^{d'}\left( #1 \right)}
\newcommand{\OPT}[0]{\textnormal{\textsc{OPT}}}
\newcommand{\FOPT}[0]{\textnormal{\textsc{FOPT}}}

\newcommand{\init}{\textnormal{\texttt{init}}}
\newcommand{\final}{\textnormal{\texttt{final}}}
\newcommand{\uo}[0]{\calU^{(0)}}
\newcommand{\tuo}[0]{\tilde{\calU}^{(0)}}
\newcommand{\tu}[0]{\tilde{\calU}}
\newcommand{\po}[0]{\calP^{(0)}}
\newcommand{\pl}[0]{\calP^{(l+1)}}

\newcommand{\MainEp}[0]{\textsc{Main-Epoch}}
\newcommand{\RemoveCenters}[0]{\textsc{Remove-Centers}}
\newcommand{\RandLocalSearch}[0]{\textsc{Rand-Local-Search}}
\newcommand{\LazyUpdates}[0]{\textsc{Lazy-Updates}}
\newcommand{\DevelopCenters}[0]{\textsc{Develop-Centers}}

\newcommand{\Robustify}[0]{\textsc{Robustify}}
\newcommand{\EmptyS}[0]{\textsc{Empty-S}}
\newcommand{\MakeRbst}[0]{\textsc{Make-Robust}}
\newcommand{\FindSuspects}[0]{\textsc{Find-Suspects}}
\newcommand{\FastOneMedian}[0]{\textsc{Fast-One-Median}}

\begin{document}

\maketitle

\pagenumbering{gobble}

\begin{abstract}
In metric $k$-clustering, we are given as input a set of $n$ points in a general metric space, and we have to pick $k$ {\em centers} and  {\em cluster} the input points around these chosen centers, so as to minimize an appropriate objective function. In recent years, significant effort has been devoted to the study of metric $k$-clustering problems in a {\em dynamic setting}, where the input keeps changing via updates (point insertions/deletions), and we have to maintain a good clustering throughout these updates [Fichtenberger, Lattanzi, Norouzi-Fard and Svensson, SODA'21; Bateni, Esfandiari, Fichtenberger, Henzinger, Jayaram, Mirrokni and Weise, SODA'23; Lacki, Haeupler, Grunau, Rozhon and Jayaram, SODA'24; Bhattacharya, Costa, Garg, Lattanzi and Parotsidis, FOCS'24; Forster and Skarlatos, SODA'25]. The performance of such a dynamic algorithm is measured in terms of three parameters: (i) Approximation ratio, which signifies the quality of the maintained solution, (ii) Recourse, which signifies how stable the maintained solution is, and (iii) Update time, which signifies the efficiency of the algorithm.

We consider a textbook metric $k$-clustering problem,  {\em metric $k$-median}, where the objective is the sum of the distances of the points to their nearest centers. We design the first dynamic algorithm for this problem with near-optimal guarantees across all three performance measures (up to a constant factor in approximation ratio, and polylogarithmic factors in recourse and update time). Specifically, we obtain a $O(1)$-approximation algorithm for dynamic metric $k$-median with $\tilde{O}(1)$ recourse and $\tilde{O}(k)$ update time. Prior to our work, the state-of-the-art here was the recent result of [Bhattacharya, Costa, Garg, Lattanzi and Parotsidis, FOCS'24], who obtained  $O(\epsilon^{-1})$-approximation ratio with $\tilde{O}(k^{\epsilon})$ recourse and $\tilde{O}(k^{1+\epsilon})$ update time.

We achieve our results by carefully synthesizing the concept of {\em robust centers} introduced in [Fichtenberger, Lattanzi, Norouzi-Fard and Svensson, SODA'21] along with the {\em randomized local search} subroutine from [Bhattacharya, Costa, Garg, Lattanzi and Parotsidis, FOCS'24], in addition to several key technical insights of our own.

\end{abstract}

\newpage

\tableofcontents

\newpage

\pagenumbering{arabic}

    \part{Extended Abstract}\label{part:0}

\section{Introduction}

Consider a metric space $(\calP, d)$ over a set $\calP$ of $n$ points, with a distance function $d : \calP \times \calP \to \mathbb{R}^{\geq 0}$, and a positive integer $k \leq n$. In the {\em metric $k$-median} problem, we have to pick a set $\calU \subseteq \calP$ of $k$ {\em centers}, so as to minimize the objective function $\cost{\calU, \calP} := \sum_{p \in \calP} d(p, \calU)$, where $d(p, \calU) := \min_{q \in \calU} d(p, q)$ denotes the distance between a point $p$ and its nearest center in $\calU$. We assume that we have access to the function $d$ via a {\em distance oracle}, which returns the value of $d(p, q)$ for any two points $p, q \in \calP$ in $O(1)$ time. We further assume that $1 \leq d(p, q) \leq \Delta$ for all $p, q \in \calP, p \neq q$, where $\Delta$ is an upper bound on the {\em aspect ratio} of the metric space.

Metric $k$-median is a foundational problem in clustering, is known to be NP-hard, and approximation algorithms for this problem are taught in standard textbooks~\cite{WilliamsonShmoys2011}. In particular, it has a $O(1)$-approximation algorithm that runs in $\tilde{O}(kn)$ time~\cite{MettuP02},\footnote{Throughout this paper, we use the $\tilde{O}(\cdot)$ notation to hide polylogarithmic factors in $k, n$ and $\Delta$.} and it is known that we {\em cannot} have any $O(1)$-approximation algorithm for metric $k$-median with $o(kn)$ runtime~\cite{BadoiuCIS05}. 

In recent years, substantive effort have been devoted to the study of this problem in a {\em dynamic setting}, when the underlying input changes over time~\cite{icml/LattanziV17,nips/Cohen-AddadHPSS19,soda/FichtenbergerLN21,esa/HenzingerK20,ourneurips2023,esa/TourHS24,FOCS24kmedian}. To be more specific, here the input changes by a sequence of updates; each update inserts/deletes a point in $\calP$.
Throughout these updates, we have to maintain a set of $k$ centers $\calU \subseteq \calP$ which form an approximate $k$-median solution to the current input $\calP$. Such a {\em dynamic} algorithm's performance is measured in terms of its: (i) Approximation ratio, (ii) {\em Recourse}, which is the number of changes (i.e., point insertions/deletions) in the maintained solution $\calU$ per update, and (iii) {\em Update time}, which is the time taken by the algorithm to process an update. In a sense, approximation ratio and recourse respectively measures the ``quality'' and the ``stability'' of the maintained solution, whereas update time measures the ``efficiency''
  of the algorithm. 

We design a dynamic algorithm for this problem with almost optimal performance guarantees with respect to all these measures. Our main result is summarized in the theorem below.

\begin{theorem}\label{main-theorem-2}
    There is a randomized dynamic algorithm for the metric $k$-median problem that has $O(1)$-approximation ratio, $O(\log^2 \Delta)$  recourse and $\tilde{O}(k)$ update time, w.h.p.\footnote{Both our recourse and update time bounds are amortized. Throughout the paper, we do not make any distinction between amortized vs worst-case bounds.}
\end{theorem}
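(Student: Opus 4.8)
The plan is to combine three ingredients: (i) the classical static $O(1)$-approximation for $k$-median, used as a black box at the start of each "epoch"; (ii) the notion of \emph{robust centers} from [FLNS21], which ensures that a center chosen on a large, well-separated cluster remains a good center even after a bounded number of adversarial updates; and (iii) the \RandLocalSearch{} subroutine from [FOCS24kmedian], which takes a current solution and, via randomized local search steps, drives it towards an $O(1)$-approximation while making only $\tilde O(1)$ swaps per improvement. The overarching strategy is the standard epoch-based framework: we recompute an approximately optimal solution from scratch only once every $\Theta(k/\text{poly}\log)$ updates (so the $\tilde O(kn)$ static cost amortizes to $\tilde O(n/\text{poly}\log)$ — wait, that is too large; instead we must recompute on a \emph{sample} or on the current solution's neighborhood so that the rebuild costs only $\tilde O(k)$ amortized), and between rebuilds we maintain the solution incrementally using robustification plus local search.

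\textbf{Step 1 (Epoch structure and the cost of rebuilding).} Partition the update sequence into epochs. At the start of an epoch I would run a fast variant of the static algorithm — not on all of $\calP$, which would cost $\tilde O(kn)$, but leveraging the fact that between epochs the solution only drifted by $O(\text{epoch length})$ points; so I would only need to "repair" the solution locally. Concretely, using \FastDevelopCenters{}/\DevelopCenters{} and \ReduceTokCenters{} (referenced in the macros), reduce the instance to one on $\tilde O(k)$ representative points and solve that, paying $\tilde O(k)$ time per epoch and hence $\tilde O(1)$ amortized if epochs have length $\tilde\Omega(1)$.

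\textbf{Step 2 (Maintaining robustness between rebuilds).} After the rebuild, invoke \Robustify{}/\MakeRbst{} so that every maintained center is "robust" in the sense of [FLNS21]: its cluster is large enough and well-enough separated that a small number of point insertions/deletions cannot spoil its quality by more than a constant factor. When an update arrives, if it does not touch a robust center's certificate, do nothing (\LazyUpdates{}); this is what buys the $\tilde O(1)$ recourse and $\tilde O(k)$ update time on "easy" updates. When an update does threaten a center (the relevant counter in \FindSuspects{} crosses a threshold), run one round of \RandLocalSearch{} restricted to the affected region, which by the [FOCS24kmedian] guarantee either certifies the solution is already $O(1)$-approximate or improves the cost by a $(1-1/\text{poly}\log)$ factor while changing $\tilde O(1)$ centers. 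A potential-function argument bounds the total number of such improvement rounds per epoch by $\tilde O(1)$, giving the amortized bounds.

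\textbf{Step 3 (Putting the approximation ratio together).} The approximation guarantee follows by a hybrid argument: immediately after a rebuild the solution is $O(1)$-approximate by Step 1; robustness (Step 2) guarantees the cost degrades by at most a constant factor over the course of an epoch even before any local-search repair fires; and each triggered \RandLocalSearch{} round only helps. The $O(\log^2\Delta)$ recourse bound comes from the standard trick of working on $O(\log\Delta)$ distance scales and charging each center change to a scale, times an $O(\log\Delta)$ factor from the robustification thresholds. The w.h.p.\ statement is inherited from the randomization inside \RandLocalSearch{} and \FastOneMedian{}, via a union bound over the $\text{poly}(n,\Delta)$-length update sequence.

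\textbf{The main obstacle} I anticipate is \emph{coupling robustness with local search without blowing up the update time}: robust centers give cheap updates but are "rigid," while local search is what actually restores the approximation ratio but naively costs $\tilde O(k)$ \emph{per swap}; to get $\tilde O(k)$ \emph{amortized} I must ensure that local-search rounds fire only $\tilde O(1)$ times per epoch and that each round's cost is charged against the $\tilde\Omega(k)$ updates of the epoch, or against a drop in a carefully designed potential that also controls recourse. Making the potential function simultaneously dominate (a) the number of center swaps, (b) the running time of the triggered subroutines, and (c) the cost ratio to \OPT{} — while remaining bounded by $\tilde O(1)$ per update in amortization — is the technical heart of the argument, and is presumably where the "several key technical insights of our own" mentioned in the abstract come in.
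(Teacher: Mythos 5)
Your proposal assembles the right ingredients (robust centers, randomized local search, epochs with lazy updates, a \Robustify{} step), but it is missing the two ideas that actually make the combination work in the fully dynamic setting, and the mechanism you propose in their place would not go through. First, you never explain how the epoch length is chosen or why lazy updates preserve an $O(1)$-approximation when points can be \emph{deleted}: your Step~3 asserts that "robustness guarantees the cost degrades by at most a constant factor over the course of an epoch," but robustness is a local property of individual centers and gives no protection against $\OPT_k(\calP)$ \emph{dropping} far below its value at the start of the epoch (in which case the frozen solution's cost, which does not decrease under lazy updates, is no longer $O(1)\cdot\OPT_k$). The paper resolves this with the Double-Sided Stability Lemma (\Cref{lem:double-sided-stability-part-1}): it computes the largest $\ell^\star$ with $\OPT_{k-\ell^\star}(\calP)\le O(1)\cdot\OPT_k(\calP)$, sets the epoch length to $\ell+1$ with $\ell=\Theta(\ell^\star)$, and uses convexity of the LP relaxation to conclude $\OPT_k(\calP^{(0)})\le 4\,\OPT_{k+\ell}(\calP^{(0)})\le 4\,\OPT_k(\calP^{(t)})$ throughout the epoch. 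Epochs are therefore of \emph{adaptive} length (possibly a single update), not $\Theta(k/\mathrm{poly}\log)$ as you posit, and the rebuild cost is amortized because all end-of-epoch work costs $\tilde O(n\cdot(\ell+1))$.

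Second, your Step~2 replaces the paper's end-of-epoch reconstruction with an event-driven scheme ("when a counter crosses a threshold, run one round of \RandLocalSearch{} restricted to the affected region") controlled by an unspecified potential function that must simultaneously bound recourse, running time, and cost ratio. No such potential is exhibited, and this is exactly the point where the argument cannot be waved through: local search fired on a local region gives no global approximation guarantee, and there is no obvious invariant showing $\tilde O(1)$ firings per epoch. The paper instead does all repair work at the epoch boundary via three concrete tasks (add $O(\ell+1)$ centers to $\calU_{\init}$ optimizing w.r.t.\ $\calP^{(0)}$ using the contraction trick of \Cref{lm:developcenters}; union with the inserted points; remove down to $k$ centers optimizing w.r.t.\ $\calP^{(\ell+1)}$ via \Cref{lm:localsearch}), proves via well-separated pairs and \Cref{lem:generalize-lemma-7.3-in-FLNS21-part-1} that $O(\ell+1)$ added centers suffice, and then bounds the recourse of \Robustify{} by the clean/contaminated-center accounting and the increasing-$t[w]$ chain argument, which is where the two $\log\Delta$ factors actually come from. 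As written, your proposal identifies the difficulty correctly in its final paragraph but does not supply the argument that overcomes it.
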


\medskip
\noindent {\bf Remarks.} A few important remarks are in order. First, note that there {\em cannot} exist a dynamic $O(1)$-approximation algorithm for our problem with $o(k)$ update time, for otherwise we would get a {\em static} algorithm for metric $k$-median with $O(1)$-approximation ratio and $o(kn)$ runtime:  Simply let the dynamic algorithm handle a sequence of $n$ insertions corresponding to the points in the static input, and return the solution maintained by the dynamic algorithm at the end of this update sequence. This would contradict the $\Omega(kn)$ lower bound on the runtime of any such static algorithm, derived in~\cite{BadoiuCIS05}. Furthermore, it is easy to verify that we cannot achieve $o(1)$ recourse in the fully dynamic setting, and hence, our dynamic algorithm is almost optimal (up to a $O(1)$ factor in approximation ratio and polylogarithmic factors in recourse and update time).

Second, in this extended abstract  we focus the {\em unweighted} metric $k$-median problem, only to ease notations. In the full version (see \Cref{part:full}), we show that \Cref{main-theorem-2} seamlessly extends to the {\em weighted} setting, where each point $p \in \calP$ has a weight $w(p) > 0$ associated with it, and we have to maintain a set $\calU$ of $k$ centers that (approximately) minimizes  $\sum_{p \in \calP} w(p) \cdot d(p, \calU)$. Moreover, our result extends to the related {\em metric $k$-means} problem as well, where we have to pick a set $\calU \subseteq \calP$ of $k$ centers so as to minimize $\sum_{p \in \calP} \left( d(p, \calU)\right)^2$. We can get a dynamic $O(1)$-approximation algorithm for (weighted) 
metric $k$-means  that has $\tilde{O}(1)$  recourse and $\tilde{O}(k)$  update time, w.h.p.

 Finally, \Cref{tab:kmedian} compares our result against prior  state-of-the-art. Until very recently, all known  algorithms~\cite{nips/Cohen-AddadHPSS19,esa/HenzingerK20,ourneurips2023} for fully dynamic metric $k$-median had a trivial recourse bound of $\Omega(k)$,
 which can be obtained by computing a new set of $k$ centers from scratch after every update (at the expense of $\Omega(kn)$ update time). Then, in FOCS 2024, \cite{FOCS24kmedian} took a major step towards designing an almost optimal algorithm for this problem, by achieving $O(\epsilon^{-1})$-approximation ratio, $\tilde{O}(k^{1+\epsilon})$ update time and $\tilde{O}(k^{\epsilon})$ recourse. To achieve truly polylogarithmic recourse and $\tilde{O}(k)$ update time using the algorithm of \cite{FOCS24kmedian}, we have to set $\epsilon = O\left(\frac{\log \log k}{\log k}\right)$.
This, however, increases the approximation guarantee to $\Omega\left(\frac{\log k}{\log \log k}\right)$. In contrast, we achieve $O(1)$-approximation ratio, $\tilde{O}(k)$ update time and $\tilde{O}(1)$ recourse.

\begin{table}[h!]
  \centering
  \begin{tabular}{cccc}
    \toprule
    Approximation Ratio & Update Time & Recourse & Paper \\
    \midrule
     $O(1)$ & $\tilde O(n + k^2)$ & $O(k)$ & \cite{nips/Cohen-AddadHPSS19} \\
     $O(1)$ & $\tilde O(k^2)$ & $O(k)$ & \cite{esa/HenzingerK20} \\ 
     $O(1)$ & $\tilde O(k^2)$ & $O(k)$ & \cite{ourneurips2023}\tablefootnote{We remark that \cite{ourneurips2023} actually maintain a $O(1)$ approximate ``sparsifier'' of size $\tilde{O}(k)$ in $\tilde{O}(k)$ update time, and they need to run the static algorithm of~\cite{MettuP00} on top of this sparsifier after every update. This leads to an update time of $\tilde{O}(k^2)$ for the dynamic $k$-median and $k$-means problems.}\\
     $O\left(\epsilon^{-1}\right)$ & $\tilde O(k^{1 + \epsilon})$ & $\tilde O(k^\epsilon)$ & \cite{FOCS24kmedian} \\
     $O(1)$ & $\tilde{O}(k)$ & $O(\log^2 \Delta)$ & {\bf Our Result} \\
    \bottomrule
  \end{tabular}
  \caption{State-of-the-art for {\bf fully dynamic metric $k$-median}. The table for {\bf fully dynamic metric $k$-means} is identical, except that the approximation ratio of \cite{FOCS24kmedian} is $O\left(\epsilon^{-2}\right)$.}
  \label{tab:kmedian}
\end{table}

\medskip
\noindent {\bf Related Work.} In addition to metric $k$-median, two related clustering problems have been extensively studied in the dynamic setting: (i) {\em metric $k$-center}~\cite{ChanGS18,BateniEFHJMW23,LHRJ23,FOCS24kmedian,SkarlatosF25,simple-kcenter} and (ii) {\em metric facility location}~\cite{nips/Cohen-AddadHPSS19,nips/BhattacharyaLP22}. Both these problems are relatively well-understood by now. For example, it is known how to simultaneously achieve $O(1)$-approximation ratio, $\tilde{O}(1)$ recourse and $\tilde{O}(k)$ update time for dynamic metric $k$-center~\cite{simple-kcenter}. They have also been studied under special classes of metrics, such as Euclidean spaces~\cite{alenex/GoranciHLSS21,BateniEFHJMW23,icml/BhattacharyaGJQ24}, or shortest-path metrics in graphs undergoing edge-updates~\cite{soda/CrucianiFGNS24}.

There is another line of work on dynamic metric $k$-center and metric $k$-median, which considers the incremental (insertion only) setting, and achieves total recourse guarantees that are {\em sublinear} in the total number of updates~\cite{icml/LattanziV17,soda/FichtenbergerLN21}. However, the update times of these algorithms are large polynomials in $n$. \Cref{sec:intro:incremental} contains a detailed discussion on the algorithm of~\cite{soda/FichtenbergerLN21}.

\section{Technical Overview}

In \Cref{sec:intro:fullydynamic,sec:intro:incremental}, we summarize the technical contributions of two relevant papers~\cite{FOCS24kmedian,soda/FichtenbergerLN21}. We obtain our algorithm via carefully synthesizing the techniques from both these papers, along with some key, new insights of our own. In \Cref{sec:intro:challenges}, we outline the major technical challenges we face while trying to prove \Cref{main-theorem-2}, and how we overcome them.

\subsection{The Fully Dynamic Algorithm of \cite{FOCS24kmedian}}
\label{sec:intro:fullydynamic}

There are two main technical contributions in \cite{FOCS24kmedian}; we briefly review each of them below. 

\subsubsection{A Hierarchical Approach to Dynamic $k$-Median}
\label{sec:intro:hierarchy}

This approach allows the authors to obtain $O(\epsilon^{-1})$-approximation ratio with $\tilde{O}(k^{\epsilon})$ recourse, and works as follows. We maintain a hierarchy of nested subsets of centers $S_0 \supseteq \dots \supseteq S_{\ell + 1}$, where $\ell = 1/\epsilon$ and $|S_i| = k + \lfloor k^{1- i\epsilon}\rfloor$ for each $i \in [0, \ell + 1]$. We refer to $s_i := \lfloor k^{1- i\epsilon} \rfloor$ as the {\em slack} at {\em layer $i$} of the hierarchy. The set $S_{\ell + 1}$ has size exactly $k$ and is the  $k$-median solution maintained by the algorithm. We always maintain the following invariant.
\begin{invariant}
\label{inv:intro}
$\cost{S_0, \calP} = O(1) \cdot \OPT_k(\calP)$,  and $\cost{S_i, \calP} \leq \cost{S_{i-1}, \calP} + O(1) \cdot \OPT_k(\calP)$ for each $i \in [1, \ell]$. Here, $\OPT_k(\calP)$ is the optimal $k$-median objective w.r.t.~the input point-set $\calP$.
\end{invariant}

Given this invariant, we infer that $\cost{S_{\ell + 1}} = O(\ell) \cdot \OPT_k(\calP) = O(1/\epsilon) \cdot \OPT_k(\calP)$. Thus, the approximation guarantee is  proportional to the number of layers in this hierarchy. 

The hierarchy is maintained by periodically reconstructing each of the sets $S_i$.
To be more specific, each set $S_i$ (along with the sets $S_{i+1},\dots,S_{\ell + 1}$) is reconstructed from scratch every $s_i$ many updates, without modifying the sets $S_0,\dots,S_{i-1}$. In between these updates, the subset $S_i$ is maintained \emph{lazily}: Whenever a point $p$ is inserted into $\calP$, we set $S_i \leftarrow S_i \cup \{p\}$, and whenever a point is deleted from $\calP$, no changes are made to $S_i$.\footnote{We can afford to handle the {\em deletions} in this lazy manner if we consider the improper $k$-median problem, where we are allowed to open a center at a point that got deleted. See the discussion in the beginning of \Cref{sec:prelim}.} 

To analyze the recourse,
consider the solution $S_{\ell + 1}$ before and after an update during which $S_i$ is reconstructed. Let $S_{\ell + 1}'$ and $S_{\ell + 1}''$ denote the status of the solution before and after the update, respectively.
Observe that the total recourse incurred in the solution $S_{\ell + 1}$ during this update, i.e.~the value $|S_{\ell + 1}' \oplus S_{\ell + 1}''|$,
is $O(s_{i-1})$. This follows immediately from the fact that $S_{\ell + 1}'$ and $S_{\ell + 1}''$ are both subsets of size $k$ of the set $S_{i-1}$, which has size $\leq k + 2 s_{i-1}$.
We can amortize this recourse over the $s_i$ many lazy updates performed since the last time that $S_{i}$ was reconstructed. This implies that the amortized recourse of $S_{\ell + 1}$ which is caused by reconstructing $S_i$ is $O(s_{i-1}/s_i) = O(k^\epsilon)$. Summing over all $i \in [0, \ell + 1]$, we get an overall  amortized recourse bound of $O((\ell+2) \cdot k^\epsilon) = O(k^\epsilon/\epsilon)$.

\medskip
\noindent {\bf Barrier towards achieving $O(1)$-approximation with $\tilde{O}(1)$ recourse.} If we want to use this hierarchy to obtain a recourse of $\tilde O(1)$, then we need to ensure that $s_i = \tilde \Omega(s_{i-1})$ (i.e., we need the slacks at the layers to decrease by at most a $\tilde O(1)$ factor between the layers). If this is not the case, then the amortized recourse in $S_{\ell + 1}$ caused by reconstructing $S_i$ will be $\Omega (s_{i-1}/s_i) = \tilde \omega(1)$. Unfortunately, to have such a guarantee, the number of layers needs to be $\Omega(\log k / \log \log k)$; and since the approximation ratio of the algorithm 
is proportional to the number of layers (see \Cref{inv:intro}), this leads to an approximation ratio of $\omega(1)$. Thus, it is {\em not} at all clear if this approach can be used to obtain $O(1)$-approximation and $\tilde O(1)$ recourse simultaneously.

\subsubsection{Achieving $\tilde{O}(k^{1+\epsilon})$  Update Time via Randomized Local Search}\label{sec:introduce-rand-local-search}

 The second technical contribution in \cite{FOCS24kmedian} is to show that the hierarchy from \Cref{sec:intro:hierarchy} can be maintained in $\tilde{O}(k^{1+\epsilon})$ update time, using a specific type of {\em randomized local search}. 
 
 To be more specific, consider  a set of $n$ points $\calP$, a set of $k$ centers $\calU$, and an integer $s \in [1, k-1]$.
Suppose that we want to compute a subset $\calU' \subseteq \calU$ of $(k-s)$ centers, so as to minimize $\cost{\calU', \calP}$. In
\cite{FOCS24kmedian}, the authors present a $O(1)$-approximation algorithm for this problem using randomized local search, that runs in only $\tilde{O}(ns)$ time, assuming the algorithm has access to some ``auxiliary data structures'' to begin with (see \Cref{lm:localsearch}). 

Morally, the important message here is that the runtime of randomized local search (when given access to some auxiliary data structures) is proportional to the slack $s$, and independent of $k$. In \cite{FOCS24kmedian}, the authors call this procedure as a subroutine while reconstructing a set $S_i$ (along with the sets $S_{i+1},\dots,S_{\ell + 1}$) in the hierarchy from scratch, after every $s_i$ many updates. Although our algorithm does not use a hierarchical approach while bounding the approximation ratio and recourse, we use randomized local search to achieve fast update time (see \Cref{sec:intro:challenges}).

\subsection{The Incremental Algorithm of \cite{soda/FichtenbergerLN21}}
\label{sec:intro:incremental}

In \cite{soda/FichtenbergerLN21}, the authors  obtain $\tilde{O}(k)$ total recourse, while maintaining a $O(1)$-approximate $k$-median solution over a sequence of $n$ point insertions (starting from an empty input). Note that in this {\em incremental setting}, the total recourse is {\em sublinear} in the number of updates; this is achieved by using a technique known as {\em Myerson sketch}~\cite{Meyerson01}. Since it is not possible to achieve such a sublinear total recourse bound in the fully dynamic setting (the focus of our paper), in \Cref{thm:intro:soda} we summarize the main result of \cite{soda/FichtenbergerLN21} without invoking Myerson sketch. We emphasize that the update time  in \cite{soda/FichtenbergerLN21} is already prohibitively high (some large polynomial in $n$) for our purpose. Accordingly, to highlight the main ideas in the rest of this section, we will outline a variant of the algorithm in~\cite{soda/FichtenbergerLN21} with {\em exponential} update time.

\begin{theorem}[\cite{soda/FichtenbergerLN21}]
\label{thm:intro:soda}
 Suppose that the input $\calP$ undergoes a sequence of point-insertions. Then, we can maintain a $O(1)$-approximate $k$-median solution to $\calP$ with $\tilde{O}(1)$ amortized recourse.
\end{theorem}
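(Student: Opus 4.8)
The plan is to maintain, for each distance scale $r \in \{1, 2, 4, \dots, \Delta\}$ (there are $O(\log \Delta) = \tilde O(1)$ of them), a maximal set of ``well-separated representatives'' at scale $r$, in the spirit of the robust/threshold hierarchies used in incremental clustering. Concretely, for scale $r$ I would greedily maintain a subset $R_r \subseteq \calP$ such that any two points of $R_r$ are at distance $> r$, and every point of $\calP$ is within distance $\le r$ of some point of $R_r$ (an $r$-net). In the incremental setting such a net is \emph{monotone}: when a new point $p$ is inserted, we add $p$ to $R_r$ only if it is at distance $> r$ from every current element of $R_r$, and we never remove anything. Hence the total recourse to $\bigcup_r R_r$ over the whole insertion sequence is just $\sum_r |R_r^{\text{final}}|$. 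The key point is that the size of an $r$-net is controlled by $\OPT_k(\calP)$: if $r$ is at least (a constant times) $\OPT_k(\calP)/k$, then any $r$-net has $O(k)$ points, since each optimal cluster of radius-sum budget can ``pay'' for only $O(1)$ net points at that scale (a standard packing/charging argument). So the nets are only nontrivial for $O(\log \Delta)$ scales above that threshold, giving total recourse $\tilde O(k)$ across all scales, i.e.\ $\tilde O(1)$ amortized over the $n$ insertions. (Below that threshold the net collapses to essentially all of $\calP$, but then we simply do not use those scales.)

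Next I would extract the maintained $k$-median solution $\calU$ from this family of nets. The right scale is $r^\star := $ the smallest power of two with $|R_{r^\star}| \le k$; then I set $\calU := R_{r^\star}$ (padding arbitrarily to exactly $k$ centers, or just taking the $\le k$-center improper solution). For the approximation bound: every point of $\calP$ is within $r^\star$ of $\calU$, so $\cost{\calU, \calP} \le n \cdot r^\star$; and on the other side, since $|R_{2r^\star}| > k$ wait --- more carefully, since $R_{r^\star/2}$ has $> k$ points that are pairwise $> r^\star/2$ apart, any $k$-center solution must leave one of them at distance $> r^\star/4$, and a counting argument over how many such ``uncovered'' net points there are lower-bounds $\OPT_k(\calP)$ by $\Omega(|R_{r^\star/2}| \cdot r^\star) = \Omega(k \cdot r^\star)$... this needs $n = \Theta(k)$-type control, which is \emph{not} given. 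The cleaner route, and the one I would actually follow, is to not read off a single scale but to feed the union of nets into a static $O(1)$-approximation: the family $\{R_r\}_r$ is an $O(1)$-approximate \emph{coreset} of size $\tilde O(k)$ for $k$-median on $\calP$ (assign to each $R_r$ point a weight equal to the number of $\calP$-points it newly ``captures'' as we walk from fine to coarse scales), and running the static algorithm on this coreset yields $\calU$ with $\cost{\calU, \calP} = O(1) \cdot \OPT_k(\calP)$. Crucially, $\calU$ changes only when the coreset changes, and since the coreset is monotone growing with total size $\tilde O(k)$, the solution's total recourse is $\tilde O(k)$ as well --- provided we recompute $\calU$ lazily and argue stability.

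The main obstacle is exactly that last stability point: the static coreset solver is not promised to be Lipschitz, so recomputing $\calU$ from scratch every time the coreset grows could blow up recourse to $\Omega(k)$ \emph{per update}. Two ingredients fix this. (i) Amortize by rebuilding in phases: only recompute $\calU$ after the coreset has changed by a $(1+\Omega(1))$ multiplicative factor in total weight or in cardinality; between rebuilds, simply add each newly inserted point to $\calU$ if the current solution is cheap enough, else leave it (lazy insertion, as in the hierarchical approach of \Cref{sec:intro:hierarchy}). Since $\cost{\calU,\calP}$ and $\OPT_k(\calP)$ only move by $O(1)$ factors within a phase, the approximation is preserved, and there are only $O(\log(\text{final weight})) = \tilde O(1)$ phase boundaries, each costing $O(k)$ recourse, for $\tilde O(k)$ total. (ii) To make even the lazy insertions safe, observe that inserting a point can only \emph{decrease} $\OPT$ by a bounded amount and increases $\cost{\calU,\calP}$ by at most $r^\star$, so the invariant $\cost{\calU,\calP} = O(1)\cdot \OPT_k(\calP)$ survives until the next phase boundary. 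Putting the net-maintenance recourse, the coreset-monotonicity recourse, and the $O(\log \Delta)$ phase-rebuild recourse together gives $\tilde O(k)$ total recourse over $n$ insertions, hence $\tilde O(1)$ amortized, with an $O(1)$-approximate solution throughout; update time is irrelevant here (we allow exponential time for the static solver, per the remark preceding the theorem). I would present this as: (1) define and maintain the monotone net family, (2) bound $\sum_r |R_r^{\text{final}}| = \tilde O(k)$ via the packing-vs-$\OPT$ argument, (3) turn the nets into a monotone $\tilde O(k)$-size $O(1)$-coreset, (4) run the phased lazy rebuild on top and verify the invariant and the recourse accounting.
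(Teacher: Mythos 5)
There is a genuine gap, and it sits exactly at the point your proposal waves past: maintaining a feasible set of \emph{exactly} $k$ centers between rebuilds while keeping both $O(1)$ recourse per update and $O(1)$ approximation. Your within-phase rule is ``add each newly inserted point to $\calU$ if the current solution is cheap enough, else leave it.'' Both branches fail. If you add inserted points as centers, $|\calU|$ grows to $k + (\text{number of insertions in the phase})$, which can be $\Theta(n)$ since your phases are delimited by a constant-factor growth of the coreset; this is not a $k$-median solution. If you leave $\calU$ unchanged, the approximation breaks: insert many copies of a point $q$ far from $\calU$ but at a location where the optimum would open a center --- then $\cost{\calU,\calP}$ grows by $\Theta(n \cdot d(q,\calU))$ while $\OPT_k(\calP)$ barely moves, so no invariant of the form $\cost{\calU,\calP} = O(1)\cdot \OPT_k(\calP)$ survives to the next phase boundary. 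This is precisely the difficulty the actual argument (\Cref{sec:intro:incremental}) is built to solve: at the start of an epoch one computes a \emph{maximally $\ell$-stable}, robust solution, deletes $\ell$ centers to create slack, lazily absorbs the next $\ell$ insertions into that slack, and then invokes the structural guarantee of \Cref{lm:intro} (plus \Robustify) to land on a new $k$-center solution within symmetric difference $O(\ell)$ of the old one. Your proposal contains no analogue of the slack-creation step or of \Cref{lm:intro}, and the epoch length in the real proof is dictated by the stability parameter $\ell$, not by a doubling of a coreset.

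A secondary, smaller issue: the claim that the union of nets is a monotone $\tilde O(k)$-size $O(1)$-coreset is not justified as stated. The packing bound $|R_r| = O(k)$ only holds for scales $r \gtrsim \OPT_k(\calP)/k$, and this threshold moves upward over time; to keep the coreset at size $\tilde O(k)$ you must discard finer scales as $\OPT_k$ grows, which destroys monotonicity and forces a reweighting/reassignment whose recourse you have not accounted for. (This is essentially why the incremental algorithm of \cite{soda/FichtenbergerLN21} uses a Meyerson sketch for sparsification rather than a naive net hierarchy.) Even granting the coreset, the stability of the solution extracted from it is the crux, and that is the part that is missing.
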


A major technical insight in \cite{soda/FichtenbergerLN21} was to introduce the notion of {\em robust centers}. Informally, a set of centers $\calU$ is {\em robust} w.r.t.~a point-set $\calP$, iff each $u \in \calU$ is a good approximate $1$-median solution at every ``distance-scale'' w.r.t.~the points in $\calP$ that are sufficiently close to $u$. See \Cref{sec:robustcenters} for a formal definition. Below, we present our interpretation of the incremental algorithm in \cite{soda/FichtenbergerLN21}. We start with a key lemma summarizing an important property of robust centers. 

Consider any integer $0 \leq \ell \leq k$. We say that a set of $k$ centers $\calU$ is {\em maximally $\ell$-stable} w.r.t.~a point-set $\calP$ iff $\OPT_{k-\ell}^{\calU}(\calP) \leq c \cdot \cost{\calU, \calP}$ and $\OPT_{k-(\ell+1)}^{\calU}(\calP) > c \cdot \cost{\calU, \calP}$, where $c = 456000$ is an absolute constant, and $\OPT_{t}^{\calU}(\calP) := \min_{Z \subseteq \calU : |Z| \leq t} \cost{Z, \calP}$ is the objective  of the optimal $t$-median solution subject to the restriction that all the centers must be from the set $\calU$. This means that we can afford to remove $\ell$ centers from $\calU$ without increasing the objective value by more than a $O(1)$ factor, but {\em not} more than $\ell$ centers. We defer the proof of \Cref{lm:intro} to \Cref{sec:proof:lm:intro}.

\begin{lemma}[\cite{soda/FichtenbergerLN21}]
\label{lm:intro}
Consider any two  point-sets $\calP_{\init}$ and $\calP_{\final}$ with $\left|\calP_{\init} \oplus \calP_{\final}\right| \leq \ell + 1$, for  $\ell \in [0, k]$.\footnote{Here, the notation $\oplus$ denotes the symmetric difference between two sets.} W.r.t.~$\calP_{\init}$, let $\calU_{\init}$ be any set of $k$ centers that is robust and maximally $\ell$-stable.
Let $\calV$ be any set of $k$ centers such that $\cost{\calV, \calP_{\init}} \leq 18 \cdot \cost{\calU_{\init}, \calP_{\init}}$. Then, there is a set of $k$ centers $\calW^{\star}$, such that $\left| \calW^{\star} \oplus \calU_{\init}\right|  \leq 5\ell + 5$ and $\cost{\calW^{\star}, \calP_{\final}} \leq 3 \cdot \cost{\calV, \calP_{\final}}$.
\end{lemma}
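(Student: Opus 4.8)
The plan is to exploit maximal $\ell$-stability to first find a small ``core'' solution on $\calP_{\init}$, then patch it back up to $k$ centers in a way that handles the $\le \ell+1$ points on which $\calP_{\init}$ and $\calP_{\final}$ differ. First I would use maximal $\ell$-stability of $\calU_{\init}$ w.r.t.~$\calP_{\init}$ to obtain a subset $Z \subseteq \calU_{\init}$ with $|Z| \le k - \ell$ and $\cost{Z, \calP_{\init}} \le c \cdot \cost{\calU_{\init}, \calP_{\init}}$. The point of removing $\ell$ centers is to free up ``budget'' in the symmetric difference: $Z$ differs from $\calU_{\init}$ in at most $\ell$ points, and we have $\ell+5$ more changes to spend (since we're allowed $5\ell+5$ total, and $Z \to \calW^\star$ will use the rest). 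The intuition from \cite{soda/FichtenbergerLN21} is that robustness guarantees this removal can be localized, but for \Cref{lm:intro} as stated I only need the \emph{existence} of such a $Z$, which maximal $\ell$-stability gives directly.

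Next I would build $\calW^\star$ from $Z$. The key move is to define $D := \calP_{\init} \oplus \calP_{\final}$, so $|D| \le \ell+1$, and consider adding to $Z$ one center for each point in $D$ (namely the point itself, or a nearby robust representative). This gives a candidate $\calW' := Z \cup D$ with $|\calW'| \le (k-\ell) + (\ell+1) = k+1$; if it has $k+1$ centers, drop one arbitrarily (or better, the one whose removal costs least) to get down to $k$, which we can afford since we're only comparing against $\cost{\calV, \calP_{\final}}$ up to a factor of $3$ and removing a single center from a $(k+1)$-set is cheap relative to the slack already present. The symmetric-difference bound $|\calW^\star \oplus \calU_{\init}| \le |\calU_{\init} \oplus Z| + |Z \oplus \calW^\star| \le \ell + (|D| + O(1)) \le \ell + (\ell+1) + O(1)$; getting this under $5\ell+5$ should be comfortable with room to spare — the factor $5$ suggests the actual argument patches in a constant number of robust centers per element of $D$ rather than just the raw point, so I'd follow that route to keep the cost analysis clean.

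For the cost bound, the chain of inequalities I'd assemble is: $\cost{\calW^\star, \calP_{\final}}$ is controlled by $\cost{Z, \calP_{\final}}$ plus the contribution of the points in $D$ (which are served at cost $0$ or $O(1)$ times their optimal cost by the freshly-added centers); then $\cost{Z, \calP_{\final}}$ is within an additive error of $\cost{Z, \calP_{\init}}$ depending only on $D$; then $\cost{Z, \calP_{\init}} \le c \cdot \cost{\calU_{\init}, \calP_{\init}}$; and finally $\cost{\calU_{\init}, \calP_{\init}}$ relates to $\cost{\calV, \calP_{\init}}$ via the hypothesis $\cost{\calV, \calP_{\init}} \le 18 \cdot \cost{\calU_{\init}, \calP_{\init}}$, and $\cost{\calV, \calP_{\init}}$ relates back to $\cost{\calV, \calP_{\final}}$ again up to the $D$-contribution. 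The crucial subtlety — and what I expect to be the main obstacle — is that a naive ``add/remove a point changes the cost by at most $d(p, \calU)$'' argument loses too much, because a single deleted point can be the center serving many others, or many inserted points can be far from everything. Robustness of $\calU_{\init}$ is exactly the tool that tames this: it ensures that near every center the local $1$-median cost is well-behaved at all scales, so the additive errors when swapping between $\calP_{\init}$ and $\calP_{\final}$ can be charged against $\cost{\calU_{\init},\calP_{\init}}$ rather than blowing up. Threading robustness through these transitions carefully, and verifying the absolute constants ($456000$, $18$, the factor $3$, the $5\ell+5$ bound) actually close, is where the real work lies; the combinatorial skeleton above is straightforward.
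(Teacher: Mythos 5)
There is a genuine gap, and it is in the cost bound. Your construction of $\calW^{\star}$ never looks at $\calV$: you build it entirely from a cheap $(k-\ell)$-subset $Z\subseteq\calU_{\init}$ plus the updated points $D$. The best such a construction can yield is $\cost{\calW^{\star},\calP_{\final}}\leq \cost{Z,\calP_{\init}}\leq c\cdot\cost{\calU_{\init},\calP_{\init}}$ with $c=456000$, and the only bridge to $\calV$ you have is the hypothesis $\cost{\calV,\calP_{\init}}\leq 18\cdot\cost{\calU_{\init},\calP_{\init}}$ --- which points the \emph{wrong way}: it upper-bounds $\calV$'s cost by $\calU_{\init}$'s, whereas you need to upper-bound your solution by $3\cdot\cost{\calV,\calP_{\final}}$. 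Nothing in the hypotheses prevents $\cost{\calV,\calP_{\final}}$ from being far smaller than $\cost{\calU_{\init},\calP_{\init}}$ (e.g.\ $\calV$ a near-optimal clustering of $\calP_{\final}$ while $\calU_{\init}$ is merely $O(1)$-approximate on $\calP_{\init}$), in which case your $\calW^{\star}$ violates the factor-$3$ bound by an unbounded margin. A multiplicative guarantee of $3$ relative to $\calV$ forces $\calW^{\star}$ to essentially reproduce $\calV$'s clustering, so $\calV$ must enter the \emph{construction}, not just the final inequality chain.

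The paper's proof does exactly that via well-separated pairs. Let $k-m$ be the number of well-separated pairs w.r.t.\ $(\calU_{\init},\calV)$; call $u_i$ \emph{good} if it is well-separated with some $v_i$ whose cluster $C_{v_i}(\calV,\calP_{\final})$ is untouched by the $\leq\ell+1$ updates. Set $\calW^{\star}$ to keep the good $u_i$ and swap in $v_i$ for every other index. Robustness of $\calU_{\init}$ gives $\cost{u_i,C_{v_i}(\calV,\calP_{\final})}\leq 3\cdot\cost{v_i,C_{v_i}(\calV,\calP_{\final})}$ for good pairs (\Cref{lem:cost-well-sep-part-1}), which is where the factor $3$ comes from. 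The recourse is at most $m+\ell+1$ swaps, and $m\leq 4(\ell+1)$ is proved by contradiction using \Cref{lem:generalize-lemma-7.3-in-FLNS21-part-1}: if $m$ were larger, one could drop more than $\ell$ centers from $\calU_{\init}$ at cost $\leq 6\gamma\cdot 19\cdot\cost{\calU_{\init},\calP_{\init}}=c\cdot\cost{\calU_{\init},\calP_{\init}}$, contradicting the \emph{second} clause of maximal $\ell$-stability ($\OPT^{\calU_{\init}}_{k-(\ell+1)}(\calP_{\init})>c\cdot\cost{\calU_{\init},\calP_{\init}}$); this step is also where the hypothesis $\cost{\calV,\calP_{\init}}\leq 18\cdot\cost{\calU_{\init},\calP_{\init}}$ is used, in the direction it actually points. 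You invoke only the first clause of maximal $\ell$-stability (cheap removal of $\ell$ centers), which is why your argument never gets any purchase on $\calV$.
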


 The algorithm works in $\tilde{O}(1)$ many {\em phases}, where each phase consists of a sequence of consecutive updates (only insertions) such that the optimal objective value, given by $\OPT_k(\calP)$, does not increase by more than a factor of $18$ within any given phase.
 The algorithm restarts whenever one phase terminates and the next phase begins, and computes a new $k$-median solution to the current input from scratch.
 Within each phase, the algorithm  incurs $\tilde{O}(n)$ total recourse, and this  implies the amortized recourse guarantee of $\tilde{O}(1)$. Each phase is further partitioned into {\em epochs}, as follows.

By induction hypothesis, we start an epoch with an $100$-approximate $k$-median solution $\calU_{\init}$ that is robust and also maximally $\ell$-stable, for some $\ell \in [0, k]$, w.r.t.~the current input $\calP_{\init}$. Let $\lambda_{\init} := \OPT_k(\calP_{\init})$ denote the optimal objective value at this point in time.
We then compute a subset $\calU \subseteq \calU_{\init}$ of $(k-\ell)$ centers, such that $\cost{\calU, \calP_{\init}} \leq c \cdot \cost{\calU_{\init}, \calP_{\init}} \leq  100c \cdot \OPT_{k}(\calP_{\init}) =  100c \cdot \lambda_{\init}$. The epoch  lasts for the next $(\ell+1)$ updates.

The algorithm {\em lazily} handles the first $\ell$ updates in the epoch, incurring a worst-case recourse of one per update: Whenever a point $p$ gets inserted into $\calP$, it sets $\calU \leftarrow \calU \cup \{ p\}$. Since initially $|\calU| = k-\ell$, the set $\calU$ never grows large enough to contain more than $k$ centers. Further, the objective  of the maintained solution $\calU$ does not increase due to these $\ell$ updates, and remains
\begin{equation}
\label{eq:intro:lazy}
\cost{\calU, \calP} \leq 100c \cdot \lambda_{\init} \leq  200c \cdot \OPT_{k}(\calP),
\end{equation}
 where the last inequality holds because $\OPT_{k}(\calP)$ is {\em almost monotone} as $\calP$ undergoes point-insertions (more precisely, it can decrease by at most a factor of $2$).

While handling the last (i.e., $(\ell+1)^{th}$) update in the epoch, our goal is to come up with a $k$-median solution $\calU_{\final}$ such that: (i) the induction hypothesis holds w.r.t.~$\calU_{\final}$ and $\calP_{\final}$ (where $\calP_{\final}$ is the state of the input at the end of the epoch), and (ii) the recourse remains small, i.e., $\left| \calU_{\final} \oplus \calU_{\init}\right| = O(\ell+1)$.
We can assume that $\OPT_{k}(\calP_{\final}) \leq 18 \cdot \lambda_{\init}$, for otherwise we would initiate a new phase at this point in time.

We find the set $\calU_{\final}$ as follows. Let $\calV$ be an optimal $k$-median solution w.r.t.~$\calP_{\final}$, so that:
\begin{equation}
\label{eq:intro:1}
\cost{\calV, \calP_\init} \leq \cost{\calV, \calP_{\final}} = \OPT_k(\calP_{\final}) \leq 18 \cdot \OPT_k(\calP_{\init}) \leq 18 \cdot \cost{\calU_{\init}, \calP_{\init}}.
\end{equation}

Applying \Cref{lm:intro}, we find a set of $k$ centers $\calW^\star$ such that
\begin{equation}
\label{eq:intro:3}
\left| \calW^\star \oplus \calU_{\init}\right| \leq 5\ell + 5 \text{ and }
\cost{\calW^\star, \calP_{\final}} \leq 3 \cdot \cost{\calV, \calP_{\final}} = 3 \cdot \OPT_k(\calP_{\final}).
\end{equation}
Next, we call a subroutine \Robustify$(\calW^\star)$ which returns a set of $k$ robust centers $\calU_{\final}$ such that $\cost{\calU_{\final}, \calP_{\final}} \leq (3/2) \cdot \cost{\calW^{\star}, \calP_{\final}} = (9/2) \cdot \OPT_k(\calP_{\final}) \leq 100 \cdot \OPT_k(\calP_{\final})$ (see \Cref{lem:cost-robustify-part-1}). This restores the induction hypothesis for the next epoch, w.r.t.~$\calU_{\final}$ and $\calP_{\final}$. We can show that the subroutine \Robustify \ works in such a manner that the step where we transition from $\calW^{\star}$ to $\calU_{\final}$ incurs at most $\tilde{O}(1)$ recourse, amortized over the entire sequence of updates within a phase (spanning across multiple epochs).  This implies \Cref{thm:intro:soda}.

\subsection{Our Approach}
\label{sec:intro:challenges}

At a high level, we achieve our result in two parts. First, we generalize the framework of \cite{soda/FichtenbergerLN21} to achieve $O(1)$-approximation and $\tilde{O}(1)$ recourse in the fully dynamic setting. 
Second, we use the randomized local search procedure
to implement our algorithm in $\tilde{O}(k)$ update time. In addition, both these parts require us to come up with important and new technical insights of our own. Below, we explain three significant challenges and outline how we overcome them. Challenge I and Challenge II refers to the first part (approximation and recourse guarantees), whereas Challenge III refers to the third part (update time guarantee).

\subsubsection{Challenge I: Double-sided Stability} 
In \Cref{sec:intro:incremental}, we crucially relied on the observation that the optimal $k$-median objective is (almost) monotonically increasing as more and more points get inserted into $\calP$. This allowed us to derive \Cref{eq:intro:lazy}, which guarantees that the maintained solution $\calU$ remains $O(1)$-approximate while we lazily handle the first $\ell$ updates within the epoch. This guarantee, however, breaks down in the fully dynamic setting: If points can get deleted from $\calP$, then within an epoch we might end up in a situation where $\OPT_k(\calP) \ll \lambda_{\init}$. To address this issue, we derive a new {\em double-sided stability} property in the fully dynamic setting (see \Cref{lem:double-sided-stability-part-1}).
Informally, this implies that if $\OPT_{k-\ell}(\calP_{\init}) = \Theta(1) \cdot \OPT_{k}(\calP_{\init})$ (which follows from the hypothesis  at the start of an epoch), then for some $\Theta(\ell) = r \leq \ell$ we have $\OPT_{k+r}(\calP_{\init}) = \Theta(1) \cdot \OPT_{k}(\calP_{\init})$. Furthermore, we have $\left| \calP \oplus \calP_{\init}\right| \leq r$ throughout the first $r$ updates in the epoch, which gives us: $\OPT_k(\calP) \geq  \OPT_{k+r}(\calP_{\init})$ (see \Cref{lem:lazy-updates-part-1}).
It follows that for the first $r$ updates in the epoch, we have $\OPT_k(\mathcal{P}) \geq \OPT_{k+r}(\calP_{\init}) = \Theta(1) \cdot \OPT_k(\mathcal{P}_{\init}) = \Theta(1) \cdot \lambda_{\init}$.
Accordingly, we truncate the epoch to last for only $r+1$ updates, and now we can rule out the scenario where $\OPT_k(\calP)$ drops significantly below $\lambda_{\init}$ during the epoch. But since $r = \Omega(\ell)$, the epoch remains sufficiently long, so that we can still manage to generalize the recourse analysis from \Cref{sec:intro:incremental}.

\subsubsection{Challenge II: Getting Rid of the Phases}
\label{sec:challenge2}
To derive \Cref{eq:intro:1}, we need to have $\OPT_k(\calP_{\final}) \leq 18 \cdot \OPT_k(\calP_{\init})$. This is precisely the reason why the algorithm in \cite{soda/FichtenbergerLN21} works in {\em phases}, so that $\OPT_k(\mathcal{P})$ increases by at most a $O(1)$ factor within each phase. Further, the analysis in \cite{soda/FichtenbergerLN21} crucially relies on showing that we incur $\tilde{O}(n)$ total recourse within a phase. This, combined with the fact that there are $\tilde{O}(1)$ phases overall in the incremental setting, implies the amortized recourse guarantee. 

From the preceding discussion, it becomes apparent that we cannot hope to extend such an argument in the fully dynamic setting, because it is not possible to argue that  we have at most $\tilde{O}(1)$ phases when the value of $\OPT_k(\calP)$ can fluctuate in either direction (go up or down)  over a sequence of fully dynamic updates.
To circumvent this obstacle, we make the following subtle but important change to the framework of~\cite{soda/FichtenbergerLN21}. Recall \Cref{eq:intro:3}.
Note that we can find the set $\calW^\star$ by solving the following computational task:\footnote{For now, we ignore any consideration about keeping the update time of our algorithm low, or even polynomial.} Compute the set $\calW^\star$ of $k$ centers, which minimizes the $k$-median objective {\bf w.r.t.~$\calP_{\final}$}, subject to the constraint that it can be obtained by adding/removing $\Theta(\ell+1)$ points in $\calU_{\init}$. Informally, in our algorithm, we replace this task by three separate (and new) tasks, which we perform one after another (see Step 4 in \Cref{sec:alg:describe}). 
\begin{itemize}
\item {\bf Task (i).} Find a set of $k+\Theta(\ell+1)$ centers $\calU^{\star}$, which minimizes the $(k+\ell+1)$-median objective {\bf w.r.t.~$\calP_{\init}$}, such that $\calU^{\star}$ is obtained by adding $\Theta(\ell+1)$ many centers to $\calU_{\init}$.
\item {\bf Task (ii).} Set $\calV^{\star} \leftarrow \calU^{\star} \cup (\calP_{\final} \setminus \calP_{\init})$.
\item {\bf Task (iii).} Find a set of $k$ centers $\calW^{\star}$, which minimizes the $k$-median objective {\bf w.r.t.~$\calP_{\final}$}, such that $\calW^\star$ is obtained by removing $\Theta(\ell+1)$ many centers from $\calV^{\star}$.
\end{itemize}
Thus, while adding centers in Task (i), we optimize the $(k+\ell+1)$-median objective w.r.t.~$\calP_{\init}$.
On the other hand, while removing centers in Task (iii), we optimize the $k$-median objective w.r.t.~$\calP_{\final}$.
This is in sharp contrast to the approach in \cite{soda/FichtenbergerLN21}, where we need to optimize the $k$-median objective w.r.t.~$\calP_{\final}$, {\em both} while adding centers and while removing centers. {\em Strikingly, we show that this modification allows us to get rid of the concept of phases altogether}. In particular, our algorithm can be cast in the classical {\em periodic recomputation} framework: We work in epochs. Within an epoch we handle the updates lazily, and at the end of the epoch we reinitialize our maintained solution so that we get ready to handle the next epoch. See \Cref{sec:recourse:part1} for details.

\subsubsection{Challenge III: Achieving Fast Update Time} As mentioned previously, the update time of the algorithm in~\cite{soda/FichtenbergerLN21} is prohibitively large. This occurs because of the following computationally expensive {\em steps} at the start and at the end of an epoch.\footnote{Note that it is straightforward to lazily handle the  updates within the epoch.} (1)
At the start of an epoch, \cite{soda/FichtenbergerLN21} computes the value of $\ell$, by solving an LP for the $(k-s)$-median problem with potential centers $\calU_\init$, for each $s \in [0,k-1]$.
(2) Next, to initialize the subset $\calU \subseteq \calU_\init$ of $(k-\ell)$ centers at the start of the epoch, \cite{soda/FichtenbergerLN21} again invokes an algorithm for the $(k-\ell)$-median problem  from scratch. (3) At the end of the epoch, \cite{soda/FichtenbergerLN21} solves another LP and applies a rounding procedure, to get an approximation of the desired set $\calW^\star$  (see \Cref{eq:intro:3}).
(4) Finally, at the end of the epoch, the call to the \Robustify \ subroutine also takes a prohibitively long time for our purpose.

In contrast, we take alternative approaches while performing the above steps. At the start of an epoch, we implement Steps (1) and (2) via  randomized local search (see 
\Cref{sec:introduce-rand-local-search}). For Step (3), we compute (an approximation) of the set $\calW^{\star}$ by solving the three tasks outlined in \Cref{sec:challenge2}.  One of our  contributions is to design a new algorithm for Task (i) that runs  in $\Tilde{O}(n\cdot(\ell + 1))$ time, assuming it has access to some auxiliary data structures (see \Cref{lm:developcenters}).
It is trivial to perform Task (ii).
For Task (iii), we again invoke the randomized local search procedure (see 
\Cref{sec:introduce-rand-local-search}).

Finally, for step (4),
 we need to efficiently implement the calls to $\Robustify(\cdot)$.  See \Cref{sec:robustify-implementation} for a more detailed discussion on this challenge, and how we overcome it (we defer the discussion to \Cref{sec:robustify-implementation} because it requires an understanding of the inner workings of the $\Robustify(\cdot)$ subroutine, which we have not described until now).

\section{Preliminaries}
\label{sec:prelim}

We now define some basic notations, and recall some relevant results from the existing literature. For the sake of completeness, we provide self-contained proofs for most of the lemmas stated in this section, but defer those proofs (since we do {\em not} take any credit for them) to \Cref{sec:missing:proof:prelim}.

 Consider a set of points $\ground$ and a distance function $d : \ground \times \ground \rightarrow \mathbb{R}^+$ that together form a metric space. The input to our dynamic algorithm is a subset $\calP \subseteq \ground$, which changes by means of updates. Let $n$ be an upper bound on the maximum size of $\calP$ throughout these updates. Each update either inserts a point $p \in \ground \setminus \calP$ into $\calP$, or deletes a point $p \in \calP$ from $\calP$. At all times, we have to maintain a set $\calU \subseteq \ground$ of at most $k$ ``centers'', so as to minimize the objective function
$$\cost{\calU, \calP} := \sum_{p \in \calP} d(p, \calU), \text{ where } d(p, \calU) := \min_{q \in \calU} d(p, q) \text{ is the distance from } p \text{ to the set } \calU.$$

We will refer to this  as the dynamic {\bf improper $k$-median} problem. Our goal is to design an algorithm for this problem that has: (1) good approximation ratio, (2) small update time, which is the time it takes to process an update in $\calP$, and (3) small recourse, which is the number of changes (point insertions/deletions) in the maintained solution $\calU$ per update. What makes this setting distinct from the standard  $k$-median problem is this: Here, we are allowed to open centers at locations that are not part of the current input, i.e., we can have $\calU \cap (\ground \setminus \calP) \neq \emptyset$. Nevertheless, in a black-box manner we can convert any dynamic algorithm for improper $k$-median into a dynamic algorithm for $k$-median, with essentially the same guarantees (see Lemma~\ref{lm:improper}). Accordingly, {\bf for the rest of this paper, we focus on designing a dynamic algorithm for improper $k$-median}.

\begin{lemma}[\cite{FOCS24kmedian}]
\label{lm:improper}
    Given an $\alpha$-approximation algorithm for  dynamic {\em improper} $k$-median, we can get a $2\alpha$-approximation algorithm for  dynamic $k$-median,  with an extra $O(1)$ multiplicative factor overhead in the recourse, and an extra $\tilde{O}(n)$ additive factor overhead in the update time.
\end{lemma}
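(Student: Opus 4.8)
The plan is to run the given $\alpha$-approximate dynamic improper $k$-median algorithm $\mathcal{A}$ as a black box, so that at all times we have its center set $\calU \subseteq \ground$ with $|\calU| \le k$, and to maintain on top of it a \emph{snapped} proper solution. Concretely, for each center $u \in \calU$ I would keep a representative $\pi(u) \in \calP$ equal to a closest input point to $u$, i.e.\ $d(u, \pi(u)) = d(u, \calP)$, and output $\calU' := \{\pi(u) : u \in \calU\}$. Then $\calU' \subseteq \calP$ and $|\calU'| \le |\calU| \le k$, so $\calU'$ is feasible for (proper) $k$-median. For the approximation, fix any $q \in \calP$ and let $u \in \calU$ be a closest center to $q$; the triangle inequality gives $d(q, \calU') \le d(q, \pi(u)) \le d(q,u) + d(u, \pi(u)) = d(q, \calU) + d(u, \calP) \le 2\,d(q, \calU)$, where the last step uses $q \in \calP$. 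Summing over $q \in \calP$ yields $\cost{\calU', \calP} \le 2\,\cost{\calU, \calP} \le 2\alpha \cdot \OPT_k(\calP)$, which is the claimed $2\alpha$ guarantee. All the real work is in maintaining $\pi$ (hence $\calU'$) efficiently and with small recourse.

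There are four event types to react to. (a) $\mathcal{A}$ adds a center $u$ to $\calU$: compute $\pi(u)$ by one scan over $\calP$ in $O(n)$ time and insert it into $\calU'$. (b) $\mathcal{A}$ removes a center $u$: drop $u$ from the domain of $\pi$ and, unless another center still maps to $\pi(u)$, delete $\pi(u)$ from $\calU'$; using inverse lists $L_q := \{u \in \calU : \pi(u) = q\}$ this costs $O(1)$ time. (c) A point $p$ is inserted into $\calP$: probe $d(p,u)$ for every $u \in \calU$ in $O(k) \le O(n)$ time and reassign $\pi(u) \leftarrow p$ for each $u$ with $d(p,u) < d(\pi(u), u)$. (d) A point $p$ is deleted from $\calP$: for each $u \in L_p$, recompute $\pi(u)$ by scanning $\calP \setminus \{p\}$, updating $\calU'$ and the inverse lists. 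Events (a) and (b) cost $O(1)$ recourse and $O(n)$ time per unit of $\mathcal{A}$'s recourse, which is comfortably within the stated overheads, so the crux is bounding events (c) and (d).

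The main obstacle is that a single insertion can become the new nearest input point of many centers at once, and a single deletion can invalidate the representatives of many centers at once, so naively one step might cost $\Theta(k)$ recourse and $\Theta(nk)$ time. To get around this I would exploit two things. First, $\calU'$ is a \emph{set}, so whenever several centers share a representative, simultaneously reassigning all of them contributes only $O(1)$ to the recourse — a costly step must move representatives that were pairwise distinct. Second, I would run an amortized charging argument: each reassignment of $\pi(u)$ is charged either to $\mathcal{A}$ adding $u$ (for its very first assignment), or — as a ``pre-payment'' made at the moment the value was installed — to the eventual deletion of the point $\pi(u)$ currently equals; this pays off all of event (d). What remains, and what I expect to be the genuinely delicate point, is bounding the total number of ``a freshly inserted point is strictly closer to some center than its current representative'' occurrences over the update sequence, so as to control the residual recourse and probing time of event (c); handling adversarial interleavings that repeatedly re-concentrate many representatives onto one point, while keeping the final bound at $O(1)$-multiplicative recourse and $\tilde O(n)$-additive update time on top of $\mathcal{A}$, is where the care is needed. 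Finally I would check that the auxiliary structures ($\pi$ and the lists $L_q$), together with the $O(n)$-time nearest-point scans, all fit the $\tilde O(n)$ additive budget and that the reduction commutes with $\mathcal{A}$ update-by-update, so correctness holds after every update.
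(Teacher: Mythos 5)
Your approximation argument is correct and matches the paper's own Lemma~\ref{lem:proper-is-2-approx-of-improper}: snapping each improper center $u$ to a nearest input point $\pi(u)\in\calP$ loses at most a factor $2$ by the triangle inequality, since $d(q,\pi(u))\le d(q,u)+d(u,\calP)\le 2\,d(q,u)$ for $q\in\calP$. Note, however, that the paper does not prove this lemma at all; it defers entirely to Section~9 of \cite{FOCS24kmedian}, whose content is precisely the part you leave open. And that part is where your scheme, as stated, breaks: maintaining the \emph{exact} nearest input point for every center does not give $O(1)$-multiplicative recourse overhead. Consider an adversary that repeatedly inserts and deletes a single point $p$ located so that $d(u,p)<d(u,\pi(u))$ for all $k$ centers. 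On insertion, all $k$ representatives collapse onto $p$, so $\calU'$ loses up to $k$ distinct points (your observation that $\calU'$ is a set only bounds the \emph{additions}, not the removals of the $k$ formerly distinct representatives); on deletion of $p$, all $k$ centers re-project to distinct points again. This is $\Theta(k)$ recourse per update while the improper algorithm's recourse is zero, and the deletion step costs $\Theta(nk)$ time, exceeding the $\tilde O(n)$ additive budget. Your proposed charging (``pre-pay the reassignment to the eventual deletion of the installed point'') pays one unit per (center, installation) pair, but the number of installations triggered by a single update is unbounded in terms of anything you control, so it does not close the loop — as you yourself acknowledge.

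The missing idea is to make the projection \emph{lazy with slack} rather than exact: maintain for each center $u$ a representative $p_u\in\calP$ with $d(u,p_u)\le c\cdot d(u,\calP)$ for a constant $c>1$, re-projecting only when $p_u$ is deleted or when a strictly better candidate improves the distance by more than a factor of $c$. Since distances lie in $[1,\Delta]$, each center's representative can then improve at most $O(\log\Delta)$ times between consecutive changes to $u$ itself, which is what lets one charge all representative changes to the improper algorithm's recourse and to the update sequence, yielding the claimed $O(1)$-multiplicative recourse and $\tilde O(n)$-additive time (at the price of a $(1+c)$ rather than $2$ approximation loss, or with a more careful rule to recover the factor $2$). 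Without some such decoupling of ``number of representative changes per update'' from $k$, the reduction does not meet the stated bounds.
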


\medskip
\noindent {\bf Remark.} At this point, the reader might get alarmed by the fact that \Cref{lm:improper} incurs an additive overhead of $\tilde{O}(n)$ update time. To assuage this concern, in \Cref{improve:updatetime:k}, we explain how to bring down the update time from $\tilde{O}(n)$ to $\tilde{O}(k)$, using standard sparsification techniques.

\subsection{Basic Notations}
\label{sec:notations}

By a simple scaling, we can assume that all of the distances in the metric space lie in the range $[1, \Delta]$, where $\Delta$ is the {\em aspect ratio}. Throughout the paper, we use the symbol $\ground$ to denote the underlying metric space with distance function $d : \ground \times \ground \rightarrow \mathbb{R}^{\geq 0}$, and $\calP \subseteq \ground$ to denote the current input.
For simplicity, for each set $S$ and element $p$, we denote $S\cup \{ p \}$ and $S \setminus \{p\}$ by $S+p$ and $S-p$ respectively.
For two sets of points $S$ and $S'$, we use $S \oplus S'$ to denote their symmetric difference.

For each $S \subseteq \ground$, we define $\pi_S: \ground \rightarrow S$ to be the projection function onto $S$, i.e., $\pi_S(x) := \arg\min_{s \in S} d(x,s)$,
breaking the ties arbitrarily. 
For each $\calU \subseteq \ground$ and $S \subseteq \calP$, we also define
$$\avcost{\calU, S} := \frac{\cost{\calU, S}}{|S|} = \frac{\sum_{p \in S} d(p, \calU)}{|S|} = \frac{\sum_{p \in S} \min_{q \in \calU} d(p, q)}{|S|}.$$
Consider any  subset of points $\calC \subseteq \ground$.
For every $k \geq 1$, we let $\OPT_{k}^{\calC}(\calP)$ denote the cost of the optimum $k$-median solution for $\calP \subseteq \ground$, where we can only open centers from $\calC$. Thus, we have $$\OPT^{\calC}_k(\calP) = \min\limits_{\substack{\calU \subseteq \calC, |\calU|\leq k}} \cost{\calU, \calP}.$$ 
When $\calC = \ground$, we slightly abuse the notation and write  $\OPT_k(\calP)$ instead of $\OPT_k^{\calC}(\calP)$.
Next, for each $\calU \subseteq \ground$ and $u \in \calU$, we define $C_u(\calU, \calP) := \{ p\in\calP \mid \pi_\calU(p) = u \}$ to be the set of points in $\calP$ that are ``assigned to'' the center $u$ in the solution $\calU$ (breaking ties arbitrarily). For each point $p \in \ground$ and value $r \geq 0$, let $\text{Ball}^\calP_r(p) := \{ q \in \calP \mid d(p,q) \leq r \}$ denote the ball of radius $r$ around $p$.
Note that if $p \in \ground \setminus \calP$, then $p$ itself is not part of the ball  $\text{Ball}^\calP_r(p)$.
Finally, throughout the paper we  use a sufficiently large constant parameter $\gamma = 4000$.

\subsection{Robust Centers}
\label{sec:robustcenters}

We will use the notion of {\em robust centers}~\cite{soda/FichtenbergerLN21}. Morally,  a point $p \in \ground$ is {\em $t$-robust} for an integer $t \geq 1$ iff it satisfies the following condition for all $i \in [1, t]$: Let $B_i = \text{Ball}_{10^i}^\calP(p)$, and consider any point $q \in \calP$ with $d(p, q) \ll 10^i$, i.e., $q$ is sufficiently close to $p$ compared to the radius of $B_i$. Then $\cost{p, S} \leq O(1) \cdot \cost{q, S}$  for all $B_i \subseteq S \subseteq \calP$. In words, the point $p$ is a good approximate $1$-median solution, compared to any other nearby point,  at every ``distance scale'' up to  $10^t$. 

The above definition, however, is too strong, in the sense that there might not exist any $t$-robust point under this definition.\footnote{For instance, it might be the case that there are (say) $\lambda$ many points in $B_i$, and all of them (except $p$) are in the exact same location as $q$, i.e., $d(p', q) = 0$ for all $p' \in B_i - p$. Then, the condition $\cost{p, B_i} \leq O(1) \cdot \cost{q, B_i}$ clearly does not hold, since the LHS is $(\lambda-1) \cdot d(p, q)$, whereas the RHS is only $d(p, q)$.}
Instead, the actual definition that we will use is stated below (see Definition~\ref{def:robust}), along with the relevant  properties that follow from it (see Lemma~\ref{lem:robustness-property-1-part-1} and Lemma~\ref{lem:robustness-property-2-part-1}). Conceptually, here the key difference from the idealized definition is that the balls $\{B_i\}_i$ are centered around different points $\{p_i\}_i$, with $p_0 = p$ and $d(p, p_i) \ll 10^i$ for all $i \in [1, t]$.

\begin{definition}
\label{def:robust}
    Let $(p_0,p_1,\ldots, p_t)$ be a sequence of $t+1$ points in $\ground$,  and let $B_i = \text{Ball}_{10^i}^\calP(p_i)$ for each $i \in [0,t]$.
    We refer to $(p_0,p_1,\ldots, p_t)$ as a {\em $t$-robust sequence} iff for every $i \in [1, t]$:
    \begin{eqnarray*}
        p_{i-1} =
        \begin{cases}
            p_i \quad & \text{if} \ 
            \avcost{p_i, B_i} \geq 10^i / 5; \\
            q_i \quad & \text{otherwise, where } q_i = \arg\min\limits_{q \in B_i + p_i} \cost{q, B_i}.
        \end{cases}
    \end{eqnarray*}    
     We say that a  point $p \in \ground$ is {\em $t$-robust} iff there exists a $t$-robust sequence $(p_0,p_1,\ldots, p_t)$ with $p_0 = p$.
\end{definition}

\begin{lemma}[\cite{soda/FichtenbergerLN21}]
\label{lem:robustness-property-1-part-1}
    Let $(p_0,p_1,\ldots , p_t)$ be a $t$-robust sequence and let $B_i = \text{Ball}^\calP_{10^i}(p_i)$ for all $i \in [0,t]$. Then, for all $i \in [1,t]$, we have
    $d(p_{i-1}, p_i) \leq  10^i/2$, $B_{i-1} \subseteq B_i$ and $d(p_0, p_i) \leq 10^i/2$.
\end{lemma}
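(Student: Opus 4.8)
The plan is to prove all three claims by induction on $i$, working down from $i = t$ using the recursive structure of \Cref{def:robust}. Fix a $t$-robust sequence $(p_0, p_1, \ldots, p_t)$ and set $B_i = \text{Ball}^\calP_{10^i}(p_i)$. The key observation is that the definition constrains only the relationship between consecutive points $p_{i-1}$ and $p_i$: either $p_{i-1} = p_i$, or $p_{i-1} = q_i$ where $q_i = \arg\min_{q \in B_i + p_i} \cost{q, B_i}$ and the ``otherwise'' branch was taken, meaning $\avcost{p_i, B_i} < 10^i/5$.

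First I would establish the bound $d(p_{i-1}, p_i) \leq 10^i/2$ for each fixed $i \in [1,t]$ individually (this needs no induction). In the first case it is trivial since $d(p_i, p_i) = 0$. In the second case, $p_{i-1} = q_i \in B_i + p_i$. If $q_i = p_i$ we are done; otherwise $q_i \in B_i$, so $d(p_{i-1}, p_i) = d(q_i, p_i) \leq 10^i \not\leq 10^i/2$ — so a cruder bound alone is not enough, and I must use the cost inequality. Here is the trick: since $q_i$ minimizes $\cost{\cdot, B_i}$ over $B_i + p_i$, we have $\cost{q_i, B_i} \leq \cost{p_i, B_i} = |B_i| \cdot \avcost{p_i, B_i} < |B_i| \cdot 10^i/5$. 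On the other hand, by triangle inequality every point of $B_i$ is within $d(q_i, p_i) + 10^i$ of $q_i$, but more usefully: if $d(q_i, p_i)$ were large, then since all of $B_i$ lies within distance $10^i$ of $p_i$, we could lower bound $\cost{q_i, B_i} \geq |B_i|(d(q_i,p_i) - 10^i)$ when $d(q_i, p_i) > 10^i$ — but that case is already excluded since $q_i \in B_i$ forces $d(q_i, p_i) \leq 10^i$. So I actually need a sharper argument: I expect the intended bound comes from relating $\cost{q_i, B_i}$ to $\cost{p_i, B_i}$ and using that a point far from the ``center of mass'' of $B_i$ cannot have small cost. Concretely, pick any $p \in B_i$; then $d(q_i, p_i) \leq d(q_i, p) + d(p, p_i) \leq d(q_i, p) + 10^i$, and averaging over $p \in B_i$ gives $d(q_i, p_i) \leq \avcost{q_i, B_i} + 10^i \leq \avcost{p_i, B_i} + 10^i < 10^i/5 + 10^i$, which is still too weak. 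The genuine fix must exploit that $\avcost{p_i,B_i} < 10^i/5$ more cleverly — e.g. a Markov-type argument that a constant fraction of $B_i$ lies within $O(10^i/5)$ of $p_i$, hence within that distance of $q_i$ too after accounting for $q_i$'s cost, pinning down $d(q_i,p_i)$. I'd work this constant-chasing out carefully; this is the \textbf{main obstacle}.

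Next, granting $d(p_{i-1}, p_i) \leq 10^i/2$, I would prove $B_{i-1} \subseteq B_i$. Take any $q \in B_{i-1} = \text{Ball}^\calP_{10^{i-1}}(p_{i-1})$. Then $d(q, p_i) \leq d(q, p_{i-1}) + d(p_{i-1}, p_i) \leq 10^{i-1} + 10^i/2 \leq 10^i$, so $q \in B_i$; note $q \in \calP$ already, so $q \in \text{Ball}^\calP_{10^i}(p_i) = B_i$. This step is a clean triangle-inequality computation once the distance bound is in hand.

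Finally, for $d(p_0, p_i) \leq 10^i/2$ I would induct on $i$. The base case $i = 1$ is exactly the first claim. For the inductive step, by the triangle inequality $d(p_0, p_i) \leq d(p_0, p_{i-1}) + d(p_{i-1}, p_i) \leq 10^{i-1}/2 + 10^i/2$, and since $10^{i-1}/2 + 10^i/2 = (10^{i-1} + 10^i)/2 = 10^{i-1}(1 + 10)/2 = 11 \cdot 10^{i-1}/2 \leq 10^i/2$ (as $11 \leq 100$), the bound follows. I would present the three parts in this order — fixed-$i$ distance bound, then ball containment, then the telescoped distance-to-$p_0$ bound — and flag that essentially all the real work is in pinning down the constant in the first part using the $\avcost{p_i, B_i} < 10^i/5$ threshold.
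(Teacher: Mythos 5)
There is a genuine gap, and it sits exactly where you flagged it. Your attempted bound on $d(p_{i-1},p_i)$ fails because you average only one of the two terms in the triangle inequality: you replace $d(p,p_i)$ by the worst-case ball radius $10^i$ instead of keeping it inside the average. The fix is to average \emph{both} terms. Write $d(q_i,p_i) = \frac{1}{|B_i|}\sum_{p\in B_i} d(q_i,p_i) \leq \frac{1}{|B_i|}\sum_{p\in B_i}\bigl(d(q_i,p)+d(p,p_i)\bigr) = \avcost{q_i,B_i}+\avcost{p_i,B_i}$. Since $q_i$ minimizes $\cost{\cdot,B_i}$ over $B_i+p_i$, we have $\cost{q_i,B_i}\leq\cost{p_i,B_i}$, hence $\avcost{q_i,B_i}\leq\avcost{p_i,B_i}$, and the whole expression is at most $2\cdot\avcost{p_i,B_i} < \tfrac{2}{5}\cdot 10^i$. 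No Markov-type argument or further constant-chasing is needed; this is precisely the paper's proof, and it gives a bound \emph{strictly sharper} than the $10^i/2$ you were aiming for — which matters below.

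Your third part also contains an arithmetic error that breaks the induction. You claim $10^{i-1}/2 + 10^i/2 = 11\cdot 10^{i-1}/2 \leq 10^i/2$ "as $11\leq 100$", but $10^i/2 = 10\cdot 10^{i-1}/2$, so you would need $11\leq 10$. The inductive step genuinely fails if each link of the chain is only bounded by $10^i/2$: the telescoped sum $\sum_{j=1}^i 10^j/2$ is about $\tfrac{5}{9}\cdot 10^i$, which exceeds $10^i/2$. This is why the sharper per-step bound $\tfrac{2}{5}\cdot 10^j$ from the first part is essential: summing the geometric series gives $d(p_0,p_i)\leq \tfrac{2}{5}\sum_{j=1}^{i}10^j \leq \tfrac{2}{5}\cdot\tfrac{10^{i+1}}{9} = \tfrac{4}{9}\cdot 10^i \leq 10^i/2$, which is how the paper closes this step. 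Your second part ($B_{i-1}\subseteq B_i$) is fine as written.
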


\begin{lemma}[\cite{soda/FichtenbergerLN21}]
\label{lem:robustness-property-2-part-1}
    Let $(p_0,p_1,\ldots , p_t)$ be a $t$-robust sequence and let $B_i = \text{Ball}^\calP_{10^i}(p_i)$ for all $i \in [0,t]$. Then, for every $i \in[0,t]$ and every  $B_i \subseteq S \subseteq \calP $, we have
    $\cost{p_0, S} \leq (3/2) \cdot  \cost{p_i,S}$.
\end{lemma}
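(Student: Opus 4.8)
The plan is to induct on $i$, going \emph{downward} from $i$ to $0$ along the robust sequence, and to prove the slightly more convenient claim: for every $j \in [0,t]$ and every $B_j \subseteq S \subseteq \calP$, we have $\cost{p_0, S} \leq (3/2) \cdot \cost{p_j, S}$. Actually, for an induction to go through cleanly it is better to prove the telescoping bound $\cost{p_{i-1}, S} \leq \left(1 + \tfrac{1}{2\cdot 10^{i-1}}\text{-ish factor}\right)\cost{p_i, S}$ for $B_i \subseteq S \subseteq \calP$ and then multiply these estimates over all levels; the geometric decay of the per-level loss (because the radii grow like $10^i$) is what keeps the product bounded by $3/2$. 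So concretely: fix $i$ and $S$ with $B_i \subseteq S \subseteq \calP$; I want to compare $\cost{p_{i-1}, S}$ with $\cost{p_i, S}$.

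First I would split into the two cases of Definition~\ref{def:robust}. In the first case $p_{i-1} = p_i$ and there is nothing to prove at this level. In the second case $p_{i-1} = q_i = \arg\min_{q \in B_i + p_i}\cost{q, B_i}$, and we know $\avcost{p_i, B_i} < 10^i/5$, i.e.\ $\cost{p_i, B_i} < |B_i|\cdot 10^i/5$. The key geometric point is that every point of $S \setminus B_i$ is at distance $> 10^i$ from $p_i$ (by definition of $B_i = \text{Ball}^\calP_{10^i}(p_i)$), whereas $d(p_{i-1}, p_i) \leq 10^i/2$ by Lemma~\ref{lem:robustness-property-1-part-1}; hence for $p \in S \setminus B_i$ we get, by the triangle inequality, $d(p, p_{i-1}) \leq d(p, p_i) + 10^i/2 \leq d(p, p_i) + \tfrac12 d(p, p_i) = \tfrac32 d(p, p_i)$. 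So on the ``far'' part $S \setminus B_i$ the cost of $p_{i-1}$ is at most $\tfrac32$ times the cost of $p_i$ already. On the ``near'' part $B_i$, I use optimality of $q_i$ over $B_i + p_i \supseteq \{p_i\}$ to get $\cost{p_{i-1}, B_i} \leq \cost{p_i, B_i}$. Combining the two parts, $\cost{p_{i-1}, S} = \cost{p_{i-1}, B_i} + \cost{p_{i-1}, S \setminus B_i} \leq \cost{p_i, B_i} + \tfrac32 \cost{p_i, S\setminus B_i} \leq \tfrac32\cost{p_i, S}$.

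Now for the final assembly: given the target level $i$ and $B_i \subseteq S \subseteq \calP$, I would like to iterate the above from level $i$ down to level $1$. For this I need $B_{i-1} \subseteq S$ at each step so that the inequality applies again one level down — but that is exactly the nesting $B_{j-1} \subseteq B_j$ from Lemma~\ref{lem:robustness-property-1-part-1}, together with $B_i \subseteq S$, giving $B_j \subseteq S$ for all $j \leq i$. Chaining the per-level bounds would give $\cost{p_0, S} \leq (3/2)^i \cost{p_i, S}$, which is \emph{not} good enough — the constant blows up. \textbf{This is the main obstacle, and the fix is the one I flagged above:} the per-level loss is not a flat factor $3/2$ but actually $(1 + \tfrac{10^j/2}{10^j/2 + (\text{gap at level } j)})$-type, and more importantly, on the far part $S\setminus B_j$ the extra additive distance is only $d(p_{j-1},p_j)\le 10^j/2$ while each such point already contributes $> 10^j$; but crucially when we pass from level $j$ up to level $j+1$, the set $B_j$ on which we ``paid nothing extra'' (only $\cost{p_{j-1},B_j}\le\cost{p_j,B_j}$, i.e.\ factor $1$) is swallowed into the ``near'' part at level $j+1$. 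So one should track, for each point $p \in S$, the smallest index $j(p)$ with $p \in B_{j(p)}$ (set $j(p) = \infty$, i.e.\ treat it as always ``far'', if $p \notin B_t$), and bound $d(p, p_0)$ directly: by Lemma~\ref{lem:robustness-property-1-part-1}, $d(p_0, p_{j}) \le 10^{j}/2$, and one shows $d(p, p_0) \le d(p, p_{j(p)}) + d(p_{j(p)}, p_0) \le d(p,p_{j(p)}) + 10^{j(p)}/2 \le \tfrac32 d(p, p_{j(p)})$ using $d(p, p_{j(p)}) > 10^{j(p)-1}$ (point $p$ is outside the smaller ball $B_{j(p)-1}$), so $10^{j(p)}/2 = 5\cdot 10^{j(p)-1} < ?$ — here the constants need care, and I would instead run the clean telescoping: use the level-by-level inequality but observe that the ``factor $3/2$'' is only ever applied to the \emph{outermost} shell a point lives in, not repeatedly, because at every inner level that point sits in $B_j$ and incurs only a factor-$1$ (or, being more careful, a factor like $(1 + O(10^{-1}))$) comparison. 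Summing a single factor-$3/2$ hit per point, plus a convergent geometric series of $1 + O(10^{-(i-j)})$ corrections across the levels above, yields the overall bound $\cost{p_0, S} \le (3/2)\cost{p_i, S}$, and the $10^i$-scaling of the radii is precisely what makes that geometric series sum to something that keeps the constant at $3/2$. I would write the induction hypothesis as ``$\cost{p_0, S'} \le (3/2)\cost{p_j, S'}$ for all $B_j \subseteq S' \subseteq \calP$'' and in the inductive step from $j$ to $j+1$ apply it on the set $S' = S \cap B_{j}$ (or on $B_j \cup (S \setminus B_{j+1})$-type sets), carefully choosing the sets so the additive slack at level $j+1$ only ever multiplies the part of the cost living outside $B_{j+1}$; closing that bookkeeping is the one genuinely delicate part of the argument.
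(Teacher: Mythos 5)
Your skeleton is the right one --- split $S$ into a ball and its complement, use the optimality of $p_{j-1}$ on $B_j$ from Definition~\ref{def:robust} on the ball, use a triangle-inequality comparison on the complement, and telescope --- and you correctly diagnose that chaining the one-level bound $\cost{p_{i-1},S}\le \frac{3}{2}\cost{p_i,S}$ only gives $(3/2)^i$. But the fix you write down does not close, and the single observation that saves the constant is never stated. Your direct bound compares $d(p,p_0)$ with $d(p,p_{j(p)})$ where $B_{j(p)}$ is the \emph{smallest} ball containing $p$; membership $p\in B_{j(p)}$ only yields the \emph{upper} bound $d(p,p_{j(p)})\le 10^{j(p)}$, so the additive slack $d(p_0,p_{j(p)})\le 10^{j(p)}/2$ cannot be absorbed into a factor $3/2$ (your own ``$5\cdot 10^{j(p)-1}<\,?$'' is the symptom; the lower bound you invoke is on $d(p,p_{j(p)-1})$, a different center, and even using it you only get a constant around $6$). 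The correct move --- and the one the paper makes --- is to compare against the center of the largest ball \emph{not} containing the point: for $q\notin B_{j-1}$ one has $d(q,p_{j-1})\ge 10^{j-1}\ge 2\,d(p_0,p_{j-1})$ by Lemma~\ref{lem:robustness-property-1-part-1}, hence $d(q,p_{j-1})\ge \frac{2}{3}\,d(q,p_0)$ \emph{pointwise}. With that in hand the telescoping is lossless on every ball (via the aggregate inequality $\cost{p_j,B_j}\ge\cost{p_{j-1},B_j}$), and each point pays the $3/2$ exactly once, on the shell $B_j\setminus B_{j-1}$ it lives in, in a direct comparison with $p_0$. The bound $d(p_0,p_{j-1})\le 10^{j-1}/2$ already contains the geometric summation of the step lengths that you were trying to track by hand.

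Relatedly, your accounting (``the factor $3/2$ is only ever applied to the outermost shell a point lives in \dots at every inner level that point sits in $B_j$'') is backwards: in the naive level-by-level chaining, a point $q\in B_j\setminus B_{j-1}$ lies \emph{inside} $B_m$ for every $m\ge j$ but \emph{outside} $B_m$ for every $m\le j-1$, so it would incur the multiplicative $3/2$ at all of the levels $j-1,\dots,1$ \emph{below} it --- that is precisely where the blow-up comes from. The cure is not that the naive chaining secretly charges each point once; it is to replace the chain of one-step comparisons $p_m\to p_{m-1}$ on the far part by the single comparison $p_{j-1}\to p_0$ described above. Since you explicitly leave this bookkeeping open and the one concrete version you propose has the wrong center and the wrong constant, the proof as planned does not yet establish the stated $3/2$.
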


We next define the concept of a {\em robust collection of centers}.

\begin{definition}
\label{def:robuscenters}
A set of centers $\calW \subseteq \ground$ is {\em robust} iff the following holds for every $w \in \calW$:
    \begin{equation}\label{condition-robust-solution-part-1}
        w\  \text{is}\  t\text{-robust, where} \ t\ \text{is the smallest integer satisfying} \ 
        10^t \geq d(w,\calW - w )/ 200.
    \end{equation}
\end{definition}

Suppose that we have a set of centers $\calW \subseteq \ground$ that is {\em not} robust. A natural way to convert them into a robust set of centers is to call the subroutine \Robustify$(\calW)$, as described below.

\begin{algorithm}[H]
  \DontPrintSemicolon
  \While{there exist a $w \in \calW$ violating~(\ref{condition-robust-solution-part-1}) \label{line:whileloop:robustify}}{
    $t \gets $ Smallest integer satisfying $10^t \geq d(w,\calW-w)/100$. \label{line:constant:101} \label{line:constant}

    $w_0 \gets $ \MakeRbst$(w,t)$. \label{line:call:makerobust}

    $\calW \gets \calW - w + w_0$. \label{line:adjust}
    
    }
\caption{\Robustify$(\calW)$}
\label{alg:robustify}
\end{algorithm}

\begin{algorithm}[H]
  \DontPrintSemicolon
  $p_t \gets p$.

  \For{$i=t$ down to $1$}{
    $B_i \gets \text{Ball}_{10^i}^\calP(p_i)$. 

    \If{$\avcost{p_i, B_i} \geq 10^i/5$}{
        $p_{i-1} \gets p_i$.
    }\Else{
        $p_{i-1} \gets \arg\min\limits_{q \in B_i + p_i} \cost{q, B_i}$.
    }
  }
  \Return $p_0$.
\caption{\MakeRbst$(p, t)$}
\label{alg:makerobust}
\end{algorithm}

During a call to \MakeRbst$(p, t)$, we simply apply the rule from Definition~\ref{def:robust} to obtain a $t$-robust sequence $(p_0, p_1, \dots, p_t)$ with $p_t = p$, and then return the point $p_0$. Further, Line~\ref{line:constant} of the subroutine \Robustify$(\calW)$ considers the inequality $10^t \geq d(w,\calW-w)/100$, whereas~(\ref{condition-robust-solution-part-1}) refers to the inequality $10^t \geq d(w,\calW-w)/200$. This discrepancy in the constants on the right hand side ($100$ vs $200$) of these two inequalities is intentional, and plays a crucial role in deriving Lemma~\ref{lem:robustify-calls-once-part-1}.

\begin{lemma}[\cite{soda/FichtenbergerLN21}]\label{lem:robustify-calls-once-part-1}
    Consider any call to \Robustify$(\calW)$, and suppose that it sets $w_0 \leftarrow \MakeRbst(w, t)$ during some iteration of the {\bf while} loop. Then in subsequent iterations of the {\bf while} loop in the same call to \Robustify$(\calW)$, we will {\em not} make any call to  \MakeRbst$(w_0, \cdot)$.
\end{lemma}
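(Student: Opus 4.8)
The plan is to show that after we run $w_0 \leftarrow \MakeRbst(w,t)$ and perform the swap $\calW \leftarrow \calW - w + w_0$ in Line~\ref{line:adjust}, the point $w_0$ already satisfies the robustness condition~(\ref{condition-robust-solution-part-1}) with respect to the updated set $\calW$, and moreover will continue to satisfy it in all later iterations of the {\bf while} loop. Since the {\bf while} loop only calls \MakeRbst$(w', \cdot)$ for points $w'$ that violate~(\ref{condition-robust-solution-part-1}), this will immediately give the claim.

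The key quantitative point is the gap between the constants $100$ and $200$. When \MakeRbst$(w, t)$ is called, $t$ is chosen in Line~\ref{line:constant} as the smallest integer with $10^t \geq d(w, \calW - w)/100$, and the returned point $w_0$ is the start of a $t$-robust sequence $(w_0, \dots, w_t)$ with $w_t = w$. By \Cref{lem:robustness-property-1-part-1}, $d(w_0, w) = d(p_0, p_t) \leq 10^t/2$. I would first use this to argue that $w_0$ is ``close'' to $w$ relative to the scale $10^t$, so that distances from $w_0$ to the other centers in $\calW$ are, up to an additive $10^t/2$, the same as distances from $w$; in particular $d(w_0, \calW - w) \geq d(w, \calW-w) - 10^t/2$, and since $10^t \geq d(w,\calW-w)/100$ we get $d(w_0, \calW - w_0) \geq d(w, \calW - w)/2 \geq 10^{t-1}\cdot (\text{const})$, so the smallest integer $t'$ with $10^{t'} \geq d(w_0, \calW - w_0)/200$ satisfies $t' \leq t$ (using the factor-$2$ slack built into the $100$-vs-$200$ discrepancy, which exactly absorbs the $10^t/2$ displacement). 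Then, because $w_0$ starts a $t$-robust sequence, the prefix $(w_0, \dots, w_{t'})$ witnesses that $w_0$ is $t'$-robust — here I would invoke the definition of a $t$-robust sequence (\Cref{def:robust}), which is ``hereditary'' in the sense that any prefix of a $t$-robust sequence is a $t'$-robust sequence for $t' \leq t$ (the rule relating $p_{i-1}$ to $p_i$ only refers to indices $\leq i$, and $B_i$ depends only on $\calP$, not on $\calW$). Hence $w_0$ satisfies~(\ref{condition-robust-solution-part-1}) right after the swap.

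Finally, I would argue this is preserved under all subsequent {\bf while}-loop iterations. Subsequent iterations only modify $\calW$ by swaps of the form $\calW \leftarrow \calW - w' + w'_0$ for other violating centers $w'$; each such swap can only \emph{increase} $d(w_0, \calW - w_0)$ or leave it unchanged? — no, that is not obviously true, so the honest argument is: each swap replaces some $w'$ by $w'_0$ with $d(w', w'_0) \leq 10^{t'}/2$ where $10^{t'} \approx d(w', \calW-w')/100 \leq d(w', w_0)/100$, so the nearest-neighbour distance of $w_0$ changes by at most a $1 \pm 1/100$-type factor per swap; but one must ensure this does not accumulate badly. The cleanest route is instead to note that \MakeRbst is only ever called on centers currently violating~(\ref{condition-robust-solution-part-1}), so it suffices to show $w_0$ never becomes violating; and here I would lean on the invariant (maintained throughout the loop) that every center either has never been touched or was produced by a \MakeRbst call and still satisfies~(\ref{condition-robust-solution-part-1}), proving it by induction on loop iterations using the single-swap estimate above.

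The main obstacle I anticipate is exactly this last step: controlling how $d(w_0, \calW - w_0)$ — and hence the target scale $t$ for $w_0$ — drifts as other centers in $\calW$ get replaced by their robustified versions in later iterations, and checking that the $100$-vs-$200$ slack is enough to absorb \emph{all} such drift simultaneously rather than just the single displacement of $w_0$ itself. Getting the bookkeeping of these additive $10^{t}/2$ perturbations to close against the multiplicative factor-$2$ budget is the delicate part; everything else (the hereditary property of robust sequences, the distance bound from \Cref{lem:robustness-property-1-part-1}) is routine.
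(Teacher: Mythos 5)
Your first step --- showing that immediately after the swap $\calW \leftarrow \calW - w + w_0$ the new center $w_0$ satisfies Condition~(\ref{condition-robust-solution-part-1}) --- is sound and matches the first case of the paper's argument: writing $w'$ for the nearest neighbour of $w$ at the time of the call, minimality of $t$ gives $10^t < d(w,w')/10$, so $d(w_0,w) \leq 10^t/2 \leq d(w,w')/20$ by \Cref{lem:robustness-property-1-part-1}, hence $d(w_0, \calW - w_0) \leq d(w_0,w') \leq (21/20)\, d(w,w') \leq 200 \cdot 10^{t}$, and the prefix property of robust sequences does the rest. (Note in passing that for this you need the \emph{upper} bound $d(w_0,\calW - w_0) \leq d(w,\calW-w) + 10^t/2$, not the lower bound you actually wrote down, to conclude that the new target scale $t'$ satisfies $t' \leq t$.)

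The genuine gap is exactly where you flag it: you have no argument that $w_0$ \emph{stays} non-violating as later swaps perturb $d(w_0, \calW - w_0)$, and an induction that controls the drift one swap at a time does not close. Each replacement of the current nearest neighbour of $w_0$ can push that distance up by a factor of roughly $21/20$, and without the lemma itself you cannot bound how many times a given location gets re-replaced, so the multiplicative drift is a priori unbounded and eventually exhausts the constant slack between $100$ and $200$. The paper sidesteps accumulation entirely with a minimal-counterexample argument: assume some robustified center does receive a second \MakeRbst\ call, and take the \emph{first} pair $(w, w_0)$ for which this happens (ordered by the time of the second call). Then, between the creation of $w_0$ and its hypothetical re-processing, the specific center $w' = \pi_{\calW - w}(w)$ can have been replaced \emph{at most once}: a replacement of its substitute $w_0'$ would make $(w', w_0')$ an earlier offending pair. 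So only two cases need checking --- $w'$ still present, or $w'$ replaced once by some $w_0'$ with $d(w', w_0') \leq d(w,w')/10$ --- and in both, $d(w_0, \calW - w_0) \leq 2\, d(w,w') \leq 200 \cdot 10^{t}$, contradicting that $w_0$ was selected as violating. That single structural observation is the missing idea; with it, the bookkeeping you were worried about collapses to one additive perturbation rather than an unbounded chain.
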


\begin{lemma}[\cite{soda/FichtenbergerLN21}]\label{lem:cost-robustify-part-1}
    If $\calU$ is the output of \Robustify$(\calW)$, then
    $\cost{\calU, \calP} \leq \frac{3}{2} \cdot \cost{\calW, \calP}$.
\end{lemma}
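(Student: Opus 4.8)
The plan is to track how the cost of $\calW$ changes through the iterations of the \textbf{while} loop in \Robustify$(\calW)$. Each iteration replaces some $w \in \calW$ by $w_0 = \MakeRbst(w, t)$, where $t$ is the smallest integer with $10^t \geq d(w, \calW - w)/100$. The key is to bound the cost increase incurred by a single such swap, and then argue that these increases telescope (rather than compound) across all iterations, so that the total blow-up is a geometric sum bounded by $3/2$.

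First I would analyze a single call $w_0 \leftarrow \MakeRbst(w, t)$. By construction, $(w_0, w_1, \dots, w_t)$ with $w_t = w$ is a $t$-robust sequence, so \Cref{lem:robustness-property-2-part-1} applies: for every $B_t \subseteq S \subseteq \calP$ we have $\cost{w_0, S} \leq (3/2)\cdot\cost{w, S}$, where $B_t = \text{Ball}^\calP_{10^t}(w)$. The natural choice is $S = C_w(\calW, \calP)$, the cluster currently assigned to $w$ --- but this requires $B_t \subseteq C_w(\calW,\calP)$, i.e.\ every input point within distance $10^t$ of $w$ is actually assigned to $w$ in $\calW$. This holds because any such point $p$ satisfies $d(p, w) \leq 10^t$, while $d(p, \calW - w) \geq d(w, \calW - w) - d(p,w) \geq 100\cdot 10^{t} \cdot (\text{something}) - 10^t$; more carefully, since $10^{t-1} < d(w,\calW-w)/100$ we get $d(w, \calW-w) > 100 \cdot 10^{t-1} = 10^{t+1}$, hence $d(p, \calW - w) \geq 10^{t+1} - 10^t = 9 \cdot 10^t > 10^t \geq d(p,w)$, so $p$ is closer to $w$ than to any other center. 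Thus $B_t \subseteq C_w(\calW, \calP) \subseteq \calP$, and $\cost{w_0, C_w(\calW,\calP)} \leq (3/2)\cdot \cost{w, C_w(\calW,\calP)}$. Replacing $w$ by $w_0$ therefore changes the total cost by at most $\cost{w_0, C_w(\calW,\calP)} - \cost{w, C_w(\calW,\calP)} \leq (1/2) \cdot \cost{w, C_w(\calW, \calP)}$, since points outside $C_w(\calW,\calP)$ can only get cheaper.

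The main obstacle is that there can be many iterations of the \textbf{while} loop, and a naive bound would multiply a factor of $3/2$ per iteration, which is useless. This is exactly where \Cref{lem:robustify-calls-once-part-1} is essential: once a center $w_0$ is produced by \MakeRbst, it is never itself fed into \MakeRbst again in the same call to \Robustify. So I would set up a charging/potential argument: index the iterations, let $\calW^{(0)} = \calW$ be the initial set and $\calW^{(j)}$ the set after $j$ iterations, and let $w^{(j)}$ be the center removed and $w_0^{(j)}$ the one added in iteration $j$. Each $w^{(j)}$ is either an original center of $\calW^{(0)}$ or was added in an earlier iteration; but \Cref{lem:robustify-calls-once-part-1} rules out the latter, so every $w^{(j)}$ is an original center, and they are distinct across iterations (once removed, a center is gone; and a center added is never removed). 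Hence the clusters $C_{w^{(j)}}(\calW^{(j-1)}, \calP)$ --- each a subset of $\calP$ --- are charged at most once each over the course of the algorithm in the following sense: the extra cost from iteration $j$ is at most $(1/2)\cdot\cost{w^{(j)}, C_{w^{(j)}}(\calW^{(j-1)},\calP)}$, and I want to argue $\cost{w^{(j)}, C_{w^{(j)}}(\calW^{(j-1)},\calP)} \leq \cost{w^{(j)}, C_{w^{(j)}}(\calW^{(0)}, \calP)}$ or otherwise relate it back to the original cost.

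Finally I would sum up. The total cost increase is $\sum_j \bigl(\cost{\calW^{(j)}, \calP} - \cost{\calW^{(j-1)},\calP}\bigr) \leq \sum_j (1/2)\cdot \cost{w^{(j)}, C_{w^{(j)}}(\calW^{(j-1)},\calP)}$. Since the $w^{(j)}$ are distinct original centers and the cluster of an original center only shrinks as other centers of $\calW^{(0)}$ are swapped out for robust points that are close to them (using \Cref{lem:robustness-property-1-part-1}: $w_0^{(j)}$ is within $10^{t}/2 < d(w^{(j)}, \calW-w^{(j)})$ of $w^{(j)}$, so the cluster structure does not change drastically — and in particular the reassignment only moves points \emph{towards} the new center), one should be able to show $\sum_j \cost{w^{(j)}, C_{w^{(j)}}(\calW^{(j-1)},\calP)} \leq \cost{\calW^{(0)}, \calP}$, because at every stage the clusters partition $\calP$ and each point is assigned to one cluster whose center is at least as far as it was in $\calW^{(0)}$ (monotonicity of nearest-center distance under adding points is the wrong direction — here we're swapping, so I'd instead argue directly that the reassigned distance never exceeds the original). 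This gives total increase $\leq (1/2)\cdot\cost{\calW^{(0)},\calP}$, i.e.\ $\cost{\calU,\calP} \leq (3/2)\cdot\cost{\calW,\calP}$. The delicate point to get exactly right is the bookkeeping that ensures each original cluster contributes its cost to the sum at most once and that intermediate cluster costs are dominated by original ones; this is where I expect to spend the most care, leaning on Lemmas \ref{lem:robustify-calls-once-part-1} and \ref{lem:robustness-property-1-part-1}.
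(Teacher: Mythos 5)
Your single-swap analysis is fine: the containment $B_t \subseteq C_w(\calW,\calP)$ for the \emph{current} $\calW$ follows from the minimality of $t$ exactly as you compute, and the resulting per-swap bound $\cost{\calW^{(j)},\calP} - \cost{\calW^{(j-1)},\calP} \leq \tfrac12 \cdot \cost{w^{(j)}, C_{w^{(j)}}(\calW^{(j-1)},\calP)}$ is correct. The gap is in the final summation. The sets $C_{w^{(j)}}(\calW^{(j-1)},\calP)$ taken across different iterations $j$ do \emph{not} partition $\calP$: when $w^{(1)}$ is replaced by $w_0^{(1)}$, a point $p$ that was assigned to $w^{(1)}$ can migrate into the cluster of a not-yet-swapped original center $w^{(2)}$ (this happens whenever $d(p,w_0^{(1)}) > d(p,w^{(2)})$, which the closeness bound $d(w^{(1)},w_0^{(1)}) \leq 10^t/2$ does not preclude). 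Such a $p$ then contributes to both $\cost{w^{(1)},C_{w^{(1)}}(\calW^{(0)},\calP)}$ and $\cost{w^{(2)},C_{w^{(2)}}(\calW^{(1)},\calP)}$, so the claimed inequality $\sum_j \cost{w^{(j)}, C_{w^{(j)}}(\calW^{(j-1)},\calP)} \leq \cost{\calW^{(0)},\calP}$ fails; likewise the fallback claim that the cluster of a yet-unswapped original center only shrinks is false, since it can absorb migrating points. Each migration inflates a point's assigned distance by a constant factor close to $1$, but a point can migrate many times, so this does not telescope into anything bounded without further work.

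The paper's proof avoids the issue by never looking at intermediate clusterings. It fixes the \emph{original} partition $\{C_{w_j}(\calW,\calP)\}_{j=1}^k$ once, and proves the stronger geometric fact that for every replaced center $w_j$ the ball $\text{Ball}_{10^{t[w'_j]}}^{\calP}(w_j)$ lies inside the original cluster $C_{w_j}(\calW,\calP)$. This requires showing $d(w_i,w_j) > 2\cdot 10^{t[w'_j]}$ for \emph{all} $i \neq j$ in the original set $\calW$ — including indices $i<j$ whose centers had already been replaced by $w'_i$ at the time $t[w'_j]$ was chosen; that case is handled by a triangle inequality combining $10^{t[w'_j]} \leq d(w_j,w'_i)/10$, $d(w_i,w'_i) \leq 10^{t[w'_i]}/2$ (\Cref{lem:robustness-property-1-part-1}), and $10^{t[w'_i]} \leq d(w_i,w_j)/10$, which is where the constant $100$ in \Cref{line:constant} of \Robustify\ is actually used. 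With the containment in hand, \Cref{lem:robustness-property-2-part-1} gives $\cost{w'_j, C_{w_j}(\calW,\calP)} \leq \tfrac32\cost{w_j, C_{w_j}(\calW,\calP)}$ cluster by cluster over a genuine partition of $\calP$, and assigning each point of $C_{w_j}(\calW,\calP)$ to $w'_j \in \calU$ yields the bound with no double counting. If you want to keep your iterative framing, the fix is to replace the moving clusters by this fixed partition and prove the ball-containment with respect to it.
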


\subsection{Well-Separated Pairs}
\label{sec:wellseparated}

We will also use the notion of a well-separated pair of points~\cite{soda/FichtenbergerLN21}, defined  as follows.

\begin{definition}
   Consider any $\calU, \calV \subseteq \ground$.
    A pair $(u,v) \in \calU \times \calV$ is {\em well-separated} w.r.t. $(\calU, \calV)$ iff 
    $$
        d(u, \calU - u) \geq \gamma \cdot  d(u, v) \text{ and }
        d(v, \calV - v) \geq \gamma \cdot   d(u, v).$$
\end{definition}

Using triangle inequality, it  is easy to verify that each point $u \in \calU$ either forms a well-separated pair with a unique $v \in \calV$, or it does not form a well-separated pair with any $v \in \calV$. The next lemma implies that if $\calU$ is robust, then we can replace every center $v \in \calV$ that is well-separated by its counterpart in $\calU$, and this will increase the cost of the solution $\calV$ by at most a constant factor.

\begin{lemma}[\cite{soda/FichtenbergerLN21}]\label{lem:cost-well-sep-part-1}
    Consider any two sets of centers $\calU, \calV \subseteq \ground$ such that $\calU$ is robust.  Then, for every well-separated pair $(u,v) \in \calU \times \calV$, we have 
    $\cost{u, C_v(\calV, \calP) } \leq 3 \cdot  \cost{v, C_v(\calV, \calP) }$.
\end{lemma}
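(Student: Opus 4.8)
The plan is to exploit the robustness of $\calU$ at exactly the right distance scale. Fix a well-separated pair $(u,v) \in \calU \times \calV$, and write $C := C_v(\calV, \calP)$ for brevity. Since $(u,v)$ is well-separated, $d(u, \calU - u) \geq \gamma \cdot d(u,v)$ with $\gamma = 4000$, so the radius $d(u, \calU - u)/200 \geq 20 \cdot d(u,v)$ is comfortably larger than $d(u,v)$; by Definition \ref{def:robuscenters}, $u$ is $t$-robust for the smallest integer $t$ with $10^t \geq d(u, \calU - u)/200$, and this $t$ satisfies $10^t \geq 20 \cdot d(u,v)$. Let $(u = p_0, p_1, \dots, p_t)$ be a $t$-robust sequence witnessing this, and let $B_i = \text{Ball}^\calP_{10^i}(p_i)$. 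The key step is to apply Lemma \ref{lem:robustness-property-2-part-1} with $p_i = p_t$ (taking $i = t$): for every set $S$ with $B_t \subseteq S \subseteq \calP$ we get $\cost{p_0, S} = \cost{u, S} \leq (3/2) \cdot \cost{p_t, S}$.

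Next I would argue that $C \subseteq B_t$, so that we may take $S = C$ in the inequality above. Every point $q \in C$ is assigned to $v$ in the solution $\calV$, so $d(q, v) \leq d(q, v') $ for all $v' \in \calV$; I want to bound $d(q, v)$ and then $d(q, p_t)$. Intuitively, $C$ should have radius comparable to $d(v, \calV - v)$, which by well-separatedness is at least $\gamma \cdot d(u,v)$ — but this is a \emph{lower} bound on the separation, not an upper bound on the cluster radius, so a little care is needed. The cleaner route: by Lemma \ref{lem:robustness-property-1-part-1}, $d(p_0, p_t) = d(u, p_t) \leq 10^t/2$, so it suffices to show $d(q, u) = O(10^t)$ for every $q \in C$, i.e.\ that the cluster $C$ does not extend far beyond scale $10^t$ around $u$. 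Points $q \in C$ with $d(q,v) \leq 10^t$ are easily within $B_t$ (since $d(q, p_t) \leq d(q,v) + d(v,u) + d(u,p_t) \leq 10^t + d(u,v) + 10^t/2 \leq 10^t$ after adjusting constants — here one uses $10^t \geq 20 d(u,v)$ and that the robust-sequence radius bookkeeping gives slack). For points $q \in C$ that are \emph{far} from $v$ (at distance $> 10^t$), one shows they contribute comparably to $\cost{u, \cdot}$ and $\cost{v, \cdot}$ directly by triangle inequality, since $d(q,u) \leq d(q,v) + d(v,u) \leq d(q,v) + 10^t/\gamma \leq (1 + o(1)) d(q,v)$. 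So I would split $C = C^{\text{near}} \sqcup C^{\text{far}}$ at threshold $10^t$, handle $C^{\text{far}}$ by a direct triangle-inequality estimate giving $\cost{u, C^{\text{far}}} \leq 2 \cdot \cost{v, C^{\text{far}}}$, and handle $C^{\text{near}}$ via robustness with $S = B_t$ (or $S = C^{\text{near}} \cup B_t$, whichever keeps $S$ sandwiched between $B_t$ and $\calP$) together with the fact that $\cost{v, B_t} \leq (\text{const}) \cdot \cost{v, C^{\text{near}}}$ or that $B_t \setminus C^{\text{near}}$ points are also close to $v$.

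The main obstacle I anticipate is the bookkeeping in the $C^{\text{near}}$ case: Lemma \ref{lem:robustness-property-2-part-1} compares $\cost{u, S}$ to $\cost{p_t, S}$ on a \emph{common} set $S$ with $B_t \subseteq S \subseteq \calP$, whereas what I ultimately want is $\cost{u, C} \leq 3 \cdot \cost{v, C}$ — so I must (a) replace $p_t$ by $v$, absorbing $d(p_t, v)$ via triangle inequality and the bound $d(p_t, v) \leq d(p_t, u) + d(u,v) \leq 10^t/2 + d(u,v)$, and (b) move from the set $S$ back to $C$, controlling the discrepancy $\cost{\cdot, S} $ vs $\cost{\cdot, C}$ on the symmetric difference. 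Getting all the absolute constants to land at the claimed factor of $3$ (rather than some larger constant) will require choosing the split threshold and the intermediate set $S$ judiciously, and will lean on $\gamma = 4000$ being large enough that the $d(u,v)$ terms are genuinely negligible compared to $10^t \geq 20 d(u,v)$. This is exactly the kind of constant-chasing the paper defers to its appendix, so I would state the reduction cleanly and then verify the inequalities scale by scale.
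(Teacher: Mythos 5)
Your proposal assembles the right ingredients (robustness at a scale above $d(u,v)$, Lemma~\ref{lem:robustness-property-2-part-1}, a near/far split with a direct triangle inequality for far points), but it has two genuine gaps. First, you work at the scale $t$ from Definition~\ref{def:robuscenters}, where $10^t \approx d(u,\calU-u)/200$. Well-separatedness only gives the \emph{lower} bound $d(u,\calU-u) \geq \gamma\, d(u,v)$, so $10^t$ can be arbitrarily larger than $d(u,v)$, and the ball $B_t$ can then contain arbitrarily many points belonging to other clusters $C_{v'}$, $v' \in \calV - v$. Your fallback set $S = B_t \cup C^{\text{near}}$ satisfies the hypothesis of Lemma~\ref{lem:robustness-property-2-part-1}, but the resulting bound involves $\cost{p_t, B_t \setminus C_v}$, which is not controlled by $\cost{v, C_v}$ in any way — the ``discrepancy control'' you defer is impossible in general at this scale. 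The paper's fix is to \emph{downscale}: since $t$-robustness implies $t^*$-robustness for every $t^* \leq t$, it picks $t^*$ with $20\, d(u,v) \leq 10^{t^*} < 200\, d(u,v)$ and proves the \emph{opposite} containment $B_{t^*} \subseteq C_v$, using the second half of well-separatedness, $d(v, \calV - v) \geq \gamma\, d(u,v)$, to show every point of $B_{t^*}$ is assigned to $v$. Then $S = C_v$ is legitimate in Lemma~\ref{lem:robustness-property-2-part-1} and no discrepancy arises.

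Second, your step (a) — ``replace $p_t$ by $v$, absorbing $d(p_t,v)$ via triangle inequality'' — does not work as stated: the additive error is $w(C)\cdot d(p_{t^*}, v)$ with $d(p_{t^*},v)$ of order $10^{t^*}$, and this is not bounded by $O(1)\cdot\cost{v,C}$ (consider all of $C$ sitting at distance $0$ from $v$). This is precisely why the paper's proof splits into two cases on whether $\avcost{p_{t^*}, B_{t^*}} \geq 10^{t^*}/5$: in the dense case the additive term $d(v,p_{t^*}) \leq 10^{t^*}/10$ is at most half the average cost and is absorbed; in the sparse case one instead uses that $p_{t^*-1}$ is (up to a factor $2$) an optimal $1$-median of $B_{t^*}$, so it competes with $v$ on $B_{t^*}$ directly, while points of $C_v \setminus B_{t^*}$ are handled by the triangle inequality as in your far case. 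Without this dichotomy — which is the heart of the argument and the reason the robust-sequence definition branches on the average cost — your plan cannot close, let alone land on the constant $3$.
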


\subsection{Projection Lemma and Lazy-Updates Lemma}
\label{sec:keylemmas}

We conclude by recalling two lemmas that are folklore in the literature on clustering~\cite{FOCS24kmedian}. 

Intuitively, the projection lemma says that if we have a set $\calU$ of more than $k$ centers, then the cost of the best possible $k$-median solution, subject to the constraint that all of the $k$ centers must be picked from $\calU$, is not too large compared to $\cost{\calU, \calP}$.

\begin{lemma}[Projection Lemma~\cite{FOCS24kmedian}]\label{lem:projection-lemma-part-1}
    Consider any set of centers $\calU \subseteq \ground$ of size $|\calU| \geq k$, where $k$ is a positive integer. Then we have $\OPT_{k}^{\calU}(\calP) \leq \cost{\calU, \calP} + 2 \cdot \OPT_{k}(\calP)$.
\end{lemma}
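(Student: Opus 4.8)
The plan is to exhibit an explicit feasible $k$-median solution $Z \subseteq \calU$ with $|Z| \le k$ whose cost is at most $\cost{\calU, \calP} + 2 \cdot \OPT_k(\calP)$; since $\OPT_k^{\calU}(\calP)$ is the minimum over all such $Z$, this suffices. Fix an optimal (unrestricted) $k$-median solution $\calV^\star \subseteq \ground$ with $|\calV^\star| \le k$ and $\cost{\calV^\star, \calP} = \OPT_k(\calP)$. The natural candidate is to ``snap'' each center of $\calV^\star$ to its nearest center in $\calU$: define $Z := \{ \pi_\calU(v) : v \in \calV^\star \}$, using the projection map $\pi_\calU$ from \Cref{sec:notations}. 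Then $Z \subseteq \calU$ and $|Z| \le |\calV^\star| \le k$, so $Z$ is a feasible solution for the restricted problem and $\OPT_k^\calU(\calP) \le \cost{Z, \calP}$.

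The core of the argument is then to bound $\cost{Z, \calP}$ by a point-by-point charging argument via the triangle inequality. First I would take any point $p \in \calP$ and let $u_p := \pi_\calU(p) \in \calU$ be its nearest center in $\calU$, and let $v_p := \pi_{\calV^\star}(p) \in \calV^\star$ be its nearest center in the optimal solution. Since $\pi_\calU(v_p) \in Z$, we have $d(p, Z) \le d(p, \pi_\calU(v_p)) \le d(p, v_p) + d(v_p, \pi_\calU(v_p))$. For the second term, because $u_p \in \calU$ and $\pi_\calU(v_p)$ is the closest point of $\calU$ to $v_p$, we get $d(v_p, \pi_\calU(v_p)) \le d(v_p, u_p) \le d(v_p, p) + d(p, u_p)$. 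Combining, $d(p, Z) \le 2 \cdot d(p, v_p) + d(p, u_p) = 2 \cdot d(p, \calV^\star) + d(p, \calU)$. Summing this inequality over all $p \in \calP$ yields $\cost{Z, \calP} \le 2 \cdot \cost{\calV^\star, \calP} + \cost{\calU, \calP} = \cost{\calU, \calP} + 2 \cdot \OPT_k(\calP)$, which completes the proof.

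There is essentially no hard step here — the whole thing is a one-line triangle-inequality charging scheme — but the one point that deserves care is making sure the snapped set $Z$ is genuinely feasible: it lies in $\calU$ by construction, and its size can only decrease relative to $\calV^\star$ (distinct optimal centers may project to the same point of $\calU$), so $|Z| \le k$ holds even though $|\calU| \ge k$. I would also double-check the two applications of optimality of the projection: once to say $d(v_p, \pi_\calU(v_p)) \le d(v_p, u_p)$ (which is where the hypothesis $|\calU| \ge k \ge 1$, i.e. $\calU \ne \emptyset$, is silently used so that $\pi_\calU$ is well-defined), and once implicitly in $d(p,\calU) = d(p, u_p)$. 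No sophistication beyond that is needed.
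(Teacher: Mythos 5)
Your proposal is correct and is essentially identical to the paper's own proof: both snap the optimal solution $\calV^\star$ to $\calU$ via $\pi_\calU$ and charge each point $p$ through the chain $d(p,\pi_\calU(v_p)) \leq d(p,v_p) + d(v_p,u_p) \leq 2\cdot d(p,\calV^\star) + d(p,\calU)$. No differences worth noting.
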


The lazy updates lemma, stated below, is derived from the following observation. Suppose that whenever a new point gets inserted into $\calP$, we create a  center at the position of the newly inserted point; and whenever a point gets deleted from $\calP$, we do not make any changes to the set of centers. Then this {\em lazy rule} for handling  updates ensures that the cost of the solution we maintain does not increase over time (although the solution might  consist of more than $k$ centers).

\begin{lemma}[Lazy-Updates Lemma~\cite{FOCS24kmedian}]\label{lem:lazy-updates-part-1}
    Consider any two sets of input points $\calP,\calP' \subseteq \ground$ such that $|\calP \oplus \calP'| \leq s$. Then for every $k \geq 1$, we have $\OPT_{k+s}(\calP') \leq \OPT_{k}(\calP)$.
\end{lemma}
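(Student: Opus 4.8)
The statement to prove is: for any two input point-sets $\calP, \calP' \subseteq \ground$ with $|\calP \oplus \calP'| \leq s$, and for any $k \geq 1$, we have $\OPT_{k+s}(\calP') \leq \OPT_{k}(\calP)$.

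First I would fix an optimal $k$-median solution for $\calP$, i.e.~a set $\calU^{\star} \subseteq \ground$ with $|\calU^{\star}| \leq k$ and $\cost{\calU^{\star}, \calP} = \OPT_{k}(\calP)$. The plan is to exhibit a feasible $(k+s)$-median solution for $\calP'$ whose cost is at most $\cost{\calU^{\star}, \calP}$. The natural candidate is $\calU' := \calU^{\star} \cup (\calP' \setminus \calP)$: we keep the old centers and additionally open a center at every point that is newly present in $\calP'$. Since $|\calP' \setminus \calP| \leq |\calP \oplus \calP'| \leq s$, we have $|\calU'| \leq k + s$, so $\calU'$ is a feasible solution for the $(k+s)$-median problem on $\calP'$, and hence $\OPT_{k+s}(\calP') \leq \cost{\calU', \calP'}$.

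It remains to bound $\cost{\calU', \calP'} = \sum_{p \in \calP'} d(p, \calU')$ by $\cost{\calU^{\star}, \calP} = \sum_{p \in \calP} d(p, \calU^{\star})$. I would split $\calP'$ into $\calP' \cap \calP$ and $\calP' \setminus \calP$. For a point $p \in \calP' \setminus \calP$, we have $p \in \calU'$ by construction, so $d(p, \calU') = 0$. For a point $p \in \calP' \cap \calP \subseteq \calP$, monotonicity of distance-to-a-superset gives $d(p, \calU') \leq d(p, \calU^{\star})$ since $\calU^{\star} \subseteq \calU'$. Therefore
\[
\cost{\calU', \calP'} = \sum_{p \in \calP' \setminus \calP} d(p, \calU') + \sum_{p \in \calP' \cap \calP} d(p, \calU') \leq 0 + \sum_{p \in \calP' \cap \calP} d(p, \calU^{\star}) \leq \sum_{p \in \calP} d(p, \calU^{\star}) = \OPT_{k}(\calP),
\]
where the last inequality uses that $\calP' \cap \calP \subseteq \calP$ and distances are non-negative. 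Combining with $\OPT_{k+s}(\calP') \leq \cost{\calU', \calP'}$ completes the proof.

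This argument is entirely routine — there is no real obstacle. The only point requiring a moment's care is the bookkeeping in the case split (in particular, remembering that the newly inserted points contribute zero because they have been opened as centers, and that deleted points simply drop out of the sum for $\calP'$ with no cost penalty). The subadditivity bound $|\calP' \setminus \calP| \leq |\calP \oplus \calP'|$ and the monotonicity $d(p, \calU') \leq d(p, \calU^{\star})$ for $\calU^{\star} \subseteq \calU'$ are both immediate from the definitions.
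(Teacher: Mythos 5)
Your proposal is correct and takes essentially the same approach as the paper: the paper defines $\calU' = \calU^{\star} + (\calP \oplus \calP')$ and notes immediately that the newly present points cost zero while the surviving points can keep their old assignments, whereas you add only $\calP' \setminus \calP$, a minor refinement of the same argument. The bookkeeping in your case split matches the paper's one-line derivation.
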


\section{Two Key Lemmas}
\label{sec:building:block}

We now state two key lemmas that will be used in the design and analysis of our dynamic algorithm. We defer the formal proofs of these two lemmas to Section~\ref{sec:proof:building:block}.

\begin{lemma}[Double-Sided Stability Lemma]\label{lem:double-sided-stability-part-1} Consider any $r \in [0, k-1]$ and any $\eta \geq 1$. If 
    $\OPT_{k-r}(\calP) \leq \eta \cdot  \OPT_k(\calP)$, then we must have 
    $\OPT_{k}(\calP) \leq 4 \cdot \OPT_{k + \lfloor r / (12\eta) \rfloor }(\calP)$. 
\end{lemma}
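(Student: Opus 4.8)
The plan is to prove the contrapositive-flavored statement directly: assuming $\OPT_{k-r}(\calP) \leq \eta \cdot \OPT_k(\calP)$, we want to exhibit a set of $k + \lfloor r/(12\eta)\rfloor$ centers whose cost is at least $\OPT_k(\calP)/4$ — equivalently, to show that adding roughly $r/(12\eta)$ extra centers cannot decrease the optimal cost below a quarter of $\OPT_k(\calP)$. The natural way to control how fast the optimal cost decreases as we add centers is a \emph{marginal-gain / averaging} argument: consider the optimal $(k-r)$-median solution $Z$ with $|Z| \le k-r$ and $\cost{Z,\calP} \le \eta \cdot \OPT_k(\calP)$, and an optimal $k$-median solution $\calO$ with $|\calO| = k$. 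For a threshold parameter $\rho = \lfloor r/(12\eta)\rfloor$, suppose toward a contradiction that $\OPT_{k+\rho}(\calP) < \OPT_k(\calP)/4$.

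First I would use the following idea. Starting from $Z$ (which has $\le k-r$ centers and cost $\le \eta \OPT_k$), I will add centers from $\calO$ one at a time, in the order that greedily maximizes the drop in cost at each step. Adding all of $\calO$ to $Z$ yields a set of size $\le k-r+k \le 2k$ with cost $\le \OPT_k(\calP)$ (since $\calO \subseteq Z \cup \calO$). The key combinatorial fact is: if $\OPT_{k+\rho}(\calP)$ is very small, then there is a set $Z'$ of $\le k+\rho$ centers with tiny cost; we can take $Z \cup (\text{some } \rho+r \text{ centers})$ — but that overshoots $k+\rho$. So instead I would reverse the viewpoint: take the small solution $Z^\star$ of size $k+\rho$ with $\cost{Z^\star,\calP} < \OPT_k(\calP)/4$, and \emph{remove} centers from it down to size $k-r$ while controlling the cost increase, using the Projection Lemma (\Cref{lem:projection-lemma-part-1}) repeatedly or a single averaging step. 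Concretely, by a standard argument (the same averaging used to prove the projection lemma), from any set $\calU$ of size $k+\rho$ one can find a subset of size $k-r$ whose cost is at most $\cost{\calU,\calP} + O\!\left(\frac{r+\rho}{?}\right)$-type bound — but the clean route is: removing $t$ centers from a solution and reassigning their clients to the optimal solution $\calO$ costs at most an additive term proportional to $\frac{t}{k}$ times... this is where care is needed.

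The cleanest correct route, and the one I would pursue, is a \emph{telescoping/pigeonhole over intermediate sizes}. Define $f(j) := \OPT_j(\calP)$ for $j \in [k-r, k+\rho]$; this is non-increasing in $j$. We know $f(k-r) \le \eta \cdot f(k)$ and we assume $f(k+\rho) < f(k)/4$; I want a contradiction. Consider the optimal solution $O_{k+\rho}$ of size $k+\rho$. I would show that for \emph{every} integer $t \ge 0$, $\OPT_{k+\rho - t}(\calP) \le \OPT_{k+\rho}(\calP) + \frac{2t}{k} \cdot \OPT_k(\calP) + (\text{correction})$ by the following: pair up the $k+\rho$ centers of $O_{k+\rho}$ with nearest centers of $\calO$; the $t$ centers of $O_{k+\rho}$ whose clients contribute least can be deleted, moving their clients to the $\calO$-center of a retained $O_{k+\rho}$-center, paying the $\calO$-cost plus detours — formalized via the projection lemma this gives $\OPT_{k+\rho-t}^{O_{k+\rho}}(\calP) \le \cost{O_{k+\rho},\calP} + 2\OPT_{k+\rho-t}(\calP)$, which is circular, so I instead bound it against $\OPT_k$: using $|O_{k+\rho}| \ge k$, apply the projection lemma to get $\OPT_k^{O_{k+\rho}}(\calP) \le \cost{O_{k+\rho},\calP} + 2\OPT_k(\calP) < \OPT_k(\calP)/4 + 2\OPT_k(\calP)$, and then an averaging argument over which further $r$ (or $\rho+r$) centers to drop from this size-$k$ set loses at most a multiplicative $(1 + O((\rho+r)/k))$...

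The main obstacle — and where I expect the real work to be — is precisely this quantitative \emph{center-deletion} bound: showing that going from $k$ centers down to $k-r$ centers, restricted to a given cheap set, inflates the cost by at most a factor like $12\eta/r \cdot \rho \le 1$ worth of slack, so that it stays below $\eta \cdot \OPT_k(\calP)$ and contradicts the \emph{tightness} encoded in $f(k-r) \le \eta f(k)$ not being improvable. I believe the right tool is: if $\calU$ has $m$ centers and $\calO$ has $k$ centers, then there is $\calU' \subseteq \calU$ with $|\calU'| = m - t$ and $\cost{\calU',\calP} \le \cost{\calU,\calP} + \frac{t}{m-k}\bigl(\cost{\calU,\calP} + 2\cost{\calO,\calP}\bigr)$ when $m > k$, proved by projecting $\calU$ onto a $k$-subset (paying $+2\OPT_k$ via \Cref{lem:projection-lemma-part-1}), then averaging: among the $m$ clusters, the cheapest $t$ together cost at most $\frac{t}{m}$ of the total, and deleting them and reassigning costs boundedly. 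Plugging $m = k+\rho$, $t = \rho + r$ (to reach size $k-r$) gives an upper bound on $\OPT_{k-r}(\calP)$ in terms of $\OPT_{k+\rho}(\calP)$ and $\OPT_k(\calP)$; combined with the hypothesis $\OPT_{k-r}(\calP) \le \eta\OPT_k(\calP)$ being an equality-direction constraint is \emph{not} what we have — rather, we want to show $\OPT_{k+\rho}$ can't be too small, so we need the reverse inequality $\OPT_{k-r}(\calP) \le (\text{something small})$ to \emph{not} contradict anything; hence the logical structure must instead be: \textbf{assume} $\OPT_k(\calP) > 4\OPT_{k+\rho}(\calP)$ and derive $\OPT_{k-r}(\calP) > \eta \OPT_k(\calP)$, i.e., show adding only $\rho$ centers past $k-r$ can't already bring the cost down to $\eta\OPT_k$, which forces many centers to be "used up" between sizes $k-r$ and $k$, each contributing a bounded marginal decrease — a clean pigeonhole on the $r$ marginal gains $f(j) - f(j+1)$ for $j \in [k-r, k-1]$, at least one of which must be small (at most $\frac{f(k-r)-f(k)}{r} \le \frac{\eta f(k)}{r}$), which by sub-additivity of marginal gains bounds $f(k-r) - f(k+\rho) \le (r+\rho)\cdot\frac{\eta f(k)}{r} \approx \eta f(k)$ — too weak by itself, so the factor-$12$ and the $1/\eta$ in $\rho$ are exactly tuned to make $f(k) - f(k+\rho) \le \rho \cdot \frac{\eta f(k)}{r} \le f(k)/12 < 3f(k)/4$, contradicting $f(k+\rho) < f(k)/4$. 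I would carry out this last clean version: (1) marginal gains $g(j) := f(j) - f(j+1) \ge 0$ are non-increasing (concavity of $\OPT_j$ — itself needs a short proof via the same averaging/projection argument, or can be cited as folklore), (2) $\sum_{j=k-r}^{k-1} g(j) = f(k-r) - f(k) \le (\eta-1)f(k) \le \eta f(k)$, so $\min_j g(j) \le \eta f(k)/r$ and hence $g(k-1) \le \eta f(k)/r$ by monotonicity, (3) therefore $g(j) \le \eta f(k)/r$ for all $j \ge k-1$, giving $f(k) - f(k+\rho) = \sum_{j=k}^{k+\rho-1} g(j) \le \rho \cdot \eta f(k)/r \le \frac{r}{12\eta}\cdot\frac{\eta f(k)}{r} = f(k)/12$, (4) conclude $f(k+\rho) \ge f(k) - f(k)/12 = \frac{11}{12}f(k) \ge f(k)/4$, which rearranges to $f(k) \le 4 f(k+\rho)$ with huge room to spare. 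The one genuine lemma to establish is the concavity/monotonicity of marginal gains of $\OPT_j(\calP)$; the rest is arithmetic.
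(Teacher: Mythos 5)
Your final argument hinges on step (1): that the marginal gains $g(j) = \OPT_j(\calP) - \OPT_{j+1}(\calP)$ are non-increasing in $j$, i.e.\ that $\OPT_j(\calP)$ is convex as a function of $j$. This is \emph{false} for the integral $k$-median objective, and it cannot be ``cited as folklore.'' A small counterexample: take three points pairwise at distance $1$, each of weight $W$, together with a ground-set point at distance $0.6$ from all three, plus one isolated far-away point. Then the optimal $2$-median cost is $1.8W$ (one center at the Steiner-like point), the optimal $3$-median cost is $W$ (any $2$-median of three mutually equidistant points costs at least $W$), and the optimal $4$-median cost is $0$; the marginal gains $0.8W$ and then $W$ are \emph{increasing}. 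Since every subsequent step of your argument (``$g(k-1)\leq \eta f(k)/r$ by monotonicity,'' ``therefore $g(j)\leq \eta f(k)/r$ for all $j\geq k-1$'') rests on this monotonicity, the proof does not go through as written.

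The paper's proof handles exactly this obstruction by passing to the standard LP relaxation: the \emph{fractional} optimum $\FOPT_j$ is genuinely convex in $j$ (a convex combination of feasible fractional solutions for $k_1$ and $k_2$ is feasible for $\alpha k_1+\beta k_2$), and one then converts back to the integral optimum using the integrality gap bound $\FOPT_k \leq \OPT_k \leq 3\cdot\FOPT_k$. That factor of $3$, incurred on both sides of the convexity inequality, is precisely where the constant $4$ and the denominator $12\eta$ come from; the fact that your calculation ends with ``huge room to spare'' ($\frac{11}{12}f(k)\geq \frac14 f(k)$) is a symptom of the missing integrality-gap losses. Your telescoping/pigeonhole framing can be salvaged: run the identical averaging argument on $\FOPT_j$ (whose marginal gains \emph{are} non-increasing), bound $\FOPT_{k-r}-\FOPT_k \leq \OPT_{k-r} \leq \eta\cdot\OPT_k$, deduce $\FOPT_k - \FOPT_{k+\rho} \leq \frac{\rho\,\eta}{r}\OPT_k \leq \frac{1}{12}\OPT_k$, and conclude $\OPT_{k+\rho}\geq \FOPT_{k+\rho} \geq \frac13\OPT_k - \frac{1}{12}\OPT_k = \frac14\OPT_k$. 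This recovers the lemma with the stated constants and is essentially a reparametrization of the paper's argument, but the convexity must be established for the LP, not asserted for the integral problem.
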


To interpret Lemma~\ref{lem:double-sided-stability-part-1}, first note that $\OPT_k(\calP)$ is a monotonically non-increasing function of $k$, since the objective value can only drop if we open extra centers. Now, suppose there is a sufficiently large integer $r \in [0, k-1]$ such that $\OPT_{k-r}(\calP) \leq \Theta(1) \cdot \OPT_k(\calP)$. Then,  Lemma~\ref{lem:double-sided-stability-part-1} guarantees that $\OPT_k(\calP) \leq  \Theta(1) \cdot \OPT_{k+r'}(\calP)$ for some  integer $r' = \Theta(r)$. In other words, if the optimal objective remains stable as we {\em decrease} the number of centers by some additive $r$, then it also remains stable as we {\em increase} the number of centers by roughly the same amount. Conceptually, this holds because of two reasons: (i) the optimal objective of  the fractional version of the $k$-median problem (encoded by its standard LP-relaxation) is convex as a function of $k$, and (ii) the concerned LP-relaxation has $\Theta(1)$-integrality gap.

\begin{lemma}[Generalization of Lemma 7.3 in the arXiv version of~\cite{soda/FichtenbergerLN21}]\label{lem:generalize-lemma-7.3-in-FLNS21-part-1}
      Let $r \geq 0$ and $m \in [0, k]$. Consider any two sets of centers $\calU, \calV \subseteq \ground$ such that $|\calU| = k$ and $|\calV| = k+r$. 
    If the number of well-separated pairs w.r.t.~$(\calU, \calV)$ is $k - m$, then there exists a subset $\tilde{\calU} \subseteq \calU$ of size at most $k - \lfloor (m- r) / 4 \rfloor$ such that
    $\cost{\tilde{\calU}, \calP} \leq 6\gamma \cdot \left( \cost{\calU, \calP} + \cost{\calV, \calP} \right)$.
\end{lemma}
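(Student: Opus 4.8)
The plan is to start from the solution $\calV$ and transform it, center by center, into a subset $\tilde{\calU} \subseteq \calU$, paying only a constant factor in cost per ``swap''. First I would partition $\calV$ into the $k-m$ centers $v$ that participate in a well-separated pair $(u,v)$ with $\calU$, and the remaining $r+m$ centers of $\calV$ that do not. For each well-separated pair $(u,v)$, Lemma~\ref{lem:cost-well-sep-part-1} lets me reassign the cluster $C_v(\calV,\calP)$ from $v$ to its $\calU$-counterpart $u$ at a cost blowup of at most $3$; moreover, by the observation following the definition of well-separated pairs, distinct $v$'s map to distinct $u$'s, so this gives an injection from the well-separated $v$'s into $\calU$, hitting some set $\calU_{\mathrm{ws}} \subseteq \calU$ of size exactly $k-m$. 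The points of $\calP$ that were assigned to non-well-separated centers of $\calV$ are a separate issue; I would reassign each such point to its nearest center in $\calU_{\mathrm{ws}}$, and then bound the resulting cost increase via the triangle inequality, routing through the original $\calU$-assignment and the $\calV$-assignment. This is exactly the kind of bookkeeping argument that yields the $6\gamma \cdot (\cost{\calU,\calP} + \cost{\calV,\calP})$ bound: the constant $\gamma$ enters because the well-separation condition $d(u,\calU-u) \ge \gamma\, d(u,v)$ controls how far a point assigned to a non-well-separated cluster can be from $\calU_{\mathrm{ws}}$ relative to its $\calV$-cost.

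The next issue is the \emph{size} of the subset. So far I have a set $\calU_{\mathrm{ws}}$ of $k-m$ centers, which is already of size $\le k - \lfloor (m-r)/4 \rfloor$ when $m \ge r$ but is useless when $m < r$ (indeed when $m-r$ is negative the claimed bound is weaker than $k$, so a subset of size $k-m \le k$ might not even satisfy it if $m$ is small). The key quantitative step is therefore to argue that we can actually \emph{drop more centers} from $\calU_{\mathrm{ws}}$, down to size $k - \lfloor (m-r)/4\rfloor$, while still only paying a constant factor. The intuition: $\calV$ has $k+r$ centers but only $k-m$ of them are ``pinned'' by well-separated pairs, so $\calV$ is doing a good job on $\calP$ using effectively only $k-m$ well-separated centers plus $m+r$ ``free'' ones; this should mean that $\calU$ — which has $k$ centers — is over-provisioned relative to what $\calV$ needs, and a $\Theta((m-r))$-sized chunk of $\calU_{\mathrm{ws}}$ is redundant. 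I expect the formal version of this to come from a charging/merging argument: cluster the would-be-dropped centers of $\calU_{\mathrm{ws}}$ onto nearby surviving centers, where ``nearby'' is again quantified by the $\gamma$-separation, so that each merge costs only $O(\gamma)$ times the mass involved, and the total number of safe merges is $\lfloor (m-r)/4 \rfloor$ by a counting argument on how the $k+r$ centers of $\calV$ can be distributed among the $k$ centers of $\calU$.

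I expect the main obstacle to be precisely this second part: getting the size bound $k - \lfloor (m-r)/4\rfloor$ with the right constant ($4$ in the floor) while keeping the cost factor at $6\gamma$. One has to be careful that the centers we choose to merge away are genuinely redundant — i.e., that after merging, the surviving centers still approximately cover all of $\calP$ — and the $1/4$ suggests a two-step loss (one factor of $2$ from a triangle-inequality hop, one from a counting/pigeonhole step that only guarantees half of a promising set is actually mergeable). I would structure the proof as: (Step 1) build $\calU_{\mathrm{ws}}$ and bound $\cost{\calU_{\mathrm{ws}},\calP}$ by $O(\gamma)(\cost{\calU,\calP}+\cost{\calV,\calP})$ using Lemma~\ref{lem:cost-well-sep-part-1} and triangle inequality on the non-well-separated mass; (Step 2) identify a set $D \subseteq \calU_{\mathrm{ws}}$ of $\lfloor(m-r)/4\rfloor$ centers that can be safely removed, by analyzing the many-to-one map from $\calV$'s centers to their nearest centers in $\calU$ and showing a large fraction of $\calU_{\mathrm{ws}}$ receives ``little'' $\calV$-mass; (Step 3) set $\tilde{\calU} = \calU_{\mathrm{ws}} \setminus D$, reassign the clusters of $D$ to the nearest surviving center, and absorb the extra cost into the constant, checking the final factor is at most $6\gamma$. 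The cases $m \le r$ (where $\lfloor(m-r)/4\rfloor \le 0$ and $\tilde{\calU} = \calU_{\mathrm{ws}}$ trivially works) and $m = k$ (so $\calV$ has no well-separated pairs and we need the all-mass-to-$\calU$ triangle inequality bound alone) should be handled at the outset to streamline the main argument.
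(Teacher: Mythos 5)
Your approach has a genuine gap, and it is not the route the paper takes. The paper's proof is an LP argument: it constructs a \emph{fractional} solution in which every center of $\calU$ stays at least half-open ($y_u = 1$ for the set $\calU_I$ of centers that either form a well-separated pair or receive at least two centers of $\calV$ under $\pi_\calU$, and $y_u = 1/2$ for the rest), shows by a counting argument that $\sum_u y_u \leq k - (m-r)/4$, bounds the fractional assignment cost by $2\gamma(\cost{\calU,\calP}+\cost{\calV,\calP})$ using one triangle-inequality hop to a \emph{second half-open center} guaranteed by the failure of well-separation, and then invokes the integrality gap of $3$ for the $k$-median LP to extract the integral set $\tilde{\calU}$.

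The gap in your proposal is Step 1. The set $\calU_{\mathrm{ws}}$ of well-separated $\calU$-centers does \emph{not} have cost $O(\gamma)(\cost{\calU,\calP}+\cost{\calV,\calP})$ in general, and no triangle-inequality routing "through the original $\calU$-assignment and the $\calV$-assignment" can fix this: a non-well-separated center $u \in \calU$ need not be anywhere near a well-separated one. Concretely, put one isolated point-cluster at location $A$ containing one center of $\calU$ and one of $\calV$ (this pair is well-separated), and put the remaining $k-1$ centers of $\calU$ and $k+r-1$ centers of $\calV$ all within distance $1$ of a location $B$ with $d(A,B)=D$ arbitrarily large, together with most of the mass of $\calP$. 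None of the centers near $B$ is well-separated (each has a same-set neighbor at distance $\leq 1 \ll \gamma\, d(u,v)$ is not even needed; the point is each fails the separation condition within its own set relative to its partner distance when arranged appropriately), so $\calU_{\mathrm{ws}}$ consists of the single center at $A$, and reassigning the mass near $B$ to $\calU_{\mathrm{ws}}$ costs $\Omega(D)$ while $\cost{\calU,\calP}+\cost{\calV,\calP}$ stays bounded. The lemma's conclusion in this instance is a subset of size roughly $k - (m-r)/4 \approx (3/4)k$, i.e., you must \emph{retain} most of the non-well-separated centers; your plan of collapsing to the $k-m$ well-separated ones and then dropping even more is inverted. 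Relatedly, your Step 2 addresses a non-issue: since $\lfloor (m-r)/4\rfloor \leq m$ always, the set $\calU_{\mathrm{ws}}$ of size $k-m$ already satisfies the size bound for every $m$ and $r$, so there is never a need to "drop more centers" for cardinality reasons --- the entire difficulty of the lemma is the cost bound, which your construction does not achieve.
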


Intuitively, think of $\calU$ as the $k$-median solution maintained by our algorithm, and let $\calV$ be another set of $k+r$ centers such that $\cost{\calV, \calP} \leq \Theta(1) \cdot \OPT_k(\calP)$. 
The above lemma implies that if  $m \gg r$ (i.e., the number of well-separated pairs w.r.t.~$(\calU, \calV)$ is sufficiently small), then we can delete $\lfloor (m- r) / 4 \rfloor = \Omega(r)$ centers from $\calU$ without significantly increasing the objective  $\cost{\calU, \calP}$.

\section{Achieving $O(1)$ Approximation Ratio and $O(\log^2 \Delta)$ Recourse}
\label{sec:recourse:part}

In this section, we focus only on achieving good approximation ratio and recourse bounds. We prove the following theorem, {\em without  any concern for the update time of the algorithm. In particular, to keep the exposition as simple as possible, we present an algorithm with exponential update time}.

\begin{theorem}
\label{th:recourse:part1}
There is a deterministic $O(1)$-approximation algorithm for dynamic metric $k$-median with $O(\log^2 \Delta)$ recourse.
\end{theorem}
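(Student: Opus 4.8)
The plan is to realize the incremental framework of~\cite{soda/FichtenbergerLN21} (as summarized in \Cref{sec:intro:incremental}) in the fully dynamic setting, using the two key lemmas from \Cref{sec:building:block} to overcome the two conceptual obstacles (Challenge~I and Challenge~II). I would set up the algorithm in the classical \emph{periodic recomputation} framework: time is partitioned into \emph{epochs}, within each epoch updates are handled lazily (insert $p$: set $\calU \leftarrow \calU + p$; delete $p$: do nothing), and at the epoch boundary we recompute a fresh solution. The invariant maintained across epoch boundaries is: at the start of an epoch, $\calU_{\init}$ is a set of $k$ centers that is \emph{robust} and $O(1)$-approximate (say $100$-approximate) w.r.t.\ the current input $\calP_{\init}$, i.e.\ $\cost{\calU_{\init}, \calP_{\init}} \le 100 \cdot \OPT_k(\calP_{\init})$.

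First, at the start of an epoch I would determine the largest $\ell \in [0,k-1]$ such that $\OPT_{k-\ell}^{\calU_{\init}}(\calP_{\init}) \le c \cdot \cost{\calU_{\init},\calP_{\init}}$ (this is the ``maximally $\ell$-stable'' notion), and compute a witnessing subset $\calU \subseteq \calU_{\init}$ of $k-\ell$ centers. By the Double-Sided Stability Lemma (\Cref{lem:double-sided-stability-part-1}), since $\OPT_{k-\ell}(\calP_{\init}) = \Theta(1)\cdot\OPT_k(\calP_{\init})$, there is $r = \Theta(\ell)$ with $\OPT_{k+r}(\calP_{\init}) = \Theta(1)\cdot\OPT_k(\calP_{\init})$; I truncate the epoch to last exactly $r+1$ updates. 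The point is that during the first $r$ lazy updates, $|\calP \oplus \calP_{\init}| \le r$, so by the Lazy-Updates Lemma (\Cref{lem:lazy-updates-part-1}) $\OPT_k(\calP) \ge \OPT_{k+r}(\calP_{\init}) = \Theta(1)\cdot\lambda_{\init}$, and combined with $\OPT_k(\calP) \le 2\lambda_{\init}$ (standard, since $\OPT_k$ changes by at most a factor $2$ per update) the maintained solution $\calU + (\text{inserted points})$ stays $O(1)$-approximate throughout --- this resolves Challenge~I. Since $\ell \le k$ and $r = \Theta(\ell)$, the epoch length $r+1$ is $\Omega(\ell+1)$, long enough to amortize the reconstruction recourse, which will be $O(\ell+1)$ by construction.

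At the last ($(r+1)$-th) update of the epoch, with $\calP_{\final}$ the current input, I would reinitialize via the three-task decomposition of \Cref{sec:challenge2} --- this resolves Challenge~II and avoids the phases of~\cite{soda/FichtenbergerLN21}. Concretely: Task~(i) finds $\calU^\star \supseteq \calU_{\init}$ with $|\calU^\star \setminus \calU_{\init}| = \Theta(\ell+1)$ minimizing the $(k+\ell+1)$-median objective \emph{w.r.t.\ $\calP_{\init}$}; since $\OPT_{k+\Theta(\ell)}(\calP_{\init}) = \Theta(1)\cdot\lambda_{\init}$ and $\calU_{\init}$ is $O(1)$-approximate, the extra centers drive the cost down to $\Theta(1)\cdot\lambda_{\init}$, and crucially the number of well-separated pairs w.r.t.\ $(\calU_{\init},\calU^\star)$ must be large (else \Cref{lem:generalize-lemma-7.3-in-FLNS21-part-1} would contradict maximal $\ell$-stability). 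Task~(ii) sets $\calV^\star \leftarrow \calU^\star \cup (\calP_{\final}\setminus\calP_{\init})$, which by the lazy-updates reasoning has $\cost{\calV^\star,\calP_{\final}} \le \cost{\calU^\star,\calP_{\init}} = \Theta(1)\cdot\lambda_{\init} = \Theta(1)\cdot\OPT_k(\calP_{\final})$ (using $\OPT_k(\calP_{\final}) = \Theta(\lambda_{\init})$). Task~(iii) removes $\Theta(\ell+1)$ centers from $\calV^\star$ via the Projection Lemma (\Cref{lem:projection-lemma-part-1}) applied to $\calP_{\final}$, obtaining $\calW^\star$ of size $k$ with $\cost{\calW^\star,\calP_{\final}} = O(1)\cdot\OPT_k(\calP_{\final})$ and $|\calW^\star \oplus \calU_{\init}| = O(\ell+1)$. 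Finally $\calU_{\final} \leftarrow \Robustify(\calW^\star)$ restores robustness at a multiplicative cost of $3/2$ (\Cref{lem:cost-robustify-part-1}), re-establishing the invariant.

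For the recourse bound, the per-epoch reconstruction changes at most $O(\ell+1) = O(r+1)$ centers (for Task~(i)--(iii); the $\Robustify$ recourse is handled separately and amortized), so amortized over the $r+1$ lazy updates of the epoch this contributes $O(1)$ per update; the lazy updates themselves contribute $1$ per update. The extra $O(\log^2\Delta)$ factor --- rather than $O(1)$ --- comes from the analysis of the total recourse of the sequence of $\Robustify$ calls over the lifetime of the algorithm, which is bounded using the ``distance-scale'' structure of robust sequences (there are $O(\log\Delta)$ scales, and a charging argument, as in~\cite{soda/FichtenbergerLN21}, pays $O(\log\Delta)$ per scale). The main obstacle, and the part requiring the most care, is verifying that Task~(i) genuinely produces a solution with many well-separated pairs and low cost \emph{simultaneously with} the maximal $\ell$-stability hypothesis --- i.e.\ that \Cref{lem:generalize-lemma-7.3-in-FLNS21-part-1} can be invoked to guarantee that after Task~(iii) the removal of $\Theta(\ell+1)$ centers is both possible and cheap; this is precisely where the interplay between the two key lemmas and the choice of constants becomes delicate, and also where one must check that the recourse $|\calW^\star \oplus \calU_{\init}|$ does not blow up beyond $O(\ell+1)$ despite passing through the intermediate set $\calV^\star$ of size $k + \Theta(\ell+1)$.
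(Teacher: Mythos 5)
Your proposal follows essentially the same route as the paper's proof: the same epoch structure with the Double-Sided Stability Lemma fixing the epoch length, the same three-task post-processing (add centers w.r.t.\ $\calP_{\init}$ via the well-separated-pairs argument, absorb the inserted points, prune back to $k$ w.r.t.\ $\calP_{\final}$ via the Projection Lemma, then $\Robustify$), and the same $O(\log\Delta)\times O(\log\Delta)$ charging of \MakeRbst{} calls for the recourse. Two small imprecisions, neither fatal: the well-separated-pairs/maximal-stability argument for Task~(i) is run against an optimal $(k+\ell+1)$-median solution $\calV$ for $\calP_{\init}$ (bounding the number $m$ of unmatched centers by $O(\ell+1)$ and importing the $m+\ell+1$ unmatched centers of $\calV$), not against $\calU^{\star}$ itself; and the claim that $\OPT_k$ changes by at most a factor $2$ per update is false in the fully dynamic setting, but it is also unnecessary, since the approximation argument only needs the lower bound $\OPT_k(\calP)\ge\OPT_{k+r}(\calP_{\init})$ together with the monotonicity of the lazily maintained cost.
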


\subsection{Description of the Algorithm}
\label{sec:alg:describe}

Our algorithm works in {\bf epochs};  each epoch lasts for some consecutive updates in $\calP$. Let $\calU \subseteq \ground$ denote the maintained solution (set of $k$ centers). We satisfy the following invariant.

\begin{invariant}
\label{inv:start}
At the start of an epoch, the set $\calU$ is robust and $\cost{\calU, \calP} \leq 8 \cdot \OPT_k(\calP)$.
\end{invariant}

We now describe how our dynamic algorithm works in a given epoch, in four steps.

\medskip
\noindent {\bf Step 1: Determining the length of the epoch.} At the start of an epoch, we compute the maximum $\ell^{\star} \geq 0$ such that $\OPT_{k-\ell^{\star}}(\calP) \leq 54\gamma \cdot \OPT_{k}(\calP)$,\footnote{Recall $\gamma$ from \Cref{sec:notations}}
and set $\ell \leftarrow \lfloor \ell^{\star}/(12 \cdot 54 \gamma) \rfloor$. Since $\ell \leq \ell^{\star}$, it follows that $\OPT_{k-\ell}(\calP) \leq \OPT_{k-\ell^{\star}}(\calP)$. Thus, by setting $\eta = 54 \gamma$ and $r = \ell^{\star}$ in Lemma~\ref{lem:double-sided-stability-part-1}, at the start of the epoch we have:
\begin{equation}
\label{eq:length}
\frac{\OPT_{k-\ell}(\calP)}{54 \gamma} \leq \OPT_k(\calP) \leq 4 \cdot \OPT_{k+\ell}(\calP).
\end{equation}
The epoch will last for the next $\ell+1$ updates.\footnote{Note that we might very well have $\ell = 0$.} From now on, we will use the superscript $t \in [0, \ell+1]$ to denote the status of some object after our algorithm has finished processing the $t^{th}$ update in the epoch. For example, at the start of the epoch we have $\calP = \calP^{(0)}$.

\medskip 
\noindent {\bf Step 2: Preprocessing at the start of the epoch.} Let $\calU_{\init} \leftarrow \calU$ be the solution maintained by the algorithm after it finished processing the last update in the previous epoch. Before handling the  very first update in the current epoch, we initialize the maintained solution by setting
\begin{equation}
\label{eq:init:epoch}
\calU^{(0)} \leftarrow \arg \min_{\calU' \subseteq \calU_{\init} \, : \, |\calU'| = k-\ell} \cost{\calU', \calP^{(0)}}.
\end{equation}
Thus, at this point in time,  we have $\cost{\calU^{(0)}, \calP^{(0)}} = \OPT_{k-\ell}^{\calU_{\init}}\left(\calP^{(0)}\right) \leq \cost{\calU_{\init}, \calP^{(0)}} + 2 \cdot \OPT_{k-\ell}\left(\calP^{(0)}\right) \leq 8 \cdot \OPT_{k}\left( \calP^{(0)}\right) + 2 \cdot \OPT_{k-\ell}\left(\calP^{(0)}\right) \leq (32+432\gamma) \cdot \OPT_{k+\ell}\left(\calP^{(0)}\right)$, where the first inequality follows from Lemma~\ref{lem:projection-lemma-part-1}, the second inequality follows from Invariant~\ref{inv:start}, and the last inequality follows from~(\ref{eq:length}). To summarize, we get:
\begin{equation}
\label{eq:init:epoch:1}
\cost{\calU^{(0)}, \calP^{(0)}} \leq (32+432\gamma) \cdot \OPT_{k+\ell}\left(\calP^{(0)}\right) \text{ and } \left| \calU^{(0)} \right| \leq k-\ell.
\end{equation}
In words, before we deal with the very first update in the epoch, the maintained solution $\calU^{(0)}$ is a $(32+432\gamma) = \Theta(1)$-approximation of $\OPT_{k+\ell}(\po)$, and consists of at most $(k - \ell)$ centers. Both these properties will be crucially exploited while handling the updates within the epoch.

\medskip 
\noindent {\bf Step 3: Handling the updates within the epoch.} Consider the $t^{th}$ update in the epoch, for $t \in [1, \ell+1]$. We handle this update in a lazy manner, as follows. If the update involves the deletion of a point from $\calP$, then we do not change our maintained solution, and set $\calU^{(t)} \leftarrow \calU^{(t-1)}$. (The maintained solution remains valid, since we are considering the improper $k$-median problem). In contrast, if the update involves the insertion of a point $p$ into $\calP$, then we set $\calU^{(t)} \leftarrow \calU^{(t-1)} + p$.

It is easy to verify that this lazy way of dealing with an update does not increase the objective, and increases the number of centers in the maintained solution by at most one. Thus, we have $\cost{\calU^{(t)}, \calP^{(t)}} \leq \cost{\calU^{(t-1)}, \calP^{(t-1)}}$ and $\left| \calU^{(t)} \right| \leq \left| \calU^{(t-1)}\right| + 1$. From~(\ref{eq:init:epoch:1}), we now derive that
\begin{equation}
\label{eq:lazy:epoch}
\cost{\calU^{(t)}, \calP^{(t)}} \leq \cost{\calU^{(0)}, \calP^{(0)}} \text{ and } \left| \calU^{(t)} \right| \leq k-\ell+t, \text{ for all } t \in [1, \ell+1].
\end{equation}

\medskip
\noindent {\bf Step 4: Post-processing at the end of the epoch.} By~(\ref{eq:lazy:epoch}), the  set $\calU^{(t)}$ remains a valid solution for the improper $k$-median problem, for all $t \in [1, \ell]$.
After the very last update in the epoch, however, the set $\calU^{(\ell+1)}$ might have more than $k$ centers.
At this point in time, we do some post-processing, and compute another set $\calU_{\final} \subseteq \ground$ of at most $k$ centers (i.e., $|\calU_{\final}| \leq k$) that satisfies Invariant~\ref{inv:start}.
We then initiate the next epoch, with $\calU \leftarrow \calU_{\final}$ being the current solution.
The post-processing is done as follows.

We first add $O(\ell + 1)$ extra centers to the set $\calU_{\init}$, while minimizing the cost of the resulting solution w.r.t.~$\calP^{(0)}$. This gives us the set of centers $\calU^{\star}$. Note that $\left|\calU^{\star} \right| = k + O(\ell+1)$.
\begin{equation}
\label{eq:augment}
\calF^{\star} \leftarrow  \arg\min\limits_{\substack{\calF \subseteq \ground  : |\calF| \leq 2700\gamma \cdot (\ell+1)}} \cost{\calU_{\init} + \calF, \po}, \text{ and } \calU^{\star} \leftarrow \calU_{\init} + \calF^{\star}.
\end{equation}

We next add the newly inserted points within the epoch to the set of centers, so as to obtain the set $\calV^{\star}$. Since the epoch lasts for $\ell+1$ updates, we have $|\calV^{\star}| = k + O(\ell+1)$. Next, we identify the  subset $\calW^{\star} \subseteq \calV^{\star}$ of  $k$ centers that minimizes the $k$-median objective w.r.t.~$\calP^{(\ell+1)}$.
\begin{equation}
\label{eq:augment:1}
\calV^{\star} \leftarrow \calU^{\star} + \left( \calP^{(\ell+1)} - \calP^{(0)}\right), \text{ and }
\calW^\star \leftarrow \arg\min\limits_{\substack{\calW \subseteq \calV^{\star} \, : \, |\calW| = k}} \cost{\calW, \calP^{(\ell+1)}}.
\end{equation}

Finally, we call \Robustify$(\calW^{\star})$ and let $\calU_{\final}$ be the set of $k$ centers returned by this subroutine. Before starting the next epoch, we set $\calU \leftarrow \calU_{\final}$. 
\begin{equation}
\label{eq:augment:2}
\calU_{\final} \leftarrow \Robustify(\calW^{\star}).
\end{equation}

It is easy to verify that we always maintain a set $\calU \subseteq \ground$ of $k$ centers. In Section~\ref{sec:approx:part1}, we show that $\calU = \calU_{\final}$ satisfies Invariant~\ref{inv:start} at the end of Step 4, and analyze the approximation ratio of the overall algorithm. Finally, Section~\ref{sec:recourse:part1} bounds the recourse of the algorithm. We conclude this section with a corollary that will play an important role in our recourse analysis.

\begin{corollary}
\label{cor:recourse:easy}
We have $\left| \calW^{\star} \oplus \calU_{\init}\right| = O(\ell+1)$.
\end{corollary}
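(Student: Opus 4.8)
The plan is to track how the set of centers changes as we move from $\calU_{\init}$ to $\calW^{\star}$, going through the intermediate sets $\calU^{\star}$ and $\calV^{\star}$ defined in Step 4, and bound the symmetric differences at each stage. First I would observe from~(\ref{eq:augment}) that $\calU^{\star} = \calU_{\init} + \calF^{\star}$ with $|\calF^{\star}| \leq 2700\gamma \cdot (\ell+1)$, so $\calU^{\star} \supseteq \calU_{\init}$ and $|\calU^{\star} \oplus \calU_{\init}| = |\calF^{\star}| \leq 2700\gamma \cdot (\ell+1) = O(\ell+1)$. Next, from~(\ref{eq:augment:1}), $\calV^{\star} = \calU^{\star} + (\calP^{(\ell+1)} - \calP^{(0)})$; since the epoch lasts for $\ell+1$ updates, at most $\ell+1$ points are inserted during it, so $|\calP^{(\ell+1)} - \calP^{(0)}| \leq \ell+1$, and hence $\calV^{\star} \supseteq \calU^{\star}$ with $|\calV^{\star} \oplus \calU^{\star}| \leq \ell+1$. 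Combining these two steps, $\calV^{\star} \supseteq \calU_{\init}$ and $|\calV^{\star} \oplus \calU_{\init}| \leq (2700\gamma + 1)(\ell+1) = O(\ell+1)$. In particular, $|\calV^{\star}| \leq k + (2700\gamma+1)(\ell+1)$.

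Now I would handle the final step, from $\calV^{\star}$ to $\calW^{\star}$. By~(\ref{eq:augment:1}), $\calW^{\star} \subseteq \calV^{\star}$ with $|\calW^{\star}| = k$, so $\calW^{\star}$ is obtained from $\calV^{\star}$ purely by deletions, and $|\calW^{\star} \oplus \calV^{\star}| = |\calV^{\star}| - k \leq (2700\gamma+1)(\ell+1) = O(\ell+1)$. Finally I would apply the triangle inequality for symmetric difference, $|\calW^{\star} \oplus \calU_{\init}| \leq |\calW^{\star} \oplus \calV^{\star}| + |\calV^{\star} \oplus \calU_{\init}|$, and conclude $|\calW^{\star} \oplus \calU_{\init}| \leq 2(2700\gamma+1)(\ell+1) = O(\ell+1)$, as desired.

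This argument is essentially bookkeeping: the only inputs are the cardinality constraint on $\calF^{\star}$ from~(\ref{eq:augment}), the bound of $\ell+1$ on the number of insertions in the epoch, and the fact that $|\calW^{\star}| = k$. There is no real obstacle here; the one point to state carefully is that $\calU^{\star}$ and $\calV^{\star}$ are each obtained from the previous set by \emph{additions only}, so that each symmetric difference equals a difference of cardinalities, and that $|\calV^{\star}| - k$ is bounded by the total number of centers added to $\calU_{\init}$ along the way. (I am implicitly using $|\calU_{\init}| = k$, which holds since $\calU_{\init}$ is the solution maintained at the end of the previous epoch.)
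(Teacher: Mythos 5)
Your proof is correct and follows essentially the same route as the paper's: decompose the symmetric difference via the intermediate sets $\calU^{\star}$ and $\calV^{\star}$, bound each piece by the cardinality constraints in~(\ref{eq:augment}) and~(\ref{eq:augment:1}) and the epoch length, and combine with the triangle inequality. Nothing to add.
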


\begin{proof}
From~(\ref{eq:augment}) and~(\ref{eq:augment:1}), we infer that 
$\left| \calV^{\star} \oplus \calU_{\init} \right| \leq  \left| \calV^{\star} \oplus \calU^{\star} \right| + \left| \calU^{\star} \oplus \calU_{\init} \right| \leq \left| \calP^{(\ell+1)} - \calP^{(0)} \right| + \left| \calF^{\star} \right| \leq (2700\gamma \cdot (\ell+1)) + (\ell+1) = O(\ell+1)$.
Next, recall that $\left| \calV^{\star} \right| = k + O(\ell+1)$, and $\calW^{\star}$ is a subset of $\calV^{\star}$ of size $k$.
Thus, we get: $\left| \calW^{\star} \oplus \calU_{\init} \right| \leq 
\left| \calW^{\star} \oplus \calV^{\star} \right| + \left| \calV^{\star} \oplus \calU_{\init} \right| = O(\ell+1) + O(\ell+1) = O(\ell+1)$.
This concludes the proof.
\end{proof}

\subsection{Analyzing the Approximation Ratio}
\label{sec:approx:part1}

Consider any $t \in [0, \ell]$, and note that $\left| \calP^{(t)} \oplus \calP^{(0)} \right| \leq t \leq \ell$. Thus, by Lemma~\ref{lem:lazy-updates-part-1}, we have:
\begin{equation}
\label{eq:part1:approx:1}
\OPT_{k+\ell}\left( \calP^{(0)} \right) \leq \OPT_k\left( \calP^{(t)} \right).
\end{equation}
From~(\ref{eq:init:epoch:1}),~(\ref{eq:lazy:epoch}) and~(\ref{eq:part1:approx:1}), we now infer that:
\begin{equation}
\label{eq:part1:approx:2}
\cost{\calU^{(t)}, \calP^{(t)}} \leq (32+432\gamma) \cdot \OPT_k\left( \calP^{(t)} \right) \text{ and } \left| \calU^{(t)} \right| \leq k, \text{ for all } t \in [0, \ell].
\end{equation}
In other words, at all times within an epoch, the set $\calU^{(t)}$ maintained by our algorithm remains a valid $\Theta(1)$-approximate solution to the improper $k$-median problem on the current input $\calP^{(t)}$. It remains to show that the algorithm successfully restores Invariant~\ref{inv:start} when the epoch terminates after the $(\ell+1)^{th}$ update. Accordingly, we devote the rest of this section to the proof of Lemma~\ref{lm:restore:inv}.

\begin{lemma}
\label{lm:restore:inv}
At the end of Step 4 in Section~\ref{sec:alg:describe}, the set $\calU = \calU_{\final}$ satisfies Invariant~\ref{inv:start}.
\end{lemma}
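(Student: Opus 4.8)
The plan is to verify the two parts of Invariant~\ref{inv:start} for $\calU_{\final}$: that it is robust, and that $\cost{\calU_{\final}, \calP^{(\ell+1)}} \leq 8 \cdot \OPT_k(\calP^{(\ell+1)})$. Robustness is immediate: $\calU_{\final}$ is the output of $\Robustify(\calW^\star)$, and by construction $\Robustify$ terminates only when no center violates~(\ref{condition-robust-solution-part-1}), so $\calU_{\final}$ is robust (one should check termination, but this is exactly what Lemma~\ref{lem:robustify-calls-once-part-1} is for). So the real work is the cost bound, and the natural route is: bound $\cost{\calW^\star, \calP^{(\ell+1)}}$ in terms of $\OPT_k(\calP^{(\ell+1)})$, then apply Lemma~\ref{lem:cost-robustify-part-1} to pay only an extra factor of $3/2$ when passing from $\calW^\star$ to $\calU_{\final}$.

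To bound $\cost{\calW^\star, \calP^{(\ell+1)}}$, I would chain through the three ``tasks'' of Step~4. Let $\calV$ be an optimal $k$-median solution w.r.t.~$\calP^{(\ell+1)}$, so $\cost{\calV, \calP^{(\ell+1)}} = \OPT_k(\calP^{(\ell+1)})$. First I want to show $\cost{\calU^\star, \calP^{(0)}}$ is $\Theta(1) \cdot \OPT_{k+\ell}(\calP^{(0)})$: this holds because $\calU^\star = \calU_{\init} + \calF^\star$ is the cost-minimizing augmentation of $\calU_{\init}$ by $2700\gamma(\ell+1)$ points w.r.t.~$\calP^{(0)}$, and a valid competitor is to take an optimal $(k+\ell)$-median solution $Z$ for $\calP^{(0)}$ and note that $\calU_{\init} \cup Z$ has at most $k + (k+\ell)$ points — but that's too many. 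The cleaner competitor: since $|\calF^\star|$ can be as large as $2700\gamma(\ell+1) \gg \ell$, one compares against the projection of $\calU_{\init} \cup Z$ onto a set of size $k + 2700\gamma(\ell+1)$, or more simply uses that adding $\ell$ well-chosen points to $\calU_{\init}$ already brings the cost down toward $\OPT_{k+\ell}(\calP^{(0)})$ — here I expect the Double-Sided Stability machinery and~(\ref{eq:length}) to be invoked, together with the fact that $\OPT_{k+\ell}(\calP^{(0)}) \leq \OPT_k(\calP^{(0)})$ and $\cost{\calU_{\init}, \calP^{(0)}} \leq 8\OPT_k(\calP^{(0)})$ from Invariant~\ref{inv:start}. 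Second, $\calV^\star = \calU^\star + (\calP^{(\ell+1)} - \calP^{(0)})$ differs from $\calP^{(\ell+1)}$-side quantities only by the $\leq \ell+1$ inserted points, so by the Lazy-Updates Lemma (Lemma~\ref{lem:lazy-updates-part-1}) applied appropriately, $\cost{\calV^\star, \calP^{(\ell+1)}} \leq \cost{\calU^\star, \calP^{(0)}} \leq \Theta(1) \cdot \OPT_{k+\ell}(\calP^{(0)})$ — the point being that inserted points are now centers so they contribute zero, and deleted points only help. Third, $\calW^\star$ is the best $k$-subset of $\calV^\star$ w.r.t.~$\calP^{(\ell+1)}$, so by the Projection Lemma (Lemma~\ref{lem:projection-lemma-part-1}), $\cost{\calW^\star, \calP^{(\ell+1)}} = \OPT_k^{\calV^\star}(\calP^{(\ell+1)}) \leq \cost{\calV^\star, \calP^{(\ell+1)}} + 2\OPT_k(\calP^{(\ell+1)})$. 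Combining, and using $\OPT_{k+\ell}(\calP^{(0)}) \leq \OPT_k(\calP^{(t)})$ for $t \leq \ell$ from~(\ref{eq:part1:approx:1}) together with almost-monotonicity to pass from $\calP^{(\ell)}$ to $\calP^{(\ell+1)}$ (one more insertion/deletion changes $\OPT_k$ by at most a factor $2$), gives $\cost{\calW^\star, \calP^{(\ell+1)}} \leq \Theta(1) \cdot \OPT_k(\calP^{(\ell+1)})$.

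The main obstacle I anticipate is making the first step genuinely tight enough: the constant $8$ in Invariant~\ref{inv:start} is not generous, so the whole chain ($\calU_{\init} \to \calU^\star \to \calV^\star \to \calW^\star \to \calU_{\final}$, each with a small multiplicative loss, plus the factor-$2$ almost-monotonicity slip, plus the $3/2$ from $\Robustify$) must close with the accumulated constant still at most $8$ relative to $\OPT_k(\calP^{(\ell+1)})$. This is presumably why the slack parameter in~(\ref{eq:augment}) is $2700\gamma(\ell+1)$ rather than just $\Theta(\ell)$: having that many extra centers to play with lets the bound on $\cost{\calU^\star, \calP^{(0)}}$ be a small constant times $\OPT_{k+\ell}(\calP^{(0)})$ rather than a large one, so that after all the losses we land at $8$. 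I would track the constants carefully through~(\ref{eq:length}),~(\ref{eq:init:epoch:1}), and the Projection/Lazy-Updates lemmas, and in particular confirm that the competitor used to bound $\cost{\calU^\star, \calP^{(0)}}$ can be taken to be (roughly) $\calU_{\init}$ together with an optimal augmentation that witnesses $\OPT_{k+\ell}(\calP^{(0)})$ up to the integrality gap — which is exactly where the choice $\ell = \lfloor \ell^\star/(12\cdot 54\gamma)\rfloor$ and inequality~(\ref{eq:length}) do their job.
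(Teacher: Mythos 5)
Your outer skeleton matches the paper's: robustness of $\calU_{\final}$ comes from $\Robustify$ (with termination via Lemma~\ref{lem:robustify-calls-once-part-1}), the $3/2$ loss from Lemma~\ref{lem:cost-robustify-part-1}, Task~(iii) is handled by the Projection Lemma, Task~(ii) by a lazy-update argument, and everything reduces to controlling $\cost{\calU^{\star}, \calP^{(0)}}$. But that last step --- the paper's Claim~\ref{cl:approx:key}, which asserts $\cost{\calU^{\star}, \calP^{(0)}} \leq 3 \cdot \OPT_{k+\ell+1}(\calP^{(0)})$ --- is exactly where your proposal has a genuine gap. You correctly observe that taking an optimal $(k+\ell)$-median solution $Z$ and forming $\calU_{\init} \cup Z$ gives far too many centers, but the repairs you sketch (``projection onto a set of size $k+2700\gamma(\ell+1)$'', or ``Double-Sided Stability machinery and~(\ref{eq:length})'') are not the right tools and do not close the argument. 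The actual mechanism is the well-separated-pairs machinery applied to $(\calU_{\init}, \calV)$, where $\calV$ is an optimal $(k+\ell+1)$-median solution for $\calP^{(0)}$: because $\calU_{\init}$ is \emph{robust} (the half of Invariant~\ref{inv:start} you never invoke for the cost bound), Lemma~\ref{lem:cost-well-sep-part-1} lets every well-separated center $v_i \in \calV$ be replaced by its partner $u_i \in \calU_{\init}$ at a factor-$3$ loss on its cluster, so a valid competitor $\tilde{\calF}$ in~(\ref{eq:augment}) consists only of the $m+\ell+1$ centers of $\calV$ having no well-separated partner. One then needs $m = O(\ell+1)$, which follows from Lemma~\ref{lem:generalize-lemma-7.3-in-FLNS21-part-1} combined with the \emph{maximality} of $\ell^{\star}$ in Step~1 (if $m$ were large, one could delete $\Omega(m)$ centers from $\calU_{\init}$ cheaply, contradicting the choice of $\ell^{\star}$); this counting is also the real reason the budget in~(\ref{eq:augment}) is $2700\gamma(\ell+1)$. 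Neither the robustness of $\calU_{\init}$ nor the well-separated-pair counting appears in your plan, and without them the existence of an $O(\ell+1)$-sized augmentation achieving cost $O(1)\cdot\OPT_{k+\ell+1}(\calP^{(0)})$ is unsubstantiated.

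A secondary slip: you target $\OPT_{k+\ell}(\calP^{(0)})$ and then try to reach $\OPT_k(\calP^{(\ell+1)})$ via ``almost-monotonicity: one more update changes $\OPT_k$ by at most a factor $2$.'' That is false for insertions in the fully dynamic setting --- a single far-away inserted point can increase $\OPT_k$ by an arbitrary factor. The paper avoids this by targeting $\OPT_{k+\ell+1}(\calP^{(0)})$ and applying Lemma~\ref{lem:lazy-updates-part-1} once, which yields $\OPT_{k+\ell+1}(\calP^{(0)}) \leq \OPT_k(\calP^{(\ell+1)})$ directly since $\left|\calP^{(0)} \oplus \calP^{(\ell+1)}\right| \leq \ell+1$.
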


The claim below bounds the cost of the solution $\calU^\star$ w.r.t.~the point-set $\calP^{(0)}$.

\begin{claim}
\label{cl:approx:key}
We have $\cost{\calU^{\star}, \calP^{(0)}} \leq 3 \cdot \OPT_{k+\ell+1}\left(\calP^{(0)}\right)$.
\end{claim}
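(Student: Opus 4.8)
\textbf{Proof plan for Claim~\ref{cl:approx:key}.}

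The plan is to show that $\calU^\star$, which by~(\ref{eq:augment}) is the best way to extend $\calU_{\init}$ by at most $2700\gamma(\ell+1)$ extra centers with respect to $\calP^{(0)}$, competes (up to a constant factor) with the unrestricted optimal $(k+\ell+1)$-median solution for $\calP^{(0)}$. Since $\calU^\star$ is chosen to minimize $\cost{\calU_{\init}+\calF,\po}$ over all $\calF$ of size at most $2700\gamma(\ell+1)$, it suffices to exhibit \emph{one} particular choice of such an $\calF$ for which $\cost{\calU_{\init}+\calF,\po}\le 3\cdot\OPT_{k+\ell+1}(\po)$; then the minimality of $\calU^\star$ finishes the job. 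The natural candidate is to take $\calF$ to be (a subset of) an optimal $(k+\ell+1)$-median solution $\calC$ for $\po$, but $|\calC|$ could be as large as $k+\ell+1 \gg 2700\gamma(\ell+1)$, so we cannot simply set $\calF=\calC$. Instead I would keep only the ``useful'' centers of $\calC$ — those that improve substantially over what $\calU_{\init}$ already provides — and argue there are few of them.

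First I would set up the structural comparison between $\calU_{\init}$ and an optimal $(k+\ell+1)$-median solution $\calC$ for $\po$. The key quantitative inputs are: $\cost{\calU_{\init},\po}\le 8\cdot\OPT_k(\po)$ (Invariant~\ref{inv:start}), and from~(\ref{eq:length}) that $\OPT_k(\po)\le 4\cdot\OPT_{k+\ell}(\po)\le 4\cdot\OPT_{k+\ell}(\po)$, hence $\cost{\calU_{\init},\po} = O(1)\cdot\OPT_{k+\ell+1}(\po)$ (using monotonicity of $\OPT$ in the number of centers and the fact that $\OPT_{k+\ell}(\po)$ and $\OPT_{k+\ell+1}(\po)$ differ by at most the same $O(1)$ factor — actually I'd route this through~(\ref{eq:length}) which gives $\OPT_k(\po)\le 4\OPT_{k+\ell}(\po)\le 4\OPT_{k+\ell+1}(\po)\cdot(\text{const})$ only if we're careful; more cleanly, just use $\OPT_{k+\ell}(\po)\ge\OPT_{k+\ell+1}(\po)$ and $\cost{\calU_{\init},\po}\le 8\OPT_k(\po)\le 32\OPT_{k+\ell}(\po)$, but we need a bound in terms of $\OPT_{k+\ell+1}$, so I would use $\OPT_{k+\ell}(\po)\le \OPT_{k+\ell+1}(\po)\cdot c_0$ is false in general — so actually the right move is to bound $\cost{\calU_{\init},\po}$ in terms of $\OPT_{k+\ell}(\po)$ and note $\OPT_{k+\ell}(\po)$ and $\OPT_{k+\ell+1}(\po)$ are within the slack window where, by the maximality of $\ell^\star$ and the Double-Sided Stability setup, they are comparable). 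The honest version: use that $k+\ell+1 \le k+\ell^\star$ (since $\ell\le\ell^\star/(12\cdot 54\gamma)$, so $\ell+1\le \ell^\star$ for $\ell^\star$ large, and the $\ell^\star=0$ case is trivial), hence by the definition of $\ell^\star$ we still have $\OPT_{k-\ell^\star}(\po)\le 54\gamma\cdot\OPT_k(\po)$, and by monotonicity plus convexity-type reasoning $\OPT_k(\po)$, $\OPT_{k+\ell}(\po)$, $\OPT_{k+\ell+1}(\po)$ all lie within a constant factor — which is exactly what~(\ref{eq:length}) packages. So $\cost{\calU_{\init},\po}=O(1)\cdot\OPT_{k+\ell+1}(\po)$.

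Next I would do the center-counting argument. Fix an optimal solution $\calC$ with $|\calC|\le k+\ell+1$ and $\cost{\calC,\po}=\OPT_{k+\ell+1}(\po)$. For each $c\in\calC$, say $c$ is \emph{profitable} if moving the points in $C_c(\calC,\po)$ to $c$ saves a lot compared to serving them from $\calU_{\init}$; formally, if $\cost{\calU_{\init},C_c(\calC,\po)} > \lambda\cdot\cost{c,C_c(\calC,\po)}$ for a suitable constant $\lambda$. Let $\calF$ be the set of profitable centers. On one hand, $\cost{\calU_{\init}+\calF,\po}\le \sum_{c\text{ profitable}}\cost{c,C_c(\calC,\po)} + \sum_{c\text{ not profitable}}\cost{\calU_{\init},C_c(\calC,\po)} \le \cost{\calC,\po} + \lambda\sum_{c\text{ not profitable}}\cost{c,C_c(\calC,\po)} \le (1+\lambda)\cdot\OPT_{k+\ell+1}(\po)$, which we can push to $\le 3\cdot\OPT_{k+\ell+1}(\po)$ by choosing $\lambda=2$. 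On the other hand, for each profitable $c$ the saving $\cost{\calU_{\init},C_c(\calC,\po)}-\cost{c,C_c(\calC,\po)}$ is at least $(1-1/\lambda)\cdot\cost{\calU_{\init},C_c(\calC,\po)}$, and these savings are disjoint across $c$ (the $C_c$'s partition $\po$), so $\sum_{c\text{ profitable}}\cost{\calU_{\init},C_c(\calC,\po)} \le \frac{\lambda}{\lambda-1}\cdot\big(\cost{\calU_{\init},\po}-\text{(cost of best extension)}\big)\le 2\cdot\cost{\calU_{\init},\po}$. But that bounds a \emph{cost}, not a \emph{count} — to bound $|\calF|$ I instead compare to $\OPT_k$: extending $\calU_{\init}$ (already $\le 8\OPT_k$-approximate with $k$ centers) by the profitable centers gives something no better than $\OPT_{k+|\calF|}(\po)$; combined with the fact that removing the least useful of these still leaves a constant-factor solution and with~(\ref{eq:length}), we get $|\calF|=O(\ell+1)$, and with the right absolute constants $|\calF|\le 2700\gamma(\ell+1)$. \textbf{I expect this last counting step — pinning down that $|\calF|\le 2700\gamma(\ell+1)$ with the specific constant — to be the main obstacle}, since it requires carefully chaining the $8$-approximation from Invariant~\ref{inv:start}, the stability window~(\ref{eq:length}) (with its constant $54\gamma$), and the profitability threshold $\lambda$, making sure the product of constants lands below $2700\gamma$; the cost bound $\le 3\cdot\OPT_{k+\ell+1}(\po)$ itself is the easy part once $\calF$ is admissible. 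Finally, since $|\calF|\le 2700\gamma(\ell+1)$, the set $\calF$ is a feasible choice in the minimization~(\ref{eq:augment}), so $\cost{\calU^\star,\po}\le\cost{\calU_{\init}+\calF,\po}\le 3\cdot\OPT_{k+\ell+1}(\po)$, as claimed.
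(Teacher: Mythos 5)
Your overall strategy is the same as the paper's: exhibit one admissible set $\calF$ of size at most $2700\gamma(\ell+1)$ with $\cost{\calU_{\init}+\calF,\po}\le 3\cdot\OPT_{k+\ell+1}(\po)$ and invoke the minimality in~(\ref{eq:augment}). The cost half of your argument (the "profitability" threshold $\lambda$) is fine and even avoids using robustness. But the step you yourself flag as the main obstacle --- bounding $|\calF|$ --- is a genuine gap, and the sketch you give for it does not work. The set of profitable centers can, in general, have size as large as $k+\ell+1$: nothing prevents every cluster of the optimal $(k+\ell+1)$-median solution from being served far more cheaply by its own center than by $\calU_{\init}$, and no cost identity rules this out because individual clusters can have arbitrarily small cost. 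Your proposed route, "$\cost{\calU_{\init}+\calF,\po}\ge\OPT_{k+|\calF|}(\po)$ combined with~(\ref{eq:length})," is vacuous: $\OPT_{k+r}(\po)$ is non-increasing in $r$, so $\OPT_{k+|\calF|}(\po)\le 3\cdot\OPT_{k+\ell+1}(\po)$ holds for every $|\calF|$ and gives no upper bound on the count; and~(\ref{eq:length}) only controls $\OPT_{k+\ell}$, not $\OPT_{k+r}$ for larger $r$.

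The paper's counting is structural, not cost-based. It takes $\tilde{\calF}$ to be the $m+\ell+1$ centers of the optimal solution $\calV$ that do \emph{not} form well-separated pairs with $\calU_{\init}$, and bounds $m$ via \Cref{lem:generalize-lemma-7.3-in-FLNS21-part-1}: if $m$ were large, one could delete $\lfloor(m-\ell-1)/4\rfloor$ centers from $\calU_{\init}$ while keeping the cost at most $54\gamma\cdot\OPT_k(\po)$, contradicting the maximality of $\ell^{\star}$ in Step~1 (this is \Cref{cl:approx:key:10}). The cost of reassigning the well-separated clusters is then controlled by the robustness of $\calU_{\init}$ (\Cref{inv:start} together with \Cref{lem:cost-well-sep-part-1}), which gives the factor $3$ in \Cref{cl:approx:key:11}. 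Neither ingredient appears in your counting step. If you want to salvage your profitability framing, you would need $\lambda\ge 3$ so that (by robustness and \Cref{lem:cost-well-sep-part-1}) every profitable center is non-well-separated, and then the size bound reduces exactly to the paper's \Cref{cl:approx:key:10}; with your choice $\lambda=2$ even well-separated partners can be profitable, so the containment fails. Finally, your preliminary attempt to show $\cost{\calU_{\init},\po}=O(1)\cdot\OPT_{k+\ell+1}(\po)$ is both unobtainable in general ($\OPT_{k+\ell+1}$ can be much smaller than $\OPT_{k+\ell}$) and unnecessary --- the paper compares directly to $\cost{\calV,\po}=\OPT_{k+\ell+1}(\po)$ via the reassignment.
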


Before proving Claim~\ref{cl:approx:key}, we explain how it implies Lemma~\ref{lm:restore:inv}. Towards this end, note that:
\begin{eqnarray}
\cost{\calU_{\final}, \calP^{(\ell+1)}} & \leq & \frac{3}{2} \cdot \cost{\calW^{\star}, \calP^{(\ell+1)}}  =  \frac{3}{2} \cdot \OPT_{k}^{\calV^{\star}}\left( \calP^{(\ell+1)}\right) \label{eq:derive:1} \\
& \leq & \frac{3}{2} \cdot \cost{\calV^{\star}, \calP^{(\ell+1)}} + 3 \cdot \OPT_k\left( \calP^{(\ell+1)} \right) \label{eq:derive:2} \\
& \leq & \frac{3}{2} \cdot \cost{\calU^{\star}, \calP^{(0)}} + 3 \cdot \OPT_k\left( \calP^{(\ell+1)} \right) \label{eq:derive:3} \\
& \leq & \frac{9}{2} \cdot \OPT_{k+\ell+1}\left( \calP^{(0)} \right) + 3 \cdot \OPT_k\left( \calP^{(\ell+1)} \right) \label{eq:derive:4} \\
& \leq & \frac{9}{2} \cdot \OPT_{k}\left( \calP^{(\ell+1)} \right) + 3 \cdot \OPT_k\left( \calP^{(\ell+1)} \right) \label{eq:derive:5}\\
& \leq & 8 \cdot \OPT_k\left( \calP^{(\ell+1)}\right). \label{eq:derive:6}
\end{eqnarray}
In the above derivation, the first step~(\ref{eq:derive:1}) follows from~(\ref{eq:augment:1}),~(\ref{eq:augment:2}) and  Lemma~\ref{lem:cost-robustify-part-1}. The second step~(\ref{eq:derive:2}) follows from Lemma~\ref{lem:projection-lemma-part-1}. The third step~(\ref{eq:derive:3}) follows from~(\ref{eq:augment:1}). The fourth step~(\ref{eq:derive:4}) follows from Claim~\ref{cl:approx:key}. The fifth step~(\ref{eq:derive:5}) follows from Lemma~\ref{lem:lazy-updates-part-1} and the observation that $\left| \calP^{(\ell+1)} \oplus \calP^{(0)} \right| \leq \ell+1$. From~(\ref{eq:derive:6}), we infer that at the start of the next epoch $\cost{\calU, \calP} \leq 8 \cdot \OPT_k(\calP)$, and the set $\calU$ is robust because of~(\ref{eq:augment:2}). This implies Lemma~\ref{lm:restore:inv}.

\subsubsection{Proof of Claim~\ref{cl:approx:key}}

Let $\calV \subseteq \ground$ be an optimal improper $(k+\ell+1)$-median solution for the point-set $\calP^{(0)}$, i.e., $\left| \calV \right| = k+\ell+1$ and $\cost{\calV, \calP^{(0)}} = \OPT_{k+\ell+1}\left( \calP^{(0)} \right)$. Let $m \in [0, k]$ be the unique integer such that there are $(k-m)$  well-separated pairs w.r.t.~$\left(\calU_{\init}, \calV\right)$. Let $\{ (u_1, v_1), (u_2, v_2), \ldots, (u_{k-m}, v_{k-m})\} \subseteq  \calU_{\init} \times \calV$ denote the collection of $(k-m)$ well-separated pairs w.r.t.~$\left( \calU_{\init}, \calV\right)$. Define the set $\tilde{\calF} := \calV \setminus \{v_1, \ldots, v_{k-m}\}$. It is easy to verify that: 
\begin{equation}
\label{eq:size}
\left| \tilde{\calF} \right| = \left| \calV \right| - (k-m) = m+\ell+1.
\end{equation}

\begin{claim}
\label{cl:approx:key:10}
We have $m \leq 2600\gamma \cdot (\ell+1)$.
\end{claim}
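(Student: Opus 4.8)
\textbf{Proof proposal for Claim~\ref{cl:approx:key:10}.}

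The plan is to derive a contradiction with the maximality of $\ell^\star$ (equivalently, with the left inequality in~(\ref{eq:length})) by showing that if $m$ were too large, then $\calU_{\init}$ could be pruned down to far fewer than $k-\ell$ centers while keeping the cost within a $\Theta(1)$ factor of $\OPT_k(\po)$. Concretely, I would apply Lemma~\ref{lem:generalize-lemma-7.3-in-FLNS21-part-1} with $\calU \leftarrow \calU_{\init}$ (size $k$) and $\calV$ being the optimal improper $(k+\ell+1)$-median solution for $\po$ (so $r = \ell+1$), together with the hypothesis that the number of well-separated pairs w.r.t.~$(\calU_{\init}, \calV)$ is $k-m$. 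This yields a subset $\tu \subseteq \calU_{\init}$ of size at most $k - \lfloor (m - (\ell+1))/4 \rfloor$ with $\cost{\tu, \po} \leq 6\gamma \cdot (\cost{\calU_{\init}, \po} + \cost{\calV, \po})$.

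Next I would bound the right-hand side. By Invariant~\ref{inv:start} (which holds at the start of the epoch, i.e., for $\calU_{\init}$ and $\po$), $\cost{\calU_{\init}, \po} \leq 8 \cdot \OPT_k(\po)$; and $\cost{\calV, \po} = \OPT_{k+\ell+1}(\po) \leq \OPT_k(\po)$ by monotonicity of $\OPT_{\cdot}(\po)$ in the number of centers. Hence $\cost{\tu, \po} \leq 6\gamma \cdot 9 \cdot \OPT_k(\po) = 54\gamma \cdot \OPT_k(\po)$. Writing $t := \lfloor (m - (\ell+1))/4 \rfloor$, we have $|\tu| \leq k - t$, so $\OPT_{k-t}(\po) \leq \cost{\tu, \po} \leq 54\gamma \cdot \OPT_k(\po)$. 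But $\ell^\star$ was chosen as the \emph{maximum} value with $\OPT_{k - \ell^\star}(\po) \leq 54\gamma \cdot \OPT_k(\po)$, so $t \leq \ell^\star$. Since $\ell = \lfloor \ell^\star / (12 \cdot 54\gamma) \rfloor$, this gives $\ell^\star \leq 12 \cdot 54\gamma \cdot (\ell+1)$ — wait, more carefully $\ell^\star < (12 \cdot 54\gamma)(\ell+1)$ — and therefore $\lfloor (m-(\ell+1))/4 \rfloor = t \leq \ell^\star \leq 648\gamma(\ell+1)$, which rearranges to $m \leq 4 \cdot 648\gamma(\ell+1) + (\ell+1) + O(1) \leq 2600\gamma(\ell+1)$ after absorbing the lower-order terms into the constant (using $\gamma = 4000$ is large).

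I expect the only real subtlety to be the bookkeeping with the floor functions and the off-by-constant slack: one has to make sure that the value $t$ coming out of Lemma~\ref{lem:generalize-lemma-7.3-in-FLNS21-part-1} is genuinely an admissible ``$\ell^\star$-candidate'', i.e.\ that $k - t$ is a legitimate index (nonnegative, and $t \in [0,k-1]$ or the claim is vacuous because $m \leq k$ already forces $m = O(k) = O(\gamma(\ell+1))$ is \emph{not} automatic — so one may need to first dispose of the degenerate case $m - (\ell+1) < 0$, where the bound $m \leq \ell+1 \leq 2600\gamma(\ell+1)$ is immediate). Everything else is a direct chaining of Lemma~\ref{lem:generalize-lemma-7.3-in-FLNS21-part-1}, Invariant~\ref{inv:start}, monotonicity of $\OPT$, and the defining inequality of $\ell^\star$; no new ideas are needed beyond correctly tracking the constants.
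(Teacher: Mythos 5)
Your proposal is correct and follows essentially the same route as the paper's proof: the same application of Lemma~\ref{lem:generalize-lemma-7.3-in-FLNS21-part-1} with $r = \ell+1$, the same cost bound via Invariant~\ref{inv:start} and monotonicity of $\OPT$, the same appeal to the maximality of $\ell^\star$, and the same final arithmetic yielding $m \leq (2592\gamma + O(1))(\ell+1) \leq 2600\gamma(\ell+1)$. Your explicit handling of the degenerate case $m < \ell+1$ is a small point the paper leaves implicit, but otherwise the two arguments coincide.
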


\begin{claim}
\label{cl:approx:key:11}
We have $\cost{\calU_{\init}+\tilde{\calF}, \calP^{(0)}} \leq 3 \cdot \OPT_{k+\ell+1}\left(\calP^{(0)}\right)$.
\end{claim}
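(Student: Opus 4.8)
\textbf{Proof proposal for Claim~\ref{cl:approx:key:11}.} The plan is to bound the cost of $\calU_{\init} + \tilde{\calF}$ against the cost of the optimal $(k+\ell+1)$-median solution $\calV$ by charging, cluster by cluster, the points of $\calP^{(0)}$ to centers that survive in $\calU_{\init} + \tilde{\calF}$. First I would note that, by construction, $\calV$ is the disjoint union of $\tilde{\calF} = \calV \setminus \{v_1, \dots, v_{k-m}\}$ and the set $\{v_1, \dots, v_{k-m}\}$ of second coordinates of the well-separated pairs; moreover the remark following the definition of well-separated pairs guarantees that the $v_j$'s (and the $u_j$'s) are pairwise distinct. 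Hence the clusters $\{C_v(\calV, \calP^{(0)})\}_{v \in \calV}$ partition $\calP^{(0)}$, and each cluster is indexed either by a point of $\tilde{\calF}$ or by some $v_j$.

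Next I would split the sum $\cost{\calU_{\init} + \tilde{\calF}, \calP^{(0)}} = \sum_{v \in \calV} \sum_{p \in C_v(\calV, \calP^{(0)})} d(p, \calU_{\init} + \tilde{\calF})$ along this partition. For a cluster indexed by $v \in \tilde{\calF}$, the center $v$ itself lies in $\calU_{\init} + \tilde{\calF}$, so $\sum_{p \in C_v(\calV, \calP^{(0)})} d(p, \calU_{\init} + \tilde{\calF}) \leq \cost{v, C_v(\calV, \calP^{(0)})}$. For a cluster indexed by $v_j$, I would use that $u_j \in \calU_{\init} \subseteq \calU_{\init} + \tilde{\calF}$, so $\sum_{p \in C_{v_j}(\calV, \calP^{(0)})} d(p, \calU_{\init} + \tilde{\calF}) \leq \cost{u_j, C_{v_j}(\calV, \calP^{(0)})}$, and then invoke Lemma~\ref{lem:cost-well-sep-part-1} — valid because $\calU_{\init}$ is robust by Invariant~\ref{inv:start} and $(u_j, v_j)$ is well-separated — to get $\cost{u_j, C_{v_j}(\calV, \calP^{(0)})} \leq 3 \cdot \cost{v_j, C_{v_j}(\calV, \calP^{(0)})}$. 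Summing the two types of contributions gives $\cost{\calU_{\init} + \tilde{\calF}, \calP^{(0)}} \leq 3 \sum_{v \in \calV} \cost{v, C_v(\calV, \calP^{(0)})} = 3 \cdot \cost{\calV, \calP^{(0)}} = 3 \cdot \OPT_{k+\ell+1}(\calP^{(0)})$, which is exactly the claim.

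The bulk of the work is bookkeeping; the one place that needs care is verifying the hypotheses of Lemma~\ref{lem:cost-well-sep-part-1} at each well-separated pair (robustness of $\calU_{\init}$, which must be traced back to the invariant maintained across epochs) and confirming that the two cases — cluster center in $\tilde{\calF}$ versus cluster center among the $v_j$'s — are genuinely exhaustive and non-overlapping, so that no cluster of $\calV$ is double-counted or omitted. Given that, the inequality falls out by a single pass over the clusters, with the factor $3$ coming entirely from the well-separated clusters.
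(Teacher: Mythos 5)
Your proposal is correct and follows essentially the same route as the paper: the paper also defines an assignment that sends each point in a well-separated cluster $C_{v_i}(\calV,\calP^{(0)})$ to $u_i$ (invoking Lemma~\ref{lem:cost-well-sep-part-1} via the robustness of $\calU_{\init}$ from Invariant~\ref{inv:start}) and keeps every other point at its $\calV$-center, which necessarily lies in $\tilde{\calF}$. Your cluster-by-cluster decomposition is just a repackaging of that point-wise assignment, and the bookkeeping you flag (distinctness of the $v_j$'s, exhaustiveness of the two cases) checks out.
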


By~(\ref{eq:size}),  Claim~\ref{cl:approx:key:10} and Claim~\ref{cl:approx:key:11}, there exists a set $\tilde{\calF}  \subseteq \ground$ of  $m+\ell+1 \leq 2700 \gamma \cdot (\ell+1)$ centers such that $\cost{\calU_{\init}+\tilde{\calF}, \calP^{(0)}} \leq 3 \cdot \OPT_{k+\ell+1}\left(\calP^{(0)}\right)$. Accordingly, from~(\ref{eq:augment}), we get $\cost{\calU_{\init}+\calF^{\star}, \calP^{(0)}} \leq  \cost{\calU_{\init}+\tilde{\calF}, \calP^{(0)}} \leq 3 \cdot \OPT_{k+\ell+1}\left(\calP^{(0)}\right)$, which implies Claim~\ref{cl:approx:key}.

It now remains to prove Claim~\ref{cl:approx:key:10} and Claim~\ref{cl:approx:key:11}.

\subsubsection*{Proof of Claim~\ref{cl:approx:key:10}}
We apply Lemma~\ref{lem:generalize-lemma-7.3-in-FLNS21-part-1}, by setting $r = \ell+1$, $\calU = \calU_{\init}$ and $\calP = \calP^{(0)}$. This implies the existence of a set $\tilde{\calU} \subseteq \calU_{\init}$ of at most $(k - b)$ centers, with $b = \lfloor (m- \ell-1)/4 \rfloor$, such that 
\begin{eqnarray*}
\cost{\tilde{\calU}, \calP^{(0)}} & \leq &  6 \gamma \cdot \left( \cost{\calU_{\init}, \calP^{(0)}} + \cost{\calV, \calP^{(0)}} \right) \\
& \leq & 6 \gamma \cdot \left( 8 \cdot \OPT_k\left( \calP^{(0)} \right) + \OPT_{k+\ell+1}\left(\calP^{(0)} \right) \right)  \leq  54 \gamma \cdot  \OPT_k\left( \calP^{(0)}\right).
\end{eqnarray*}
In the above derivation, the second inequality follows from Invariant~\ref{inv:start}, and the last inequality holds because $\OPT_{k+\ell+1}\left(\calP^{(0)}\right) \leq \OPT_k(\calP^{(0)})$. Since $\OPT_{k-b}\left( \calP^{(0)}\right) \leq \cost{\tilde{\calU}, \calP^{(0)}}$, we get
\begin{equation}
\label{eq:approx:key:1}
\OPT_{k-b}\left( \calP^{(0)} \right) \leq 54 \gamma \cdot  \OPT_k\left(\calP^{(0)} \right).
\end{equation}
Recall the way we defined  $\ell, \ell^{\star}$ at Step 1 in Section~\ref{sec:alg:describe}. From~(\ref{eq:approx:key:1}), it follows that $b \leq \ell^{\star}$. Since $b  \geq \frac{m- \ell-1}{4} - 1 = \frac{m-\ell-5}{4}$ and $\ell \geq \frac{\ell^{\star}}{12 \cdot 54 \gamma} - 1$, we get
$\frac{m- \ell-5}{4} \leq 12 \cdot 54\gamma \cdot (\ell + 1), \text{ and hence } m \leq (2592\gamma+1)(\ell+1)+4 \leq 2600\gamma \cdot (\ell+1)$.
This concludes the proof of the claim.

\subsubsection*{Proof of Claim~\ref{cl:approx:key:11}}
We define assignment $\sigma : \calP^{(0)} \rightarrow \calU_{\init} + \tilde{\calF}$, as follows.\footnote{Recall the notations $C_u(\calU, \calP)$ and $\pi_{\calU}(p)$ from Section~\ref{sec:notations}.} Consider any point $p \in \calP^{(0)}$. 
\begin{itemize}
\item If $p \in C_{v_i}\left( \calV, \calP^{(0)}\right)$ for some $i \in [1, k-m]$, then $\sigma(p) := u_i$. 
\item Otherwise,  $\sigma(p) := \pi_{\calV}(p)$.
\end{itemize}
In words, for every well-separated pair $(u_i, v_i) \in \calU_{\init} \times \calV$ all the points in the cluster of $v_i$ get reassigned to the center $u_i$, and the assignment of all other points remain unchanged (note that their assigned centers are present in $ \calU_{\init} + \tilde{\calF}$ as well as $\calV$).
Now, recall that the set of centers $\calU_{\init}$ is robust (see Invariant~\ref{inv:start}).
Hence, by applying Lemma~\ref{lem:cost-well-sep-part-1}, we infer that 
$$\cost{\calU_{\init}+\tilde{\calF}, \calP^{(0)}} \leq \sum_{p \in \calP^{(0)}} d(p, \sigma(p)) \leq 3 \cdot \cost{\calV, \calP^{(0)}}.$$
The claim follows since $\cost{\calV, \calP^{(0)}} = \OPT_{k+\ell+1}\left(\calP^{(0)}\right)$ by definition.

\subsection{Analyzing the Recourse}
\label{sec:recourse:part1}

Recall the description of our algorithm from Section~\ref{sec:alg:describe}, and consider a given epoch (say) $\mathcal{E}$ of length $(\ell+1)$. The total recourse incurred by the algorithm during this epoch is
\begin{equation}
\label{eq:recourse:1}
R_{\mathcal{E}} \leq \left| \calU_{\init} \oplus \calU^{(0)} \right| + \left( \sum_{t=1}^{\ell+1}  \left| \calU^{(t)} \oplus \calU^{(t-1)} \right| \right) + \left| \calU^{(\ell+1)} \oplus \calU_{\final} \right|.
\end{equation}
We will now bound each term on the right hand side of~(\ref{eq:recourse:1}). Towards this end, recall that $\calU^{(0)}$ is obtained by deleting $\ell$ centers from $\calU_{\init}$, and hence we have $\left| \calU_{\init} \oplus \calU^{(0)} \right| = \ell$. Next, it is easy to verify that in Step 3 (see Section~\ref{sec:alg:describe}) we incur a worst-case recourse of at most one per update.  Specifically, we have $\left| \calU^{(t)} \oplus \calU^{(t-1)} \right| \leq 1$ for all $t \in [1, \ell+1]$, and hence $\left|\calU^{(\ell+1)} \oplus \calU_{\init} \right| \leq \left|\calU^{(\ell+1)} \oplus \calU^{(0)} \right| + \left|\calU^{(0)} \oplus \calU_{\init} \right| \leq (\ell+1) + \ell = 2 \ell+1$. Thus, from~(\ref{eq:recourse:1}) we get:
\begin{eqnarray}
R_{\mathcal{E}} & \leq & \ell + (\ell+1) + \left| \calU^{(\ell+1)} \oplus \calU_{\final} \right| \nonumber  \\
& \leq & (2\ell+1) + \left|\calU^{(\ell+1)} \oplus \calU_{\init} \right| + \left|\calU_{\init} \oplus \calU_{\final} \right| \nonumber \\
& \leq & (4\ell+2) + \left|\calU_{\init} \oplus \calW^{\star} \right| + \left|\calW^{\star} \oplus \calU_{\final} \right| \nonumber \\
& = & O(\ell+1) + \left|\calW^{\star} \oplus \calU_{\final} \right|. \label{eq:recourse:3}
\end{eqnarray}
In the above derivation, the last step follows from Corollary~\ref{cor:recourse:easy}.

Since the epoch lasts for $\ell+1$ updates, the $O(\ell+1)$ term in the right hand side of~(\ref{eq:recourse:3}) contributes an amortized recourse of $O(1)$. Moreover, the term $\left|\calW^{\star} \oplus \calU_{\final} \right|$ is proportional to the number of  $\MakeRbst(\cdot \, , \cdot)$ calls made while computing $\calU_{\final} \leftarrow \Robustify\left( \calW^{\star} \right)$. So, the recourse of our algorithm is dominated by the number of calls made to the $\MakeRbst(\cdot \, , \cdot)$ subroutine, and Lemma~\ref{lm:makerobustcalls}  implies that our algorithm has an amortized recourse of $O(\log^2 \Delta)$.

\begin{lemma}
\label{lm:makerobustcalls}
The dynamic algorithm from Section~\ref{sec:alg:describe} makes at most $O(\log^2 \Delta)$ many calls to $\MakeRbst(\cdot \, , \cdot)$, amortized over the entire sequence of updates (spanning multiple epochs).
\end{lemma}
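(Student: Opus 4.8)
The plan is to bound the total number of $\MakeRbst(\cdot,\cdot)$ calls made across all epochs, and then divide by the number of updates. Fix an epoch $\mathcal{E}$ of length $\ell+1$, and consider the call $\Robustify(\calW^\star)$ at the end of it. Each iteration of the \textbf{while} loop in $\Robustify$ makes exactly one call to $\MakeRbst$, so it suffices to bound the number of such iterations. By Lemma~\ref{lem:robustify-calls-once-part-1}, once a center $w$ has been processed (i.e. $w_0 \leftarrow \MakeRbst(w,t)$ is executed and $\calW \leftarrow \calW - w + w_0$), the freshly inserted point $w_0$ is never itself the subject of a later $\MakeRbst$ call within the same $\Robustify$ invocation. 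Hence every \textbf{while}-loop iteration ``consumes'' a center of $\calW^\star$ that was present in the set at the \emph{start} of this $\Robustify$ call and that violates~(\ref{condition-robust-solution-part-1}); in particular, the number of iterations is at most the number of non-robust centers in $\calW^\star$ at the beginning of the call. I would then argue that the only centers of $\calW^\star$ that can possibly be non-robust are those that differ from the solution $\calU_{\init}$ that was robust at the start of the epoch --- but this is not quite enough, because changing even one center of a robust set can destroy the robustness of many others (robustness of $w$ depends on $d(w, \calW - w)$, i.e. on the \emph{other} centers). So the more careful statement is: the number of $\MakeRbst$ calls in this epoch is $O\!\left(|\calW^\star \oplus \calU_{\init}| \cdot \log \Delta\right)$ or similar --- one needs a ``charging'' argument showing each newly inserted/removed center can ``wake up'' at most $O(\log \Delta)$ dormant centers, or alternatively a potential-function argument over the $O(\log \Delta)$ distance scales.

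Concretely, here is the charging I expect to use. Associate with the set $\calW$ a potential $\Phi(\calW)$ equal to the number of distinct ``scale classes'' occupied by the pairwise distances, or more precisely something like $\Phi(\calW) = \sum_{w \in \calW} \lceil \log_{10} d(w, \calW - w) \rceil$, which lies in $[0, |\calW| \cdot O(\log \Delta)]$. When a $\MakeRbst(w,t)$ call replaces $w$ by $w_0$ with $d(w, w_0) \le 10^t / 2 \le d(w, \calW - w)/200$ (Lemma~\ref{lem:robustness-property-1-part-1} and Line~\ref{line:constant}), the distances from $w_0$ to the other centers are within a $(1 \pm 1/100)$ factor of those from $w$, so the scale classes barely move; combined with Lemma~\ref{lem:robustify-calls-once-part-1} (which forbids re-processing $w_0$), one shows that the number of loop iterations in a single $\Robustify$ call is $O\!\left(\log\Delta\right)$ times the ``initial disturbance'' $|\calW^\star \oplus \calU_{\init}|$. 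Since Corollary~\ref{cor:recourse:easy} gives $|\calW^\star \oplus \calU_{\init}| = O(\ell+1)$, the epoch contributes $O\!\left((\ell+1)\log\Delta\right)$ calls to $\MakeRbst$.

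Now I would also need to account for $\MakeRbst$ calls that happen \emph{across} epoch boundaries in a non-obvious way: strictly speaking, all $\MakeRbst$ calls of epoch $\mathcal{E}$ happen inside the single $\Robustify(\calW^\star)$ at its end, so the per-epoch bound $O((\ell+1)\log\Delta)$ is all that is needed. Amortizing over the $\ell+1$ updates of the epoch gives $O(\log \Delta)$ amortized calls --- which is a factor of $\log\Delta$ better than the claimed $O(\log^2\Delta)$, suggesting the honest bound on a single $\Robustify$ call is $O((\ell+1)\log^2\Delta)$ rather than $O((\ell+1)\log\Delta)$; the extra $\log\Delta$ presumably comes from the fact that a single $\MakeRbst(w,t)$ call walks through $t = O(\log\Delta)$ scales and the ``disturbance'' argument must be iterated scale-by-scale, or from a cruder bound on how many dormant centers one insertion can wake. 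The \textbf{main obstacle} I anticipate is precisely this: making rigorous the claim that a bounded perturbation of a robust center set triggers only a bounded (polylog-amplified) number of $\MakeRbst$ calls, i.e. controlling the ``chain reaction'' whereby robustifying one center perturbs $d(w', \calW - w')$ for its neighbours $w'$ and potentially makes them violate~(\ref{condition-robust-solution-part-1}). The $100$-vs-$200$ discrepancy in the constants (flagged in the text right before Lemma~\ref{lem:robustify-calls-once-part-1}) is the key slack that prevents this chain from being infinite, and the proof must exploit it carefully together with the $O(\log\Delta)$ bound on the number of relevant distance scales.
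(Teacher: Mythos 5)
There are two genuine gaps in your plan. First, you only consider non-robustness caused by the \emph{center set} changing (from $\calU_{\init}$ to $\calW^{\star}$), but robustness is defined relative to the \emph{point set} as well: a center $w \in \calW^{\star} \cap \calU_{\init}$ whose nearest-neighbor distance $d(w,\calW^{\star}-w)$ is unchanged can still fail to be $t[w]$-robust w.r.t.\ $\calP^{(\ell+1)}$ because points were inserted or deleted inside its balls $B_i(w)$ during the epoch. The paper handles this with a separate ``contaminated vs.\ clean'' classification and a claim (\Cref{cl:cont}) that each update contaminates only $O(\log \Delta)$ centers of $\calU_{\init}$; this accounts for one of the two $\log\Delta$ factors and is absent from your argument.

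Second, and more fundamentally, the per-epoch bound you aim for --- $O\bigl((\ell+1)\log\Delta\bigr)$ or $O\bigl((\ell+1)\log^2\Delta\bigr)$ calls to $\MakeRbst$ in a single $\Robustify(\calW^{\star})$ --- is false, and no potential function of the form $\Phi(\calW)=\sum_{w}\lceil \log_{10} d(w,\calW-w)\rceil$ rescues it. The problem is not the perturbation caused by a $\MakeRbst$ call itself (which, as you note, moves $w$ by at most $10^t/2 \ll d(w,\calW-w)$ and is controlled by \Cref{lem:robustify-calls-once-part-1}); it is that \emph{removing} a single center $u$ when passing from $\calU_{\init}$ to $\calW^{\star}$ can increase $d(w,\calW-w)$ --- and hence raise the required robustness level $t$ above the stored $t[w]$ --- for up to $k-1$ other centers simultaneously (e.g.\ when $u$ is the common nearest neighbor of all of them). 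So a single epoch can legitimately incur $\Omega(k)$ calls, and your ``each inserted/removed center wakes up at most $O(\log\Delta)$ dormant centers'' charging cannot hold per epoch. The paper's resolution is a \emph{global} amortization across epochs: every such ``wake-up'' call on a clean center $w$ strictly increases the stored value $t[\cdot]$ along a chain $w_0 \to w_1 \to \cdots$, these chains span multiple epochs, have length at most $O(\log\Delta)$ since $0 \le t[w_0] < \cdots < t[w_j] \le \lceil\log\Delta\rceil$, and each chain is initiated by a Type~I or Type~II call (of which there are $O(\log\Delta)$ amortized per update). Your write-up correctly identifies the chain reaction as the obstacle and correctly guesses where the second $\log\Delta$ comes from, but the mechanism you propose to control it (a per-$\Robustify$ bound) is not the right one.
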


We devote the rest of this section towards proving Lemma~\ref{lm:makerobustcalls}.

\medskip
\noindent {\bf Contaminated vs Clean Centers.} Focus on the scenario at the start of a given epoch (see Section~\ref{sec:alg:describe}). By Invariant~\ref{inv:start}, the set $\calU_{\init}$ is robust w.r.t.~$\calP^{(0)}$. For each center $u \in \calU_{\init}$, we maintain an integer $t[u]$ such that: (i) $u$ is $t[u]$ robust w.r.t.~$\calP^{(0)}$, and (ii) $10^{t[u]} \geq d(u,\calU_{\init} - u )/ 200$. The existence of $t[u]$ is guaranteed by Definition~\ref{def:robuscenters}.\footnote{We use  $t[u]$ only for the sake of analyzing recourse. Here, the actual algorithm remains the same as in Section~\ref{sec:alg:describe}. But, we make use of these integers in \Cref{sec:updatetime:part1} to get fast update time.} Let $(p_0(u), p_1(u), \ldots, p_{t[u]}(u))$ be the $t[u]$-robust sequence w.r.t.~$\calP^{(0)}$ corresponding to $u$ (i.e., $u = p_0(u)$), and for each $i \in [1, t[u]]$, let $B_i(u) = \text{Ball}_{10^i}^{\calP^{(0)}}(p_i(u))$. Recall that by Lemma~\ref{lem:robustness-property-1-part-1}, we have $B_1(u) \subseteq B_2(u) \subseteq \cdots \subseteq B_{t[u]}(u)$.

Now, consider any $j \in [1, \ell+1]$, and let $q_j \in \ground$ denote the point being inserted/deleted in $\calP$ during the $j^{th}$ update in the epoch, i.e., $\calP^{(j)} \oplus \calP^{(j-1)} = \{q_j\}$.
We say that this $j^{th}$ update {\bf contaminates} a center $u \in \calU_{\init}$ iff
$d(q_j, p_{t[u]}(u)) \leq 10^{t[u]}$.
Intuitively, this means that if this $j^{th}$ update were taken into account while defining the balls $\{B_i(u)\}_i$ at the start of the epoch, then it might have impacted our decision to classify $u$ as being $t[u]$-robust at that time.
Furthermore,  we say that the center $u \in \calU_{\init}$ is {\bf clean} at the end of the epoch if {\em no} update $j \in [1, \ell+1]$  contaminated it (i.e.~$B_{t[u]}(u)$ and accordingly all of the balls $B_{i}(u)$ remain intact during the updates in this epoch);
otherwise we say that the center $u$ is {\bf contaminated} at the end of the epoch.

In the two claims below, we summarize a few key properties of clean and contaminated centers. We defer the proof of Claim~\ref{cl:cont} to Section~\ref{sec:cl:cont}.

\begin{claim}
    \label{cl:clean}
    If  $u \in \calU_{\init}$ is clean at the end of the epoch, then $u$ is $t[u]$-robust w.r.t.~$\calP^{(\ell+1)}$.
\end{claim}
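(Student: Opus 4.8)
\textbf{Proof plan for Claim~\ref{cl:clean}.}

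The plan is to show that if $u$ is clean at the end of the epoch, then the very same sequence $(p_0(u), p_1(u), \ldots, p_{t[u]}(u))$ that witnessed $t[u]$-robustness of $u$ w.r.t.~$\calP^{(0)}$ continues to witness $t[u]$-robustness w.r.t.~$\calP^{(\ell+1)}$. By Definition~\ref{def:robust}, it suffices to check that for every $i \in [1, t[u]]$, the ball $B_i(u)$ is unchanged when computed w.r.t.~$\calP^{(\ell+1)}$ instead of $\calP^{(0)}$, since once the balls are identical the case analysis in Definition~\ref{def:robust} (whether $\avcost{p_i(u), B_i(u)} \geq 10^i/5$, and if not which point minimizes $\cost{q, B_i(u)}$) produces exactly the same $p_{i-1}(u)$ in both cases.

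The first step is to translate ``clean'' into a statement about each ball. By hypothesis, no update $j \in [1,\ell+1]$ contaminated $u$, i.e.~$d(q_j, p_{t[u]}(u)) > 10^{t[u]}$ for every such $j$. Since by Lemma~\ref{lem:robustness-property-1-part-1} the balls are nested, $B_i(u) \subseteq B_{t[u]}(u)$ for all $i$, and $B_i(u)$ is a ball of radius $10^i \leq 10^{t[u]}$ around a point $p_i(u)$ which (again by Lemma~\ref{lem:robustness-property-1-part-1}) satisfies $d(p_0(u), p_i(u)) \leq 10^i/2$ and hence, via the triangle inequality and $d(p_0(u),p_{t[u]}(u)) \le 10^{t[u]}/2$, lies well inside $B_{t[u]}(u)$. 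The key point is that each inserted/deleted point $q_j$ has $d(q_j, p_{t[u]}(u)) > 10^{t[u]} \geq 10^i + d(p_i(u), p_{t[u]}(u))$, so by the triangle inequality $d(q_j, p_i(u)) > 10^i$, meaning $q_j \notin \text{Ball}^{\calP}_{10^i}(p_i(u))$ at every intermediate point-set during the epoch. (One should double-check the constant slack here: since $d(p_i(u), p_{t[u]}(u)) \le d(p_i(u),p_0(u)) + d(p_0(u),p_{t[u]}(u)) \le 10^i/2 + 10^{t[u]}/2 \le 10^{t[u]}$, we actually get $d(q_j,p_i(u)) > 10^{t[u]} - 10^{t[u]} $, which is too weak — so the right move is to note $d(q_j, p_i(u)) \ge d(q_j, p_{t[u]}(u)) - d(p_{t[u]}(u), p_i(u)) > 10^{t[u]} - 10^{t[u]}/2 = 10^{t[u]}/2 \ge 10^i$ whenever $i < t[u]$, and handle $i = t[u]$ directly from the definition of contamination.) Consequently, for every $i$, the symmetric difference between $\text{Ball}^{\calP^{(0)}}_{10^i}(p_i(u))$ and $\text{Ball}^{\calP^{(\ell+1)}}_{10^i}(p_i(u))$ is empty, since it could only contain points of $\calP^{(0)} \oplus \calP^{(\ell+1)} \subseteq \{q_1, \ldots, q_{\ell+1}\}$, none of which lie in either ball.

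The second step is to conclude by (downward) induction on $i$ from $t[u]$ to $0$, exactly mirroring the structure of \MakeRbst$(p, t[u])$: having established $B_i(u)$ is the same set in both input configurations, the defining rule picks the same $p_{i-1}(u)$, so the entire sequence is preserved and $u$ is $t[u]$-robust w.r.t.~$\calP^{(\ell+1)}$. The main obstacle is purely bookkeeping with the radius constants in the triangle-inequality estimate above — making sure the gap $10^{t[u]}$ in the definition of contamination really dominates the radius $10^i$ plus the offset $d(p_i(u), p_{t[u]}(u))$ for all $i \le t[u]$; the edge case $i = t[u]$ is immediate and the cases $i < t[u]$ have a full factor of $10$ of slack, so this goes through, but it is the one place where the precise constants matter.
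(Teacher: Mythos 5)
Your proof is correct and follows essentially the same route as the paper's: show that cleanliness forces every ball $B_i(u)$ to be unchanged between $\calP^{(0)}$ and $\calP^{(\ell+1)}$, and then observe that the original robust sequence still witnesses $t[u]$-robustness by \Cref{def:robust}; the paper obtains the ``all balls unchanged'' step directly from the nesting $B_1(u)\subseteq\cdots\subseteq B_{t[u]}(u)$ of \Cref{lem:robustness-property-1-part-1} rather than via your per-$i$ triangle-inequality computation, but these are the same idea. (One small slip: your intermediate bound $d(p_{t[u]}(u),p_i(u))\le 10^{t[u]}/2$ does not follow from the route $10^i/2+10^{t[u]}/2$ you cite, which only gives $0.55\cdot 10^{t[u]}$ when $i=t[u]-1$; nonetheless the final inequality $d(q_j,p_i(u))>10^i$ survives with ample slack, exactly as you anticipate.)
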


\begin{proof}
Let $u \in \calU_{\init}$ be a clean center at the end of the epoch. So, during the epoch no point was inserted into/deleted from $B_{t[u]}$. As $B_1(u) \subseteq B_2(u) \subseteq \cdots \subseteq B_{t[u]}(u)$ by Lemma~\ref{lem:robustness-property-1-part-1}, this implies that during the epoch no point was inserted into/deleted from {\em any} of the balls  $B_1(u), B_2(u), \ldots, B_{t[u]}(u)$. Thus,  we have $\text{Ball}_{10^i}^{\calP^{(0)}}(p_i(u)) = \text{Ball}_{10^i}^{\calP^{(\ell+1)}}(p_i(u))$  for all  $i \in [0, \ell+1]$. 

The claim now follows from Definition~\ref{def:robust}.
\end{proof}

\begin{claim}
    \label{cl:cont}
    Each update during the epoch contaminates at most $O(\log \Delta)$ centers in $\calU_{\init}$.
\end{claim}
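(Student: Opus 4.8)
\textbf{Proof proposal for Claim~\ref{cl:cont}.}

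The plan is to fix an update $j \in [1, \ell+1]$, let $q_j$ be the point inserted/deleted during this update, and count how many $u \in \calU_{\init}$ satisfy $d(q_j, p_{t[u]}(u)) \leq 10^{t[u]}$. The key observation is that $\calU_{\init}$ is robust w.r.t.~$\calP^{(0)}$, so the centers in $\calU_{\init}$ are ``spread out'' relative to their robustness scales: by Definition~\ref{def:robuscenters}, $10^{t[u]}$ is (roughly) a $\Theta(1)$ fraction of $d(u, \calU_{\init} - u)$, which means distinct centers cannot both be robust at large scales while being close together. So I would \emph{bucket} the contaminated centers by their robustness scale: for each integer $s$, let $A_s := \{ u \in \calU_{\init} : t[u] = s \text{ and } d(q_j, p_s(u)) \leq 10^s\}$. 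Since $1 \leq d(p,q) \leq \Delta$ for distinct points, $t[u]$ ranges over only $O(\log \Delta)$ possible values, so it suffices to show that each bucket $A_s$ has size $O(1)$ — in fact I expect $|A_s| \leq 1$ up to a small absolute constant.

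The core of the argument is a packing bound within a single bucket $A_s$. Take two distinct centers $u, u' \in A_s$. By the triangle inequality and Lemma~\ref{lem:robustness-property-1-part-1} (which gives $d(u, p_s(u)) = d(p_0(u), p_s(u)) \leq 10^s/2$, and similarly for $u'$), we can bound $d(u, u')$ from above:
\[
d(u,u') \leq d(u, p_s(u)) + d(p_s(u), q_j) + d(q_j, p_s(u')) + d(p_s(u'), u') \leq \tfrac{10^s}{2} + 10^s + 10^s + \tfrac{10^s}{2} = 3 \cdot 10^s.
\]
On the other hand, since $u, u' \in \calU_{\init}$ with $u \neq u'$, robustness of $\calU_{\init}$ forces $10^{t[u]} \geq d(u, \calU_{\init} - u)/200 \geq d(u,u')/200$, i.e.~$d(u,u') \leq 200 \cdot 10^s$. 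This bound alone is not a contradiction, so the idea is to instead look at \emph{how many} centers can simultaneously lie in $\calU_{\init}$ within distance $\approx 3 \cdot 10^s$ of $q_j$ while each having its \emph{nearest other center in $\calU_{\init}$} at distance $\gtrsim 10^s$ (up to the factor $200$ from Definition~\ref{def:robuscenters}). Concretely, all such $u$ lie in $\mathrm{Ball}_{3 \cdot 10^s + 10^s/2}(q_j)$, and any two of them are at distance... here the $200$ factor means I cannot directly claim pairwise distance $\gtrsim 10^s$. So the cleanest route is: for $u \in A_s$, the smallest integer $t$ with $10^t \geq d(u, \calU_{\init}-u)/200$ equals $s$, which forces $d(u, \calU_{\init} - u) > 200 \cdot 10^{s-1} = 20 \cdot 10^s$. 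Wait — re-examining: $t[u]=s$ being the \emph{smallest} such integer gives $10^{s-1} < d(u,\calU_{\init}-u)/200$, i.e.~$d(u, \calU_{\init}-u) > 20 \cdot 10^s$. But we showed $d(u,u') \leq 3 \cdot 10^s < 20 \cdot 10^s$ for any other $u' \in A_s \subseteq \calU_{\init}-u$ — a contradiction. Hence $|A_s| \leq 1$ for every $s$, and summing over the $O(\log \Delta)$ admissible values of $s$ gives the claim.

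\textbf{Main obstacle.} I expect the delicate point to be pinning down the exact constants and the exact quantifier in ``$t[u]$ is the smallest integer with $10^{t[u]} \geq d(u, \calU_{\init}-u)/200$'' (from Definition~\ref{def:robuscenters}) versus the constant $100$ appearing in Line~\ref{line:constant} of \Robustify\ and versus the $t[u]$ chosen in the Contaminated-vs-Clean paragraph, which only requires $10^{t[u]} \geq d(u, \calU_{\init}-u)/200$ and $u$ being $t[u]$-robust, but does \emph{not} a priori say $t[u]$ is smallest. If $t[u]$ is allowed to be larger than the minimal value, the packing argument must instead use that we may \emph{choose} $t[u]$ minimal (which we are free to do, since the definition only asks for existence of such an integer, and a smaller $t[u]$ only makes the ball $\mathrm{Ball}_{10^{t[u]}}$ smaller and hence the ``contaminates'' condition harder to satisfy). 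Making this choice explicit and verifying that $u$ remains $t[u]$-robust at the minimal value (monotonicity: a prefix of a robust sequence is robust) is the one place where care is needed; everything else is triangle inequality and a geometric-series count over scales.
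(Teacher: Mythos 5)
Your overall plan---bucket the contaminated centers by their scale $t[u]$, note that only $O(\log\Delta)$ scales occur, and show each bucket has $O(1)$ members---is viable, and your triangle-inequality bound $d(u,u')\le 3\cdot 10^{s}$ for two contaminated centers of the same scale $s$ is correct. The gap is in the packing step, and it is precisely the one you flag at the end: you need $d(u,\calU_{\init}-u)>20\cdot 10^{s}$, which you derive from the premise that $t[u]=s$ is the \emph{smallest} integer with $10^{s}\ge d(u,\calU_{\init}-u)/200$. That premise is false. The integer $t[u]$ is frozen at the moment $u$ was produced by a call to $\MakeRbst(w,t)$, where $t$ is chosen minimal for the condition $10^{t}\ge d(w,\calW-w)/100$ with respect to the center set $\calW$ \emph{at that time} (\Cref{line:constant}), possibly many epochs in the past. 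In later epochs other centers can appear arbitrarily close to $u$ without $t[u]$ ever being revised downward: \Cref{line:if} of \Cref{alg:modify:robustify} only triggers a new $\MakeRbst$ call when the required scale \emph{exceeds} the stored one, never when it drops below it. So the only guarantee available for the maintained $t[u]$ is the upper bound $d(u,\calU_{\init}-u)\le 200\cdot 10^{t[u]}$, which yields no contradiction with $d(u,u')\le 3\cdot 10^{s}$.

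Your proposed fix---redeclare $t[u]$ to be minimal---does not close the gap. Shrinking $t[u]$ shrinks the ball $\text{Ball}_{10^{t[u]}}^{\calP^{(0)}}(p_{t[u]}(u))$ and hence shrinks the set of centers declared contaminated, so you would be bounding a different (smaller) contamination relation than the one the rest of the analysis uses; the sets $\calW_2,\calW_3$ in \Cref{alg:modify:robustify} and the statement of \Cref{cl:clean} are defined through the stored $t[u]$. Moreover, re-minimizing $t[\cdot]$ every epoch would break the Type~III chain argument in \Cref{sec:recourse:part1}, which needs the stored values to increase strictly along a chain of $\MakeRbst$ calls spanning epochs. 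The correct repair replaces minimality over $\calU_{\init}$ by the creation-time invariant: if $u,u'$ both have scale $s$ and $u'$ was created later, then at the moment $u'=\MakeRbst(w',s)$ was created the center $u$ was already present, so minimality of $s$ at \emph{that} time gives $10^{s}<d(w',u)/10$, and with $d(u',w')\le 10^{s}/2$ this forces $d(u,u')\ge 9.5\cdot 10^{s}>3\cdot 10^{s}$, contradicting your upper bound. This is essentially the paper's proof, which dispenses with buckets altogether: it orders all contaminated centers by creation time and shows their scales are strictly increasing along that order, giving the same $O(\log\Delta)$ count.
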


At the end of the epoch,  we set $\calU_{\final} \leftarrow \Robustify\left(\calW^{\star}\right)$. By Lemma~\ref{lem:robustify-calls-once-part-1},  the subroutine $\Robustify\left(\calW^{\star}\right)$ makes at most one call to $\MakeRbst(w, \cdot)$ for each point $w \in \calW^{\star}$, and {\em zero} call to $\MakeRbst(w, \cdot)$ for each point $w \in \ground \setminus \calW^{\star}$. 
Accordingly, we can partition the calls to $\MakeRbst(\cdot \, , \cdot)$ that are made by $\Robustify\left(\calW^{\star}\right)$ into the following three types.

\medskip
\noindent {\bf Type I.} A call to $\MakeRbst(w , \cdot)$ for some $w \in \calW^{\star} \setminus \calU_{\init}$. The total number of such  calls is at most $\left| \calW^{\star} \setminus \calU_{\init}\right| \leq \left| \calW^{\star} \oplus \calU_{\init}\right| = O(\ell+1)$ (see Corollary~\ref{cor:recourse:easy}). Since the epoch lasts for $(\ell+1)$ updates, the amortized number of Type I calls to $\MakeRbst(\cdot \, , \cdot)$, per update, is $O(1)$.

\medskip
\noindent {\bf Type II.} A call to $\MakeRbst(w , \cdot)$ for some $w \in \calW^{\star} \cap \calU_{\init}$ that is contaminated at the end of the epoch. By Claim~\ref{cl:cont}, the amortized number of such Type II $\MakeRbst(\cdot \, , \cdot)$ calls, per update, is $O(\log \Delta)$.

\medskip
\noindent {\bf Type III.} A call to $\MakeRbst(w , \cdot)$ for some $w \in \calW^{\star} \cap \calU_{\init}$ that is clean at the end of the epoch. Recall that the center $w$ was $t[w]$-robust w.r.t.~$\calP^{(0)}$ at the start of the epoch, and by Claim~\ref{cl:clean} it remains $t[w]$-robust w.r.t.~$\calP^{(\ell+1)}$ at the end of the epoch. Furthermore, note that if a center is $t'$-robust, then it is also $t''$-robust for all $t'' \leq t'$. Thus, at the end of the epoch, the call to $\MakeRbst(w , t)$ could have been made for only one reason: The subroutine $\Robustify\left(\calW^{\star}\right)$ wanted to ensure that $w$ was $t$-robust w.r.t.~$\calP^{(\ell+1)}$ for some $t > t[w]$, but it was not the case. Suppose that $w' \leftarrow \MakeRbst(w , t)$ was the center returned by this call to $\MakeRbst(\cdot \, , \cdot )$. Then at the end of this call, we set $t[w'] \leftarrow t$. Clearly, we have $t[w'] > t[w]$.

To bound the amortized number of Type III calls, we need to invoke a more ``global'' argument, that spans across multiple epochs. Consider a maximal ``chain'' of $j$ many Type III calls (possibly spanning across multiple different epochs),  in increasing order of time: 
\begin{eqnarray*}
&& w_1 \leftarrow \MakeRbst(w_0 , t[w_1]),  w_2 \leftarrow \MakeRbst(w_1 , t[w_2]), \\ 
&& \ldots, w_j \leftarrow \MakeRbst(w_{j-1} , t[w_j]).
\end{eqnarray*}
Note that the calls in the above chain can be interspersed with other Type I, Type II or Type III calls that are {\em not} part of the chain. Still, from the above discussion, we get $0\leq t[w_0] < t[w_1]  < \cdots < t[w_j] \leq \lceil \log \Delta \rceil$. So, the chain has length at most $O(\log \Delta)$. Also, for the chain to start in the first place, we must have had a Type I or Type II call to $\MakeRbst(\cdot \, , \cdot)$ which returned the center $w_0$. We can thus  ``charge'' the length (total number of Type III calls) in this chain to the Type I or Type II call that triggered it (by returning the center $w_0$). In other words, the total number of Type III calls ever made is at most $O(\log \Delta)$ times the total number of Type I plus Type II calls.  Since  the amortized number of Type I and Type II calls per update is $O(\log \Delta)$, the amortized number of Type III calls per update is $O(\log^2 \Delta)$. This concludes the proof of Lemma~\ref{lm:makerobustcalls}.

\subsubsection{Proof of Claim~\ref{cl:cont}}
\label{sec:cl:cont}

Assume $p \in \po \oplus \calP^{(\ell+1)}$ is an updated point during the epoch.
Let $u \in \calU_{\init}$ be contaminated by $p$, i.e.,~$d(p, p_{t[u]}(u)) \leq 10^{t[u]}$.
According to \Cref{lem:robustness-property-1-part-1}, $d(p_{t[u]}(u), u) = d(p_{t[u]}(u), p_0(u)) \leq 10^{t[u]}/2$, which concludes
\begin{equation}\label{eq:distance-p-to-contaminated-center}
   d(p, u) \leq d(p, p_{t[u]}(u)) + d(p_{t[u]}(u), u) \leq 10^{t[u]} + 10^{t[u]}/2 \leq 2 \cdot 10^{t[u]}. 
\end{equation}
Let $\{u_1,u_2,\ldots,u_\mu \} \subseteq \calU_{\init}$ be all centers contaminated by $p$ ordered in decreasing order by the time they were added to the main solution via a call to \MakeRbst$(\cdot,\cdot)$.
So, when $u_i$ is added to the main solution $\calU$, $u_{i+1}$ was already present in $\calU$, which concludes
$ 10^{t[u_i]} \leq d(u_i,u_{i+1})/10, $
for every $i \in [1, \mu-1]$ (by the choice of $t[u_i]$ in \MakeRbst \ at that time).
Hence,
$$ 10^{t[u_{i+1}]} \geq d(p, u_{i+1})/2 \geq (d(u_i, u_{i+1}) - d(p, u_i))/2 \geq 5 \cdot 10^{t[u_i]} - 10^{t[u_i]} = 4 \cdot 10^{t[u_i]}. $$
The first and last inequalities hold by (\ref{eq:distance-p-to-contaminated-center}) for $u = u_{i+1}, u_{i}$.
Finally, this concludes 
$t[u_\mu]> \cdots > t[u_{2}] > t[u_{1}]$. Since distances between any two point in the space is between $1$ and $\Delta$, we know $ 0 \leq t[u_{i}] \leq \lceil \log \Delta \rceil$ for each $i \in [1, \mu]$, which concludes $\mu = O(\log \Delta)$.

\section{Achieving $\tilde{O}(k)$ Update Time}
\label{sec:updatetime:part1}

We first outline how to implement the algorithm from Section~\ref{sec:alg:describe} in $\tilde{O}(n)$ update time, by incurring only a $O(1)$ multiplicative overhead in approximation ratio and recourse ($n$ is an upper bound on the size of the input $\calP \subseteq \ground$, throughout the sequence of updates).
In Section~\ref{improve:updatetime:k}, we show how to further improve the update time from $\tilde{O}(n)$ to $\tilde{O}(k)$ using standard sparsification techniques.

\medskip
\noindent {\bf Disclaimer.}
In this section, we often use asymptotic notations and informal arguments without proofs.
The reader can find the correct values of the parameters together with complete formal proofs for the statements of this section in \Cref{part:full} (full version).

\subsection{Auxiliary Data Structure and Randomized Local Search}
\label{sec:auxiliary}
Recall that $\calU \subseteq \ground$ is the solution (set of $k$ centers) maintained by our algorithm, and $\calP \subseteq \ground$ denotes the current input. For each point $p \in \calP \cup \calU$, we maintain a BST (balanced search tree) $\calT_p$ that stores the centers $\calU$ in increasing order of their distances to $p$. Note that after every change (insertion/deletion of a center)  in the set $\calU$, we can update all these BSTs in $\tilde{O}(\left| \calP \right| + \left| \calU \right|) = \tilde{O}(n+k) = \tilde{O}(n)$ time. Similarly, after the insertion/deletion of a point $p \in \calP$, we can construct/destroy the relevant BST $\calT_p$ in $\tilde{O}(\left| \calU \right|) = \tilde{O}(k) = \tilde{O}(n)$ time. In other words, if the algorithm from Section~\ref{sec:alg:describe} incurs a total recourse of $\tau$ while handling a sequence of $\mu$ updates, then we spend $\tilde{O}(n \cdot (\tau+\mu))$ total time  on maintaining this auxiliary data structure (collection of BSTs) over the same update-sequence. Since $\tau = O(\mu \cdot \log^2 \Delta)$ (see Section~\ref{sec:recourse:part1}), this incurs an amortized update time of $\tilde{O}(n)$, which is within our budget.

We will use the randomized local search algorithm, developed in~\cite{FOCS24kmedian} and summarized in the lemma below, as a crucial subroutine.

\begin{lemma}[Randomized Local Search~\cite{FOCS24kmedian}]
\label{lm:localsearch}
Suppose that we have access to the auxiliary data structure described above. Then, given any integer $s \in [0, k-1]$, in $\tilde{O}(ns)$ time we can find a subset $\calU^{\star} \subseteq \calU$ of $(k-s)$ centers such that $\cost{\calU^{\star}, \calP} \leq  O(1) \cdot \OPT_{k-s}^{\calU}\left(\calP\right)$.
\end{lemma}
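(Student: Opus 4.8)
The plan is to perform a \emph{randomized local search over $(k-s)$-element subsets of $\calU$}: we maintain a candidate solution $\calS \subseteq \calU$ with $|\calS| = k-s$, represented implicitly via the $s$-element ``deleted set'' $R := \calU \setminus \calS$, and we use the BSTs $\{\calT_p\}$ from the auxiliary data structure to evaluate candidate swaps quickly. Two things must be established: (i) the solution reached has cost $O(1) \cdot \OPT_{k-s}^{\calU}(\calP)$, w.h.p., and (ii) the whole computation runs in $\tilde{O}(ns)$ time, which forces the number of local-search iterations to be $\tilde{O}(s)$ rather than the $\tilde{O}(k)$ one would get from a naive analysis.

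First I would do the preprocessing: using $\calT_p$ for each $p \in \calP$, read off the nearest centers of $p$ in $\calU$ and maintain, for the current $\calS$, every point's serving center $\pi_{\calS}(p)$ and connection cost $d(p,\calS)$, together with $\cost{\calS, \calP}$; all of this can be refreshed in $\tilde{O}(n)$ time after a single-center change to $\calS$, so the initial $\calS$ (an arbitrary $(k-s)$-subset, or one obtained by greedily deleting $s$ centers) is set up in $\tilde{O}(n)$ time. Then I would run $T = \tilde{O}(s)$ iterations of the following move: sample a point $p \in \calP$ with probability proportional to $d(p,\calS)$; let $c := \pi_{\calU}(p)$; tentatively insert $c$ into $\calS$; over all $u \in \calS$ compute the cost of $\calS + c - u$ (using the per-point serving information and the BSTs) and let $u^-$ attain the minimum; if $\cost{\calS + c - u^-, \calP}$ improves on $\cost{\calS,\calP}$ by at least a $(1 - 1/\mathrm{poly})$ factor, commit the swap and update all bookkeeping. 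Each iteration touches $O(n)$ points with $O(\log k)$ work apiece, so an iteration costs $\tilde{O}(n)$ and the total is $\tilde{O}(ns)$.

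For the approximation guarantee I would argue that, once no improving swap of this form exists (a near-local optimum), a standard Arya--Garg--Khandekar--Meyerson--Munagala--Pandit-style swap argument applies: fix any optimal restricted solution $\calS^{\mathrm{opt}} \subseteq \calU$ with $|\calS^{\mathrm{opt}}| = k-s$, build a family of ``test swaps'' pairing each center of $\calS^{\mathrm{opt}}$ with a center of $\calS$ so that each center of $\calS$ is used $O(1)$ times, use the non-improvement of each test swap together with the triangle inequality, and sum up to get $\cost{\calS, \calP} \leq O(1)\cdot \cost{\calS^{\mathrm{opt}},\calP} = O(1)\cdot \OPT_{k-s}^{\calU}(\calP)$; crucially, restricting all candidate centers to $\calU$ is exactly what makes the comparison be against $\OPT_{k-s}^{\calU}(\calP)$ and keeps the swap pairs well-defined. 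To bound the iteration count by $\tilde{O}(s)$, I would track a potential tied to the gap $\cost{\calS, \calP} - O(1)\cdot \OPT_{k-s}^{\calU}(\calP)$ and show that, while this gap is large, sampling $p$ proportionally to $d(p,\calS)$ hits, with probability $\Omega(1/s)$, a point whose nearest center $c = \pi_{\calU}(p)$ yields an improving swap shrinking the gap by a constant factor of its current size; amortizing over the $O(\log(n\Delta))$ relevant cost scales then shows $\tilde{O}(s)$ iterations suffice w.h.p.

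The main obstacle is precisely this last step --- decoupling the iteration count from $k$. One must show that cost-proportional sampling ``discovers'' the $O(s)$ useful swap-in centers fast enough: the points still carrying a large share of the gap are exactly those served poorly by $\calS$ but well by $\calS^{\mathrm{opt}}$, so their nearest center in $\calU$ is a productive swap candidate, and their total weight under the sampling distribution is an $\Omega(1)$ fraction of the current gap; simultaneously, every committed swap must strictly (by a polynomial factor) decrease the cost so the process provably terminates. Fixing the exact constants, the $\mathrm{poly}(\log n, \log \Delta)$ factors in $T$, and the edge cases $s = 0$ and $s$ near $k$ is routine bookkeeping. This is the argument carried out in \cite{FOCS24kmedian}, which we invoke as a black box.
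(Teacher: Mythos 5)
Your high-level plan (local search over $(k-s)$-subsets of $\calU$, an Arya et al.\ swap argument against the restricted optimum, and $\tilde{O}(s)$ iterations each costing $\tilde{O}(n)$) matches the spirit of the cited result, but the algorithm you describe is not the one in \cite{FOCS24kmedian}, and the step that would justify the $\tilde{O}(s)$ iteration count is the one place where your sketch does not hold together. The actual procedure (Algorithm 3 of \cite{FOCS24kmedian}, reproduced in the full version of this paper) samples a center $v$ \emph{uniformly from the $s$-element set $\calU \setminus \calU^{\star}$} and then commits the best swap $\calU^{\star} \leftarrow \calU^{\star} - z^{\star} + v$ (possibly a no-op, with $z^{\star}=v$); it does not sample a point cost-proportionally and swap in $\pi_{\calU}(p)$. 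The structural fact that makes $\tilde{O}(s)$ iterations suffice is that both the current solution and the restricted optimum $\calS^{\mathrm{opt}}$ are $(k-s)$-subsets of the same $k$-set $\calU$, hence differ in at most $s$ centers; consequently the test-swap family in the swap argument has size $O(s)$, and a uniformly sampled $v$ participates in a useful test swap with probability $\Omega(1/s)$, giving an \emph{expected} per-iteration decrease of an $\Omega(1/s)$ fraction of the gap $\cost{\calU^{\star},\calP} - O(1)\cdot\OPT_{k-s}^{\calU}(\calP)$. Your accounting instead asserts that, conditional on a productive sample, a \emph{single} swap shrinks the gap by a constant factor; that is too strong (one swap changes one center, and the gap is in general spread over up to $s$ missing centers, so the best single swap only guarantees a $\mathrm{gap}/O(s)$ decrease), and combined with your claimed $\Omega(1/s)$ hit probability the expected progress per iteration would come out as $\Omega(1/s^2)$ of the gap, which does not yield $\tilde{O}(s)$ iterations. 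The correct statement is multiplicative shrinkage $(1-\Omega(1/s))$ per iteration in expectation, amortized over the $O(\log(n\Delta))$ cost scales.

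Two smaller points. First, the guarantee actually proved in \cite{FOCS24kmedian} (Lemma 3.18 there, quoted as \Cref{lem:cost-local-search} in the full version) is $\cost{\calU^{\star},\calP} \leq 2\cdot\cost{\calU,\calP} + 12\cdot\OPT_{k-s}(\calP)$; the form in \Cref{lm:localsearch} follows because both $\cost{\calU,\calP}$ and $\OPT_{k-s}(\calP)$ are at most $\OPT_{k-s}^{\calU}(\calP)$. Second, evaluating $\cost{\calU^{\star}+v-z,\calP}$ for all $z$ in $\tilde{O}(n)$ total time per iteration is not automatic from the BSTs $\calT_p$ (which index distances to all of $\calU$, not to the current $\calU^{\star}$); one needs to additionally maintain, for each point, its nearest and second-nearest center in $\calU^{\star}+v$ and aggregate reassignment costs per removed center — this is the content of Lemma 7.2 of \cite{FOCS24kmedian}. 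Since your closing sentence defers to \cite{FOCS24kmedian} as a black box in any case, the clean fix is to state and use their algorithm and analysis as given rather than the cost-proportional variant.
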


\subsection{Implementing Our Dynamic Algorithm}

Henceforth, we focus on a given epoch of our dynamic algorithm that lasts for $(\ell+1)$ updates (see \Cref{sec:alg:describe}), and outline how to implement the algorithm in such a manner that it spends $\tilde{O}(n \cdot (\ell+1))$ total time during the whole epoch, except the call to \Robustify \ (see \Cref{eq:augment:2}).
For \Robustify, we provide an implementation that takes an amortized time of $\tilde{O}(n)$, over the entire sequence of updates (spanning multiple epochs).
This implies an overall amortized update time of $\tilde{O}(n)$.
Below, we first show how to implement each of Steps 1 - 4, as described in \Cref{sec:alg:describe}, one after another.
Then, we provide the implementation of \Robustify \ in \Cref{sec:robustify-implementation}.

\subsubsection*{Implementing Step 1.} 

Our task here is to compute an estimate of the value of $\ell^{\star}$. For each $i \in [0, \log_2 k]$ define $s_i := 2^i$, and let $s_{-1} := 0$. We now run a {\bf for} loop, as described below.

\medskip

\begin{algorithm}[H]
  \DontPrintSemicolon
  \For{$i=0$ to $\log_2 k$}{
    \label{line:estimate:1} Using \Cref{lm:localsearch}, in $\tilde{O}(ns_i)$ time compute a subset $\calU^{\star}_i \subseteq \calU_{\init}$ of $(k-s_i)$ centers,

    \label{line:estimate:2} that is a $O(1)$-approximation to $\OPT_{k-s_i}^{\calU_{\init}}\left( \calP^{(0)}\right)$. 

    \If{$\cost{\calU^{\star}_i, \calP^{(0)}}  \geq \Theta(\gamma) \cdot \cost{\calU_{\init}, \calP^{(0)}}$}{
        \Return $\hat{\ell} := s_{i-1}$.
    }
    }
\caption{Computing an estimate $\hat{\ell}$ of the value of $\ell^{\star}$.}
\end{algorithm}

After finding $\hat{\ell}$, we set  the length of the epoch to be $\ell + 1$ where $\ell \gets \left\lfloor \frac{\hat{\ell}}{12 \cdot \Theta(\gamma)} \right\rfloor$.
With some extra calculations, we can show that $\ell+1 = \Omega(\hat{\ell}+1) = \Omega(\ell^\star+1)$ and 
$$\frac{\OPT_{k-{\ell}}(\po)}{\Theta(\gamma)} \leq \OPT_{k}(\po) \leq 4 \cdot \OPT_{k+{\ell}}(\po). $$
The running time of this {\bf for} loop  is $\tilde{O}\left(n \sum_{j=0}^{i^\star}  s_j\right) = \tilde{O}(n s_{i^\star})$, where $i^{\star}$ is the index s.t.~$s_{i^{\star}-1} = \hat{\ell}$. Thus, we have $s_{i^\star} = O(\hat{\ell}) = O(\ell + 1)$, and hence we can implement Step 1  in $\tilde{O}(n\cdot (\ell+1))$ time.

\subsubsection*{Implementing Step 2.}
Instead of finding the optimum set of $(k-\ell)$ centers within $\calU_{\init}$, we approximate it using \Cref{lm:localsearch}:
We compute a set of $(k-\ell)$ centers $\uo \subseteq \calU_{\init}$ such that $\cost{\uo, \po} \leq O(1) \cdot \OPT_{k-\ell}^{\calU_{\init}}(\po)$. 
With the same arguments as before, we get $\cost{\uo, \po} \leq O(1) \cdot \OPT_{k + \ell}(\po)$.
Note that the running time for this step is also $\tilde{O}(n \cdot (\ell+1))$.

\subsubsection*{Implementing Step 3.}
Trivially, we can implement each of these updates in constant time.

\subsubsection*{Implementing Step 4.}
We first need to add $O(\ell+1)$ centers to $\calU_{\init}$, while minimizing the cost of the solution w.r.t.~$\po$.
We compute an approximation of $\calU^{\star}$, by setting $\calU = \calU_{\init}$, $\calP = \calP^{(0)}$ and $s = \Theta(\gamma \cdot (\ell+1))$ in \Cref{lm:developcenters} below (see \Cref{eq:augment}). This also takes $\tilde{O}(n \cdot (\ell+1))$ time. We defer the proof sketch of \Cref{lm:developcenters} to \Cref{sec:explain-developcenters}. 

\begin{lemma}
\label{lm:developcenters}
Suppose that we have access to the auxiliary data structure described above (see \Cref{sec:auxiliary}).
Then, given any integer $s \geq 1$, in $\tilde{O}(ns)$ time we can find a superset $\calU^{\star} \supseteq \calU$ of $(k+s)$ centers such that $\cost{\calU^{\star}, \calP} \leq O(1) \cdot \min\limits_{\calF \subseteq \ground: |\calF| \leq s} \cost{\calU + \calF, \calP} $.
\end{lemma}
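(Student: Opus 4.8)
\textbf{Proof proposal for Lemma~\ref{lm:developcenters}.}
The plan is to implement ``Task~(i)'' via an \emph{oversampled $D^2$-sampling} ($k$-means++/ADK-style seeding) on top of the fixed center set $\calU$, followed by a short pruning step. Write $\lambda := \min_{\calF \subseteq \ground,\, |\calF| \le s} \cost{\calU + \calF, \calP}$ for the quantity we want to approximate. A standard snapping argument (replace each center of an optimal $\calF$ by the closest input point in its cluster, and drop centers with empty clusters) shows that $\min_{\calF \subseteq \calP,\, |\calF| \le s} \cost{\calU + \calF, \calP} \le 2\lambda$, so from now on we may restrict the new centers to lie in $\calP$.

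The sampling procedure is as follows, and we run $\Theta(\log n)$ independent trials of it. We keep an array $\mathrm{dist}[\cdot]$ where $\mathrm{dist}[p]$ is the distance from $p$ to the current solution; we initialize the current solution to $\calU$, reading $d(p,\calU)$ off the BST $\calT_p$ for each $p \in \calP$ (see Section~\ref{sec:auxiliary}). Then for $\Theta(s)$ rounds we sample a point $p^{\star} \in \calP$ with probability proportional to $\mathrm{dist}[p^{\star}]$, add $p^{\star}$ to the current solution, and set $\mathrm{dist}[p] \leftarrow \min(\mathrm{dist}[p], d(p, p^{\star}))$ for all $p \in \calP$. Each round costs $O(n)$ time, so one trial runs in $O(ns)$ time and returns a set $\calS_i$ of $\Theta(s)$ new centers, together with the value $\cost{\calU + \calS_i, \calP} = \sum_{p \in \calP} \mathrm{dist}[p]$ (tracked for free). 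Over all $\Theta(\log n)$ trials this is $\tilde{O}(ns)$ time; we keep $\calS_{i^{\star}}$ for the trial minimizing $\cost{\calU + \calS_i, \calP}$, and take $\calU + \calS_{i^{\star}}$ as the core of $\calU^{\star}$.

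For correctness, the key claim is that a single trial satisfies $\cost{\calU + \calS_i, \calP} \le O(1)\cdot\lambda$ with probability $\Omega(1)$; taking the best of $\Theta(\log n)$ trials then gives this bound w.h.p. This claim is an adaptation, to the ``$\calU$-preloaded'' (penalized) setting, of the classical bi-criteria analysis of $D^2$-sampling (Arthur--Vassilvitskii, Aggarwal--Deshpande--Kannan, Wei): the points already served cheaply by $\calU$ play the role of ``already covered'' points, and the inductive argument is run only over the at most $s$ clusters that an optimal $\calF^{\star}$ (with $\cost{\calU + \calF^{\star}, \calP} = \lambda$) would serve — with $\Omega(1)$ probability, each such cluster either receives a sampled representative that is a good approximate $1$-median for it, or its contribution has already been absorbed into the cheap part of the cost. \textbf{I expect this adaptation to be the main obstacle}; everything else is bookkeeping, but one has to carry the penalty term $d(p,\calU)$ carefully through the standard potential/charging argument.

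At this point we have a set $\calU^{\star} \supseteq \calU$ with $|\calU^{\star}| = k + \Theta(s)$ and $\cost{\calU^{\star}, \calP} \le O(1)\cdot\lambda$, which already suffices for the way the lemma is used (Task~(i) needs a set of size $k + \Theta(\ell+1)$). To obtain exactly $k + s$ centers, we update the auxiliary data structure to $\calU^{\star}$ (adding $\Theta(s)$ centers to the BSTs, in $\tilde{O}(ns)$ time) and apply the randomized local search of Lemma~\ref{lm:localsearch} to delete $|\calU^{\star}| - (k+s) = \Theta(s)$ centers; by the Projection Lemma (Lemma~\ref{lem:projection-lemma-part-1}) we have $\OPT_{k+s}^{\calU^{\star}}(\calP) \le \cost{\calU^{\star}, \calP} + 2\cdot\OPT_{k+s}(\calP) \le O(1)\cdot\lambda$, so the output has cost $O(1)\cdot\lambda$ as required, again in $\tilde{O}(ns)$ time. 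Re-adding any centers of $\calU$ that local search happened to remove keeps the cost $O(1)\cdot\lambda$ (extra centers only help) and the size within $k + \Theta(s)$; alternatively, a straightforward ``protected'' variant of Lemma~\ref{lm:localsearch}, in which the centers of $\calU$ are never eligible for removal, gives a set of size exactly $k+s$ containing $\calU$.
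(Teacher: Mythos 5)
Your route is genuinely different from the paper's. The paper proves this lemma by a \emph{metric contraction}: it collapses all of $\calU$ into a single point $u^{\star}$, defines a new distance $d'(x,u^{\star}) := d(x,\calU)$ and $d'(x,y) := \min\{d'(x,u^{\star})+d'(y,u^{\star}),\,d(x,y)\}$ on $\calP' = (\calP - \calU) + u^{\star}$ (verifying this is a metric via a shortest-path interpretation), gives $u^{\star}$ an enormous weight so that any $O(1)$-approximate solution must open it, and then runs a black-box static $O(1)$-approximate $(s+1)$-median algorithm (Mettu--Plaxton) on $(\calP',d')$ in $\tilde{O}(ns)$ time. The whole correctness argument reduces to the identity $\cost{\calF+\calU,\calP}$ under $d$ equals $\cost{\calF+u^{\star},\calP'}$ under $d'$, after which the constrained augmentation problem \emph{is} an $(s+1)$-median instance and no new approximation analysis is needed. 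Your outer scaffolding (snapping $\calF$ into $\calP$ at a factor $2$, oversampling to $k+\Theta(s)$ centers, then pruning back with Lemma~\ref{lm:localsearch} plus the Projection Lemma and the observation $\OPT_{k+s}(\calP) \le \lambda$) is sound, and the constant-factor slack in the final size is indeed harmless for how the lemma is used in Step 4.

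The gap is exactly where you flagged it: the claim that $\Theta(s)$ rounds of distance-proportional sampling, warm-started from $\calU$, yield $\cost{\calU+\calS_i,\calP} \le O(1)\cdot\lambda$ with constant probability is the entire technical content of the lemma, and your proposal asserts it rather than proves it. The adaptation is plausible --- the clusters of the optimal solution $\calU+\calF^{\star}$ that are served by $\calU$ already contribute exactly their optimal cost to the current solution, so the ADK/Wei potential argument only needs to be run over the at most $s$ clusters served by $\calF^{\star}$ --- but carrying the "already served by $\calU$" mass through the supermartingale/charging argument (in the weighted, $k$-median/$D^1$ setting the paper ultimately needs) is a genuine lemma requiring its own proof, not a citation. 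As written, the proposal is a credible plan with its central step missing; the paper's contraction trick is worth internalizing precisely because it eliminates this step entirely by converting the preloaded problem into a vanilla static one.
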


At this stage, we compute $\calV^\star := \calU^\star + \left( \calP^{(\ell+1)} - \po \right)$ as in \Cref{eq:augment:1}, in only $\tilde{O}(\ell+1)$ time.
Next, we compute an approximation of  $\calW^\star$ (see \Cref{eq:augment:1}) using \Cref{lm:localsearch}, which again takes $\tilde{O}(n \cdot (\ell+1))$ time. It follows that $\cost{\calW^\star, \calP^{(\ell+1)}} \leq O(1) \cdot \OPT_{k}(\calP^{(\ell+1)})$.
Finally, we explain below how we implement the call to $\Robustify(\calW^\star)$ (see \Cref{eq:augment:2}).

\subsubsection{Implementing the calls to $\Robustify(\cdot)$ subroutine}\label{sec:robustify-implementation}

Recall \Cref{alg:robustify} and \Cref{alg:makerobust} from \Cref{sec:robustcenters}. In the static setting, there are known $O(1)$-approximation algorithms for $1$-median with $\tilde{O}(n)$ runtime~\cite{MettuP02}. Using any such $1$-median algorithm, it is relatively straightforward to (approximately) implement a call to $\MakeRbst(\cdot \, , \cdot)$ in $\tilde{O}(n)$ time (see \Cref{line:call:makerobust} in \Cref{alg:robustify}).  Using the auxiliary data structure (see \Cref{sec:auxiliary}), it is easy to implement each invocation of \Cref{line:constant:101} in \Cref{alg:robustify} in $\tilde{O}(1)$ time, whereas \Cref{line:adjust} in \Cref{alg:robustify} trivially takes $\tilde{O}(1)$ time. By \Cref{lm:makerobustcalls}, our dynamic algorithm makes $\tilde{O}(1)$ many amortized calls to $\MakeRbst(\cdot \, , \cdot)$, per update.  Thus, it follows that we spend  $\tilde{O}(n)$ amortized time per update on implementing \Cref{line:constant:101,line:call:makerobust,line:adjust} in \Cref{alg:robustify}.

There remains a significant challenge: We might have to iterate over $\left| \calW \right| = k$ centers  in \Cref{line:whileloop:robustify} in \Cref{alg:robustify}, before we find a center $w \in \calW$ that violates \Cref{condition-robust-solution-part-1}. Let us refer to this operation as ``testing a center $w$''; this occurs when we check whether $w$ violates \Cref{condition-robust-solution-part-1}. In other words, we  need to perform $O(k)$ many such tests in \Cref{line:whileloop:robustify}, before we  execute a $\MakeRbst(\cdot \, , \cdot)$ call in \Cref{line:call:makerobust} in \Cref{alg:robustify}. Note that because of \Cref{lm:makerobustcalls}, we perform $\tilde{O}(k)$ many tests, on average, per update. Thus,  if we could hypothetically perform each test in $\tilde{O}(1)$ time, then we would incur an additive overhead of $\tilde{O}(k)$ in our amortized update time, and everything would be fine.

The issue, however, is that testing a center can be prohibitively expensive. This is because \Cref{condition-robust-solution-part-1} consists of two conditions.
The second condition (which finds the value of $t$) requires us to know the value of $d(w, \calW - w)$, and this can indeed be implemented in $\tilde{O}(1)$ time using our auxiliary data structure (see \Cref{sec:auxiliary}).
The first condition asks us to check whether $w$ is $t$-robust, and there does not seem to be {\em any} efficient way in which we can implement this check (see \Cref{def:robust}).
To address this significant challenge, we  modify the execution of a call to $\Robustify(\calW^{\star})$ at the end of an epoch, as described below.

\subsubsection*{Modified version of the call to $\Robustify(\calW^{\star})$.} 

At the end of an epoch, the call to $\Robustify(\calW^{\star})$ is supposed to return  the set $\calU_{\final}$ (see \Cref{eq:augment:2}). We replace this call to $\Robustify(\calW^{\star})$ by the procedure  in \Cref{alg:modify:robustify} below.

To appreciate what \Cref{alg:modify:robustify} does, recall the recourse analysis in \Cref{sec:recourse:part1}; in particular, the distinction between {\em contaminated} vs {\em clean} centers in $\calU_{\init}$, and the three types of calls to the $\MakeRbst(\cdot \, , \cdot)$ subroutine. Note that in \Cref{line:type1,line:type2,line:type3} in \Cref{alg:modify:robustify}, the sets $\calW_1, \calW_2$ and $\calW_3$ respectively correspond to those centers $w$ that might {\em potentially} be the sources of Type I, Type II and Type III calls to $\MakeRbst(w, \cdot)$. We refer to the centers in $\calW_1, \calW_2$ and $\calW_3$ respectively as Type I, Type II and Type III centers.

The key difference between \Cref{alg:modify:robustify} and the previous version of  $\Robustify(\calW^{\star})$ is this:  In \Cref{alg:modify:robustify}, we proactively make calls to $\MakeRbst(w \, , \cdot)$  without even checking the first condition in \Cref{condition-robust-solution-part-1}, which was the main bottleneck in achieving efficient update time. To be more specific, we proactively call $\MakeRbst(w, \cdot)$ for every Type I and Type II center $w$ (see \Cref{line:mandatorycall}).
In contrast, for a Type III center $w$, we call $\MakeRbst(w, \cdot)$ whenever we observe that $t > t[w]$, where $t$ is the smallest integer satisfying $10^t \geq d(w, \calW - w)/200$ (see \Cref{line:prep,line:optionalcall}). Note that if $t \leq t[w]$ in \Cref{line:if}, then by \Cref{cl:clean} the center $w$ does {\em not} violate \Cref{condition-robust-solution-part-1}.

\medskip

\begin{algorithm}[H]
  \DontPrintSemicolon
   $\calW_1 \gets \calW^{\star} \setminus \calU_{\init}$ \label{line:type1} \;
   $\calW_2 \gets \{ w \in \calW^{\star} \cap \calU_{\init} : w \text{ is contaminated}\}$ \label{line:type2} \;
   $\calW_3 \gets \{ w \in \calW^{\star} \cap \calU_{\init} : w \text{ is clean} \}$ \label{line:type3} \;
   \tcp{The set $\calW^{\star}$ is partitioned into the subsets $\calW_1, \calW_2, \calW_3$}
  $\calW \leftarrow \calW^{\star}$ \label{line:type4}\;
  \For{each center $w \in \calW_1 \cup \calW_2$}{\label{for-loop-Type-I-and-II}
  $t \gets $ Smallest integer satisfying $10^t \geq d(w,\calW-w)/100$ \label{line:findt:1}\;
    $w_0 \gets $ \MakeRbst$(w,t)$ \label{line:mandatorycall}\; 
   $\calW \gets \calW - w + w_0$ \; 
    Save $t[w_0] \gets t$ and $p_{t[w_0]}(w_0) \gets w$ (see \Cref{sec:recourse:part1}) together with $w_0$ \;\label{end-for-loop-Type-I-and-II}
  }
  \While{true}{\label{line:while-loop}
   \For{each center $w \in \calW_3$}{\label{line-for-loop-W3}
  $t \gets $ Smallest integer satisfying $10^t \geq d(w,\calW-w)/200$ \label{line:prep} \;
  \If{$t > t[w]$}{\label{line:if}
    $t' \gets $ Smallest integer satisfying $10^t \geq d(w,\calW-w)/100$  \label{line:findt:2} \;
      $w_0 \gets $ \MakeRbst$(w,t')$ \label{line:optionalcall}\;
   $\calW \gets \calW - w + w_0$ \;
      Save $t[w_0] \gets t'$ and $p_{t[w_0]}(w_0) \gets w$ (see \Cref{sec:recourse:part1}) together with $w_0$ \;
   go back to Line~\ref{line:while-loop}.
  }
  }
  return $\calU_{\final} \gets \calW$ \label{line:end}.
  }
\caption{Modified version of the call to $\Robustify(\calW^{\star})$ at the end of an epoch.}
\label{alg:modify:robustify}
\end{algorithm}

\begin{lemma}[Informal]
\label{lm:informal:makerobuscalls}
\Cref{lm:makerobustcalls} continues to hold even after  the call to $\Robustify(\calW^{\star})$ at the end of every epoch is replaced by the procedure in \Cref{alg:modify:robustify}.
\end{lemma}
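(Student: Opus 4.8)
The plan is to re-run the proof of \Cref{lm:makerobustcalls} with \Cref{alg:modify:robustify} in place of the call to $\Robustify(\calW^{\star})$, checking that every call to $\MakeRbst(\cdot,\cdot)$ still belongs to one of the three categories -- Type~I, Type~II, Type~III -- introduced in \Cref{sec:recourse:part1}, and that the amortized bound on each category is unchanged. First I would observe that \Cref{alg:modify:robustify} invokes $\MakeRbst$ only in \Cref{line:mandatorycall} (once for every $w \in \calW_1 \cup \calW_2$) and in \Cref{line:optionalcall} (for some $w \in \calW_3$). Since \Cref{line:type1,line:type2,line:type3} set $\calW_1 = \calW^{\star}\setminus\calU_{\init}$, $\calW_2$ to be the contaminated centers of $\calW^{\star}\cap\calU_{\init}$, and $\calW_3$ to be the clean centers of $\calW^{\star}\cap\calU_{\init}$, these are precisely the Type~I, Type~II and Type~III calls. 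The Type~I calls number $|\calW_1| \le |\calW^{\star}\oplus\calU_{\init}| = O(\ell+1)$ by \Cref{cor:recourse:easy}, i.e.\ $O(1)$ amortized over the $\ell+1$ updates of the epoch; the Type~II calls number $|\calW_2| = O((\ell+1)\log\Delta)$ because, by \Cref{cl:cont}, each of the $\ell+1$ updates contaminates at most $O(\log\Delta)$ centers of $\calU_{\init}$, i.e.\ $O(\log\Delta)$ amortized per update. Both bounds coincide with those in the proof of \Cref{lm:makerobustcalls}.

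The only genuinely new point is the bound on Type~III calls. Here I would record two structural facts about \Cref{alg:modify:robustify}. (i)~A Type~III call $w_0 \leftarrow \MakeRbst(w,t')$ is executed only inside the branch guarded by ``$t > t[w]$'' (\Cref{line:if}), where $t$ is the smallest integer with $10^t \ge d(w,\calW-w)/200$ (\Cref{line:prep}), and the integer saved with $w_0$ satisfies $t[w_0] = t' \ge t > t[w]$ (\Cref{line:findt:2}); hence $t[w_0] > t[w]$ strictly, exactly as in the original analysis. (ii)~Within one execution of \Cref{alg:modify:robustify}, each center of the fixed set $\calW_3$ triggers at most one Type~III call, since once $w$ is replaced by $w_0 \notin \calW_3$ it is never re-examined by the \textbf{for} loop of \Cref{line-for-loop-W3}; in particular the \textbf{while} loop halts after at most $|\calW_3| \le k$ restarts. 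Given~(i), the global chaining argument from the proof of \Cref{lm:makerobustcalls} goes through verbatim: along any maximal chain of Type~III calls $w_1 \leftarrow \MakeRbst(w_0,\cdot),\ \ldots,\ w_j \leftarrow \MakeRbst(w_{j-1},\cdot)$, time-ordered and possibly spread over many epochs, the saved integers satisfy $0 \le t[w_0] < t[w_1] < \cdots < t[w_j] \le \lceil \log\Delta\rceil$, so $j = O(\log\Delta)$; and the first center $w_0$ of the chain is not the output of a Type~III call (maximality), hence -- using that every center of $\calU_{\init}$ traces back, by induction over epochs, to a $\MakeRbst$ output -- it is the output of a Type~I or a Type~II call, to which we charge all $O(\log\Delta)$ Type~III calls of the chain. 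Since the Type~I and Type~II calls together number $O(\log\Delta)$ amortized per update, the Type~III calls number $O(\log^2\Delta)$ amortized per update.

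Summing the three bounds yields $O(1) + O(\log\Delta) + O(\log^2\Delta) = O(\log^2\Delta)$ amortized calls to $\MakeRbst(\cdot,\cdot)$ per update, matching \Cref{lm:makerobustcalls}. The step I expect to be the main obstacle is making the bookkeeping of the integers $t[\cdot]$ fully rigorous across epoch boundaries: one must show that every center present in $\calU_{\init}$ at the start of an epoch carries a value $t[\cdot]$ that is a valid robustness certificate with respect to the current $\po$, that this value never decreases while the center stays in the maintained solution, and that it was set either by an earlier $\MakeRbst$ call (to which a chain can then be charged) or -- in the degenerate start-up regime -- equals the trivial value $0$, so that both the strict chain inequalities $t[w_i] < t[w_{i+1}]$ and the claim ``the first center of a maximal Type~III chain is a Type~I or Type~II output'' are justified. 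A lesser, purely syntactic point is to confirm that the restructured \textbf{for}/\textbf{while} loops of \Cref{alg:modify:robustify} call $\MakeRbst$ nowhere other than \Cref{line:mandatorycall,line:optionalcall}, so that no call escapes the accounting above.
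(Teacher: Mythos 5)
Your proposal is correct and follows essentially the same approach as the paper: the paper's own (two-sentence) proof sketch simply observes that \Cref{alg:modify:robustify} is aligned with the recourse analysis of \Cref{sec:recourse:part1}, which already charges for a proactive $\MakeRbst$ call at every Type I and Type II center and relies only on the strict increase $t[w_0]>t[w]$ for Type III calls — exactly the two facts you verify. Your write-up is a more detailed unpacking of that same argument, and the caveats you flag (the $t[\cdot]$ bookkeeping across epochs and the start-up regime) are real but are also left implicit in the paper.
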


\begin{proof}(Sketch)
The lemma holds because the procedure in \Cref{alg:modify:robustify} is perfectly aligned with the recourse analysis in \Cref{sec:recourse:part1}. In other words, the recourse analysis  accounts for the scenario where we make proactive calls to $\MakeRbst(\cdot \, , \cdot)$, precisely in the manner specified by \Cref{alg:modify:robustify}. For example, the recourse analysis bounds the number of Type I and Type II calls to $\MakeRbst(\cdot \, , \cdot)$, by pretending that every Type I or Type II center makes such a call, regardless of whether or not it violates \Cref{condition-robust-solution-part-1}.
\end{proof}

\medskip
\noindent{\bf Bounding the Update Time.} First, note that we can implement each invocation of \Cref{line:type1,line:type2,line:type3,line:type4} in \Cref{alg:modify:robustify} in $\tilde{O}(k \cdot (\ell+1))$ time, which gets amortized over the length of the epoch, as  an epoch lasts for $(\ell+1)$ updates. Specifically, to compute the sets  $\calW_2$ and $\calW_3$,  we  iterate over all $w \in \calW^\star \cap \calU_{\init}$ and $q \in \po \oplus \calP^{(\ell+1)}$, and check whether the update involving $q$ contaminates $w$,  using the value $t[w]$ and the point $p_{t[w]}(w)$.
Moreover, \Cref{line:type1,line:type4} trivially take  $O(k)$ time. 

Using the auxiliary data structure (see \Cref{sec:auxiliary}), we can implement each invocation of \Cref{line:findt:1,line:prep,line:findt:2} in $\tilde{O}(1)$ time. Next, say that an iteration of the {\bf for} loop in \Cref{line-for-loop-W3} is {\em uninterrupted} if we find that $t \leq t[w]$ in \Cref{line:if}
(and accordingly do not execute any line within the {\bf if} block) and {\em interrupted} otherwise.
Each uninterrupted iteration of the {\bf for} loop takes $\tilde{O}(1)$ time. Furthermore, there can be at most 
 $|\calW_3| = O(k)$ many {\em consecutive} uninterrupted iterations of the {\bf for} loop in \Cref{line-for-loop-W3}: Any such chain of uninterrupted iterations is broken (i) either by an interrupted iteration, which involves a call to $\MakeRbst(w, t')$ in \Cref{line:optionalcall}, (ii) or by the termination of the procedure in \Cref{line:end} (this can happen only once in an epoch).

From the preceding discussion, it  follows that total time spent on the remaining lines in \Cref{alg:modify:robustify} is dominated by the time spent on the calls to  $\MakeRbst(\cdot \, , \cdot)$. Recall that at the start of \Cref{sec:robustify-implementation}, we have already explained that we spend $\tilde{O}(n)$ time to implement each call to $\MakeRbst(\cdot \, , \cdot)$. Finally, \Cref{lm:informal:makerobuscalls,lm:makerobustcalls} imply that we make $\tilde{O}(1)$ amortized calls to $\MakeRbst(\cdot \, , \cdot)$, per update. This gives us an overall amortized update time of $\tilde{O}(n)$.

\subsubsection{Algorithm for  \Cref{lm:developcenters}}\label{sec:explain-developcenters}

Since we have a set of fixed $k$ centers $\calU$ that must be contained in $\calU^\star$,
if we can treat these fixed centers as a single center instead of a set of $k$ centers, we might be able to reduce the problem to a $(s+1)$-median problem.
So, we contract all of the points $\calU$ to a single point $u^\star$ and define a new space $\calP' = (\calP - \calU) + u^\star$ with a new metric $d'$ as follows.
\begin{equation}\label{def:new-metric:part1}
    \begin{cases}
    d'(x, u^\star) := d(x, \calU) &\forall x \in \calP \\
    d'(x,y) := \min\{ d'(x,u^\star) + d'(y,u^\star), d(x,y)  \} &\forall x,y \in \calP
\end{cases}    
\end{equation}
This function $d'$ defines a metric on $\calP'$.
A simple way to verify this is that $d'$ is the metric derived from the shortest path in a complete graph where weight of edges between any two nodes in $\calU$ is zero and the weight of the other edges is the $d$ distance of their endpoints (you can find a complete proof in \Cref{dprime-is-metric}).
We also define weights for each $x \in \calP'$.
We define the weight of all $x \in \calP' - u^\star$ to be $1$ and the weight of $u^\star$ to be very large denoted by $\infty$ to enforce any constant approximate solution for $(s+1)$-median in $\calP'$ to open center $u^\star$.

In order to have access to metric $d'$, we can simply construct an oracle $D'$ to compute it.
This is because we have access to sorted distances of any $p \in \calP$ to $\calU$ through $\calT_p$ in our auxiliary data structure.
Combining with the definition of $d'$, it is easy to see that we can compute $d'(x,y)$ in $O(1)$ time for each $x,y \in \calP'$.

Next, we run the algorithm of \cite{MettuP02} for $(s+1)$-median problem on $\calP'$ w.r.t.~metric $d'$ to find a $\calF$ of size at most $(s+1)$ which is a constant approximation for $\OPT_{s+1}(\calP')$ in a total of $\tilde{O}(|\calP'|\cdot (s+1)) = \tilde{O}(n \cdot (\ell + 1))$ time.
Finally, we let $\calU^\star \gets \calU + (\calF - u^\star)$.
Note that $\calF - u^\star \subseteq \calP$ and its size is at most $s$ since $u^\star \in \calF$.
This set $\calU + (\calF - u^\star)$ is going to be a good solution w.r.t.~metric $d$ as well as $d'$ (see \Cref{lem:guarantee-develop-centers}).

\begin{algorithm}[H]
  \DontPrintSemicolon
  $\calP' \gets (\calP - \calU) + u^\star$.
 
  $w(u^\star) \gets \infty$, $w(x) \gets 1 \ \forall x \in \calP'-u^\star$.

   Consider $D'$ as an oracle to the metric $d'$ defined in \Cref{def:new-metric:part1}.

   Compute any $O(1)$ approximate solution $ \calF \subseteq \calP' $ for $(s+1)$-median problem on weighted metric space $\calP'$ in $\tilde{O}(|\calP'| \cdot (s+1))$ time using \cite{MettuP02} with access to distance oracle $D'$.

   \Return $ \gets \calU + (\calF - u^\star)$.

  \caption{Computing an approximation of the optimum $\calU^\star \supseteq \calU$ of size $(k+s)$.}
\end{algorithm}

\subsection{Improving the Update Time to $\tilde{O}(k)$}
\label{improve:updatetime:k}

\medskip
\noindent {\bf Extension to Weighted Case.}
First, we argue that our algorithm can be extended to the weighted case defined as follows.
We have a metric space $\calP$ with positive weights $w(p)$ for each $p \in \calP$ and distance function $d$. Denote this weighted space by $(\calP, w, d)$.
The cost of a collection $\calU$ of $k$ centers is defined as $\cost{\calU, \calP} = \sum_{p \in \calP} w(p) \cdot d(p, \calU)$ and subsequently $\avcost{\calU, S} = \cost{\calU, S}/(\sum_{p \in S} w(p))$ for all $S \subseteq \calP$.
We can extend our algorithm and all of the arguments for weighted case.

\medskip
\noindent {\bf Sparsification.}
Note that parameter $n$ in our algorithm is the maximum size of the space at any time during the total sequence of updates (it is \textbf{not} the size of the underlying ground metric space $\ground$).
As a result, if we make sure that the size of the space $\calP$ is at most $\tilde{O}(k)$ at any point in time, then the amortized running time of the algorithm would be $\tilde{O}(k)$ as desired.

We use the result of \cite{ourneurips2023} to sparsify the input.
A simple generalization of this result is presented in Section 10 of \cite{FOCS24kmedian} which extends this sparsifier to weighted metric spaces.
The authors provided an algorithm to sparsify the space to $\tilde{O}(k)$ weighted points. 
More precisely, given a dynamic metric space $(\calP, w, d)$ and parameter $k \in \mathbb{N}$, there is an algorithm that maintains a dynamic metric space $(\calP', w', d)$ in $\tilde{O}(k)$ amortized update time such that the following holds.
\begin{itemize}
    \item $\calP' \subseteq \calP$  and the size of $\calP'$ at any time is $\tilde{O}(k)$.
    
    \item A sequence of $T$ updates in $(\calP, w, d)$, leads to a sequence of $O(T)$ updates in $(\calP', w', d)$.\footnote{This guarantee follows from a slightly more refined analysis of the recourse of this sparsifier which is presented in \cite{simple-kcenter} (see Lemma 3.4 of \cite{simple-kcenter}).}

    \item Every $\alpha$ approximate solution to the $k$-median problem in the metric space $(\calP', w', d)$ is also a $O(\alpha)$ approximate solution to the $k$-median problem in the metric space
    $(\calP,w,d)$ with probability at least $1- \tilde{O}(1/n^c)$.
\end{itemize}
Now, suppose that we are given a sequence of updates $\sigma_1, \sigma_2, \ldots, \sigma_T$ in a dynamic metric space $(\calP,w,d)$.
Instead of feeding the metric space $(\calP,w,d)$ directly to our algorithm, we can
perform this dynamic sparsification to obtain a sequence of updates $\sigma'_1, \sigma'_2, \ldots, \sigma'_{T'}$ for a metric space $(\calP',w',d)$, where $T' = O(T)$, and feed the metric space $(\calP',w',d)$ to our dynamic algorithm instead.
Since our algorithm maintains a $O(1)$ approximate solution $\calU$ to the $k$-median problem in $(\calP',w',d)$, then $\calU$ is also a $O(1)$ approximate solution for the $k$-median problem in $(\calP,w,d)$ with high probability.

Since the length of the stream is multiplied by $O(1)$, we would have a multiplicative overhead of $O(1)$ in the amortized update time and recourse (amortized w.r.t.~the original input stream).

    \section{Missing Proofs From Section~\ref{sec:building:block}}
\label{sec:proof:building:block}

\subsection{Proof of Lemma~\ref{lem:double-sided-stability-part-1}: Double-Sided Stability Lemma}

Consider the LP relaxation for improper $k$-median problem on $\calP$ for each $k$ as follows.
\begin{align*}
    \min & \sum_{p \in \calP} \sum_{c \in \ground} d(c,p) \cdot x_{cp} & \\
    \text{s.t.}  & \quad x_{cp} \leq y_{c}  & \forall c \in \ground, p \in \calP 
    \\
    &\sum_{c \in \ground} x_{cp} \geq 1  & \forall p \in \calP \label{eq:all-open0}\\
    &\sum_{c \in \ground} y_c \leq k & \\
    &x_{cp}, y_c \geq 0 & \forall c \in \ground, p \in \calP
\end{align*}

\noindent
Denote the cost of the optimal fractional solution for this LP by $\FOPT_k$. Since space $\calP$ is fixed here, we denote $\OPT_k(\calP)$ by $\OPT_k$.
It is known that the integrality gap of this relaxation is at most $3$ \cite{CS11}. So, for every $k$ we have
\begin{equation}\label{eq:int-gap-part-1}
    \FOPT_k \leq \OPT_k \leq 3\cdot \FOPT_k.   
\end{equation}

\begin{claim}
    For every $k_1$, $k_2$ and $0 \leq \alpha, \beta \leq 1$ such that $\alpha + \beta = 1$, we have
    $$\FOPT_{\alpha k_1 + \beta k_2} \leq \alpha \cdot \FOPT_{k_1} + \beta \cdot \FOPT_{k_2}. $$
\end{claim}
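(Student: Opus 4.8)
The plan is to exploit the fact that every constraint of the displayed LP relaxation, as well as its objective, is linear — so the optimal value $\FOPT_k$ is a convex function of the center budget $k$. First I would fix optimal fractional solutions $(x^{(1)}, y^{(1)})$ and $(x^{(2)}, y^{(2)})$ to the LP relaxations with center budgets $k_1$ and $k_2$, respectively, so that $\sum_{p \in \calP}\sum_{c \in \ground} d(c,p)\, x^{(1)}_{cp} = \FOPT_{k_1}$ and analogously for $k_2$. Then I would form the convex combination $x := \alpha x^{(1)} + \beta x^{(2)}$ and $y := \alpha y^{(1)} + \beta y^{(2)}$.

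The next step is to check that $(x,y)$ is feasible for the LP relaxation with budget $k := \alpha k_1 + \beta k_2$. The coupling constraints $x_{cp} \leq y_c$ are preserved since they are preserved under nonnegative combinations; the covering constraints $\sum_{c} x_{cp} \geq 1$ are preserved because $\alpha \cdot 1 + \beta \cdot 1 = 1$; the budget constraint becomes $\sum_c y_c = \alpha \sum_c y^{(1)}_c + \beta \sum_c y^{(2)}_c \leq \alpha k_1 + \beta k_2 = k$; and nonnegativity holds since $\alpha, \beta \geq 0$. I would note explicitly that $\FOPT_k$ is well-defined for any real $k \geq 0$ (the only constraint whose right-hand side involves $k$ is $\sum_c y_c \leq k$), so there is no integrality issue with $\alpha k_1 + \beta k_2$.

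Finally, since the objective is linear in $x$, the value of the feasible point $(x,y)$ equals $\alpha \cdot \FOPT_{k_1} + \beta \cdot \FOPT_{k_2}$, and by optimality of $\FOPT_k$ over all feasible solutions of the budget-$k$ LP we conclude $\FOPT_{\alpha k_1 + \beta k_2} \leq \alpha \cdot \FOPT_{k_1} + \beta \cdot \FOPT_{k_2}$. I do not anticipate a genuine obstacle: this is the standard observation that the value function of a linear program is convex in its right-hand side, and the only point requiring care is to state that the relaxation is meaningful for non-integral budgets so that the left-hand side $\FOPT_{\alpha k_1 + \beta k_2}$ is defined.
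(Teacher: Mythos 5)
Your proposal is correct and matches the paper's proof exactly: both take optimal fractional solutions for budgets $k_1$ and $k_2$, form their convex combination, verify feasibility for budget $\alpha k_1 + \beta k_2$, and use linearity of the objective. Your explicit remark that the LP is well-defined for non-integral budgets is a detail the paper leaves implicit, but it is the same argument.
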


\begin{proof}
    Assume optimal fractional solutions $(x^*_1,y^*_1)$ and $(x^*_2,y^*_2)$ to be an optimal solution for above LP for fractional $k_1$ and $k_2$-median problems respectively. It is easy to verify that their convex combination $(\alpha x^*_1 + \beta x^*_2, \alpha y^*_1 + \beta y^*_2)$ is a feasible solution for LP relaxation of $(\alpha k_1 + \beta k_2)$-median problem whose cost is $\alpha \ \FOPT_{k_1} + \beta \ \FOPT_{k_2}$ which concludes the claim.
\end{proof}

\noindent
Now, plug $k_1 = k-r$, $k_2 = k + r/(12\eta)$, $\alpha = 1/(12\eta)$ and $\beta = 1 - \alpha$ in the claim. We have
$$\alpha k_1 + \beta k_2 = \frac{1}{12\eta}(k-r) + \left(1-\frac{1}{12\eta}\right)\left(k+\frac{r}{12\eta}\right) = k - \frac{r}{(12\eta)^2} \leq k. $$
As a result,
$ \FOPT_k \leq \FOPT_{\alpha k_1 + \beta k_2} \leq \alpha \ \FOPT_{k_1} + \beta \ \FOPT_{k_2} $.
Together with \Cref{eq:int-gap-part-1}, we have
$ \OPT_k \leq 3\alpha\cdot \OPT_{k_1} + 3\beta \cdot \OPT_{k_2} $.
We also have the assumption that $ \OPT_{k_1} = \OPT_{k-r} \leq \eta \cdot \OPT_k$, which implies
$ \OPT_k \leq 3\alpha\eta\cdot \OPT_{k} + 3\beta\cdot \OPT_{k_2} $.
Finally,
$$ \OPT_k \leq \left( \frac{3\beta}{1-3\alpha\eta} \right) \cdot \OPT_{k_2} \leq 4 \cdot \OPT_{k_2} \leq 4\cdot  \OPT_{k +\lfloor r/(12\eta) \rfloor}. $$
The second inequality holds since
$$ \frac{3\beta}{1 - 3\alpha\eta} \leq \frac{3}{1 - 3\alpha\eta} = \frac{3}{1 - 1/4} = 4. $$

\subsection{Proof of Lemma~\ref{lem:generalize-lemma-7.3-in-FLNS21-part-1}}

Consider the standard LP relaxation for the weighted $k$-median problem as follows.
\begin{align*}
    \min & \sum_{p \in \calP} \sum_{u \in \calU}  d(u,p) \cdot x_{up} & \\
    \text{s.t.}  & \quad x_{up} \leq y_{u}  & \forall u \in \calU, p \in \calP 
    \\
    &\sum_{u \in \calU} x_{up} \geq 1  & \forall p \in \calP \\
    &\sum_{u\in\calU} y_u \leq k - (m- r) / 4 & \\
    &x_{up}, y_u \geq 0 & \forall u\in \calU, p \in \calP
\end{align*}
We consider the set of potential centers to open is $\calU$, and we want to open at most $k - (m- r) / 4 $ many centers.
Since the integrality gap of this LP is known to be at most $3$ \cite{CS11}, it suffices to show the existence of a fractional solution whose cost is at most $2\gamma\cdot\left( \cost{\calU,\calP} + \cost{\calV,\calP} \right)$
and this solution opens at most 
$ k - (m-r)/4 $ centers.
Now, we explain how to construct a fractional solution for this LP.

\noindent
\textbf{Fractional Opening of Centers.}
Consider the projection $\pi_\calU: \calV \rightarrow \calU$ function.
Assume $\calU = \calU_I + \calU_F$ is a partition of $\calU$ where $\calU_I$ contains those centers $u \in \mathcal{U}$ satisfying at least one of the following conditions:
\begin{itemize}
    \item $u$ forms a well-separated pair with one center in $\mathcal{V}$.
    \item $|\pi^{-1}_\calU(u)| \geq 2$.
\end{itemize}
For every $u\in \mathcal{U}_I$, set $y_u=1$ and for every $u\in \mathcal{U}_F$ set $y_u = 1/2$. 
Each center $u \in \calU$ that forms a well-separated pair with a center $v\in \calV$ has $|\pi^{-1}_\calU(u)| \geq 1$ since $u$ must be the closest center to $v$ in $\calU$.
Since the number of well-separated pairs is $k-m$, we have
\begin{equation*}
    k+r = |\calV| = \sum_{u\in \mathcal{U}} |\pi^{-1}_\calU(u)| \geq \sum_{u\in \mathcal{U}_I} | \pi^{-1}_\calU(u)| \geq 1 \cdot (k- m)   + 2\cdot\left(|\mathcal{U}_I| - (k-m)\right).
\end{equation*}
Hence,
$|\calU_I| \leq \frac{k+r + (k-m)}{2} = k - \frac{m-r}{2} $.
Finally, we conclude
\begin{eqnarray*}
   \sum_{u\in \calU} y_u = \sum_{u\in \calU_I} y_u + \sum_{u\in \calU_F} y_u 
   \leq 1 \cdot \left( k - \frac{m-r}{2} \right) + \frac{1}{2} \cdot \frac{m-r}{2}
   =
   k - \frac{m-r}{4}.
\end{eqnarray*}

\noindent
\textbf{Fractional Assignment of Points.}     
For every $p\in \calP$, assume $ v_p = \pi_{\calV}(p) $ is the closest center to $p$ in $\mathcal{V}$ and $u_p = \pi_\calU(v_p)$ is the closest center in $\calU$ to $v_p$.
We have three cases:
\begin{itemize}

\item If $y_{u_p} = 1$, then set $x_{u_pp} = 1$. The cost of this assignment would be $d(u_p, p)$.

\item If $y_{u_p} = 1/2$ and there is a center $u'_p\in \calU  - u_p$ such that $d(u_p, u'_p) \leq \gamma\cdot  d(u_p, v_p)$, then set $x_{u_pp} = x_{u_p' p} = 1/2$. Note that $u_p' \neq u_p$ which means this solution is feasible to the LP.
The cost of this assignment would be
\begin{eqnarray*}
    \frac{1}{2} \cdot \left(d(p,u_p) + d(p,u_p')\right) &\leq&  \frac{1}{2} \cdot \left(d(p,u_p) + d(p,u_p) + d(u'_p,u_p)\right)  \\
    &\leq& d(p,u_p) + \frac{\gamma}{2}\cdot d(u_p,v_p).
\end{eqnarray*}

\item If $y_{u_p} = 1/2$ and the previous case does not hold, then since $(u_p, v_p)$ is not a well-separated pair, there is a center $v'_p \in \calV - v_p$ such that $d(v_p, v'_p) \leq \gamma \cdot d(u_p, v_p)$. Let $u_p' = \pi_\calU(v'_p)$ and  set $x_{u_p p} = x_{u'_p p} = 1/2$. First, we show that $u'_p \neq u_p$. Since $y_{u_p} = 1/2$, we have $u_p \in \calU_F$ which concludes $|\pi^{-1}_\calU(u_p)| \leq 1$.
We also know that $\pi_\calU(v_p) = u_p$.
So, $v_p$ is the only center in $\calV$ mapped to $u_p$ which implies
$\pi_\calU(v'_p) \neq u_p$ or $u_p' \neq u_p$ (note that $v_p' \neq v_p$). So, point $p$ is assigned one unit to centers.
The cost of this assignment would be
\begin{eqnarray*}
    \frac{1}{2} \cdot \left(d(p, u_p) + d(p, u'_p)\right) 
    &\leq&
    \frac{1}{2} \cdot \left( d(p, u_p) + d(p, u_p) + d(u_p, v'_p)  + d(v'_p, u'_p)\right) \\
    &\leq&
    \frac{1}{2} \cdot \left( d(p, u_p) + d(p, u_p) + d(u_p, v'_p)  + d(v'_p, u_p)\right) \\
    &=&
     d(p, u_p)  + d(u_p,v_p')\\
    &\leq&
    d(p, u_p)  + d(u_p, v_p) + d(v_p,v_p') \\
    &\leq& d(p,u_p) + (\gamma+1)\cdot d(u_p,v_p).
\end{eqnarray*}
The second inequality is because of the choice of $u'_p = \pi_{\calU}(v_p')$ and the last inequality is because $d(v_p,v_p') \leq \gamma \cdot d(u_p,v_p)$.
\end{itemize} 

\noindent
\textbf{Bounding the Cost.}
Assume $u_p^* = \pi_\calU(p)$ for each $p \in \calP$.
We have
$$ d(u_p,v_p) \leq d(u_p^*,v_p) \leq d(u_p^*,p) + d(p,v_p). $$
The first inequality is by the choice of $u_p = \pi_\calU(v_p)$.
As a result,
\begin{equation}\label{eq:bound-on-d-up-vp-part-111}
    \sum_{p \in \calP} d(u_p,v_p)
    \leq
    \sum_{p \in \calP} \left( d(u^*_p,p) + d(p,v_p) \right)
    =
    \cost{\calU,\calP} + \cost{\calV,\calP}.
\end{equation}
In each of the cases of fractional assignments of points to centers, the cost of assigning a point $p \in C_v(\calV, \calP)$ is at most
$ d(p, u_p) + (\gamma+1) \cdot d(u_p, v_p)$.
As a result, the total cost of this assignment is upper bounded by
\begin{eqnarray*}
    \sum_{p\in P} \left( d(p,u_p) + (\gamma+1) \cdot d(u_p, v_p) \right)
    &\leq& 
     \sum_{p\in \calP} \left( d(p,v_p) + (\gamma+2) \cdot d(u_p, v_p) \right) \\
    &\leq& 
    \cost{\calV,\calP} +(\gamma+2)\cdot \left( \cost{\calU,\calP} + \cost{\calV,\calP} \right) \\
    &\leq& 
    2\gamma\cdot \left( \cost{\calU,\calP} + \cost{\calV,\calP} \right).
\end{eqnarray*}
The second inequality follows by \Cref{eq:bound-on-d-up-vp-part-111}.

    \newpage
    \part{Proofs That Follow From Prior Work}

\section{Missing Proofs From Section~\ref{sec:prelim}}
\label{sec:missing:proof:prelim}

\subsection{Proof of \Cref{lem:robustness-property-1-part-1}}

The lemma is obvious if $\avcost{p_j, B_j} \geq 10^j / 5$, because $p_j=p_{j-1}$ which means $d(p_{j-1},p_j)=0$.
Now, assume $\avcost{p_j, B_j} < 10^j / 5$. By the definition of a $t$-robust sequence, we know 
$\cost{p_{j-1}, B_j} \leq \cost{p_{j}, B_j}$. Hence,
\begin{eqnarray*}
    d(p_j,p_{j-1}) 
    &=& \frac{1}{|B_j|}\cdot \sum_{p \in B_j}  d(p_j,p_{j-1})
    \leq
    \frac{1}{|B_j|}\cdot  \sum_{p \in B_j} (d(p_j,p) + d(p_{j-1},p)) \\
    &=&
    \avcost{p_j, B_j} + \avcost{p_{j-1}, B_j} \\
    &\leq&
    2 \cdot \avcost{p_j, B_j}
    \leq
    \frac{2}{5}\cdot 10^j.
\end{eqnarray*}
For every $p \in B_{j-1}$ we have
$d(p,p_{j}) \leq d(p,p_{j-1}) + d(p_{j-1},p_j) \leq 10^{j-1} + \frac{2}{5}\cdot 10^j \leq 10^j,$
which implies $B_{j-1} \subseteq B_j$.
Finally,
$$ d(p_0,p_j) \leq \sum_{i=1}^j d(p_{i-1},p_i) \leq \frac{2}{5} \cdot \sum_{i=1}^j 10^i = \frac{2}{5}\cdot \frac{10^{j+1} - 2}{9} \leq 10^j / 2. $$

\subsection{Proof of \Cref{lem:robustness-property-2-part-1}}

Since $(p_0,p_1,\ldots , p_t)$ is $t$-robust, for every $1 \leq j \leq t$, we know
\begin{equation}\label{eq:pj-1-better-than-pj-part-111}
    \cost{p_{j}, B_j} \geq \cost{p_{j-1}, B_j}.
\end{equation}
Assume $q \in \calP \setminus B_j$. Then, by \Cref{lem:robustness-property-1-part-1},
$ d(q,p_j) \geq 10^j \geq 2\cdot  d(p_0,p_j) $.
Hence,
$$ \frac{3}{2}\cdot  d(q,p_j)
= d(q,p_j) + \frac{1}{2}\cdot d(q,p_j)
\geq d(q,p_j) + d(p_0,p_j)
\geq d(q,p_0),$$
which means
\begin{equation}\label{eq:dqpj-compare-to-dqp0-part-111}
    d(q,p_j) \geq \frac{2}{3}\cdot d(q,p_0).
\end{equation}
Finally, Since $ B_1 \subseteq \cdots \subseteq B_i \subseteq S$ (by \Cref{lem:robustness-property-1-part-1}) we can apply \Cref{eq:pj-1-better-than-pj-part-111} and \Cref{eq:dqpj-compare-to-dqp0-part-111} repeatedly to get
\begin{eqnarray*}
    \cost{p_i,S} &=& \cost{p_i, B_i} + \cost{p_i, S \setminus B_i}
    \geq
    \cost{p_{i-1}, B_i} + \frac{2}{3}\cdot  \cost{p_0, S \setminus B_i} \\
    &=&
    \cost{p_{i-1}, B_{i-1}} + \cost{p_{i-1}, B_i \setminus B_{i-1}} + \frac{2}{3} \cdot \cost{p_0, S \setminus B_i} \\
    &\geq&
    \cost{p_{i-2}, B_{i-1}} + \frac{2}{3}\  \cost{p_{0}, B_i \setminus B_{i-1}} + \frac{2}{3}\cdot  \cost{p_0, S \setminus B_i} \\
    &=&
    \cost{p_{i-2}, B_{i-1}} + \frac{2}{3}\cdot  \cost{p_0, S \setminus B_{i-1}} \\
    &\vdots& \\
    &\geq&
    \cost{p_{0}, B_{1}} + \frac{2}{3} \cdot \cost{p_0, S \setminus B_{1}}
    \geq
    \frac{2}{3}\cdot  \cost{p_0, S} .
\end{eqnarray*}

\subsection{Proof of \Cref{lem:robustify-calls-once-part-1}}

Assume a \MakeRbst \ call to a center $w \in \calW$ is made and it is replaced by $w_0$. Let $\calW_1$ be the set of centers just before this call to \MakeRbst \ is made on $w$.
At this point of time, $t[w_0]$ is the smallest integer satisfying $10^{t[w_0]} \geq d(w, \calW_1 - w)/100$ (note that $t[w_0] \geq 0$, otherwise the algorithm should not have called \MakeRbst \ duo to line 10 in the \Robustify).
For the sake of contradiction, assume that another call to \MakeRbst \ is made on $w_0$.
We assume that this pair $w$, $w_0$ is the first pair for which this occurs.
Let $w' = \pi_{\calW_1 - w}(w)$ which means $d(w,\calW_1 - w)=d(w,w')$.
Then,
$10^{t[w_0]} \geq d(w,w')/100$ and $10^{t[w_0]} < d(w,w')/10$ by definition of $t[w_0]$. So, by \Cref{lem:robustness-property-1}, we have
\begin{equation}\label{eq:dw0w-leq-dwwprime-over-20-part-1}
   d(w_0,w) \leq 10^{t[w_0]}/2 \leq d(w,w')/20. 
\end{equation}
Assume $\calW_2$ is the set of centers just after the call to \MakeRbst \ is made on $w_0$.
Since we assumed $w,w_0$ is the first pair that another call to \MakeRbst \ on $w_0$ is made, we have one of the two following cases.
\begin{itemize}
    \item $w' \in \calW_2$.
    In this case, we have
    \begin{equation*}
        d(w_0,\calW_2 - w_0) \leq d(w_0,w') \leq d(w_0,w) + d(w,w') \leq \frac{21}{20}\cdot d(w,w') \leq 2 \cdot d(w,w').
    \end{equation*}
    The last inequality holds by \Cref{eq:dw0w-leq-dwwprime-over-20-part-1}.
    This is in contradiction with the assumption that a call to \MakeRbst \ on $w_0$ is made, because $w_0$ is $t[w_0]$-robust where
    $10^{t[w_0]} \geq d(w,w')/100 \geq d(w_0,\calW_2-w_0)/200$. So, $w_0$ must not be selected to be added to $\calS$ in \Robustify.
    
    \item There is $ w'_0 \in \calW_2$ that replaced $w'$ by a single call to \MakeRbst.
    When \MakeRbst \ is called on $w'$, the algorithm picks integer $t[w_0']$ that satisfies
    $ 10^{t[w_0']} \leq d(w',w_0) / 10 $ (otherwise, it should have picked $t[w_0']-1$ instead of $t[w_0']$).
    Hence,
    \begin{equation}\label{eq:dw0primewprime-leq-dwwprime-over-10-part1}
       d(w_0',w') \leq 10^{t[w_0']}/2 \leq d(w',w_0) / 20 \leq \left( d(w_0,w) + d(w,w') \right)/20 \leq d(w,w')/10.  
    \end{equation}
    The first inequality holds by \Cref{lem:robustness-property-1} and the last inequality is \Cref{eq:dw0w-leq-dwwprime-over-20-part-1}.
    As a result,
    \begin{eqnarray*}
        d(w_0, \calW_2-w_0) &\leq& d(w_0,w_0')
        \leq d(w_0,w) + d(w,w') + d(w',w_0') \\
        &\leq& d(w,w')/20 + d(w,w') + d(w,w')/10 \\
        &=& \frac{23}{20}\cdot d(w,w')
        \leq 2 \cdot d(w,w').
    \end{eqnarray*}
    The last inequality holds by \Cref{eq:dw0w-leq-dwwprime-over-20-part-1} and \Cref{eq:dw0primewprime-leq-dwwprime-over-10-part1}.
    This leads to the same contradiction.
\end{itemize}

\subsection{Proof of \Cref{lem:cost-robustify-part-1}}

By \Cref{lem:robustify-calls-once-part-1}, \Robustify \ calls \MakeRbst \ on each center at most once. 
Let $r$ be the number of centers in $\calW$ for which a call to \MakeRbst \ has happened.
Assume
$\calW = \{w_1,\ldots, w_k\}$ is an ordering by the time that \MakeRbst \ is called on the centers $w_1,w_2,\ldots,w_r$ and the last elements are ordered arbitrarily.
Assume also that $\calU =  \{w'_1,\ldots, w'_k\}$ where $w_i'$ is obtained by the call \MakeRbst \ on $w_i$ for each $1 \leq i\leq r$ and $w_i' = w_i$ for each $r+1 \leq i \leq k$.

\noindent
For every $1 \leq j \leq r$, integer $t[w_j']$ is the smallest integer satisfying
$$ 10^{t[w_j']} \geq d(w_j,\{ w_1',\ldots,w_{j-1}',w_{j+1},\ldots,w_k \})/100. $$
Hence,
\begin{equation}\label{eq:proof-cost-robustify-dwjwiprime-part-1}
    10^{t[w_j']} \leq d(w_j,w_i')/10 \quad \forall i < j
\end{equation}
and
\begin{equation}\label{eq:proof-cost-robustify-dwjwi-part-1}
    10^{t[w_j']} \leq d(w_j,w_i)/10 \quad \forall i > j.
\end{equation}
Now, for every $1 \leq i < j \leq r$, we conclude
\begin{equation*}
    10\cdot 10^{t[w_j']} \leq d(w_i',w_j) \leq  d(w_i,w_j) + d(w_i,w_i') \leq d(w_i,w_j) + 10^{t[w_i']}/2 \leq \frac{11}{10} \cdot d(w_i,w_j).
\end{equation*}
The first inequality holds by \Cref{eq:proof-cost-robustify-dwjwiprime-part-1}, the third one holds by \Cref{lem:robustness-property-1-part-1}  (note that $w_i' = $ \MakeRbst$(w_i)$) and the last one holds by \Cref{eq:proof-cost-robustify-dwjwi-part-1} (by exchanging $i$ and $j$ in this equation since $i < j$).
This concludes for each $i < j$, we have $d(w_i,w_j) > 2 \cdot 10^{t[w_j']}$.
The same inequality holds for each $i>j$ by \Cref{eq:proof-cost-robustify-dwjwi-part-1}.
As a result,
$d(w_i,w_j) > 2 \cdot 10^{t[w_j']}$ for all $i \neq j$.
This concludes
$$\text{Ball}_{10^{t[w_j']}}^\calP(w_j) \subseteq C_{w_j}(\calW, \calP), $$
for every $1 \leq j \leq r$.
Then, we can apply \Cref{lem:robustness-property-2-part-1}, we have
\begin{equation}\label{eq:proof-cost-robustify-cost-wjprime-less-3-over-2-part-1}
   \cost{w_j',C_{w_j}(\calW, \calP)} \leq \frac{3}{2}\cdot  \cost{w_j, C_{w_j}(\calW, \calP)} \quad \forall 1 \leq j \leq r. 
\end{equation}
Finally, since $C_{w_j}(\calW, \calP)$ for $1 \leq j \leq k$ form a partition of $\calP$, we conclude
\begin{eqnarray*}
    \cost{\calU, \calP} &=& 
    \sum_{j=1}^k \cost{\calU, C_{w_j}(\calW, \calP)}
    \leq
    \sum_{j=1}^k \cost{w_j', C_{w_j}(\calW, \calP)} \\
    &=&
    \sum_{j=1}^r \cost{w_j', C_{w_j}(\calW, \calP)} + \sum_{j=r+1}^k \cost{w_j', C_{w_j}(\calW, \calP)} \\
    &\leq&
    \frac{3}{2}\cdot  \sum_{j=1}^r \cost{w_j, C_{w_j}(\calW, \calP)} + \sum_{j=r+1}^k \cost{w_j, C_{w_j}(\calW, \calP)} \\
    &\leq&
    \frac{3}{2}\cdot  \sum_{j=1}^k \cost{w_j, C_{w_j}(\calW, \calP)}
    =\frac{3}{2}\cdot  \cost{\calW, \calP}.
\end{eqnarray*}
The second inequality holds by \Cref{eq:proof-cost-robustify-cost-wjprime-less-3-over-2-part-1}.

\subsection{Proof of \Cref{lem:cost-well-sep-part-1}}

Since, the point-set $\calP$ is fixed, denote $C_v(\calV,\calP)$ by $C_v$ for simplicity.
If $d(u,v)=0$, the statement is trivial.
Now assume $d(u,v) \geq 1$.
Since $\calU$ is robust, then $u$ is $t$-robust where $t$ is the smallest integer such that
$ 10^t \geq d(u,\calU-u)/200 \geq (\gamma/200)\cdot d(u,v) = 20\cdot d(u,v) $.
The second inequality is because $(u,v)$ is well-separated pair and the equality is because $\gamma = 4000$.
Assume $t^*$ is the integer satisfying
\begin{equation}\label{eq:definition-tstar-part-1}
   20\cdot d(u,v) \leq 10^{t^*} < 10\cdot 20\cdot d(u,v). 
\end{equation}
Note that $t^* \geq 1$. Since $t^* \leq t$ and $u$ is $t$-robust, then it is also $t^*$-robust which means there is a $t^*$-robust sequence 
$(p_0,p_1,\ldots,p_{t^*})$ such that $p_0=u$.
For Simplicity, assume $B_i = \text{Ball}_{10^i}^\calP(p_i)$ for $0 \leq i \leq t^*$.

\begin{claim}\label{claim:Bi-subseteq-Cv-part-1}
    For each $0 \leq i \leq t^*$, we have $B_i \subseteq C_v$.
\end{claim}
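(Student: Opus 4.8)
For each $0 \leq i \leq t^*$, we have $B_i \subseteq C_v$, where $B_i = \text{Ball}_{10^i}^\calP(p_i)$, $(p_0, \ldots, p_{t^*})$ is a $t^*$-robust sequence with $p_0 = u$, $(u,v)$ is a well-separated pair w.r.t. $(\calU, \calV)$, and $10^{t^*}$ is comparable to $20 \cdot d(u,v)$ as in~\eqref{eq:definition-tstar-part-1}.

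**Proof proposal:**

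\begin{proof}
The plan is to fix any $0 \leq i \leq t^*$ and any point $q \in B_i$, and show that $v$ is the closest center to $q$ in $\calV$, i.e.~$\pi_\calV(q) = v$, which gives $q \in C_v$. First I would bound $d(q, v)$ from above: by \Cref{lem:robustness-property-1-part-1} we have $d(p_0, p_i) \leq 10^i/2 \leq 10^{t^*}/2$, so by the triangle inequality $d(q, v) \leq d(q, p_i) + d(p_i, p_0) + d(p_0, v) \leq 10^i + 10^{t^*}/2 + d(u,v) \leq 10^{t^*} + 10^{t^*}/2 + 10^{t^*}/20$, using $20 \cdot d(u,v) \leq 10^{t^*}$ from~\eqref{eq:definition-tstar-part-1}. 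Hence $d(q,v) = O(10^{t^*})$; more precisely something like $d(q,v) \leq 2 \cdot 10^{t^*}$.

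Next I would lower-bound $d(q, v')$ for every other center $v' \in \calV - v$. The key is that $(u,v)$ well-separated gives $d(v, \calV - v) \geq \gamma \cdot d(u,v)$, so $d(v, v') \geq \gamma \cdot d(u,v) = 4000 \cdot d(u,v) \geq 200 \cdot 10^{t^*}$ (since $10^{t^*} < 200 \cdot d(u,v)$ from~\eqref{eq:definition-tstar-part-1}). Then by the triangle inequality, $d(q, v') \geq d(v, v') - d(v, q) \geq 200 \cdot 10^{t^*} - 2 \cdot 10^{t^*} \gg 2 \cdot 10^{t^*} \geq d(q, v)$. Therefore $v$ is strictly closer to $q$ than any $v' \in \calV - v$, so $\pi_\calV(q) = v$ and $q \in C_v$. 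Since $q \in B_i$ was arbitrary, $B_i \subseteq C_v$, and since $i$ was arbitrary the claim follows.

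The main obstacle — really just a bookkeeping point — is tracking the constants carefully so that the gap between the upper bound on $d(q,v)$ (on the order of a few times $10^{t^*}$) and the lower bound on $d(q, v')$ (on the order of $\gamma \cdot d(u,v) \approx 200 \cdot 10^{t^*}$, using $\gamma = 4000$) is genuinely strict, and in particular making sure the chain of triangle inequalities through $p_i$ and $p_0 = u$ uses only facts already established in \Cref{lem:robustness-property-1-part-1} (namely $d(p_0, p_i) \leq 10^i/2$) together with $q \in B_i \Rightarrow d(q, p_i) \leq 10^i \leq 10^{t^*}$. No deeper structural argument is needed; robustness of $\calU$ enters only through the existence of the $t^*$-robust sequence, and well-separatedness of $(u,v)$ is what creates the large separation between $v$ and the rest of $\calV$.
\end{proof}
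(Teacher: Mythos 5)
Your proposal follows essentially the same route as the paper: upper-bound $d(q,v)$ by $2\cdot 10^{t^*}$ via the triangle inequality through $p_i$ and $p_0=u$, then use well-separation to show every $v'\in\calV-v$ is much farther from $q$. One small arithmetic slip: from $10^{t^*}<200\cdot d(u,v)$ you get $\gamma\cdot d(u,v)=4000\cdot d(u,v)>20\cdot 10^{t^*}$, not $\geq 200\cdot 10^{t^*}$ (indeed $d(u,v)\leq 10^{t^*}/20$ forces $4000\cdot d(u,v)\leq 200\cdot 10^{t^*}$); the argument still closes, since $20\cdot 10^{t^*}-2\cdot 10^{t^*}>2\cdot 10^{t^*}\geq d(q,v)$, which is exactly the margin the paper uses.
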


\begin{proof}
    Assume $q \in B_i$. Then,
    \begin{eqnarray*}
        d(q,v) &\leq& d(q,p_i) + d(p_i,u) + d(u,v)
        = d(q,p_i) + d(p_i,p_0) + d(u,v) \\
        &\leq& 10^i + 10^i/2 + 10^{t^*}/20
        \leq 2 \cdot 10^{t^*}.
    \end{eqnarray*}
    For the second inequality, we used $q \in B_i$, \Cref{lem:robustness-property-1-part-1} and \Cref{eq:definition-tstar-part-1}.
    Now, for every $v' \in \calV - v$ we have
    $ d(v,v') \geq d(v,\calV-v) \geq 4\cdot10^3\cdot  d(u,v) > 10^{t^*+1} \geq 2\cdot d(q,v) $.
    The second inequality is by $(u,v)$ being well-separated (and $\gamma=4000$), the third one is by \Cref{eq:definition-tstar-part-1}.
    This concludes $d(q,v') \geq d(v,v') - d(q,v) > d(q,v)$ which means $v$ is the closest center to $q$ in $\calV$. So, $q \in B_i$ and finally $B_i \subseteq C_v$.
\end{proof}

\noindent
Now, according to the definition of $t$-robust sequence we have two cases.

\noindent
\textbf{Case 1:} $\avcost{p_{t^*}, B_{t^*}} \geq 10^{t^*}/5$. In this case, we have $p_{t^*-1} = p_{t^*}$ and
\begin{eqnarray*}
    d(v,p_{t^*}) &=& d(v,p_{t^*-1})
    \leq d(v,p_0) + d(p_0,p_{t^*-1})
    = d(v,u) + d(p_0,p_{t^*-1}) \\
    &\leq& 10^{t^*}/20 + 10^{t^*-1}/2
    = 10^{t^*}/10.
\end{eqnarray*}
The second inequality holds by \Cref{lem:robustness-property-1-part-1} and \Cref{eq:definition-tstar-part-1}.
So,
\begin{equation}\label{eq:d-v-ptstar-part-1}
    d(v,p_{t^*}) \leq 10^{t^*}/10.
\end{equation}
For every $q \in C_v - B_{t^*}$, we have
$ d(q,v) \geq d(p_{t^*},q) - d(v,p_{t^*}) \geq 10^{t^*} - d(v,p_{t^*}) \geq d(v, p_{t^*}) $.
The second inequality follows by $q \in C_v-B_{t^*}$ and the last inequality follows by
\Cref{eq:d-v-ptstar-part-1}.
So, 
$ d(q,v) \geq (d(q,v) + d(v,p_{t^*}))/2 \geq d(q,p_{t^*})/2 $,
which implies
\begin{equation}\label{eq:cost-v-outside-part-1}
    \cost{v, C_v - B_{t^*}} \geq \frac{\cost{p_{t^*}, C_v - B_{t^*}}}{2}.
\end{equation}
We also have
\begin{equation}\label{eq:avregcost-geq-2d-part-1}
   \avcost{p_{t^*}, B_{t^*}} \geq 10^{t^*} / 5 \geq 2\cdot d(v,p_{t^*}), 
\end{equation}
by the assumption of this case and \Cref{eq:d-v-ptstar-part-1}.
Now, we conclude 
\begin{eqnarray*}
    \cost{v,C_v} &=& \cost{v, B_{t^*}} + \cost{v, C_v-B_{t^*}} \\
    &=& |B_{t^*}| \cdot \avcost{v, B_{t^*}} + \cost{v, C_v-B_{t^*}} \\
    &\geq& |B_{t^*}| \cdot (\avcost{p_{t^*}, B_{t^*}} - d(v,p_{t^*})) + \cost{v, C_v-B_{t^*}} \\
    &\geq& |B_{t^*}| \cdot \frac{\avcost{p_{t^*}, B_{t^*}}}{2} + \cost{v, C_v-B_{t^*}} \\
    &=& \frac{\cost{p_{t^*}, B_{t^*}}}{2} + \cost{v, C_v-B_{t^*}} \\
    &\geq& \frac{\cost{p_{t^*}, B_{t^*}}}{2} + \frac{\cost{p_{t^*}, C_v - B_{t^*}}}{2}
    = \frac{\cost{p_{t^*}, C_v}}{2}.
\end{eqnarray*}
The first inequality is triangle inequality, the second inequality holds by \Cref{eq:avregcost-geq-2d-part-1} and the last inequality holds by \Cref{eq:cost-v-outside-part-1}.
Finally, since $B_{t^*} \subseteq C_v$ (by \Cref{claim:Bi-subseteq-Cv-part-1}), we can apply \Cref{lem:robustness-property-2-part-1} to get
$$ \cost{u,C_v} = \cost{p_0,C_v} \leq \frac{3}{2} \cdot \cost{p_{t^*},C_v} \leq \frac{3}{2}\cdot 2 \cdot  \cost{v,C_v} = 3 \cdot \cost{v,C_v}. $$

\noindent
\textbf{Case 2:} $\avcost{p_{t^*}, B_{t^*}} < 10^{t^*}/5$. In this case, for every $q \in C_v - B_{t^*}$,
$$ d(q,p_{t^*-1}) \geq d(q,p_{t^*}) - d(p_{t^*},p_{t^*-1}) \geq 10^{t^*} - 10^{t^*}/2 = 10^{t^*}/2. $$
The second inequality holds by $q \in C_v - B_{t^*}$ and \Cref{lem:robustness-property-1-part-1}.
So,
\begin{equation}\label{eq:dqpt-geq-10t-part-1}
    d(q,p_{t^*-1}) \geq 10^{t^*}/2.
\end{equation}
We also have
\begin{eqnarray*}
    d(v,p_{t^*-1}) \leq d(v,u) + d(u,p_{t^*-1})
    = d(v,u) + d(p_0,p_{t^*-1})
    \leq 10^{t^*}/20 + 10^{t^*-1}/2
    = 10^{t^*}/10.
\end{eqnarray*}
The second inequality holds by \Cref{eq:definition-tstar-part-1} and \Cref{lem:robustness-property-1-part-1}.
So,
\begin{equation}\label{eq:dvpt-leq-10t-part-1}
    d(v,p_{t^*-1}) \leq 10^{t^*}/10.
\end{equation}
Combining \Cref{eq:dqpt-geq-10t-part-1} and \Cref{eq:dvpt-leq-10t-part-1}, we have
$$ d(v,p_{t^*-1}) \leq 10^{t^*}/10 = \frac{1}{5}\cdot 10^{t^*}/2 \leq \frac{1}{5}\cdot d(q,p_{t^*-1}), $$
which concludes
$$  \frac{d(q,p_{t^*-1}) }{d(q,v)} \leq \frac{d(q,p_{t^*-1}) }{d(q,p_{t^*-1}) - d(p_{t^*-1},v)} \leq \frac{d(q,p_{t^*-1}) }{d(q,p_{t^*-1}) - \frac{1}{5}\cdot d(q,p_{t^*-1})} = 5/4 \leq 2. $$
So, we have
$d(q,p_{t^*-1}) \leq 2 \cdot d(q,v) $.
Using this inequality for all  $q \in C_v - B_{t^*}$, we conclude
\begin{equation}\label{eq:cost-ptstarminusone-compare-to-v-part-1}
    \cost{p_{t^*-1}, C_v-B_{t^*}} \leq 2 \cdot \cost{v, C_v-B_{t^*}}.
\end{equation}
We also have
\begin{equation}\label{eq:cost-tstar-1-is-les-than-2-times-cost-of-v-part-1}
   \cost{p_{t^*-1}, B_{t^*}}
= \OPT^{B_{t^*}+p_{t^*}}_{1}(B_{t^*})
\leq 2 \cdot \OPT_{1}(B_{t^*})  
\leq 2 \cdot \cost{v, B_{t^*}}. 
\end{equation}
The equality holds by the definition of $p_{t^*-1}$ in a $t^*$-robust sequence, the first inequality holds by \Cref{lem:projection-lemma-part-1} and $\cost{B_{t^*} + p_{t^*}, B_{t^*}} = 0$ (we need this argument since we do not know if $v$ is inside the current space $\calP$).
Hence,
\begin{eqnarray*}
    \cost{p_{t^*-1}, C_v} &=& \cost{p_{t^*-1}, B_{t^*}} + \cost{p_{t^*-1}, C_v - B_{t^*}} \\
    &\leq& 2 \cdot \cost{v, B_{t^*}} + 2\cdot  \cost{v, C_v - B_{t^*}}
    = 2 \cdot \cost{v, C_v}.
\end{eqnarray*}
The equality holds since $B_{t^*} \subseteq C_v$ (by \Cref{claim:Bi-subseteq-Cv-part-1}) and 
the first inequality holds by \Cref{eq:cost-ptstarminusone-compare-to-v-part-1} and \Cref{eq:cost-tstar-1-is-les-than-2-times-cost-of-v-part-1}.
Finally, applying \Cref{lem:robustness-property-2-part-1} implies
$$ \cost{u, C_v} = \cost{p_0, C_v} \leq \frac{3}{2}\cdot \cost{p_{t^*-1}, C_v} \leq \frac{3}{2}\cdot 2 \cdot \cost{v, C_v} = 3 \cdot \cost{u, C_v}. $$
In both cases, we showed that  $\cost{u, C_v} \leq 3\cdot  \cost{v, C_v}$ which completes the proof.

\subsection{Proof of \Cref{lem:projection-lemma-part-1}}

Let $\calV^\star \subseteq \ground$ be of size $k$ such that $\cost{\calV^\star, \calP} = \OPT_k(\calP)$.
Assume $\calV = \pi_{\calU}(\calV^\star)$.
Obviously $\calV$ is of size at most $k$ and $\OPT_k^\calU(\calP) \leq \cost{\calV, \calP}$.
For each $p \in \calP$, let $v_p^\star$ and $u_p$ be the projection of $p$ onto $\calV^\star$ and $\calU$ respectively.
Also assume $v_p = \pi_\calU(v^\star_p)$.
Hence,
$d(p,v_p) \leq d(p,v_p^\star) + d(v_p^\star, v_p) \leq d(p,v_p^\star) + d(v_p^\star, u_p) \leq d(p,v_p^\star) + d(v_p^\star, p) + d(p, u_p) = d(p, \calU) + 2\cdot d(p, \calV^\star). $
Now, assign each $p$ to $v_p$ in solution $\calV$, which concludes
$$\cost{\calV, \calP} \leq \sum_{p \in \calP}  d(p, \calV) \leq \sum_{p \in \calP} (d(p, \calU) + 2 \cdot d(p, \calV^\star)) = \cost{\calU, \calP} + 2 \cdot \cost{\calV^\star, \calP}.$$
Hence, $\OPT_k^\calU(\calP) \leq \cost{\calV, \calP} \leq \cost{\calU, \calP} + 2 \cdot \OPT_k(\calP)$.

\subsection{Proof of \Cref{lem:lazy-updates-part-1}}

Assume $\calU \subseteq \ground$ of size at most $k$ is such that $\cost{\calU, \calP} = \OPT_{k}(\calP)$.
Define
$\calU' = \calU + (\calP \oplus \calP')$.
Obviously, $\calU'$ is a feasible solution for the $(k+s)$-median problem on $\calP'$ and since $\calU'$ contains $\calP \oplus \calP'$, we conclude
$$ \OPT_{k+s}(\calP') \leq  \cost{\calU',\calP'} = \cost{\calU + (\calP \oplus \calP'), \calP'} \leq \cost{\calU, \calP} = \OPT_{k}(\calP). $$

\section{Proof of \Cref{lm:intro}}
\label{sec:proof:lm:intro}

Assume the number of well-separated pairs w.r.t.~$(\calU_\init, \calV)$ is $k-m$ for some $m \in [0,k]$.
We call a $u \in \calU_\init$ \textit{good} if
\begin{itemize}
    \item $u$ forms a well-separated pair with a $v \in \calV$; and
    \item $v$ is not the closest center to any inserted point in the epoch, i.e., all of the points assigned to $v$ in $\calP_\final$ (denoted by $C_{v}(\calV, \calP_\final)$) are in $\calP_\init$.
\end{itemize}
Assume we have $g$ many good centers in $\calU_\init$ and consider orderings $\calU_\init = \{ u_1,u_2, \ldots , u_k\}$ and $\calV = \{ v_1,v_2, \ldots , v_k\}$, such that for all $i \in [1,g]$, the center $u_i$ is good and $v_i$ is such that $(u_i,v_i)$ is a well-separated pair.
Since $\left| \calP_\final \oplus \calP_\init \right| \leq \ell + 1 $, there are at most $\ell+1$ many centers $v$ such that $C_{v}(\calV, \calP_\final) \not\subseteq \calP_\init$.
So, we have at most $m + \ell + 1$ centers in $\calU_\init$ that are not good.

Now, assume $\calW^\star = \{ u_1, \ldots , u_g, v_{g+1},\ldots, v_k\}$ is derived by swapping at most $k - g \leq m + \ell + 1$ many centers in $\calU_\init$.
According to \Cref{lem:cost-well-sep-part-1} for $\calP = \calP_\init$ and $\calU = \calU_\init$, we have
\begin{eqnarray*}    
    \cost{\calW^\star, \calP_\final}
    &\leq&
    \sum_{i=1}^g \cost{u_i, C_{v}(\calV, \calP_\final)} + \sum_{i=g+1}^k \cost{v_i, C_{v}(\calV, \calP_\final)} \\
    &\leq &
    \sum_{i=1}^g 3 \cdot \cost{v_i, C_{v}(\calV, \calP_\final)} + \sum_{i=g+1}^k \cost{v_i, C_{v}(\calV, \calP_\final)} \\
    &\leq&
    3 \cdot \cost{\calV, \calP_\final}
\end{eqnarray*}
As a result, there exist a set of $k$ centers $\calW^\star$ such that $ \cost{\calW^\star, \calP_\final} \leq 3 \cdot \cost{\calV, \calP_\final}$ and $|\calW^\star \oplus \calU_\init| \leq m + \ell + 1$.
In order to complete the proof of lemma, it suffices to show $m \leq 4 \cdot (\ell + 1)$.

According to \Cref{lem:generalize-lemma-7.3-in-FLNS21-part-1} for $\calP=\calP_\init$, $\calU = \calU_\init$ and $r = 0$, there exist a $\tilde{\calU} \subseteq \calU_\init$ of size at most $k - \lfloor m/4 \rfloor$ such that
$\cost{\tilde{\calU}, \calP_\init} \leq 6\gamma \cdot \left( \cost{\calU_\init, \calP_\init} + \cost{\calV, \calP_\init} \right)$.
By hypothesis we have $\cost{\calV, \calP_\init} \leq 18 \cdot \cost{\calU_\init, \calP_\init}$. Hence,
\begin{eqnarray*}
  \cost{\calU', \calP_\init} 
  &\leq& 
  6\gamma \cdot \left( \cost{\calU_\init, \calP_\init} + \cost{\calV, \calP_\init} \right) \\
  &\leq& 6\gamma \cdot (1 + 18) \cdot \cost{\calU_\init, \calP_\init} \\  
  &=& 456000 \cdot \cost{\calU_\init, \calP_\init}.
\end{eqnarray*}
Finally, since we assumed that $\calU_\init$ is maximally $\ell$-stable, we conclude that $\lfloor m / 4 \rfloor \leq \ell$ which shows $m \leq 4\cdot (\ell + 1)$.

    \newpage
    \part{Full Version}
    \label{part:full}
    
    \section{Preliminaries}

\subsection{Problem Setting}

Assume we have a ground metric space $\ground$ with distance function $d$ together with a subset $\calP \subseteq \ground$ which contains the current present weighted points.
In the $k$-median problem on $\calP$, we want to find $\calU \subseteq \calP$ of size $k$ (called centers) so as to minimize the objective function $\sum_{p \in \calP} w(p) \cdot d(p, \calU)$, where $w(p) > 0$ is the weight of $p$ and $d(p,\calU) = \min_{q \in \calU} d(p,q)$ is the distance of $p$ to $\calU$.
In the dynamic setting, the current space $\calP$ is changing. 
Each update is either deleting a point of $\calP$, or inserting a points of $\ground - \calP$ into $\calP$ and assign a weight to it.
The aim is to maintain a set $\calU \subseteq \calP$ of $k$ centers that is a good approximation of the optimum $k$-median in the current space $\calP$ at each point in time.

We consider the \textbf{improper} dynamic $k$-median problem,
which means even if a point is not present in $\calP$, we can pick it as a center in our main solution (i.e.~open it as a center).
In order to have an algorithm for the \textbf{proper} $k$-median problem that opens centers only present in the current set of points $\calP$ at any time, we use the following result of \cite{FOCS24kmedian}.

\begin{lemma}[Lemma 9.1 in \cite{FOCS24kmedian}]
    Given any fully dynamic $(k, p)$-clustering algorithm that maintains an $\alpha$-approximate
    improper solution, we can maintain a $2\alpha$-approximate proper solution while incurring a $O(1)$ factor
    in the recourse and an additive $\tilde{O}(n)$ factor in the update time.
\end{lemma}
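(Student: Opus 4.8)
The plan is to realize the reduction by the standard \emph{nearest-point projection}. At every point in time let $\calU \subseteq \ground$ with $|\calU| \le k$ be the improper solution maintained by the given $\alpha$-approximate algorithm, so that $\cost{\calU,\calP} \le \alpha \cdot \OPT_k(\calP)$, and define the proper solution $\calU' := \{\pi_{\calP}(u) : u \in \calU\}$, i.e.\ we replace each (possibly non-present) center $u$ by its closest point of the current input. Then $\calU' \subseteq \calP$ and $|\calU'| \le |\calU| \le k$, so $\calU'$ is a feasible solution for the proper problem, whose optimum in the notation of Section~\ref{sec:notations} is $\OPT_k^{\calP}(\calP)$; and since $\calP \subseteq \ground$ we have $\OPT_k(\calP) \le \OPT_k^{\calP}(\calP)$.

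The approximation guarantee is immediate. Fix any $p \in \calP$ and let $u^\star := \pi_{\calU}(p)$, so that $d(p,\calU) = d(p,u^\star)$. Because $p \in \calP$, we have $d(u^\star,\pi_\calP(u^\star)) = d(u^\star,\calP) \le d(u^\star,p)$, and hence, using that $\pi_\calP(u^\star) \in \calU'$ and the triangle inequality,
\[
d(p,\calU') \le d\bigl(p,\pi_\calP(u^\star)\bigr) \le d(p,u^\star) + d\bigl(u^\star,\pi_\calP(u^\star)\bigr) \le 2\, d(p,u^\star) = 2\, d(p,\calU).
\]
Summing over $p \in \calP$ gives $\cost{\calU',\calP} \le 2\,\cost{\calU,\calP} \le 2\alpha \cdot \OPT_k(\calP) \le 2\alpha \cdot \OPT_k^{\calP}(\calP)$, which is exactly the claimed $2\alpha$-approximation for the proper problem.

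The substance of the proof is maintaining $\calU'$ with only an $O(1)$ multiplicative blow-up in recourse and an additive $\tilde O(n)$ blow-up in update time. Changes originating from the improper algorithm are cheap: when the algorithm inserts a center $u$ into $\calU$ we insert $\pi_\calP(u)$ into $\calU'$, and when it removes $u$ we remove $\pi_\calP(u)$ if it is no longer the image of any remaining center --- each such operation touches $O(1)$ centers of $\calU'$ and, with an auxiliary data structure supporting ``nearest point of $\calP$'' queries (maintainable in $\tilde O(n)$ time per change of $\calP$ or of $\calU$), costs $\tilde O(n)$ time. The delicate case is an update to $\calP$ itself: inserting a point $q$ may make $q$ the new nearest point for many improper centers, and deleting a point $q$ forces re-projection of \emph{every} improper center whose image was $q$, of which there can be $\Omega(k)$. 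The fix is to maintain the projection lazily. On a deletion of $q$ one redirects all improper centers with image $q$ to a single common point, the nearest remaining point of $\calP$ to $q$, so that $\calU'$ changes by only $O(1)$; on an insertion of $q$ one re-projects a center $u$ onto $q$ only when $q$ is sufficiently closer to $u$ than its current image, and one notes that all such re-projections add just the single point $q$ to $\calU'$. One then re-verifies that under this lazy rule the invariant $d(u,\mathrm{image}(u)) = O(d(u,\calP))$ is preserved for every $u \in \calU$, which keeps $\cost{\calU',\calP}$ within the claimed factor of $\OPT_k^{\calP}(\calP)$, and bounds the amortized number of insertion-triggered re-projections, e.g.\ by a potential-function argument on $\sum_{u \in \calU}\log d(u,\mathrm{image}(u))$ combined with the bound $1 \le d(\cdot,\cdot) \le \Delta$.

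I expect the main obstacle to be precisely this combined bookkeeping for the lazy maintenance: one must simultaneously (i) keep $|\calU'| \le k$, (ii) keep $\cost{\calU',\calP} = O(\alpha) \cdot \OPT_k^{\calP}(\calP)$ with the right constant, and (iii) charge every modification of $\calU'$ triggered by an update of $\calP$ either to that single update or to a previously created image pointer, so that the amortized extra recourse is $O(1)$ per update while the extra time per update stays $\tilde O(n)$. Everything else --- feasibility, the factor-$2$ cost estimate above, and the $\tilde O(n)$ cost of refreshing the auxiliary nearest-neighbour structure after an update --- is routine. The statement then follows, and the same argument applies verbatim in the weighted setting and (with the $k$-means cost in place of the $k$-median cost) to $k$-means.
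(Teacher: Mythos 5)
This lemma is imported from prior work: the paper states it (as Lemma~\ref{lm:improper}) and explicitly defers the construction to Section~9 of~\cite{FOCS24kmedian}, describing it only as ``a projection scheme that for every set of improper centers $\calU$ returns a proper subset of the current set of points.'' Your proposal follows exactly that approach, and your static approximation argument is correct and is essentially the same triangle-inequality computation the paper carries out in Lemma~\ref{lem:proper-is-2-approx-of-improper}: project each $u\in\calU$ to $\pi_{\calP}(u)$, use $d(u^\star,\pi_\calP(u^\star))\le d(u^\star,p)$ for $p$'s own nearest center $u^\star$, and sum. So the feasibility and the factor $2\alpha$ for the \emph{exact} projection are fine.

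The gap is in the dynamic maintenance, which is the actual content of the lemma, and your sketch does not close it. Two concrete problems. First, your lazy deletion rule redirects every center whose image was the deleted point $q$ to $\pi_{\calP}(q)$, the nearest \emph{remaining neighbour of $q$}, not of the center $u$ itself; the triangle inequality only gives $d(u,\pi_\calP(q))\le 2d(u,q)+d(u,\calP)$, so if the invariant before the deletion was $d(u,\mathrm{image}(u))\le c\cdot d(u,\calP)$, after the redirect it is only $\le (2c+1)\cdot d(u,\calP)$. A chain of deletions that repeatedly kills the current image therefore blows the constant up geometrically, so the invariant $d(u,\mathrm{image}(u))=O(d(u,\calP))$ is not preserved with any fixed constant, and neither the $2\alpha$ bound nor even an $O(\alpha)$ bound follows. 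Second, once you maintain the projection only approximately, the factor in your cost computation becomes $1+C$ where $C$ is that invariant constant, not $2$; so even with a correct lazy scheme your argument yields $(1+C)\alpha$ for some $C>1$, whereas the statement claims $2\alpha$. (By contrast, the amortized-recourse concern you raise about orphaned images is handled by the standard trick of charging each removal from $\calU'$ to its earlier insertion, and insertions into $\calU'$ are bounded by the recourse of $\calU$ plus one per update of $\calP$; that part of your worry is not a real obstacle.) To make the proof complete you would need either a different redirection rule whose invariant constant does not degrade (e.g.\ periodically recomputing the true projection of a center and charging that work appropriately), or the actual scheme of~\cite{FOCS24kmedian}, which this paper does not reproduce.
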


In \cite{FOCS24kmedian}, the authors achieved this result by providing a projection scheme that for every set of improper centers $\calU$ returns a proper subset of current set of points. We refer the reader to Section 9 of \cite{FOCS24kmedian} to see the complete procedure.

\subsection{Notations}

By a simple scaling, we can assume that all of the distances in the metric space are between $1$ and a parameter $\Delta$.
We use $\ground$ for the points of the ground metric space and $\calP$ for the set of current present points.
For simplicity, for each set $S$ and element $p$, we denote $S\cup \{ p \}$ and $S \setminus \{p\}$ by $S+p$ and $S-p$ respectively.
Each $p \in \calP$ has a weight denoted by $w(p)$ and for every $S \subseteq \calP $, we define
$$w(S) := \sum_{p \in S} w(p).$$
For two point sets $\calP$ and $\calP'$, we use $\calP \oplus \calP'$ for their symmetric difference.
Note that if a point $p$ is present in both $\calP$ and $\calP'$, but its weight is different in these two sets, we also consider $p \in \calP \oplus \calP'$.
We assume the same for $\calP - \calP'$.
For each $ S \subseteq \ground $, we define $\pi_S: \ground \rightarrow S$ to be the projection function onto $S$. For each
$p \in \ground $ and $ S \subseteq \ground$, we define $d(p, S) := d(p, \pi_S(p))$. For each $\calU \subseteq \ground$ and $S \subseteq \calP $, we use 
$$\cost{\calU, S} := \sum\limits_{p \in S} w(p)\  d(p, \calU). $$
We also define
$$\avcost{\calU, S} := \frac{\cost{\calU, S}}{w(S)} . $$
Assume $\calP,\calC \subseteq \ground$.
For any integer $m \geq 0$, we denote the cost of the optimum $m$-median solution for $\calP$ where we can only open centers from $\calC$, by $\OPT_{m}^{\calC}(\calP)$, i.e.
$$\OPT^{\calC}_m(\calP) = \min\limits_{\substack{\calU \subseteq \calC, \\ |\calU|\leq m}} \cost{\calU, \calP} .$$
Whenever we do not use the superscript $\calC$, we consider $\calC$ is the underlying ground metric space $\ground$, i.e.
$$\OPT_m(\calP) = \min\limits_{\substack{\calU \subseteq \ground, \\ |\calU|\leq m}} \cost{\calU, \calP} .$$
Note that $\calU$ is a subset of the ground set $\ground$, but its cost is computed w.r.t.~$\calP$.

For each $\calU \subseteq \ground$ and $u \in \calU$, we define 
$$C_u(\calU, \calP) := \{ p\in\calP \mid \pi_\calU(p) = u \} $$
to be the points of $\calP$ assigned to the center $u$ in the solution $\calU$ (breaking ties arbitrarily).
For each point $p \in \ground$ and value $r \geq 0$, we define $\text{Ball}^\calP_r(p)$ to be the points in $\calP$ whose distances from $p$ are at most $r$, i.e.
$$ \text{Ball}^\calP_r(p) := \{ q \in \calP \mid d(p,q) \leq r \}. $$
Note that $p$ itself might not be in $\text{Ball}^\calP_r(p)$ since we only consider points of $\calP$ in this ball.

\subsubsection{Constant Parameters}\label{parameters}
Throughout the paper we use constant parameters $\beta,\gamma$ and $C$ for convenience.
$\beta = O(1)$ is the constant approximation of the algorithm for static $k$-median on $\calP$ in \cite{MettuP02} that runs in $\tilde{O}(|\calP| \cdot k)$ time.
The final values of $\gamma$ and $C$ are as follows. 
\begin{equation*}
   \gamma = 4000, \ \text{and}\ C = 12 \cdot 3 \cdot 10^5\gamma\beta^2 
\end{equation*}

\subsection{Robustness}

In this section we describe the notion of robustness.
This notion is first defined in \cite{soda/FichtenbergerLN21}.
We change the definition in order to be able to get linear update time while the main good properties derived from the previous definition remain correct up to a constant overhead in approximation parts.

\begin{definition}[$t$-robust sequence]
    Assume $(p_0,p_1,\ldots, p_t)$ is a sequence of $t+1$ points.
    Let $\calP \subseteq \ground$ and $B_i = \text{Ball}_{10^i}^\calP(p_i)$ for each $0 \leq i \leq t$.
    We call this sequence, $t$-robust w.r.t.~$\calP$ if for every $1 \leq i \leq t$,
    \begin{eqnarray*}
        p_{i-1} =
        \begin{cases}
            p_i \quad & \text{if} \ 
            \avcost{p_i, B_i} \geq 10^i / 5 \\
            q_i \quad & \text{Otherwise}
        \end{cases}
    \end{eqnarray*}
    where $q_i \in B_i + p_i$ must satisfy
    $$\cost{q_i,B_i} \leq \min \{ 3 \ \OPT_1(B_i), \cost{p_i, B_i} \}.\footnote{$q_i$ might be equal to $p_i$ itself.}$$
\end{definition}

Note that in this definition, points $p_i$ need not necessarily be inside $\calP$.
But, the balls $B_i = \text{Ball}_{10^i}^\calP(p_i)$
are considered as a subset of current $\calP$.
Also note that $q_i$ is picked from $B_i + p_i$, but its cost is compared to the optimum improper solution for $1$-median problem in $B_i$, i.e.~$\OPT_{1}(B_i)$.

\begin{lemma}\label{lem:robustness-property-1}
    Let $(p_0,p_1,\ldots , p_t)$ be a $t$-robust sequence and assume $B_j = \text{Ball}^\calP_{10^j}(p_j)$. Then, for every $ 1 \leq j \leq t$, we have
    $$ d(p_{j-1}, p_j) \leq  10^j/2,  \quad B_{j-1} \subseteq B_j \quad \text{and} \quad d(p_0, p_j) \leq 10^j/2. $$
\end{lemma}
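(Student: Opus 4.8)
The statement to prove is exactly the analogue of \Cref{lem:robustness-property-1-part-1}, but now in the weighted setting and with the modified definition of a $t$-robust sequence (where $q_i$ merely satisfies $\cost{q_i, B_i} \leq \min\{3\,\OPT_1(B_i), \cost{p_i, B_i}\}$ rather than being the exact minimizer). The plan is to essentially replay the proof of \Cref{lem:robustness-property-1-part-1}, tracking the weights and checking that the only property of $p_{i-1}$ that is actually used is the inequality $\cost{p_{i-1}, B_i} \leq \cost{p_i, B_i}$, which still holds under the new definition.

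First I would fix an index $j \in [1,t]$ and dispose of the easy case: if $\avcost{p_j, B_j} \geq 10^j/5$, then by definition $p_{j-1} = p_j$, so $d(p_{j-1}, p_j) = 0 \leq 10^j/2$ trivially. In the remaining case $\avcost{p_j, B_j} < 10^j/5$, and by definition $p_{j-1} = q_j$ with $\cost{p_{j-1}, B_j} \leq \cost{p_j, B_j}$. Now I would bound $d(p_j, p_{j-1})$ by a weighted averaging argument over $B_j$: since $w(B_j) \cdot d(p_j, p_{j-1}) = \sum_{p \in B_j} w(p)\, d(p_j, p_{j-1}) \leq \sum_{p \in B_j} w(p)\,(d(p_j, p) + d(p_{j-1}, p)) = \cost{p_j, B_j} + \cost{p_{j-1}, B_j}$, dividing by $w(B_j)$ gives $d(p_j, p_{j-1}) \leq \avcost{p_j, B_j} + \avcost{p_{j-1}, B_j} \leq 2\,\avcost{p_j, B_j} < \tfrac{2}{5}\cdot 10^j$, where the second inequality uses $\cost{p_{j-1}, B_j} \leq \cost{p_j, B_j}$. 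This establishes $d(p_{j-1}, p_j) \leq 10^j/2$.

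Next, for the containment $B_{j-1} \subseteq B_j$: take any $p \in B_{j-1} = \text{Ball}_{10^{j-1}}^\calP(p_{j-1})$. Then $d(p, p_j) \leq d(p, p_{j-1}) + d(p_{j-1}, p_j) \leq 10^{j-1} + \tfrac{2}{5}\cdot 10^j \leq 10^j$, so $p \in B_j$. Finally, for $d(p_0, p_j) \leq 10^j/2$, I would telescope: $d(p_0, p_j) \leq \sum_{i=1}^{j} d(p_{i-1}, p_i) \leq \tfrac{2}{5}\sum_{i=1}^{j} 10^i = \tfrac{2}{5}\cdot\tfrac{10^{j+1}-10}{9} \leq 10^j/2$, applying the per-step bound just derived at every scale $i \leq j$. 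One subtlety to be careful about here (and I expect this to be the only place requiring attention) is that the per-step bound $d(p_{i-1}, p_i) \leq \tfrac{2}{5}\cdot 10^i$ must be available for all $i \leq j$ simultaneously; since $j$ was arbitrary, it suffices to observe that the case analysis above goes through unchanged for each individual index, so there is no circularity — each $d(p_{i-1}, p_i)$ bound depends only on the definition at scale $i$, not on the containments. The main (minor) obstacle is simply making sure the weighted averaging is written correctly and that division by $w(B_j)$ is legitimate, i.e.\ that $B_j \neq \emptyset$; if $B_j = \emptyset$ then $\avcost{p_j, B_j}$ is a $0/0$ form, but in that degenerate case $\cost{p_j, B_j} = \cost{p_{j-1}, B_j} = 0$ and the conclusion $d(p_{j-1}, p_j) = 0$ should be arranged to hold by convention (e.g.\ $q_j = p_j$), which I would note explicitly.
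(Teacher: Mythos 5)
Your proposal is correct and follows essentially the same route as the paper's proof: the trivial case when $\avcost{p_j,B_j}\geq 10^j/5$, the weighted averaging argument using $\cost{p_{j-1},B_j}\leq\cost{p_j,B_j}$ to get $d(p_{j-1},p_j)\leq\tfrac{2}{5}\cdot 10^j$, the triangle inequality for $B_{j-1}\subseteq B_j$, and the telescoping geometric sum for $d(p_0,p_j)$. Your extra remarks on the empty-ball degenerate case and the non-circularity of the per-scale bounds are sensible additions but not needed beyond what the paper already does.
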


\begin{proof}
    The first part is trivial
    if $\avcost{p_j, B_j} \geq 10^j / 5$ since $p_j=p_{j-1}$.
    Now, assume $\avcost{p_j, B_j} < 10^j / 5$. By the definition of a $t$-robust sequence, we know 
    $\cost{p_{j-1}, B_j} \leq \cost{p_{j}, B_j}$. Hence,
    \begin{eqnarray*}
        d(p_j,p_{j-1}) 
        &=& \sum_{p \in B_j} \frac{w(p)}{w(B_j)} d(p_j,p_{j-1}) \\
        &\leq&
        \sum_{p \in B_j} \frac{w(p)}{w(B_j)} (d(p_j,p) + d(p_{j-1},p)) \\
        &=&
        \avcost{p_j, B_j} + \avcost{p_{j-1}, B_j} \\
        &\leq&
        2 \ \avcost{p_j, B_j} \\
        &\leq&
        \frac{2}{5}\cdot 10^j.
    \end{eqnarray*}
    For the second part, if $p \in B_{j-1}$, we have
    $$d(p,p_{j}) \leq d(p,p_{j-1}) + d(p_{j-1},p_j) \leq 10^{j-1} + \frac{2}{5}\cdot 10^j \leq 10^j,$$
    which implies $B_{j-1} \subseteq B_j$.
    For the last part,
    $$ d(p_0,p_j) \leq \sum_{i=1}^j d(p_{i-1},p_i) \leq \frac{2}{5} \sum_{i=1}^j 10^i = \frac{2}{5}\cdot \frac{10^{j+1} - 2}{9} \leq 10^j / 2. $$
\end{proof}

\begin{lemma}\label{lem:robustness-property-2}
    Let $(p_0,p_1,\ldots , p_t)$ be a $t$-robust sequence and assume $B_j = \text{Ball}^\calP_{10^j}(p_j)$. Then, for every $0 \leq i \leq t$ and every $S \subseteq \calP $ containing $B_i$ (i.e.~$B_i \subseteq S$), we have
    $$ \cost{p_0, S} \leq \frac{3}{2} \ \cost{p_i,S} $$
\end{lemma}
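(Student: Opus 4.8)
The plan is to mirror the argument already used for the unweighted statement (\Cref{lem:robustness-property-2-part-1}), observing that every step there is either a pointwise inequality between distances — which survives multiplication by the nonnegative weights $w(\cdot)$ — or an application of the defining inequality of a $t$-robust sequence, which in the weighted definition still yields $\cost{p_{i-1},B_i}\le\cost{p_i,B_i}$ for every $i\in[1,t]$: trivially when $p_{i-1}=p_i$, and because $q_i$ is chosen with $\cost{q_i,B_i}\le\min\{3\,\OPT_1(B_i),\cost{p_i,B_i}\}\le\cost{p_i,B_i}$ otherwise. First I would collect two ingredients. (1) From \Cref{lem:robustness-property-1}, $B_{i-1}\subseteq B_i$ and $d(p_0,p_i)\le 10^i/2$ for all $i\in[1,t]$. (2) A \emph{pointwise} comparison outside a ball: for any $j\in[1,t]$ and any $q\in\calP\setminus B_j$, the definition $B_j=\text{Ball}^\calP_{10^j}(p_j)$ gives $d(q,p_j)\ge 10^j\ge 2\,d(p_0,p_j)$, so by the triangle inequality $d(q,p_0)\le d(q,p_j)+d(p_j,p_0)\le\tfrac32 d(q,p_j)$, i.e. $d(q,p_j)\ge\tfrac23 d(q,p_0)$; multiplying by $w(q)\ge 0$ and summing over any $S'\subseteq\calP\setminus B_j$ yields $\cost{p_j,S'}\ge\tfrac23\,\cost{p_0,S'}$.

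Then I would run the telescoping (shell-peeling) argument. Fix $i$ and $S$ with $B_i\subseteq S\subseteq\calP$; the case $i=0$ is trivial. Split $\cost{p_i,S}=\cost{p_i,B_i}+\cost{p_i,S\setminus B_i}$, bound the first term below by $\cost{p_{i-1},B_i}$ via the $t$-robust inequality and the second by $\tfrac23\cost{p_0,S\setminus B_i}$ via ingredient (2) with $j=i$. Next rewrite $\cost{p_{i-1},B_i}=\cost{p_{i-1},B_{i-1}}+\cost{p_{i-1},B_i\setminus B_{i-1}}$ (legal since $B_{i-1}\subseteq B_i$), bound the first part below by $\cost{p_{i-2},B_{i-1}}$ and the second by $\tfrac23\cost{p_0,B_i\setminus B_{i-1}}$ (ingredient (2) with $j=i-1$, since $B_i\setminus B_{i-1}\subseteq\calP\setminus B_{i-1}$), and recombine the two $\tfrac23$-terms into $\tfrac23\cost{p_0,S\setminus B_{i-1}}$. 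Iterating this peeling down to level $1$ gives $\cost{p_i,S}\ge\cost{p_0,B_1}+\tfrac23\cost{p_0,S\setminus B_1}\ge\tfrac23\cost{p_0,S}$, which rearranges to the claimed $\cost{p_0,S}\le\tfrac32\cost{p_i,S}$.

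The only place the weighted setting matters is bookkeeping: one must ensure that every inequality invoked is either the defining inequality of a $t$-robust sequence — a statement about \emph{total} costs over a fixed point set, hence independent of the weights — or a pointwise distance inequality multiplied by a single nonnegative weight $w(q)$ before being summed. In particular one should avoid phrasing the ``outside the ball'' comparison in terms of $\avcost{\cdot,\cdot}$, since the normalizing weights would differ between $p_0$ and $p_i$; working with $\cost{\cdot,\cdot}$ throughout sidesteps this. Beyond this, there is no genuine obstacle — the write-up is a routine adaptation of the proof of \Cref{lem:robustness-property-2-part-1} with weights carried along — so if I had to name a ``hard part'' it would merely be taking care that the telescoping sum correctly accounts for the nested shells $B_1\subseteq\cdots\subseteq B_i$ rather than assuming a single ball.
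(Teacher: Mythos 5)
Your proposal is correct and follows essentially the same route as the paper's proof: the same two ingredients (the defining inequality $\cost{p_{j-1},B_j}\le\cost{p_j,B_j}$ and the pointwise bound $d(q,p_j)\ge\tfrac23 d(q,p_0)$ for $q\notin B_j$) feeding the same shell-peeling telescoping down to $B_1$. The weighted bookkeeping you describe is exactly how the paper handles it, since all inequalities are either pointwise or total-cost statements over fixed point sets.
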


\begin{proof}
    Since $(p_0,p_1,\ldots , p_t)$ is $t$-robust, for every $1 \leq j \leq t$, we know
    \begin{equation}\label{eq:pj-1-better-than-pj}
        \cost{p_{j}, B_j} \geq \cost{p_{j-1}, B_j}.
    \end{equation}
    Assume $q \in \calP \setminus B_j$. Then, using \Cref{lem:robustness-property-1},
    $$ d(q,p_j) \geq 10^j \geq 2\  d(p_0,p_j). $$
    Hence,
    $$ \frac{3}{2}\  d(q,p_j)
    = d(q,p_j) + \frac{1}{2}\ d(q,p_j)
    \geq d(q,p_j) + d(p_0,p_j)
    \geq d(q,p_0),$$
    which means
    \begin{equation}\label{eq:dqpj-compare-to-dqp0}
        d(q,p_j) \geq \frac{2}{3}\ d(q,p_0).
    \end{equation}
    Finally, Since $B_1 \subseteq \cdots \subseteq B_i \subseteq S$ (by \Cref{lem:robustness-property-1}) we can apply \Cref{eq:pj-1-better-than-pj} and \Cref{eq:dqpj-compare-to-dqp0} repeatedly to get
    \begin{eqnarray*}
        \cost{p_i,S} &=& \cost{p_i, B_i} + \cost{p_i, S \setminus B_i} \\
        &\geq&
        \cost{p_{i-1}, B_i} + \frac{2}{3}\  \cost{p_0, S \setminus B_i} \\
        &=&
        \cost{p_{i-1}, B_{i-1}} + \cost{p_{i-1}, B_i \setminus B_{i-1}} + \frac{2}{3} \ \cost{p_0, S \setminus B_i} \\
        &\geq&
        \cost{p_{i-2}, B_{i-1}} + \frac{2}{3}\  \cost{p_{0}, B_i \setminus B_{i-1}} + \frac{2}{3}\  \cost{p_0, S \setminus B_i} \\
        &=&
        \cost{p_{i-2}, B_{i-1}} + \frac{2}{3}\  \cost{p_0, S \setminus B_{i-1}} \\
        &\vdots& \\
        &\geq&
        \cost{p_{0}, B_{1}} + \frac{2}{3} \ \cost{p_0, S \setminus B_{1}} \\
        &\geq&
        \frac{2}{3}\  \cost{p_0, S} .
    \end{eqnarray*}
\end{proof}

\begin{definition}[$t$-robust point]
    We say $p$ is $t$-robust w.r.t.~$\calP$ whenever there exist a $t$-robust sequence $(p_0,p_1,\ldots, p_t)$ w.r.t.~$\calP$ such that $p_0 = p$.
\end{definition}

\begin{definition}[robust solution]\label{def:bounded-robust}
    Assume $\calU$ is a set of centers. We call it robust w.r.t.~$\calP$, if for every $u \in \calU$, the following condition holds.
    \begin{itemize}
        \item $u$ is $t$-robust w.r.t.~$\calP$, where $t$ is the smallest integer satisfying 
        $10^t \geq d(u,\calU - u  )/ 200. $
    \end{itemize}
\end{definition}

\subsection{Well-Separated Pairs}

\begin{definition}[well-separated pair]
    Suppose $ \calU$ and $\calV $ are two sets of centers.
    For $ u \in \calU $ and $ v \in \calV $, we call $(u,v)$ a well-separated pair with respect to $(\calU, \calV)$, whenever the following inequalities hold.
    \begin{eqnarray*}
        d(u, \calU - u) &\geq& \gamma \ d(u, v) \\
        d(v, \calV - v) &\geq& \gamma \ d(u, v)
    \end{eqnarray*}
\end{definition}

It is easy to see that each point $u \in \calU$ either forms a well-separated pair with a unique $v \in \calV$, or it does not form a well-separated pair with any center in $\calV$.
This can be shown by a simple argument using triangle inequality (assuming the value of $\gamma$ is large enough).

\subsection{Relation Between Improper and Proper Optimum Values}

For some technical reasons, in some part of our algorithm, we look for proper solutions and for some parts we look for improper solutions. Here, we provide a relation between the cost of an improper and a proper solution, which we will use in our analyses.

\begin{lemma}\label{lem:proper-is-2-approx-of-improper}
    Assume $\ground$ is the underlying ground metric space and $\calP \subseteq \ground$.
    Then, we have
    $$ \OPT_k(\calP) \leq  \OPT_k^{\calP}(\calP) \leq 2 \ \OPT_k(\calP). $$
\end{lemma}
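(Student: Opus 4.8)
The left inequality is immediate. Since $\calP \subseteq \ground$, every candidate solution $\calU \subseteq \calP$ with $|\calU| \le k$ is also an admissible improper solution, so the minimum defining $\OPT_k(\calP)$ ranges over a superset of the feasible region for $\OPT_k^{\calP}(\calP)$; hence $\OPT_k(\calP) \le \OPT_k^{\calP}(\calP)$.

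For the right inequality, the plan is to start from an optimal improper solution and ``snap'' each of its centers to an input point. Concretely, let $\calU \subseteq \ground$ with $|\calU| \le k$ satisfy $\cost{\calU, \calP} = \OPT_k(\calP)$, and recall that the clusters $\{C_u(\calU, \calP)\}_{u \in \calU}$ partition $\calP$. For each $u \in \calU$ with $C_u(\calU, \calP) \neq \emptyset$, I would set $p_u := \arg\min_{p \in C_u(\calU,\calP)} d(u,p) \in \calP$, and define $\calU' := \{\, p_u : u \in \calU,\ C_u(\calU,\calP) \neq \emptyset \,\}$. Then $\calU' \subseteq \calP$ and $|\calU'| \le |\calU| \le k$, so $\calU'$ is a feasible proper solution. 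The key estimate is that, for a fixed $u \in \calU$ and any $p \in C_u(\calU,\calP)$, the triangle inequality together with the defining property of $p_u$ (since $p \in C_u(\calU,\calP)$ we have $d(u,p_u) \le d(u,p)$) gives
$$ w(p)\, d(p,\calU') \;\le\; w(p)\, d(p, p_u) \;\le\; w(p)\bigl(d(p,u) + d(u,p_u)\bigr) \;\le\; 2\, w(p)\, d(p,u). $$
Summing over $p \in C_u(\calU,\calP)$ and then over $u \in \calU$, and using that the clusters partition $\calP$ with $d(p,u) = d(p,\calU)$ for $p \in C_u(\calU,\calP)$, yields $\cost{\calU',\calP} \le 2\,\cost{\calU,\calP} = 2\,\OPT_k(\calP)$, and therefore $\OPT_k^{\calP}(\calP) \le \cost{\calU',\calP} \le 2\,\OPT_k(\calP)$.

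I do not anticipate a genuine obstacle here: this is the textbook ``round improper centers to proper centers'' argument, and the factor $2$ comes purely from one application of the triangle inequality. The only points requiring a little care are dropping centers with empty clusters (so that $|\calU'| \le k$ holds), using a fixed tie-breaking rule when forming the clusters $C_u(\calU,\calP)$, and noting that the degenerate case $\calP = \emptyset$ is trivial since both sides equal $0$.
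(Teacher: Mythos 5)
Your proof is correct and follows essentially the same route as the paper: both snap each improper center to a nearby input point and apply the triangle inequality once to lose a factor of $2$. The only cosmetic difference is that you project $u$ to the nearest point of its own cluster $C_u(\calU,\calP)$ (requiring you to discard empty clusters), whereas the paper projects each center onto the nearest point of all of $\calP$ via $\pi_{\calP}$; the bound $d(u,p_u) \le d(u,p)$ holds either way for $p$ in the cluster, so both variants yield the identical estimate.
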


\begin{proof}
    The left inequality is obvious since every proper solution can be considered as an improper solution as well.
    For the right inequality, assume $\calV^*$ is the optimum improper solution for $k$-median on $\calP$. Consider the projection function $\pi_{\calP}: \calV^* \rightarrow \calP$.
    Let $\calU = \pi_{\calP}(\calV^*)$.
    We show that $\calU$ is a $2$ approximate solution for the proper $k$-median problem on $\calP$. 
    Note that $\calU$ is a proper feasible solution.
    Assume $v^* \in \calV^*$ is projected to $\pi_{\calP}(v^*) = v \in \calU$. For each $p \in C_{v^*}(\calV^*, \calP)$, we have
    $$d(p,v) \leq d(p,v^*) + d(v^*,v) \leq 2\ d(p,v^*). $$
    The last inequality is because $p \in \calP$ and $\pi_{\calP}(v^*) = v$.
    By summing up this inequality for each $p \in C_{v^*}(\calV^*, \calP)$ (considering weights $w(p)$), we conclude
    $$ \cost{v, C_{v^*}(\calV^*, \calP)} \leq 2 \ \cost{v^*, C_{v^*}(\calV^*, \calP)}. $$
    Since $v \in \calU$, this means
    $$ \cost{\calU, C_{v^*}(\calV^*, \calP)} \leq 2 \ \cost{v^*, C_{v^*}(\calV^*, \calP)}. $$
    Finally, by summing up these inequalities for each $v^* \in \calV^*$, we have
    $$ \cost{\calU, \calP} \leq 2 \ \cost{\calV^*, \calP} =2\ \OPT_{k}(\calP), $$
    which concludes the right inequality since
    $$ \OPT^{\calP}_k(\calP) \leq \cost{\calU, \calP} \leq 2\  \OPT_{k}(\calP). $$
\end{proof}

\subsection{Key Lemmas Used Throughout The Paper}

In this section, we provide the key lemmas that we use in the analysis of our algorithm.

\begin{lemma}[Lazy Updates Lemma, Lemma 3.3 in \cite{FOCS24kmedian}]\label{lem:lazy-updates}
    Assume $\calP$ and $\calP'$ are two sets of points such that $|\calP \oplus \calP'| \leq l$. Then for every $k \geq 0$ we have
    $$\OPT_{k+l}(\calP') \leq \OPT_k(\calP).$$
\end{lemma}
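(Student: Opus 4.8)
The plan is to follow the standard ``open every changed point as a center'' argument: exhibit an explicit feasible improper $(k+l)$-median solution for $\calP'$ whose cost is at most $\OPT_k(\calP)$. First I would fix an optimal improper $k$-median solution $\calU \subseteq \ground$ for $\calP$, so that $|\calU| \le k$ and $\cost{\calU, \calP} = \OPT_k(\calP)$ (if no such $\calU$ exists, i.e.\ $k=0$ and $\calP \neq \emptyset$, then $\OPT_k(\calP) = \infty$ and the statement is vacuous). Then I would set $\calU' := \calU \cup (\calP \oplus \calP')$. Since $|\calP \oplus \calP'| \le l$, this has $|\calU'| \le k + l$, so $\calU'$ is a feasible solution for the improper $(k+l)$-median problem on $\calP'$ — here I crucially use that we are in the \emph{improper} setting, so putting arbitrary ground points into $\calU'$ is allowed.

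Next I would bound $\cost{\calU', \calP'}$ by splitting the sum over $p \in \calP'$ into two groups. For $p \in \calP \oplus \calP'$ we have $p \in \calU'$, hence $d(p, \calU') = 0$, so such a point contributes nothing to $\cost{\calU', \calP'}$. Every other $p \in \calP'$ lies in $\calP \cap \calP'$ and, by the convention for $\oplus$ in the weighted setting, carries the same weight $w(p)$ in $\calP$ and in $\calP'$; moreover $\calU \subseteq \calU'$ gives $d(p, \calU') \le d(p, \calU)$. Hence each such point contributes at most $w(p)\, d(p, \calU)$, which is exactly its contribution to $\cost{\calU, \calP}$. Summing over all $p \in \calP'$ and using that every term $w(p)\, d(p, \calU)$ is nonnegative (so enlarging the index set from $\calP' \setminus (\calP \oplus \calP') \subseteq \calP$ to all of $\calP$ only increases the sum), we obtain $\cost{\calU', \calP'} \le \sum_{p \in \calP} w(p)\, d(p, \calU) = \cost{\calU, \calP}$. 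Combining the two displays yields $\OPT_{k+l}(\calP') \le \cost{\calU', \calP'} \le \OPT_k(\calP)$.

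There is no genuine obstacle here; the only point that needs a little care is the bookkeeping in the weighted setting, namely that a point present in both $\calP$ and $\calP'$ but with \emph{different} weights is, by definition, counted in $\calP \oplus \calP'$ — so in $\calU'$ it is opened as a center and contributes zero — while a point whose weight genuinely does not change contributes on both sides with the same weight. Once this is set up correctly the inequality is immediate, and the unweighted version (\Cref{lem:lazy-updates-part-1}) is recovered by taking all weights equal to $1$.
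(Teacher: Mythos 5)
Your proof is correct and follows the same route as the paper: take an optimal $\calU$ for $\calP$, augment it with $\calP \oplus \calP'$ to get a feasible $(k+l)$-median solution for $\calP'$, and observe that its cost is at most $\cost{\calU,\calP}$. Your write-up just spells out the weighted bookkeeping (changed-weight points counted in $\calP \oplus \calP'$, hence opened as centers) that the paper leaves implicit.
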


\begin{proof}
    Assume $\calU \subseteq \ground$ of size at most $k$ is such that $\cost{\calU, \calP} = \OPT_{k}(\calP)$.
    Define
    $\calU' = \calU + (\calP \oplus \calP')$.
    Obviously, $\calU'$ is a feasible solution for the $(k+l)$-median problem on $\calP'$ and since $\calU'$ contains $\calP \oplus \calP'$, we conclude
    $$ \OPT_{k+l}(\calP') \leq  \cost{\calU',\calP'} = \cost{\calU + (\calP \oplus \calP'), \calP'} \leq \cost{\calU, \calP} = \OPT_{k}(\calP). $$
\end{proof}

\begin{lemma}[Double-Sided Stability Lemma]\label{lem:double-sided-stability}
    Assume for a point set $\calP$ and values $k$, $\eta$ and $0 \leq r \leq k$ we have
    $$ \OPT_{k-r}(\calP) \leq \eta \ \OPT_k(\calP). $$
    Then, the following inequality holds.
    $$ \OPT_{k}(\calP) \leq 4 \ \OPT_{k + \lfloor r / (12\eta) \rfloor }(\calP) $$ 
\end{lemma}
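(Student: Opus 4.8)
The plan is to route through the standard LP relaxation of improper $k$-median and exploit the fact that its optimal fractional value is convex in the number of centers. First I would write down the LP on $\calP$ with a variable $y_c \geq 0$ for each potential center $c \in \ground$ and assignment variables $x_{cp}$ (with $\sum_c x_{cp} \geq 1$ and $x_{cp} \leq y_c$), replacing $\sum_p$ by $\sum_p w(p)$ in the objective for the weighted setting, and let $\FOPT_k$ denote the cost of an optimal fractional solution under the budget constraint $\sum_c y_c \leq k$. Two ingredients drive the argument: (i) this LP has integrality gap at most $3$ \cite{CS11}, so $\FOPT_k \leq \OPT_k \leq 3 \cdot \FOPT_k$ for every $k$ (and, more precisely, a fractional solution of budget $m$ rounds to an integral one with at most $\lfloor m \rfloor$ centers); and (ii) $\FOPT_k$ is convex in $k$, meaning that for any $k_1, k_2$ and any $\alpha, \beta \geq 0$ with $\alpha + \beta = 1$ one has $\FOPT_{\alpha k_1 + \beta k_2} \leq \alpha \cdot \FOPT_{k_1} + \beta \cdot \FOPT_{k_2}$. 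Fact (ii) is immediate: take optimal fractional solutions for the $k_1$- and $k_2$-median LPs and form their convex combination; this is feasible for the $(\alpha k_1 + \beta k_2)$-median LP and has cost exactly $\alpha \cdot \FOPT_{k_1} + \beta \cdot \FOPT_{k_2}$.

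Next I would instantiate (ii) with $k_1 = k - r$, $k_2 = k + r/(12\eta)$, $\alpha = 1/(12\eta)$ and $\beta = 1 - \alpha$. A one-line computation gives $\alpha k_1 + \beta k_2 = k - r/(12\eta)^2 \leq k$, so monotonicity of $\FOPT$ in the budget together with (ii) yields $\FOPT_k \leq \alpha \cdot \FOPT_{k-r} + \beta \cdot \FOPT_{k + r/(12\eta)}$. Passing back through the integrality gap (fact (i)), and using that a fractional $(k + r/(12\eta))$-center solution rounds to at most $k + \lfloor r/(12\eta)\rfloor$ integral centers, this becomes $\OPT_k \leq 3\alpha \cdot \OPT_{k-r} + 3\beta \cdot \OPT_{k + \lfloor r/(12\eta)\rfloor}$.

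Finally I would absorb the self-referential term. Plugging in the hypothesis $\OPT_{k-r}(\calP) \leq \eta \cdot \OPT_k(\calP)$ bounds the first summand by $3\alpha\eta \cdot \OPT_k$; the choice $\alpha = 1/(12\eta)$ makes $3\alpha\eta = 1/4$, so moving that term to the left gives $\tfrac{3}{4}\,\OPT_k \leq 3\beta \cdot \OPT_{k + \lfloor r/(12\eta)\rfloor}$, i.e.\ $\OPT_k \leq 4\beta \cdot \OPT_{k + \lfloor r/(12\eta)\rfloor} \leq 4 \cdot \OPT_{k + \lfloor r/(12\eta)\rfloor}$ since $\beta \leq 1$, which is exactly the claimed bound.

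I expect the only genuinely delicate point to be ingredient (i): one has to make sure the $O(1)$-integrality-gap guarantee is invoked for the \emph{improper} LP, where fractional centers can be placed anywhere in $\ground$, and (in the full version) in the weighted case, and that the rounding used opens at most $\lfloor m \rfloor$ centers from a budget-$m$ fractional solution so that the floor in the statement is handled cleanly. Everything else is an elementary convexity estimate plus arithmetic; in particular, the specific constant $\alpha = 1/(12\eta)$ is chosen precisely so that $3\alpha\eta = 1/4$, which is what allows the $\OPT_k$ term appearing on both sides to be collected with a clean constant of $4$.
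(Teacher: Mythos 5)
Your proposal is correct and follows essentially the same route as the paper's proof: the standard improper $k$-median LP, the integrality gap of $3$ from \cite{CS11}, convexity of $\FOPT_k$ via taking a convex combination of optimal fractional solutions, and the instantiation $k_1 = k-r$, $k_2 = k + r/(12\eta)$, $\alpha = 1/(12\eta)$ so that $3\alpha\eta = 1/4$ and the self-referential term can be absorbed to yield the constant $4$. The only cosmetic difference is that the paper writes the final bound as $\frac{3\beta}{1-3\alpha\eta} \leq 4$ rather than moving the term across and dividing, and it handles the floor by monotonicity of $\OPT$ in the budget, which matches your remark about the rounding.
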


\begin{proof}

Consider the LP relaxation for the $k$-median problem on $\calP$ for each $k$ as follows.
\begin{align*}
    \min & \sum_{p \in \calP} \sum_{c \in \calC}  (w(p) \cdot d(c,p))x_{cp} & \\
    \text{s.t.}  & \quad x_{cp} \leq y_{c}  & \forall c \in \calC, p \in \calP 
    \\
    &\sum_{c \in \calC} x_{cp} \geq 1  & \forall p \in \calP \label{eq:all-open0}\\
    &\sum_{c \in \calC} y_c \leq k & \\
    &x_{cp}, y_c \geq 0 & \forall c \in \calC, p \in \calP
\end{align*}

Note that $\calC$ is the set of potential centers to open. We assume $\calC$ is the underlying ground set of points $\ground$.
This is because we considered the definition of $\OPT_k(\calP)$ for the improper case.

Denote the cost of the optimal fractional solution for this LP by $\FOPT_k$.
Since the space $\calP$ is fixed here, we denote $\OPT_k(\calP)$ by $\OPT_k$.
It is known that the integrality gap of this relaxation is at most $3$ \cite{CS11}. So, for every $k$ we have
\begin{equation}\label{eq:int-gap}
    \FOPT_k \leq \OPT_k \leq 3\ \FOPT_k.   
\end{equation}

\begin{claim}
    For every $k_1$, $k_2$ and $0 \leq \alpha, \beta \leq 1$ such that $\alpha + \beta = 1$, we have
    $$\FOPT_{\alpha k_1 + \beta k_2} \leq \alpha \ \FOPT_{k_1} + \beta \ \FOPT_{k_2}. $$
\end{claim}

\begin{proof}
    Assume optimal fractional solutions $(x^*_1,y^*_1)$ and $(x^*_2,y^*_2)$ for above LP relaxation of $k_1$ and $k_2$-median problems respectively. It is easy to verify that $(\alpha x^*_1 + \beta x^*_2, \alpha y^*_1 + \beta y^*_2)$ is a feasible solution for fractional $(\alpha k_1 + \beta k_2)$-median problem whose cost is $\alpha \ \FOPT_{k_1} + \beta \ \FOPT_{k_2}$, which concludes the claim.
\end{proof}

Now, plug $k_1 = k-r$, $k_2 = k + r/(12\eta)$, $\alpha = 1/(12\eta)$ and $\beta = 1 - \alpha$ in the claim. We have
$$\alpha k_1 + \beta k_2 = \frac{1}{12\eta}(k-r) + \left(1-\frac{1}{12\eta}\right)\left(k+\frac{r}{12\eta}\right) = k - \frac{r}{(12\eta)^2} \leq k. $$
As a result,
$$ \FOPT_k \leq \FOPT_{\alpha k_1 + \beta k_2} \leq \alpha \ \FOPT_{k_1} + \beta \ \FOPT_{k_2}. $$
Together with \Cref{eq:int-gap}, we have
$$ \OPT_k \leq 3\alpha\ \OPT_{k_1} + 3\beta \ \OPT_{k_2}. $$
We also have the assumption that $ \OPT_{k_1} = \OPT_{k-r} \leq \eta \ \OPT_k$, which implies
$$ \OPT_k \leq 3\alpha\eta\ \OPT_{k} + 3\beta\ \OPT_{k_2}. $$
Finally
$$ \OPT_k \leq \left( \frac{3\beta}{1-3\alpha\eta} \right) \OPT_{k_2} \leq 4 \ \OPT_{k_2} \leq 4\  \OPT_{k +\lfloor r/(12\eta) \rfloor}. $$
The second inequality follows from 
$$ \frac{3\beta}{1 - 3\alpha\eta} \leq \frac{3}{1 - 3\alpha\eta} = \frac{3}{1 - 1/4} = 4. $$

\end{proof}

\begin{lemma}[variation of Lemma 7.4 in \cite{soda/FichtenbergerLN21}]\label{lem:cost-well-sep}
    If $\calU$ and $\calV$ are two set of centers and $\calP$ is a set of points such that $\calU$ is robust w.r.t.~$\calP$. Then, for every well-separated pair $(u,v)$ w.r.t.~$(\calU, \calV)$, we have 
    $$\cost{u, C_v(\calV, \calP) } \leq 5 \ \cost{v, C_v(\calV, \calP) }.$$
\end{lemma}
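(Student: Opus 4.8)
The plan is to re-run the proof of \Cref{lem:cost-well-sep-part-1} essentially verbatim, keeping track of the one place where the relaxed definition of a $t$-robust sequence (in which the pivot $q_i$ need only satisfy $\cost{q_i,B_i}\le 3\,\OPT_1(B_i)$, rather than being an exact minimizer over $B_i+p_i$) loosens the final constant from $3$ to $5$. If $d(u,v)=0$ the claim is trivial, so assume $d(u,v)\ge 1$. Since $\calU$ is robust and $(u,v)$ is well-separated, $u$ is $t$-robust for the smallest $t$ with $10^t\ge d(u,\calU-u)/200\ge(\gamma/200)\,d(u,v)=20\,d(u,v)$. Fix the integer $t^\star$ (necessarily $\ge 1$) with $20\,d(u,v)\le 10^{t^\star}<200\,d(u,v)$; since $t^\star\le t$, the point $u$ is also $t^\star$-robust, so there is a $t^\star$-robust sequence $(p_0,\dots,p_{t^\star})$ with $p_0=u$. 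Write $B_i=\text{Ball}^\calP_{10^i}(p_i)$ and abbreviate $C_v=C_v(\calV,\calP)$.

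First I would establish the structural fact $B_i\subseteq C_v$ for every $0\le i\le t^\star$: for $q\in B_i$, the triangle inequality and \Cref{lem:robustness-property-1} give $d(q,v)\le d(q,p_i)+d(p_i,p_0)+d(u,v)\le 10^i+10^i/2+10^{t^\star}/20\le 2\cdot 10^{t^\star}$, while for any $v'\in\calV-v$ well-separatedness (with $\gamma=4000$) and the choice of $t^\star$ give $d(v,v')\ge d(v,\calV-v)\ge\gamma\,d(u,v)>10^{t^\star+1}\ge 2\,d(q,v)$, so $v$ is $q$'s nearest center in $\calV$. Then I would split on whether $\avcost{p_{t^\star},B_{t^\star}}\ge 10^{t^\star}/5$. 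In the high-average-cost case $p_{t^\star-1}=p_{t^\star}$, and combining $d(u,v)\le 10^{t^\star}/20$ with \Cref{lem:robustness-property-1} gives $d(v,p_{t^\star})\le 10^{t^\star}/10\le\tfrac12\avcost{p_{t^\star},B_{t^\star}}$; averaging the reverse triangle inequality over $B_{t^\star}$, and using that points of $C_v\setminus B_{t^\star}$ lie at distance $\ge 10^{t^\star}\ge 2\,d(v,p_{t^\star})$ from $p_{t^\star}$, yields $\cost{v,C_v}\ge\tfrac12\cost{p_{t^\star},C_v}$, so \Cref{lem:robustness-property-2} gives $\cost{u,C_v}=\cost{p_0,C_v}\le\tfrac32\cost{p_{t^\star},C_v}\le 3\cost{v,C_v}$, and this case is unaffected by the definitional change. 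In the low-average-cost case $p_{t^\star-1}=q_{t^\star}$; for $q\in C_v\setminus B_{t^\star}$ one has $d(q,p_{t^\star-1})\ge 10^{t^\star}/2$ while $d(v,p_{t^\star-1})\le 10^{t^\star}/10\le\tfrac15\,d(q,p_{t^\star-1})$, hence $d(q,p_{t^\star-1})\le 2\,d(q,v)$ and $\cost{p_{t^\star-1},C_v\setminus B_{t^\star}}\le 2\cost{v,C_v\setminus B_{t^\star}}$; on $B_{t^\star}$ the relaxed definition gives $\cost{p_{t^\star-1},B_{t^\star}}\le 3\,\OPT_1(B_{t^\star})\le 3\cost{v,B_{t^\star}}$, using that $v\in\ground$ is a feasible (improper) $1$-median center for $B_{t^\star}$. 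Adding the two, $\cost{p_{t^\star-1},C_v}\le 3\cost{v,C_v}$, so \Cref{lem:robustness-property-2} gives $\cost{u,C_v}\le\tfrac32\cdot 3\cost{v,C_v}=\tfrac92\cost{v,C_v}\le 5\cost{v,C_v}$. Taking the worse of the two cases ($\max\{3,9/2\}\le 5$) finishes the proof.

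The only substantive departure from \Cref{lem:cost-well-sep-part-1} is the inequality $\cost{p_{t^\star-1},B_{t^\star}}\le 3\cost{v,B_{t^\star}}$: in the original argument the pivot was an exact minimizer over $B_{t^\star}+p_{t^\star}$, within a factor $2$ of $\OPT_1(B_{t^\star})$ by the projection lemma (\Cref{lem:projection-lemma-part-1}), whereas the new definition hands us a factor $3$ directly, so the Case~2 constant degrades from $3$ to $9/2$ and the overall constant from $3$ to $5$. I do not expect a genuine obstacle here; the only delicate point is the routine bookkeeping that keeps the numeric slacks consistent (the relations among $10^{t^\star}$, $d(u,v)$ and $\gamma$, and the $\tfrac15,\tfrac1{10},\tfrac12$ factors) so that $B_i\subseteq C_v$ and both cost comparisons go through.
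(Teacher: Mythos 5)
Your proposal is correct and follows essentially the same route as the paper's own proof: the same choice of $t^\star$, the same claim that $B_i\subseteq C_v$, the same case split on $\avcost{p_{t^\star},B_{t^\star}}$ versus $10^{t^\star}/5$, and the same use of \Cref{lem:robustness-property-2} at the end, with the constant degrading from $3$ to $9/2\le 5$ exactly where you identify it — namely the bound $\cost{p_{t^\star-1},B_{t^\star}}\le 3\,\OPT_1(B_{t^\star})\le 3\,\cost{v,B_{t^\star}}$ coming from the relaxed definition of a $t$-robust sequence. The only cosmetic difference is that the paper rounds the ratio $5/4$ on $C_v\setminus B_{t^\star}$ up to $3$ while you round it to $2$; both yield $\cost{p_{t^\star-1},C_v}\le 3\,\cost{v,C_v}$.
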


\begin{proof}

Since, the point set $\calP$ is fixed, denote $C_v(\calV,\calP)$ by $C_v$ for simplicity.
If $d(u,v)=0$, the statement is trivial.
Now assume $d(u,v) \geq 1$.
Since $\calU$ is robust, then $u$ is $t$-robust where $t$ is the smallest integer such that
\begin{equation}
    10^t \geq d(u,\calU-u)/200 \geq (\gamma/200)\ d(u,v) = 20\ d(u,v)
\end{equation}
The second inequality is because $(u,v)$ is well-separated pair and the equality is because $\gamma = 4000$.
Assume $t^*$ is the integer satisfying
\begin{equation}\label{eq:definition-tstar}
   20\ d(u,v) \leq 10^{t^*} < 10\cdot 20\ d(u,v). 
\end{equation}
Note that $t^* \geq 1$. Since $t^* \leq t$ and $u$ is $t$-robust, then it is also $t^*$-robust which means there is a $t^*$-robust sequence 
$(p_0,p_1,\ldots,p_{t^*})$ such that $p_0=u$.
For Simplicity, assume $B_i = \text{Ball}_{10^i}^\calP(p_i)$ for $0 \leq i \leq t^*$.

\begin{claim}\label{claim:Bi-subseteq-Cv}
    For each $0 \leq i \leq t^*$, we have $B_i \subseteq C_v$.
\end{claim}

\begin{proof}
    Assume $q \in B_i$. Then,
    \begin{eqnarray*}
        d(q,v) &\leq& d(q,p_i) + d(p_i,u) + d(u,v) \\
        &=& d(q,p_i) + d(p_i,p_0) + d(u,v) \\
        &\leq& 10^i + 10^i/2 + 10^{t^*}/20 \\
        &\leq& 2 \cdot 10^{t^*}.
    \end{eqnarray*}
    For the second inequality, we used $q \in B_i$, \Cref{lem:robustness-property-1} and \Cref{eq:definition-tstar}.
    Now, for every $v' \in \calV - v$ we have
    $$ d(v,v') \geq d(v,\calV-v) \geq 4\cdot10^3\  d(u,v) > 10^{t^*+1} \geq 2\ d(q,v). $$
    The second inequality is by $(u,v)$ being well-separated (and $\gamma = 4000$), the third one is by \Cref{eq:definition-tstar}.
    This concludes $d(q,v') \geq d(v,v') - d(q,v) > d(q,v)$ which means $v$ is the closest center to $q$ in $\calV$. So, $q \in B_i$ and finally $B_i \subseteq C_v$.
\end{proof}

Now, we according to the definition of $t$-robust sequence we have two cases.

\noindent
\textbf{Case 1:} $\avcost{p_{t^*}, B_{t^*}} \geq 10^{t^*}/5$. In this case we have $p_{t^*-1} = p_{t^*}$ and
\begin{eqnarray*}
    d(v,p_{t^*}) &=& d(v,p_{t^*-1}) \\
    &\leq& d(v,p_0) + d(p_0,p_{t^*-1}) \\
    &=& d(v,u) + d(p_0,p_{t^*-1}) \\
    &\leq& 10^{t^*}/20 + 10^{t^*-1}/2 \\
    &=& 10^{t^*}/10.
\end{eqnarray*}
The second inequality holds by \Cref{lem:robustness-property-1} and \Cref{eq:definition-tstar}.
So,
\begin{equation}\label{eq:d-v-ptstar}
    d(v,p_{t^*}) \leq 10^{t^*}/10.
\end{equation}
For every $q \in C_v - B_{t^*}$ we have
$$ d(q,v) \geq d(p_{t^*},q) - d(v,p_{t^*}) \geq 10^{t^*} - d(v,p_{t^*}) \geq d(v, p_{t^*}). $$
The second inequality follows by $q \in C_v-B_{t^*}$ and the last inequality follows by
\Cref{eq:d-v-ptstar}.
So, 
$$ d(q,v) \geq (d(q,v) + d(v,p_{t^*}))/2 \geq d(q,p_{t^*})/2, $$
which implies
\begin{equation}\label{eq:cost-v-outside}
    \cost{v, C_v - B_{t^*}} \geq \frac{\cost{p_{t^*}, C_v - B_{t^*}}}{2}.
\end{equation}
We also have
\begin{equation}\label{eq:avregcost-geq-2d}
   \avcost{p_{t^*}, B_{t^*}} \geq 10^{t^*} / 5 \geq 2\ d(v,p_{t^*}), 
\end{equation}
by the assumption of this case and \Cref{eq:d-v-ptstar}.
Now, we conclude 
\begin{eqnarray*}
    \cost{v,C_v} &=& \cost{v, B_{t^*}} + \cost{v, C_v-B_{t^*}} \\
    &=& w(B_{t^*}) \cdot \avcost{v, B_{t^*}} + \cost{v, C_v-B_{t^*}} \\
    &\geq& w(B_{t^*}) \cdot (\avcost{p_{t^*}, B_{t^*}} - d(v,p_{t^*})) + \cost{v, C_v-B_{t^*}} \\
    &\geq& w(B_{t^*}) \cdot \frac{\avcost{p_{t^*}, B_{t^*}}}{2} + \cost{v, C_v-B_{t^*}} \\
    &=& \frac{\cost{p_{t^*}, B_{t^*}}}{2} + \cost{v, C_v-B_{t^*}} \\
    &\geq& \frac{\cost{p_{t^*}, B_{t^*}}}{2} + \frac{\cost{p_{t^*}, C_v - B_{t^*}}}{2}\\
    &=& \frac{\cost{p_{t^*}, C_v}}{2}.
\end{eqnarray*}
The first inequality is triangle inequality, the second inequality holds by \Cref{eq:avregcost-geq-2d} and the last inequality holds by \Cref{eq:cost-v-outside}.
Finally, since $B_{t^*} \subseteq C_v$ (by \Cref{claim:Bi-subseteq-Cv}), we can apply \Cref{lem:robustness-property-2} to get
$$ \cost{u,C_v} = \cost{p_0,C_v} \leq \frac{3}{2} \cdot \cost{p_{t^*},C_v} \leq \frac{3}{2}\cdot 2 \  \cost{v,C_v} = 3 \ \cost{v,C_v}. $$

\noindent
\textbf{Case 2:} $\avcost{p_{t^*}, B_{t^*}} < 10^{t^*}/5$. In this case, for every $q \in C_v - B_{t^*}$,
$$ d(q,p_{t^*-1}) \geq d(q,p_{t^*}) - d(p_{t^*},p_{t^*-1}) \geq 10^{t^*} - 10^{t^*}/2 = 10^{t^*}/2. $$
The second inequality holds by $q \in C_v - B_{t^*}$ and \Cref{lem:robustness-property-1}.
So,
\begin{equation}\label{eq:dqpt-geq-10t}
    d(q,p_{t^*-1}) \geq 10^{t^*}/2.
\end{equation}
We also have
\begin{eqnarray*}
    d(v,p_{t^*-1}) &\leq& d(v,u) + d(u,p_{t^*-1}) \\
    &=& d(v,u) + d(p_0,p_{t^*-1}) \\
    &\leq& 10^{t^*}/20 + 10^{t^*-1}/2 \\
    &=& 10^{t^*}/10.
\end{eqnarray*}
The second inequality holds by \Cref{eq:definition-tstar} and \Cref{lem:robustness-property-1}.
So,
\begin{equation}\label{eq:dvpt-leq-10t}
    d(v,p_{t^*-1}) \leq 10^{t^*}/10.
\end{equation}
Combining \Cref{eq:dqpt-geq-10t} and \Cref{eq:dvpt-leq-10t}, we have
$$ d(v,p_{t^*-1}) \leq 10^{t^*}/10 = \frac{1}{5}\cdot 10^{t^*}/2 \leq \frac{1}{5}\cdot d(q,p_{t^*-1}), $$
which concludes
$$  \frac{d(q,p_{t^*-1}) }{d(q,v)} \leq \frac{d(q,p_{t^*-1}) }{d(q,p_{t^*-1}) - d(p_{t^*-1},v)} \leq \frac{d(q,p_{t^*-1}) }{d(q,p_{t^*-1}) - \frac{1}{5}\cdot d(q,p_{t^*-1})} = 5/4 \leq 3. $$
So, we have
$$d(q,p_{t^*-1}) \leq 3 \ d(q,v). $$
Using this inequality for all  $q \in C_v - B_{t^*}$, we conclude
\begin{equation}\label{eq:cost-ptstarminusone-compare-to-v}
    \cost{p_{t^*-1}, C_v-B_{t^*}} \leq 3 \ \cost{v, C_v-B_{t^*}}.
\end{equation}
Hence,
\begin{eqnarray*}
    \cost{p_{t^*-1}, C_v} &=& \cost{p_{t^*-1}, B_{t^*}} + \cost{p_{t^*-1}, C_v - B_{t^*}} \\
    &\leq& 3 \ \OPT_{1}(B_{t^*}) + \cost{p_{t^*-1}, C_v - B_{t^*}} \\
    &\leq& 3 \ \cost{v, B_{t^*}} + \cost{p_{t^*-1}, C_v - B_{t^*}}. \\
    &\leq& 3 \ \cost{v, B_{t^*}} + 3\  \cost{v, C_v - B_{t^*}}. \\
    &=& 3 \ \cost{v, C_v}.
\end{eqnarray*}
The equalities hold since $B_{t^*} \subseteq C_v$ (by \Cref{claim:Bi-subseteq-Cv}), the first inequality holds by the definition of $t^*$-robust sequence, the second inequality holds since $v \in B_{t^*}$ (by \Cref{eq:definition-tstar}) and the last inequality holds by \Cref{eq:cost-ptstarminusone-compare-to-v}.
Finally, applying \Cref{lem:robustness-property-2} implies
$$ \cost{u, C_v} = \cost{p_0, C_v} \leq \frac{3}{2}\cdot \cost{p_{t^*-1}, C_v} \leq \frac{3}{2}\cdot 3 \ \cost{v, C_v} \leq 5 \ \cost{u, C_v}. $$
In both cases, we showed that  $\cost{u, C_v} \leq 5\  \cost{v, C_v}$ which completes the proof.

\end{proof}

\begin{lemma}[generalization of Lemma 7.3 in \cite{soda/FichtenbergerLN21}]\label{lem:generalize-lemma-7.3-in-FLNS21}
    Suppose $\calP$ is a set of points, $\calU$ is a set of $k$ centers and $\calV$ is a set of at most $k+r$ centers.
    If the number of well-separated pairs with respect to $(\calU, \calV)$ is $k - m$, then there exist a $\bar{\calU} \subseteq \calU$ of size at most $k - \lfloor (m- r) / 4 \rfloor$ such that
    $$ \cost{\bar{\calU}, \calP} \leq 6\gamma\left( \cost{\calU, \calP} + \cost{\calV, \calP} \right). $$
\end{lemma}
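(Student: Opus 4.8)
The plan is to reduce the statement to an LP-rounding argument. Consider the standard LP relaxation of the weighted $k$-median problem on $\calP$ in which the allowed centers are exactly the points of $\calU$ and at most $k-\lfloor (m-r)/4\rfloor$ of them may be opened. This relaxation has integrality gap at most $3$~\cite{CS11}, so it suffices to construct a \emph{fractional} feasible solution of cost at most $2\gamma\bigl(\cost{\calU,\calP}+\cost{\calV,\calP}\bigr)$; rounding then yields an integral $\bar{\calU}\subseteq\calU$ of size at most $k-\lfloor (m-r)/4\rfloor$ with $\cost{\bar{\calU},\calP}\le 6\gamma\bigl(\cost{\calU,\calP}+\cost{\calV,\calP}\bigr)$. (If $m\le r$ the size bound is vacuous and we may simply take $\bar{\calU}=\calU$, so assume $m>r$.) This is the weighted analogue of the proof of Lemma~\ref{lem:generalize-lemma-7.3-in-FLNS21-part-1}, so the only new bookkeeping is carrying the weights $w(p)$ through, and using $|\calV|\le k+r$ in place of equality.

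First I would fix the fractional opening. Look at the projection $\pi_\calU:\calV\to\calU$, and write $\calU=\calU_I\cup\calU_F$ where $\calU_I$ consists of every $u\in\calU$ that either forms a well-separated pair with some $v\in\calV$ or has $|\pi_\calU^{-1}(u)|\ge 2$. Set $y_u=1$ for $u\in\calU_I$ and $y_u=\tfrac12$ for $u\in\calU_F$. For the counting step: every well-separated $u$ has $|\pi_\calU^{-1}(u)|\ge1$ (it is the closest $\calU$-center to its mate), there are $k-m$ such centers, and the remaining centers of $\calU_I$ each have $|\pi_\calU^{-1}(u)|\ge2$, so $k+r\ge|\calV|=\sum_{u\in\calU}|\pi_\calU^{-1}(u)|\ge (k-m)+2(|\calU_I|-(k-m))$, giving $|\calU_I|\le k-\tfrac{m-r}{2}$. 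Hence $\sum_u y_u=\tfrac{k+|\calU_I|}{2}\le k-\tfrac{m-r}{4}\le k-\lfloor(m-r)/4\rfloor$, as required.

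Next I would define the fractional assignment. For $p\in\calP$ put $v_p=\pi_\calV(p)$ and $u_p=\pi_\calU(v_p)$. If $y_{u_p}=1$, route $p$ entirely to $u_p$, at cost $d(p,u_p)$. If $y_{u_p}=\tfrac12$ then $u_p\in\calU_F$, so $u_p$ is in no well-separated pair and $|\pi_\calU^{-1}(u_p)|\le1$; I split into two subcases. If there is $u'_p\in\calU-u_p$ with $d(u_p,u'_p)\le\gamma\,d(u_p,v_p)$, route $\tfrac12$ to each of $u_p,u'_p$; by the triangle inequality the cost is at most $d(p,u_p)+\tfrac\gamma2 d(u_p,v_p)$. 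Otherwise $d(u_p,\calU-u_p)\ge\gamma\,d(u_p,v_p)$, so since $(u_p,v_p)$ is not well-separated we must have $d(v_p,\calV-v_p)<\gamma\,d(u_p,v_p)$, i.e.\ there is $v'_p\in\calV-v_p$ with $d(v_p,v'_p)\le\gamma\,d(u_p,v_p)$; put $u'_p=\pi_\calU(v'_p)$, route $\tfrac12$ to each of $u_p,u'_p$, and check the cost is at most $d(p,u_p)+(\gamma+1)\,d(u_p,v_p)$. In every case the assignment cost of $p$ is at most $w(p)\bigl(d(p,u_p)+(\gamma+1)\,d(u_p,v_p)\bigr)$. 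I expect the main obstacle to be this last subcase: one must argue that $u'_p\ne u_p$ so that $p$ actually receives a full unit of service, and this relies precisely on the fact that $u_p\in\calU_F$ forces $|\pi_\calU^{-1}(u_p)|\le1$, so $v_p$ is the only center of $\calV$ mapping to $u_p$ and $v'_p\ne v_p$ must map elsewhere.

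Finally I would bound the total cost. Setting $u_p^{\star}=\pi_\calU(p)$, the choice $u_p=\pi_\calU(v_p)$ gives $d(u_p,v_p)\le d(u_p^{\star},v_p)\le d(u_p^{\star},p)+d(p,v_p)$, so $\sum_{p}w(p)\,d(u_p,v_p)\le\cost{\calU,\calP}+\cost{\calV,\calP}$; and $d(p,u_p)\le d(p,v_p)+d(v_p,u_p)$ yields $\sum_p w(p)\,d(p,u_p)\le 2\cost{\calV,\calP}+\cost{\calU,\calP}$. Summing the per-point bounds, the fractional solution has cost at most $\cost{\calV,\calP}+(\gamma+2)\bigl(\cost{\calU,\calP}+\cost{\calV,\calP}\bigr)\le 2\gamma\bigl(\cost{\calU,\calP}+\cost{\calV,\calP}\bigr)$ since $\gamma=4000$. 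Invoking the integrality gap of $3$ then gives the claimed bound $\cost{\bar{\calU},\calP}\le 6\gamma\bigl(\cost{\calU,\calP}+\cost{\calV,\calP}\bigr)$, completing the proof.
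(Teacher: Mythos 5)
Your proposal is correct and follows essentially the same route as the paper's proof: the identical LP relaxation restricted to centers in $\calU$, the same half-integral opening based on the partition $\calU = \calU_I \cup \calU_F$ via $\pi_\calU$, the same three-case fractional assignment (including the key observation that $u'_p \neq u_p$ in the last case because $|\pi_\calU^{-1}(u_p)| \leq 1$), the same cost bound via $d(u_p,v_p) \leq d(u_p^\star,p)+d(p,v_p)$, and the same invocation of the integrality gap of $3$. The only (harmless) additions are the explicit dispatch of the vacuous case $m \leq r$ and the slightly different but equivalent arithmetic $\sum_u y_u = (k+|\calU_I|)/2$.
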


\begin{proof}

Consider the standard LP relaxation for the weighted $k$-median problem. We consider the set of potential centers to open $\calU$, and we want to open at most $k - (m- r) / 4 $ many centers. So, we have the following LP.
\begin{align*}
    \min & \sum_{p \in \calP} \sum_{u \in \calU}  (w(p) \cdot d(u,p))x_{up} & \\
    \text{s.t.}  & \quad x_{up} \leq y_{u}  & \forall u \in \calU, p \in \calP 
    \\
    &\sum_{u \in \calU} x_{up} \geq 1  & \forall p \in \calP \\
    &\sum_{u\in\calU} y_u \leq k - (m- r) / 4 & \\
    &x_{up}, y_u \geq 0 & \forall u\in \calU, p \in \calP
\end{align*}
Now, we explain how to construct a fractional solution for this LP.

\noindent
\textbf{Fractional Opening of Centers.}
Consider the projection $\pi_\calU: \calV \rightarrow \calU$ function.
Assume $\calU = \calU_I + \calU_F$ is a partition of $\calU$ where $\calU_I$ contains those centers $u \in \mathcal{U}$ satisfying at least one of the following conditions:
\begin{itemize}
    \item $u$ forms a well-separated pair with one center in $\mathcal{V}$.
    \item $|\pi^{-1}_\calU(u)| \geq 2$.
\end{itemize}
For every $u\in \mathcal{U}_I$, set $y_u=1$ and for every $u\in \mathcal{U}_F$ set $y_u = 1/2$. 
First, we show that
$$\sum_{u\in \mathcal{U}} y_u \leq  k - (m- r) / 4.$$
 
Each center $u \in \calU$ that forms a well-separated pair with a center $v\in \calV$ has $|\pi^{-1}_\calU(u)| \geq 1$ since $u$ must be the closest center to $v$ in $\calU$.
Since the number of well-separated pairs is $k-m$, we have
\begin{equation*}
    k+r \geq |\calV| = \sum_{u\in \mathcal{U}} |\pi^{-1}(u)| \geq \sum_{u\in \mathcal{U}_I} | \pi^{-1}(u)| \geq k- m   + 2\cdot\left(|\mathcal{U}_I| - (k-m)\right).
\end{equation*}
Hence,
$$|\calU_I| \leq \frac{k+r + (k-m)}{2} = k - \frac{m-r}{2}. $$
Finally, we conclude
\begin{eqnarray*}
   \sum_{u\in \calU} y_u &=& \sum_{u\in \calU_I} y_u + \sum_{u\in \calU_F} y_u \\
   &\leq&  k - \frac{m-r}{2} + \frac{1}{2} \cdot \frac{m-r}{2} \\
   &=&
   k - \frac{m-r}{4}.
\end{eqnarray*}

\noindent
\textbf{Fractional Assignment of Points.}     
For every $p\in \calP$, assume $ v_p = \pi_{\calV}(p) $ is the closest center to $p$ in $\mathcal{V}$ and $u_p = \pi_\calU(v_p)$ is the closest center in $\calU$ to $v_p$.
We have three cases:
\begin{itemize}

\item If $y_{u_p} = 1$, then set $x_{u_pp} = 1$. The cost of this assignment would be $w(p) \cdot d(u_p, p)$.

\item If $y_{u_p} = 1/2$ and there is a center $u'_p\in \calU  - u_p$ such that $d(u_p, u'_p) \leq \gamma\cdot  d(u_p, v_p)$, then set $x_{u_pp} = x_{u_p' p} = 1/2$. Note that $u_p' \neq u_p$ which means point $p$ is assigned one unit to centers.
The cost of this assignment would be
\begin{eqnarray*}
    \frac{1}{2} w(p)\left(d(p,u_p) + d(p,u_p')\right) &\leq&  \frac{1}{2} w(p)\left(d(p,u_p) + d(p,u_p) + d(u'_p,u_p)\right)  \\
    &\leq& w(p)\left(d(p,u_p) + \frac{\gamma}{2}\cdot d(u_p,v_p)\right).
\end{eqnarray*}

\item If $y_{u_p} = 1/2$ and the previous case does not hold, then since $(u_p, v_p)$ is not a well-separated pair, there is a center $v'_p \in \calV - v_p$ such that $d(v_p, v'_p) \leq \gamma \cdot d(u_p, v_p)$. Let $u_p' = \pi_\calU(v'_p)$ and  set $x_{u_p p} = x_{u'_p p} = 1/2$. First, we show that $u'_p \neq u_p$. Since $y_{u_p} = 1/2$, we have $u_p \in \calU_F$ which concludes $|\pi^{-1}_\calU(u_p)| \leq 1$.
We also know that $\pi_\calU(v_p) = u_p$.
So, $v_p$ is the only center in $\calV$ mapped to $u_p$ which implies
$\pi_\calU(v'_p) \neq u_p$ or $u_p' \neq u_p$ (note that $v_p' \neq v_p$). So, point $p$ is assigned one unit to centers.
The cost of this assignment would be
\begin{eqnarray*}
    \frac{1}{2} w(p) \left(d(p, u_p) + d(p, u'_p)\right) 
    &\leq&
    \frac{1}{2} w(p) \left( d(p, u_p) + d(p, u_p) + d(u_p, v'_p)  + d(v'_p, u'_p)\right) \\
    &\leq&
    \frac{1}{2} w(p) \left( d(p, u_p) + d(p, u_p) + d(u_p, v'_p)  + d(v'_p, u_p)\right) \\
    &=&
    w(p) \left( d(p, u_p)  + d(u_p,v_p')\right) \\
    &\leq&
    w(p) \left( d(p, u_p)  + d(u_p, v_p) + d(v_p,v_p') \right) \\
    &\leq& w(p)\left(   d(p,u_p) + (\gamma+1)\ d(u_p,v_p)\right).
\end{eqnarray*}
The second inequality is because of the choice of $u'_p = \pi_{\calU}(v_p')$ and the last inequality is because $d(v_p,v_p') \leq \gamma \cdot d(u_p,v_p)$.
\end{itemize} 

\noindent
\textbf{Bounding the Cost.}
Assume $u_p^* = \pi_\calU(p)$ for each $p \in \calP$.
We have
$$ d(u_p,v_p) \leq d(u_p^*,v_p) \leq d(u_p^*,p) + d(p,v_p). $$
The first inequality is by the choice of $u_p = \pi_\calU(v_p)$.
As a result,
\begin{equation}\label{eq:bound-on-d-up-vp}
    \sum_{p \in \calP} w(p)\ d(u_p,v_p)
    \leq
    \sum_{p \in \calP} w(p)\left( d(u^*_p,p) + d(p,v_p) \right)
    =
    \cost{\calU,\calP} + \cost{\calV,\calP}.
\end{equation}
In each of the cases of fractional assignments of points to centers, the cost of assigning a point $p \in C_v(\calV, \calP)$ is at most
$ w(p) \left(d(p, u_p) + (\gamma+1) \ d(u_p, v_p) \right)$.
As a result, the total cost of this assignment is upper bounded by
\begin{eqnarray*}
    \sum_{p\in P} w(p) \left(d(p,u_p) + (\gamma+1) \ d(u_p, v_p)\right) 
    &\leq& 
     \sum_{p\in \calP} w(p) \left(d(p,v_p) + (\gamma+2) \ d(u_p, v_p)\right) \\
    &\leq& 
    \cost{\calV,\calP} +(\gamma+2) \left( \cost{\calU,\calP} + \cost{\calV,\calP} \right) \\
    &\leq& 
    2\gamma \left( \cost{\calU,\calP} + \cost{\calV,\calP} \right).
\end{eqnarray*}
The second inequality follows by \Cref{eq:bound-on-d-up-vp}.

Finally, note that the integrality gap of the LP relaxation is known to be at most $3$ \cite{CS11}. As a result, there exist an integral solution whose cost is at most $6\gamma\left( \cost{\calU,\calP} + \cost{\calV,\calP} \right)$
and this solution opens at most 
$ k - \lfloor \frac{m-r}{4} \rfloor $ centers which completes the proof.

\end{proof}

    \section{Useful Subroutines of Our Algorithm}

\
In this section, we provide useful static subroutines that we are going to use in our main algorithm.

\subsection{\RandLocalSearch}\label{sec:rand-local-search}
This algorithm is first introduced in \cite{FOCS24kmedian}.
Given a set of points $\calP$, a set of $k$ centers $\calU$ and a number $s \leq k$, the aim is to find a good subset of centers $\calU$ of size $k - s$.
The algorithm starts with an arbitrary $\calU^* \subseteq \calU$ of size $k-s$ and tries to improve this subset by random local swaps for $\tilde{\Theta}(s)$ many iterations.

\begin{algorithm}[H]
  \DontPrintSemicolon
  $\calU^* \gets$ arbitrary subset of $\calU$ of size $k - s$. 

  \For{$\tilde{\Theta}(s)$ iterations}{
    Sample $v \in \calU-\calU^*$ independently and uniformly at random.

    $z^* \gets \arg\min_{z \in \calU^* + v} \{ \cost{\calU^*+v-z,\calP} - \cost{\calU^*, \calP} \} $.

    $\calU^* \gets \calU^* - z^* + v $.
  }
  \Return $\calU^*$.
  \caption{(Algorithm 3 in \cite{FOCS24kmedian}), \RandLocalSearch($\calP, \calU, s$)}
\end{algorithm}

We have the following lemma which shows the subset derived from \RandLocalSearch \ is actually a good subset w.h.p.

\begin{lemma}[Lemma 3.18 in \cite{FOCS24kmedian}]\label{lem:cost-local-search}
    If $\calU^*$ is returned by \RandLocalSearch$(\calP,\calU,s)$, then with high probability,
    $$\cost{\calU^*, \calP} \leq 2 \ \cost{\calU, \calP} + 12\ \OPT_{k - s}(\calP).$$
\end{lemma}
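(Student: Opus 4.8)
The plan is to prove \Cref{lem:cost-local-search} in two stages: first a purely combinatorial \emph{swap lemma} that lower-bounds, over all candidate swaps, the best improvement available at the current solution, and then a \emph{stopping-time} argument that turns the resulting expected geometric decrease into a high-probability guarantee, crucially exploiting the fact that the cost of the maintained subset never increases across iterations of \RandLocalSearch. Throughout, write $B := 2\,\cost{\calU,\calP} + 12\,\OPT_{k-s}(\calP)$ for the target value, and fix an optimal improper $(k-s)$-median solution $\mathcal{O}\subseteq\ground$ for $\calP$, so that $|\mathcal{O}| = k-s$ and $\cost{\mathcal{O},\calP} = \OPT_{k-s}(\calP)$.

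\medskip\noindent\textbf{Step 1 (swap lemma --- the main obstacle).} The goal is to show that for every $\calU^\star\subseteq\calU$ with $|\calU^\star| = k-s$,
\[
\sum_{v\in\calU\setminus\calU^\star}\Bigl(\cost{\calU^\star,\calP} - \min_{z\in\calU^\star+v}\cost{\calU^\star - z + v,\calP}\Bigr) \;\ge\; c_1\bigl(\cost{\calU^\star,\calP} - c_2\,\cost{\calU,\calP} - c_3\,\OPT_{k-s}(\calP)\bigr)
\]
for absolute constants $c_1\in(0,1)$, $c_2<2$, $c_3<12$ (the inequality being vacuous when the right-hand side is negative). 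This adapts the classical single-swap local-search analysis for $k$-median. Since \RandLocalSearch\ may only insert centers of $\calU$, instead of ``swapping in'' a center $o\in\mathcal{O}$ one swaps in its projection $\pi_\calU(o)\in\calU$; by the triangle inequality a client $p$ captured by $o$ (i.e.\ $\pi_{\mathcal{O}}(p)=o$) can then be re-served at cost $d(p,\pi_\calU(o))\le 2\,d(p,\mathcal{O})+d(p,\calU)$, and it is precisely this step that introduces the $\cost{\calU,\calP}$ term. Clients of a removed center are re-routed through the current solution via the $\calU^\star$-projections of their $\mathcal{O}$-centers, exactly as in the classical argument, after selecting a family of $O(s)$ test swaps --- each inserted center of $\calU\setminus\calU^\star$ used a constant number of times, and each removed center of $\calU^\star$ used a constant number of times so that re-routing targets survive --- so that summing the per-swap inequalities telescopes: the left-hand side is $\ge -O(1)\cdot\sum_v(\text{best improvement for }v)$ and the right-hand side telescopes to $O\bigl(\cost{\calU,\calP}+\OPT_{k-s}(\calP)\bigr)-\Omega\bigl(\cost{\calU^\star,\calP}\bigr)$; rearranging gives the display. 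I expect the hard points to be (i) the combinatorial pairing guaranteeing each inserted projection and each removed center is used only $O(1)$ times, which is the usual delicate part of single-swap analyses (and is slightly more delicate here because distinct $\mathcal{O}$-centers may project to the same point of $\calU$), and (ii) the bookkeeping of the constants so that the bound is literally $2\,\cost{\calU,\calP}+12\,\OPT_{k-s}(\calP)$, with a sliver of slack ($c_2<2$, $c_3<12$) left over.

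\medskip\noindent\textbf{Step 2 (from the swap lemma to high probability).} Because $z=v$ is always an admissible choice inside the $\arg\min$ of \RandLocalSearch, the quantity $\Phi_t := \cost{\calU^\star_t,\calP}$ (the cost after $t$ iterations) is non-increasing, and the maintained subset stays of size $k-s$. Since $v$ is sampled uniformly from the $s$ centers of $\calU\setminus\calU^\star_t$ and the actual improvement at step $t$ equals $\cost{\calU^\star_t,\calP}-\min_z\cost{\calU^\star_t-z+v,\calP}$, the swap lemma yields $\mathbb{E}[\Phi_t-\Phi_{t+1}\mid\mathcal{F}_t]\ge \tfrac{c_1}{s}\bigl(\Phi_t - c_2\cost{\calU,\calP}-c_3\OPT_{k-s}(\calP)\bigr)$; hence, whenever $\Phi_t\ge B$ (so that $c_2\cost{\calU,\calP}+c_3\OPT_{k-s}(\calP)\le\theta\Phi_t$ with $\theta:=\max(c_2/2,c_3/12)<1$), we get $\mathbb{E}[\Phi_{t+1}\mid\mathcal{F}_t]\le\bigl(1-\tfrac{c_1(1-\theta)}{s}\bigr)\Phi_t$ --- a genuine multiplicative decay with no additive term. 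Let $\tau:=\min\{t:\Phi_t<B\}$ and $a_t:=\mathbb{E}[\Phi_t\cdot\mathbf{1}[\tau>t]]$. Using that $\mathbf{1}[\tau>t]$ is $\mathcal{F}_t$-measurable and $\mathbf{1}[\tau>t+1]\le\mathbf{1}[\tau>t]$, one checks $a_{t+1}\le\bigl(1-\tfrac{c_1(1-\theta)}{s}\bigr)a_t$, so $a_t\le e^{-c_1(1-\theta)t/s}\Phi_0$. Since $\Phi_t\ge B$ on $\{\tau>t\}$, Markov gives $\Pr[\tau>t]\le a_t/B\le e^{-c_1(1-\theta)t/s}\cdot(\Phi_0/B)$, and $\Phi_0/B$ is polynomially bounded in $n$ and $\Delta$ (using $B\ge\OPT_{k-s}(\calP)$ and the standing assumption on weights and aspect ratio; the degenerate case $|\calP|\le k-s$ is handled trivially). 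Therefore, running for $T=\tilde{\Theta}(s)$ iterations --- concretely $T=\Theta\!\bigl(\tfrac{s}{c_1(1-\theta)}\log(n\Delta)\bigr)$ --- makes $\Pr[\tau>T]\le(n\Delta)^{-\Omega(1)}$. On the complementary high-probability event, $\tau\le T$, and monotonicity of $\Phi$ gives $\Phi_T\le\Phi_\tau<B=2\,\cost{\calU,\calP}+12\,\OPT_{k-s}(\calP)$, which is the claim. Note that it is exactly the slack $\theta<1$ produced in Step~1 that lets the stopping threshold be $B$ itself, so no constant is lost at this stage.
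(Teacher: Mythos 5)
The paper does not prove this lemma at all: it is imported verbatim as Lemma~3.18 of \cite{FOCS24kmedian}, so there is no in-paper proof to compare against. Judged against the known argument from that source, your two-stage plan is the right one and essentially reconstructs it: a single-swap lemma lower-bounding the total available improvement at any suboptimal $\calU^\star\subseteq\calU$, followed by the observation that uniform sampling of $v$ converts this into an expected multiplicative decay of $\cost{\calU^\star,\calP}$ whenever it exceeds the target $B$, and a stopping-time/Markov argument (using monotonicity of the cost, since $z=v$ is always admissible) to get the high-probability bound after $\tilde{\Theta}(s)$ iterations. Your Step~2 is complete and correct, including the supermartingale-style recursion for $a_t=\mathbb{E}[\Phi_t\mathbf{1}[\tau>t]]$ and the $\Phi_0/B=\mathrm{poly}(n\Delta)$ bound; the projection inequality $d(p,\pi_\calU(o))\le 2d(p,\mathcal{O})+d(p,\calU)$ that sources the $\cost{\calU,\calP}$ term is also exactly right.

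The one place where your proposal is a plan rather than a proof is the crux of Step~1: the construction of the $O(s)$ test swaps in which each inserted proxy $\pi_\calU(o)$ and each removed center of $\calU^\star$ appears only $O(1)$ times, and the verification that the resulting constants are literally $2$ and $12$ (with slack). You correctly flag the two genuinely delicate points --- collisions where distinct $o,o'\in\mathcal{O}$ have $\pi_\calU(o)=\pi_\calU(o')$, and the case where $\pi_\calU(o)$ already lies in $\calU^\star$ (so the ``swap'' degenerates and those clients must be charged to the current solution directly) --- but you do not resolve them, and these are precisely where the standard Arya-et-al.-style pairing has to be adapted. Since the claimed inequality is quantitatively tight to the cited statement, the constant bookkeeping cannot be waved away; as written, Step~1 is an accurate description of what must be proved, not a proof of it.
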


\subsection{\DevelopCenters}\label{sec:develop-centers}
Given a set of points $\calP$, a set of centers $\calU$ and an integer $s \geq 1$,
the aim of this subroutine is to extend $\calU$ to $\tu$ by adding $s$ centers which is a good approximation to the best solution that adds $s$ centers to $\calU$, i.e.~the algorithm finds $\tu$ such that
$$ \cost{\tu,\calP} \leq \beta \ \min_{\substack{\calF \subseteq \calP, \\ |\calF| \leq s}} \cost{\calU+\calF, \calP}.\footnote{Recall $\beta$ from \Cref{parameters}} $$

The algorithm defines a new metric space and runs a standard $(s+1)$-median static algorithm on this space.
The high level idea behind this subroutine is that we have a set of fixed $k$ centers $\calU$ that must be contained in $\tu$.
So, if we can treat these fixed centers as a single center instead of a set of $k$ centers, we might be able to reduce the problem to $(s+1)$-median.
So, we contract all of the points $\calU$ to a single point $u^*$ and define a new space $\calP' = (\calP - \calU) + u^*$ with a new metric $d'$ as follows.
\begin{equation}\label{def:new-metric}
    \begin{cases}
    d'(x, u^*) := d(x, \calU) &\forall x \in \calP \\
    d'(x,y) := \min\{ d'(x,u^*) + d'(y,u^*), d(x,y)  \} &\forall x,y \in \calP
\end{cases}    
\end{equation}
We define the weight of a point $x \in \calP' - u^*$ the same weight of $x$ in $\calP$.
Finally, we set $w(u^*)$ to be very large say $w(u^*) = \beta nW\Delta$, where $W$ is the maximum weight of points in $\calP$.
This makes any $\beta$ approximate solution for $(s+1)$-median problem on $\calP'$ to contain $u^*$ by force.

\begin{claim}\label{dprime-is-metric}
    $(\calP', d')$ is a metric space.
\end{claim}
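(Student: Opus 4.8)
The plan is to exhibit $d'$ as (the quotient of) a shortest-path metric, which hands us all the metric axioms for free except possibly the identity of indiscernibles, which we then check directly. Concretely, form the complete weighted graph $G$ on vertex set $\calP$ in which every edge joining two vertices of $\calU$ has weight $0$ and every other edge $\{x,y\}$ has weight $d(x,y)$; let $d_G$ denote the shortest-path distance in $G$. Since all edge weights are non-negative, $d_G$ is a pseudometric on $\calP$, and all vertices of $\calU$ lie at $d_G$-distance $0$ from one another, so $d_G$ descends to a pseudometric on the quotient set $\calP' = (\calP - \calU) + u^*$, where we set $d_G(x,u^*) := \min_{q \in \calU} d_G(x,q)$ for $x \in \calP - \calU$ and $d_G(u^*,u^*) := 0$. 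The heart of the proof is to show that $d_G$ coincides with $d'$ on $\calP'$.

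First I would show that for every $x \in \calP - \calU$ we have $d_G(x,u^*) = d(x,\calU)$: the inequality ``$\le$'' is witnessed by the single edge from $x$ to $\pi_\calU(x)$, and for ``$\ge$'' take any $x$-to-$q$ walk with $q \in \calU$ and let $q'$ be the first vertex of $\calU$ along it; every edge strictly before $q'$ has both endpoints outside $\calU$ and the edge entering $q'$ has one endpoint outside $\calU$, so all of them carry their true $d$-weight, and the triangle inequality for $d$ gives prefix-length $\ge d(x,q') \ge d(x,\calU)$. The same prefix/suffix surgery shows that for $x,y \in \calP - \calU$ any $x$-to-$y$ walk either avoids $\calU$ entirely (length $\ge d(x,y)$ by the triangle inequality) or meets $\calU$, in which case splitting at the first and last $\calU$-vertices gives length $\ge d(x,\calU) + d(y,\calU)$; conversely both $d(x,y)$ (direct edge) and $d(x,\calU)+d(y,\calU)$ (the walk $x \to \pi_\calU(x) \to \pi_\calU(y) \to y$ through the zero-weight $\calU$-clique) are attained. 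Hence $d_G(x,y) = \min\{d(x,y),\, d(x,\calU)+d(y,\calU)\}$, which is exactly $d'(x,y)$, and likewise $d_G(x,u^*) = d(x,\calU) = d'(x,u^*)$. Therefore $d' = d_G$ on $\calP'$ inherits symmetry, non-negativity, $d'(x,x) = 0$, and the triangle inequality from $d_G$. It remains to rule out $d'(x,y) = 0$ for distinct $x,y \in \calP'$: since $d$ is a genuine metric on $\ground$, $d(x,\calU) = \min_{q \in \calU} d(x,q) = 0$ forces $x \in \calU$, so $d(x,\calU) > 0$ whenever $x \in \calP - \calU$; this makes $d'(x,u^*) = d(x,\calU) > 0$, and for distinct $x,y \in \calP - \calU$ both arguments of the minimum defining $d'(x,y)$ are positive, since $d(x,y) > 0$ ($d$ is a metric) and $d(x,\calU) + d(y,\calU) > 0$.

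The step I expect to be most delicate is the ``shortcutting'' argument: one must be careful that in a shortest $x$-to-$y$ walk the subwalk before the first $\calU$-vertex and the subwalk after the last $\calU$-vertex genuinely use only true-$d$-weight edges, so that the triangle inequality applies and the whole detour through $\calU$ costs at least $d(x,\calU) + d(y,\calU)$; everything else is mechanical. An alternative that avoids the graph formalism altogether is to verify the triangle inequality $d'(x,z) \le d'(x,y) + d'(y,z)$ directly by a short case analysis on how many of $x,y,z$ equal $u^*$, repeatedly invoking $d(a,\calU) \le d(a,b) + d(b,\calU)$ and the fact that a minimum is bounded above by each of its arguments; I would fall back on this if the graph language obscures rather than clarifies.
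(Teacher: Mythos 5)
Your proposal is correct, but it takes the opposite route from the paper's formal argument. The paper \emph{mentions} the shortest-path-graph interpretation only as intuition ("a simple way to verify this is that $d'$ is the metric derived from the shortest path in a complete graph\dots") and then deliberately sets it aside, giving as its actual proof the direct case analysis on how many of $x,y,z$ equal $u^\star$ — i.e.\ exactly the "fallback" you sketch in your last sentence. You instead make the graph argument rigorous: you verify that the quotient shortest-path pseudometric really does coincide with the formula defining $d'$, via the prefix/suffix surgery at the first and last $\calU$-vertices of a walk, and then inherit all axioms at once. That verification is the content the paper's intuitive remark glosses over, and your treatment of it is sound (the key observation — that every edge strictly before the first $\calU$-vertex, and the edge entering it, carries its true $d$-weight, so the triangle inequality for $d$ lower-bounds the prefix by $d(x,\calU)$ — is exactly the point that needs care). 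What each approach buys: the paper's case analysis is shorter and fully elementary but is a bare verification with no structural insight; your construction explains \emph{why} $d'$ is a metric (it is a contraction/quotient of a shortest-path metric) and additionally checks the identity of indiscernibles, which the paper's proof silently skips with "the only non-trivial property is the triangle inequality." One cosmetic point: since the centers $\calU$ may be improper (not all contained in $\calP$), the graph should be taken on vertex set $\calP \cup \calU$ rather than $\calP$; the paper's own sketch has the same imprecision, and your argument is unaffected.
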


\begin{proof}

Intuitively, we can derive $d'$ by the metric of shortest path in the following graph.
For each $p \in \calP$ we have a node.
The weight of the edge $pq$ for $p,q \in \calU$ is $0$ and the weight of any other two point is their $d$ distance.
Now, $d'$ exactly equals to the metric derived by the shortest in this graph, which is well-known to be a metric.
By the way, we provide another proof here that does not depend in this fact and is self-contained.

Now, we proceed with the formal proof.
The only non trivial property about $d'$ is the triangle inequality. We need to show that for every $x,y,z \in \calP'$, we have
$$ d'(x,y) \leq d'(x,z) + d'(y,z). $$
If at least two of $x,y$ and $z$ are equal, the inequality is trivial. Otherwise, by symmetry between $x$ and $y$, we have one of the following three cases.

\noindent
\textbf{Case 1:} $z = u^*$ and $x,y \neq u^*$.

This case is trivial by the definition of $d'$ since
$$d'(x,y) = \min\{ d'(x,u^*) + d'(y,u^*), d(x,y) \} \leq d'(x,u^*) + d'(y,u^*) = d'(x,z) + d'(y,z). $$

\noindent
\textbf{Case 2:} $y = u^*$ and $x,z \neq u^*$.

Assume $u_x,u_z \in \calU$ are the projection of $x$ and $z$ on $\calU$ respectively with respect to metric $d$. So,
$d(x,\calU) = d(x,u_x)$, $d(z,\calU) = d(z,u_z)$ and $d(x,u_x) \leq d(x,u_z)$.
Note that
$d'(x,y) = d'(x,u^*) = d(x,\calU) = d(x,u_x)$ and
$d'(z,y) = d'(z,u^*) = d(z,\calU) = d(z,u_z)$.
As a result,
\begin{eqnarray*}
    d'(x,y)
    &=& d(x,u_x)  \\
    &\leq& d(x,u_z) \\
    &\leq& d(x,z) + d(z,u_z) \\
    &=& d(x,z) + d'(z,y).
\end{eqnarray*}
The first inequality is because of the choice of $u_x$ in $\calU$ and the second inequality is triangle inequality for metric $d$.
So, we have the following
\begin{equation}\label{eq:new-metic-case2-eq1}
    d'(x,y) \leq d(x,z) + d'(z,y).
\end{equation}

Since $d'$ is non negative, we also have
\begin{equation}\label{eq:new-metic-case2-eq2}
   d'(x,y) = d'(x,u^*) \leq \left( d'(x,u^*) + d'(z,u^*) \right) + d'(z,y) 
\end{equation}
Combining \Cref{eq:new-metic-case2-eq1} and \Cref{eq:new-metic-case2-eq2} we have
$$ d'(x,y) \leq \min \{ d'(x,u^*) + d'(z,u^*) , d(x,z) \}  + d'(z,y) = d'(x,z) + d'(z,y). $$

\noindent
\textbf{Case 3:} $x,y,z \neq u^*$.
        
According to the definition of $d'$ we have one of the following sub cases.
\begin{itemize}
    \item $d'(y,z) = d'(y,u^*) + d'(z,u^*)$.
    
    By the second case that we have already proved, we know that
    $$ d'(x,u^*) \leq d'(x,z) + d'(z,u^*). $$
    Hence,
    \begin{eqnarray*}
      d'(x,y) &=& \min \{ d'(x,u^*) + d'(y,u^*) , d(x,y) \} \\
      &\leq& d'(x,u^*) + d'(y,u^*) \\
      &\leq& d'(x,z) + d'(z,u^*) + d'(y,u^*) \\
      &=& d'(x,z) + d'(y,z).  
    \end{eqnarray*}
    
    \item $d'(x,z) = d'(x,u^*) + d'(z,u^*)$.
    
    By symmetry between $x$ and $y$, this sub case is similar to the previous sub case.
    
    \item $d'(y,z) \neq d'(y,u^*) + d'(z,u^*)$ and $d'(x,z) \neq d'(x,u^*) + d'(z,u^*)$.

    According to the definition of $d'(x,z)$ and $d'(y,z)$ we have
    $d'(x,z) = d(x,z)$ and $d'(y,z) = d(y,z)$.
    Then, be triangle inequality for $d$ we have
    \begin{eqnarray*}
        d'(x,y) &=& \min \{ d'(x,u^*) + d'(y,u^*) , d(x,y) \} \\
        &\leq& d(x,y) \\
        &\leq& d(x,z) + d(y,z) \\
        &=& d'(x,z) + d'(y,z).
    \end{eqnarray*}
\end{itemize}

In all cases, we proved $ d'(x,y) \leq d'(x,z) + d'(y,z) $ which concludes $(\calP', d')$ is a metric space.
\end{proof}

Next, we run any algorithm for $(s+1)$-median problem on $\calP'$ w.r.t.~metric $d'$ to find a $\calF$ of size at most $(s+1)$ which is a $\beta$ approximation for $\OPT_{s+1}(\calP')$.
Finally, we let $\tilde{\calU} = \calU + (\calF - u^*)$.
Note that all of the points of $\calF - u^*$ are present in $\calP$ and also the size of $\tilde{\calU}$ is at most $k+s$ since $u^* \in \calF$.

\begin{algorithm}[H]
  \DontPrintSemicolon
  $\calP' \gets (\calP - \calU) + u^*$.
 
  $w(u^*) \gets \beta n W \Delta$.
    \tcp{$W$ is maximum weight of points in $\calP$ and $\beta$ is defined in \Cref{parameters}}

   Consider $D'$ as an oracle to the metric $d'$ defined in \Cref{def:new-metric}.

   $\calF \gets $ any $\beta$ approximate solution for $\OPT_{s+1}(\calP')$ with access to distance oracle $D'$.

   $\tilde{\calU} \gets \calU + (\calF - u^*)$.

   \Return $\tilde{\calU}$.
  \caption{\DevelopCenters$(\calP, \calU, s)$}
\end{algorithm}

It is not obvious how the cost of a set w.r.t.~metrics $d$ and $d'$ relate to each other. But, we will show that solution $\calF$ w.r.t.~metric $d'$ is going to be a good solution for our purpose w.r.t.~metric $d$ as well.
So, we show the following guarantee for this subroutine.

\begin{lemma}\label{lem:guarantee-develop-centers}
    If $\tilde{\calU}$ is returned by \DevelopCenters$(\calP,\calU,s)$, then we have
    $$\cost{\tilde{\calU},\calP} \leq \beta\ \min_{\substack{\calF \subseteq \calP, \\ |\calF| \leq s}} \cost{\calU + \calF, \calP}.\footnote{Recall $\beta$ from \Cref{parameters}} $$
\end{lemma}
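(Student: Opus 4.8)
\textbf{Proof plan for Lemma~\ref{lem:guarantee-develop-centers}.}

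The plan is to translate back and forth between the original metric $d$ on $\calP$ and the contracted metric $d'$ on $\calP' = (\calP - \calU) + u^\star$, and to exploit the fact that the huge weight $w(u^\star) = \beta n W \Delta$ forces any $\beta$-approximate $(s+1)$-median solution on $\calP'$ to open a center \emph{at} $u^\star$. First I would record the two basic facts relating costs under $d$ and $d'$. (a) For any set $\calF \subseteq \calP$ with $|\calF| \le s$, if we let $\calG := (\calF - \calU) + u^\star \subseteq \calP'$ (so $|\calG| \le s+1$), then $\cost{\calG, \calP'} \le \cost{\calU + \calF, \calP}$: indeed, for each $x \in \calP' - u^\star$ we have $d'(x, \calG) \le \min\{ d'(x,u^\star), \min_{f \in \calF - \calU} d(x,f) \} = \min\{ d(x,\calU), \min_{f} d(x,f)\} = d(x, \calU + \calF)$ (using the definition of $d'$ and that $d'(x,f) \le d(x,f)$; note $u^\star \in \calG$), and $u^\star$ contributes $0$ to both sides. (b) Conversely, for any $\calG \subseteq \calP'$ with $u^\star \in \calG$, writing $\tilde\calU := \calU + (\calG - u^\star)$ we have $\cost{\tilde\calU, \calP} \le \cost{\calG, \calP'}$: for each $x \in \calP - \calU$, $d(x, \tilde\calU) = \min\{ d(x,\calU), \min_{g \in \calG - u^\star} d(x,g)\} \le \min\{ d'(x,u^\star), \min_g d'(x,g)\} = d'(x, \calG)$, because $d(x,\calU) = d'(x,u^\star)$ and $d(x,g) \ge d'(x,g)$; and points of $\calU$ contribute $0$ to $\cost{\tilde\calU,\calP}$. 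This last step is where one must be a little careful — the inequality $d(x,g) \ge d'(x,g)$ goes the ``wrong'' way pointwise, but it is still what we need since we are upper-bounding $d(x,\tilde\calU)$ by $d'(x,\calG)$, and $d'(x,\calG) \le d'(x,g)$ for every $g$.

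Next I would argue that the solution $\calF$ returned by the static $\beta$-approximation algorithm on $(\calP', d')$ satisfies $u^\star \in \calF$. Suppose not. Then $\cost{\calF, \calP'} \ge w(u^\star) \cdot d'(u^\star, \calF) \ge w(u^\star) = \beta n W \Delta$, since every point of $\calP' - u^\star$ is at $d'$-distance at least $1$ from... — more directly, $d'(u^\star, \calF) \ge 1$ as distances are at least $1$. On the other hand, $\OPT_{s+1}(\calP') \le \cost{\{u^\star\}, \calP'} = \sum_{x \in \calP' - u^\star} w(x) \, d'(x, u^\star) \le (|\calP| - 1) \cdot W \cdot \Delta < n W \Delta$, where we used $d'(x,u^\star) = d(x,\calU) \le \Delta$ and $w(x) \le W$. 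Hence $\cost{\calF, \calP'} \ge \beta n W \Delta > \beta \cdot \OPT_{s+1}(\calP')$, contradicting that $\calF$ is a $\beta$-approximation. So $u^\star \in \calF$, and therefore $|\calF - u^\star| \le s$ and $\tilde\calU = \calU + (\calF - u^\star)$ is a valid output with $|\tilde\calU| \le k + s$.

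Finally I would chain the inequalities. Let $\calF^{\mathrm{opt}} \subseteq \calP$, $|\calF^{\mathrm{opt}}| \le s$, be a minimizer of $\cost{\calU + \calF, \calP}$, and put $\calG^{\mathrm{opt}} := (\calF^{\mathrm{opt}} - \calU) + u^\star$. By fact (a), $\cost{\calG^{\mathrm{opt}}, \calP'} \le \cost{\calU + \calF^{\mathrm{opt}}, \calP}$, and since $|\calG^{\mathrm{opt}}| \le s+1$ this gives $\OPT_{s+1}(\calP') \le \cost{\calU + \calF^{\mathrm{opt}}, \calP}$. Because $\calF$ is a $\beta$-approximation, $\cost{\calF, \calP'} \le \beta \cdot \OPT_{s+1}(\calP') \le \beta \cdot \cost{\calU + \calF^{\mathrm{opt}}, \calP}$. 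Now apply fact (b) with $\calG = \calF$ (legitimate since $u^\star \in \calF$) to get $\cost{\tilde\calU, \calP} \le \cost{\calF, \calP'} \le \beta \cdot \cost{\calU + \calF^{\mathrm{opt}}, \calP} = \beta \cdot \min_{\calF \subseteq \calP, |\calF| \le s} \cost{\calU + \calF, \calP}$, which is exactly the claimed bound.

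The main obstacle I anticipate is not any single step but getting the direction of the $d$ versus $d'$ comparisons exactly right in facts (a) and (b): $d'$ is pointwise at most $d$ on pairs from $\calP - \calU$, but distances to $u^\star$ equal distances to the set $\calU$, so one has to be disciplined about which side of each inequality is being bounded, and about the bookkeeping for the zero-cost points in $\calU$ when passing between a $k$-center solution on $\calP$ and an $(s+1)$-center solution on $\calP'$. The ``forcing'' argument for $u^\star \in \calF$ is routine once the weight $w(u^\star)$ is set large enough relative to $\beta \cdot \OPT_{s+1}(\calP')$, and the choice $w(u^\star) = \beta n W \Delta$ comfortably suffices.
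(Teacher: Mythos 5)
Your proposal is correct and follows essentially the same route as the paper: the paper's proof establishes the exact equality $\costd{\calF+\calU,\calP}=\costdp{\calF+u^\star,\calP'}$ (its Claim on the relation between the two metrics) via the same contraction-to-$u^\star$ idea, and then chains the $\beta$-approximation guarantee for $(s+1)$-median on $\calP'$ exactly as you do; your facts (a) and (b) are precisely the two directions of that equality, and your explicit forcing argument for $u^\star\in\calF$ fills in a step the paper only asserts. The one place to tighten is the inline justification of fact (b): as you yourself flag, $d(x,g)\ge d'(x,g)$ points the wrong way termwise, and the clean fix is to observe that every term defining $d'(x,\calG)$ --- namely $d'(x,u^\star)=d(x,\calU)$ and $d'(x,g)=\min\{d'(x,u^\star)+d'(g,u^\star),\,d(x,g)\}\ge\min\{d(x,\calU),\,d(x,g)\}$ --- is individually at least $d(x,\tilde{\calU})$, which is exactly the case analysis the paper carries out.
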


\begin{proof}

First, we show how the objective function of two solutions with respect to metrics $d$ and $d'$ relate to each other.
Denote $\costd{X,\calP}$ to be the cost of $X$ with respect to metric $d$ on $\calP$, i.e.~$\sum_{p \in \calP} w(p)d(p,X)$, and $\costdp{X,\calP'}$ to be the cost of $X$ with respect to metric $d'$ on $\calP'$, i.e.~$\sum_{p \in \calP'} w(p)d'(p,X)$.
The following claim relates metrics $d$ and $d'$.

\begin{claim}\label{lem:costd-costdp}
    Assume $\calF \subseteq \calP$. Then, the following holds.
    $$ \costd{\calF + \calU, \calP} = \costdp{\calF + u^*, \calP'}. $$
\end{claim}

\begin{proof}

Note that $\calP' = (\calP-\calU) + u^*$. Since $d'(u^* , \calF + u^*) = 0$ and $d(u,\calF+\calU)=0$ for every $u \in \calU$, it suffices to show that for each $x \in \calP-\calU$, we have $d(x, \calF + \calU) = d'(x, \calF + u^*) $.
Since
$d(x, \calF+\calU) = \min\{ d(x, \calF), d(x, \calU) \}$, we have one of the following two cases.

\noindent
\textbf{Case 1: $d(x, \calU) \leq  d(x, \calF)$.}
This means $d'(x,u^*) \leq d(x,\calF)$. As a result, for every $f \in \calF$ we have
\begin{equation}\label{eq:lem-costd'=costd-case1-eq1}
    d'(x,u^*) \leq d(x,f)
\end{equation}
We also have the following since $d'$ is non negative.
\begin{equation}\label{eq:lem-costd'=costd-case1-eq2}
    d'(x,u^*) \leq d'(x,u^*) + d'(f,u^*)
\end{equation}
Combining \Cref{eq:lem-costd'=costd-case1-eq1} and \Cref{eq:lem-costd'=costd-case1-eq2}, for every $f \in \calF$, we have
$$ d'(x,u^*)  \leq \min \{  d'(x,u^*) + d'(f,u^*), d(x,f)\} = d'(x,f). $$
Hence, $d'(x, \calF +u^*) = d'(x,u^*)$ implying that
$$ d'(x, \calF +u^*) = d'(x,u^*) = d(x,\calU) = d(x, \calF + \calU). $$
The last equality is because of the assumption of this case.

\noindent
\textbf{Case 2: $d(x, \calU) > d(x, \calF)$.}        
Assume $f_x \in \calF$ is the projection of $x$ onto $\calF$ with respect to metric $d$.
We are going to show that $f_x$ is also the projection of $x$ onto $\calF + u^*$ with respect to metric $d'$.

We have $d(x,f_x) = d(x,\calF) \leq d(x, \calU) = d'(x,u^*)$.
So,
\begin{equation}\label{eq:lem-costd'=costd-case2-eq1}
    d(x,f_x) \leq d'(x,u^*)
\end{equation}
By non negativity of $d'$, we have
\begin{equation*}
    d(x,f_x) \stackrel{(\ref{eq:lem-costd'=costd-case2-eq1})}{\leq}
     d'(x,u^*) \leq d'(x,u^*) + d'(f_x,u^*),
\end{equation*}
which concludes
$d(x,f_x) = \min \{ d'(x,u^*) + d'(f_x,u^*), d(x,f_x) \} = d'(x,f_x)$. So,
\begin{equation}\label{eq:lem-costd'=costd-case2-eq2}
  d(x,f_x) = d'(x,f_x). 
\end{equation}

Now, assume $f \in \calF$ is arbitrary. We have
\begin{equation}\label{eq:lem-costd'=costd-case2-eq3}
   d'(x,f_x)  \stackrel{(\ref{eq:lem-costd'=costd-case2-eq2})}{=} d(x,f_x) \stackrel{(\ref{eq:lem-costd'=costd-case2-eq1})}{\leq} d'(x,u^*) \leq  d'(x,u^*) + d'(f,u^*),  
\end{equation}
as well as
\begin{equation}\label{eq:lem-costd'=costd-case2-eq4}
   d'(x,f_x) \stackrel{(\ref{eq:lem-costd'=costd-case2-eq2})}{=} d(x,f_x) \leq d(x,f).
\end{equation}
The inequality in \Cref{eq:lem-costd'=costd-case2-eq4} is because of the choice of $f_x$.
Combining \Cref{eq:lem-costd'=costd-case2-eq3} and \Cref{eq:lem-costd'=costd-case2-eq4}, we have
$$ d'(x,f_x) \leq \min\{ d'(x,u^*) + d'(f,u^*), d(x,f) \} = d'(x,f). $$
As a result, for every $f \in \calF$, we have $d'(x,f_x) \leq d'(x,f)$. Combining with $d'(x,f_x) \leq d'(x,u^*)$ (because of \Cref{eq:lem-costd'=costd-case2-eq1} and \Cref{eq:lem-costd'=costd-case2-eq2}), we conclude
$d'(x,\calF + u^*) = d'(x,f_x)$. Hence,
$$ d(x,\calF + \calU) = d(x,\calF) = d(x,f_x) \stackrel{(\ref{eq:lem-costd'=costd-case2-eq2})}{=} d'(x,f_x) = d'(x,\calF + u^*). $$
The first equality is because of the assumption of this case that $d(x, \calU) > d(x, \calF)$.

\noindent
\textbf{Objective Function:}
In each case, we showed $d(x, \calF + \calU) = d'(x, \calF + u^*) $ which concludes the lemma as follows.
\begin{eqnarray*}
    \costd{\calF + \calU, \calP} &=& \sum\limits_{x \in \calP} w(x) \cdot d(x, \calF + \calU) \\
    &=& \sum\limits_{x \in \calP-\calU} w(x) \cdot d(x, \calF + \calU) \\
    &=& \sum\limits_{x \in \calP-\calU}w(x) \cdot d'(x, \calF + u^*) \\
    &=&
    \sum\limits_{x \in (\calP-\calU) + u^*}w(x) \cdot d'(x, \calF + u^*) \\
    &=&
    \costdp{\calF + u^*, \calP'}.
\end{eqnarray*}
Note that we used $ d'(u^*,\calF+u^*) = 0$ and $d(x,\calF+\calU) = 0 $ for all $x \in \calU$.

\end{proof}

Now, assume $\calF$ is the $\beta$ approximate solution found by \DevelopCenters \ in line 4. 
Let
$$ \calF^* = \arg\min_{\substack{\calF \subseteq \calP, \\ |\calF| \leq s}} \left\{ \cost{\calF + \calU, \calP} \right\}. $$
We have
\begin{eqnarray*}
   \costd{\calF + \calU, \calP}  &=&   \costdp{\calF + u^*, \calP'} \\
   &=& \costdp{\calF, \calP'} \\
   &\leq& \beta\ \OPT^{d'}_{s+1}(\calP') \\
   &\leq& \beta\  \costdp{\calF^* + u^*, \calP'} \\
   &=& \beta\  \costd{\calF^* + \calU, \calP},
\end{eqnarray*}
where $\OPT^{d'}_{s+1}(\calP')$ is the optimum $(s+1)$-median w.r.t.~metric $d'$ in space $\calP'$.
The first and the last equalities follow from \Cref{lem:costd-costdp}.
The second equality follows from $u^* \in \calF$ (since $w(u^*)$ is too large and since $\calF$ is $\beta$-approximate it should contain this point).
The first inequality follows from the approximation ratio of solution $\calF$.
The last inequality follows from the fact that $\calF^* + u^*$ is a set of size at most $s+1$.
Then we have the following as desired.
\begin{equation*}
   \cost{\tu, \calP} = \costd{\calF + \calU, \calP} \leq \beta\ \costd{\calF^* + \calU, \calP}
   = 
   \beta\ \min_{\substack{\calF \subseteq \calP, \\ |\calF| \leq s}} \cost{\calU + \calF, \calP}.
\end{equation*}
\end{proof}

\subsection{\FastOneMedian}\label{sec:fast-one-median}

The aim of this algorithm is to solve the $1$-median problem on a set $B$ as fast as possible.
In order to find a single center which is a $3$ approximate of $\OPT_1(B)$, we sample a set $S$ of $\Theta(\log n)$ points from $B$ independently.
The sampling distribution is such that every $p \in B$ is sampled proportional to its weight, i.e.~with probability $w(p)/w(B)$.
Then, we find the best center between these sampled points.
So, we have the following algorithm.

\begin{algorithm}[H]
  \DontPrintSemicolon
  $B \gets \{ b_1,b_2, \ldots , b_{|B|} \}$ an arbitrary ordering.
  
  $S \gets \emptyset$.
  
  \For{$\tilde{\Theta}(\log n)$ iterations}{
    $r \gets [0,1]$ uniformly at random.
    
    $i^* \gets $ smallest index such that $\sum_{j=1}^{i^*} w(b_j) \geq r \cdot w(B)$.
    
    $ S \gets S + b_{i^*}$.
  }

  $q \gets \arg\min\{ s \in S \mid \cost{s,B} \} $.

  \Return $q$.
  \caption{\FastOneMedian$(B)$}
\end{algorithm}

We show the following guarantee of this algorithm.

\begin{lemma}\label{lem:guarantee-fast-one-median}
    If $q$ is returned by \FastOneMedian$(B)$, then with high probability,
    $$ \cost{q,B} \leq 3 \ \OPT_1(B). $$
\end{lemma}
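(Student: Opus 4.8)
The plan is to show that with high probability the sampled set $S$ contains at least one point $q$ whose cost $\cost{q, B}$ is at most $3 \cdot \OPT_1(B)$; since the algorithm returns the best point in $S$, this suffices. Let $c^{\star}$ be an optimal $1$-median for $B$ (so $\cost{c^{\star}, B} = \OPT_1(B)$), and let $\mu := \avcost{c^{\star}, B} = \OPT_1(B)/w(B)$ denote the weighted average distance of points in $B$ to $c^{\star}$. The key observation is a Markov-type bound: the total weight of points $p \in B$ with $d(p, c^{\star}) > 2\mu$ is at most $w(B)/2$; equivalently, the total weight of the ``good'' set $G := \{p \in B : d(p, c^{\star}) \le 2\mu\}$ is at least $w(B)/2$.

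\medskip
First I would argue that any point $p \in G$ is itself a $3$-approximate $1$-median. Indeed, for such a $p$ and any $b \in B$, triangle inequality gives $d(b, p) \le d(b, c^{\star}) + d(c^{\star}, p) \le d(b, c^{\star}) + 2\mu$, so summing with weights, $\cost{p, B} \le \cost{c^{\star}, B} + 2\mu \cdot w(B) = \OPT_1(B) + 2 \cdot \OPT_1(B) = 3 \cdot \OPT_1(B)$. Hence it is enough to ensure that $S \cap G \neq \emptyset$ with high probability. Since each of the $\tilde{\Theta}(\log n)$ samples independently lands in $G$ with probability $w(G)/w(B) \ge 1/2$, the probability that all samples miss $G$ is at most $(1/2)^{\tilde\Theta(\log n)} = n^{-\omega(1)}$ (choosing the hidden constant large enough to beat the desired failure probability $n^{-c}$). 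On the complementary high-probability event, $S$ contains a point of $G$, and the returned point $q = \arg\min_{s \in S} \cost{s, B}$ satisfies $\cost{q, B} \le \cost{p, B} \le 3 \cdot \OPT_1(B)$ for that $p \in S \cap G$.

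\medskip
There is essentially no hard technical obstacle here; the argument is a standard ``sample proportional to weight, then a random sample is a constant-factor median'' fact. The one point requiring a little care is the weighted Markov inequality (that $w(\{p : d(p,c^{\star}) > 2\mu\}) \le w(B)/2$), which follows immediately from $\sum_{p \in B} w(p) d(p, c^{\star}) = \mu \cdot w(B)$ and the definition of $\mu$. A secondary bookkeeping point is to confirm that the sampling step in the algorithm (drawing $r \in [0,1]$ uniformly and taking the smallest prefix index $i^{\star}$ with $\sum_{j \le i^{\star}} w(b_j) \ge r \cdot w(B)$) indeed samples $b_i$ with probability exactly $w(b_i)/w(B)$, which is clear since the prefix sums partition $[0,1]$ (after scaling by $w(B)$) into intervals of length $w(b_i)/w(B)$. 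Finally, I would note the running time is $\tilde{O}(|B|)$ — each of the $\tilde{\Theta}(\log n)$ iterations does an $O(|B|)$ scan (or $O(\log |B|)$ with precomputed prefix sums) plus an $O(|B|)$ cost evaluation — though the lemma as stated only asserts the approximation guarantee.
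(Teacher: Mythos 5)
Your proof is correct and takes essentially the same approach as the paper: both show that a single weight-proportional sample from $B$ is a $3$-approximate $1$-median with constant probability (via the triangle inequality against the optimal center) and then amplify over the $\Theta(\log n)$ independent samples. The only cosmetic difference is that the paper computes $\mathbb{E}\left[\cost{y,B}\right] \leq 2 \cdot \OPT_1(B)$ and applies Markov's inequality to the cost of the sample, whereas you apply the Markov-type bound to the distance $d(p, c^{\star})$ and then convert closeness to $c^{\star}$ into a cost bound; both yield the same per-sample success probability up to constants and the same conclusion.
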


Note that in this algorithm, $q$ is a proper solution for $1$-median problem on $B$, but in this lemma $\OPT_1(B)$ stands for the optimum improper solution.
So, the solution of the algorithm is compared with the best improper solution, although it returns a proper solution.

\begin{proof}

Let $m = |B|$ and $z^*$ be the optimal solution to the improper $1$-median problem in $B$.
If $y$ is sampled from the distribution where the probability of choosing $p$ is $w(p)/w(B)$, then we have
\begin{eqnarray*}
   \mathbb E \left[ \cost{y,B} \right] 
   &=&
   \sum_{y \in B} \frac{w(y)}{w(B)} \ \cost{y,B}
   =
   \sum_{y \in B} \frac{w(y)}{w(B)} \sum_{x \in B} w(x) \cdot d(x,y) \\ 
   &=&
   \frac{1}{w(B)}
   \sum_{y \in B} \sum_{x \in B} w(y)w(x) \cdot d(x,y) \\ 
   &\leq&
   \frac{1}{w(B)}
   \sum_{y \in B} \sum_{x \in B} w(y)w(x) \cdot  (d(x,z^*) + d(z^*,y)) \\ 
   &=& \frac{1}{w(B)} \left[
    \sum_{y \in B} w(y)  \cdot  \sum_{x \in B} w(x) \cdot d(x,z^*) +
    \sum_{x \in B} w(x)  \cdot  \sum_{y \in B} w(y) \cdot d(y,z^*) \right] \\
   &=& \sum_{x \in B} w(x)  \cdot d(x, z^*) + \sum_{y \in B} w(y)  \cdot d(y, z^*) \\
   &=&2\ \OPT_1(B).
\end{eqnarray*}
Thus, the expected approximation ratio of the $1$-median solution $y$ is at most $2$. Hence, by Markov's inequality, the probability that the approximation ratio of sample $y$ is bigger than $3$, is at most $2/3$.
Finally, since \FastOneMedian \ samples a set $S$ of $\Theta(\log n)$ points from $B$ independently,
the probability that all of these samples fail to have approximation ratio less than $3$ is at most $(2/3)^{|S|} = 1/n^{\Theta(1)}$.
As a result, with high probability there exists some $y \in S$ such that $\cost{y,B} \leq 3 \ \OPT_1(B)$.

\end{proof}

\section{Description of the Algorithm}

In this section, we provide description of the algorithm as well as the main ideas and arguments on why the algorithm works correctly and how we achieve approximation, recourse and running time bounds.

The algorithm divides the input stream into some  epochs on the fly.
At the beginning of each epoch we have a solution $\calU^{(0)}$ of $k$ centers for the current space $\calP^{(0)}$.
The solution is $50\beta=O(1)$\footnote{Recall $\beta$ from \Cref{parameters}} approximate and is robust with respect to $\calP^{(0)}$.
Throughout the epoch we always maintain a $ O(\beta) = O(1) $ approximate solution.
Then, at the end of the epoch we build a solution using only the initial solution $\uo $ of this epoch.
So, this new solution is independent from the solutions maintained during the epoch.
This new solution is going to be $50\beta$ approximate and robust with respect to the new space and this is going to be the starting point of the next epoch. Now, we explain the algorithm on each epoch.

At the beginning of the epoch we find a value $l$ such that 
$$ \Omega(1) \ \OPT_{k-l}(\po) \leq \OPT_{k}(\po) \leq O(1) \ \OPT_{k+l}(\po). $$
Note that the value of $l$ could be zero.
Next, we will find a subset  of $\uo$ of size $ k-l $ like $\calU'$ whose cost is at most $O(1) \ \OPT_{k}(\po) $.
We call this part of the algorithm \RemoveCenters.
So, after this part, we have $l$ and $\calU' \subseteq \uo$ of size $k-l$ with above guarantees.

Next, for $l$ many iterations, we do lazy updates on $\calU'$ as follows.
If a point gets inserted, we add it to $\calU'$, and if a point gets deleted we do nothing (since we are dealing with the improper case, we can maintain a center in our main solution even if it gets deleted from the space).
Obviously, during these updates the size of $\calU'$ is at most $k$ which means our solution is valid.
We will prove for $l$ many iterations, this solution is a constant approximation for the value of $\OPT_k$
on the current space.
We call this part of the algorithm \LazyUpdates.

Then, we do one more update to get the final space $\calP^{(l+1)}$ in this epoch.
Now, we find a $10\beta$ approximate solution $\tuo$ for $\OPT_{k+l+1}(\po)$ by adding $O(l+1)$ many centers to $\uo$. Note that the solution might have more than $k+l+1$ centers, although our goal is to compete with $\OPT_{k+l+1}(\po)$.
This is not obvious why such a solution exists.
But, we show that it is possible and actually, the subroutine \DevelopCenters \ introduced in \Cref{sec:develop-centers} finds such a solution for this part.

Next, we add all of the inserted points during these $l+1$ updates to $\tuo$ i.e.~$\calV = \tuo + (\pl - \po)$.
This is because we want a good solution for the new set of points $\pl$ and newly added points might be costly for $\tuo$.
Note that if the weight of a point $p \in \po \cap \pl$ is changed (i.e.~it gets deleted and then inserted with another weight), we consider it in $\pl - \po$.
Next, we reduce the size of $\calV$ to get a subset of $k$ centers $\calW$ that is going to be a $ 32\beta$ approximate solution for $\OPT_k(\pl)$.
We do this part by calling subroutine \RandLocalSearch \ introduced in \Cref{sec:rand-local-search}.

So far, we have a solution $\calW$ for the new set of point $\pl$ that satisfies $ |\calW \oplus \calU^{(0)} | = O(l+1)$ and is $32\beta$ approximate for $\OPT_k(\pl)$.
Next, we make solution $\calW$ robust w.r.t.~the new space $\calP^{(l+1)}$ and argue that it is actually a $ 50\beta$ approximate solution.
We call this part of the algorithm \Robustify.
Finally, we can start the next epoch with this new solution of $k$ centers, since it is $50\beta$ approximate and robust w.r.t.~the new space.

So, we have the following pseudo-code for what we do in an epoch in a high level. $\calP^{(l)}$ in line 3 denotes the space after $l$ updates in this epoch. 

\begin{algorithm}[H]
  \DontPrintSemicolon
  $\left( \calU', l \right) \gets $ \RemoveCenters$(\uo)$.

  \LazyUpdates$(\calU', l)$.

  $\pl \gets \calP^{(l)} \pm p_{l+1}$. \tcp{read the $(l+1)^{th}$ update.}

  $\tuo \gets $ \DevelopCenters$(\po, \uo, (8C+2)(l+1))$.
  \tcp{$C = 12 \cdot 3 \cdot 10^5\gamma\beta^2$.}

  $\calV \gets \tuo + (\pl - \po)$.

  $\calW \gets $ \RandLocalSearch$(\calV, |\calV| - k)$.

  $\calU^{(l+1)} \gets $ \Robustify$(\po, \pl, \uo, \calW)$.

  \caption{\MainEp($\po, \uo$)}
\end{algorithm}

\subsection{\RemoveCenters}

Assume
\begin{equation}\label{def:l^*}
    l^* = \arg\max \left\{ 0 \leq l \leq k \;\middle|\;  \OPT_{k-l}(\po) \leq 400\gamma\beta \ \OPT_k(\po) \right\}.
\end{equation}
Note that $\OPT_k$ is considered in the improper case.
We will find a value $l'$ such that $ l'  = \Omega(l^*) $ and 
\begin{equation}\label{eq:cost-reduce-sol}
    \OPT_{k-l'}(\po) \leq 3 \cdot 10^5\gamma\beta^2 \ \OPT_k(\po).
\end{equation}
Let $\calR = \{ 0, 2^0, 2^1,2^2,\ldots,2^{\lfloor \log k \rfloor} \}$.
We iterate over $r \in \calR$ in increasing order starting from $0$, and call \RandLocalSearch$(\uo, r)$ \ introduced in \Cref{sec:rand-local-search}.
If the cost of the subset returned by this subroutine is less than $14 \cdot 400\gamma\beta \ \cost{\uo, \po}$, we take the next $r \in \calR$ and repeat the process.
We do this until reaching the maximum $r'$ such that $\calU_{k-r'} =$ \RandLocalSearch$(\uo, r')$ satisfies
$$ \cost{\calU_{k-r'}, \po} \leq 14 \cdot 400\gamma\beta \cost{\uo, \po}. $$
Then, we let $l' = r'$ (note that for simplicity, we have written this in the pseudo-code in a different way).
Later, we will show that this $l'$ satisfies \Cref{eq:cost-reduce-sol} and $l' \geq l^*/2$ with high probability.
Next, we let $l = \lfloor l' / C \rfloor$ where $C$ is $ 12 \cdot 400 \gamma\beta\alpha^2$ and is determined using the Double-Sided Stability \Cref{lem:double-sided-stability}.
Note that $ l+1 = \Omega(l'+1) = \Omega(l^*+1) $ since $C$ is a constant.
We will prove that for this value of $l$ we have the following guarantee.
$$ \frac{1}{3 \cdot 10^5\gamma\beta^2} \ \OPT_{k-l}(\po) \leq \OPT_{k}(\po) \leq 4 \ \OPT_{k+l}(\po). $$

Next, we find a subset $\calU' \subseteq \uo$ of size $k-l$ such that
\begin{equation}\label{eq:cost-Uprime-in-description-section}
  \cost{\calU', \po} \leq 2 \cdot 10^7\gamma\beta^2 \ \OPT_{k+l}(\po).  
\end{equation}
We do this by calling \RandLocalSearch$(\po, \uo, l)$.
So, we have the following algorithm for \RemoveCenters.

\begin{algorithm}[H]
  \DontPrintSemicolon
  $\calR \gets \{0, 2^0,2^1,2^2,\ldots,2^{\lfloor \log k \rfloor} \}$.

  \For{$r \in \calR$}{
    $\calU_{k-r} \gets $ \RandLocalSearch$(\po, \uo, r)$.

    \If{$\cost{\calU_{k-r}, \po} > 14 \cdot 400\gamma\beta \ \cost{\uo, \po}$}{
        break \textbf{for} loop.
    }
  }
  $l' \gets \lfloor r/2 \rfloor$.

  $l \gets \lfloor l' / C \rfloor$ where $C = 12 \cdot 3 \cdot 10^5\gamma\beta^2$.

  $\calU' \gets$ \RandLocalSearch$(\calU, l)$.

  \Return $\left( \calU', l\right)$.
  \caption{\RemoveCenters($\calU$)}
\end{algorithm}

\subsection{\LazyUpdates}
So far, we have $\calU'$ of size $k-l$.
Now we perform $l$ many updates in the input stream as follows.
Whenever a point $p$ gets inserted to space, we add it to $\calU'$ and if a point $p$ gets deleted, we keep $\calU'$ unchanged.
Note that since we deal with the improper dynamic $k$-median, we can maintain a center even if it gets deleted from the space.
In order to prove this solution is actually a constant approximation of $\OPT_k(\calP)$, where $\calP$ is the current space during these lazy updates, we combine the guarantee in \Cref{eq:cost-Uprime-in-description-section} and Lazy Updates \Cref{lem:lazy-updates}.
This concludes our solution is always a constant approximation of $\OPT_k(\calP)$ for the current space $\calP$ during the next $l$ updates.

\begin{algorithm}[H]
  \DontPrintSemicolon
   \For{$i=1$ to $l$}{
    Read an update in the input stream.
    
    \If{this update is insertion of $p$}{
        $\calU' \gets \calU' + p$.
    }
   }
  \caption{\LazyUpdates$(\calU', l)$}
\end{algorithm}

\subsection{\DevelopCenters}
This part starts after the $(l+1)$'th update in the input stream.
So, we have the final set of points $\pl$.
We will add at most $(8C+2)(l+1) = O(l+1)$ centers to $\uo$ to find a set $\tuo$ such that
\begin{equation}\label{eq:tuo-must-be-10-beta-approx-for-OPT-kl1}
  \cost{\tuo, \po} \leq 10\beta \ \OPT_{k+l+1}(\po).  
\end{equation}
We do this by calling \DevelopCenters$(\po,\uo, (8C+2)(l+1))$ introduced in \Cref{sec:develop-centers}.
According to the guarantee of this subroutine, we get a $\tuo$ such that
$$ \cost{\tuo, \po} \leq \beta\ \min_{\substack{\calF \subseteq \po, \\ |\calF| \leq (8C+2)(l+1)}} \cost{\uo + \calF, \po}. $$
In order to show that this $\tuo$ actually satisfies \Cref{eq:tuo-must-be-10-beta-approx-for-OPT-kl1}, it suffices to show that there exist $\calF \subseteq \po$ of size at most $(8C+2)(l+1)$ such that
$ \cost{\uo + \calF, \po} \leq 10 \ \OPT_{k+l+1}(\po). $
We conclude this existential analysis by constructing $\calF$ using an optimum solution $\calV^*_{k+l+1}$  for $(k+l+1)$-median problem on $\po$.
We show that it suffices to add $(8C+2)(l+1)$ many centers of $\calV^*_{k+l+1}$ to $\uo$ to get a constant approximation of $\OPT_{k+l+1}$.
These centers are elements of $\calV^*_{k+l+1}$ that does not form well-separated pair with any center in $\uo$.

After termination of \DevelopCenters$(\po, \uo, (8C+2)(l+1))$, we let
$\calV = \tuo + (\pl - \po)$. Note that we also consider points whose weight is changed. This concludes that $\calV$ is a good set of centers in the new space $\pl$ as well as the initial space $\po$.
The eventual conclusion is that
$$ \cost{\calV, \pl} \leq 10\beta \ \OPT_{k}(\pl). $$

Next, we reduce the size of $\calV$ to $k$ in order to provide a feasible solution for $\OPT_k(\pl)$.
This part of the algorithm is done by calling  \RandLocalSearch \ on $\calV$ to find a subset $\calW \subseteq \calV$ of size $k$.
Since we know that the cost of $\calV$ on $\pl$ is at most $10\beta\  \OPT_{k}(\pl)$, with the guarantee of \RandLocalSearch, we conclude that the cost of $\calW$ on $\pl$ is at most $ 32\beta \ \OPT_{k}(\pl)$ with high probability.

\subsection{\Robustify}\label{sec:robustify-implementation-full-version}

So far we have a solution $\calW$ which is $32\beta$ approximate for $\OPT_{k}(\pl)$.
The aim of this part of the algorithm is to make $\calW$, robust w.r.t.~the new space $\pl$.
Before we explain \Robustify, we define an integer $t(u)$ for every center $u$. Throughout the algorithm, if $\calU$ is our main solution, we always have an integer $t(u)$ together with center $u$ which indicates that  $u$ is $t(u)$-robust w.r.t.~the current space.
So, at the beginning of the epoch we know that every $u \in \uo$ is $t(u)$-robust w.r.t.~$\po$.

Now, we explain how we make a robust solution w.r.t.~$\pl$ by changing $\calW$.
Recall from \Cref{def:bounded-robust} that $\calW$ is robust w.r.t.~$\pl$, whenever for all $u \in \calW$, we have the following condition
\begin{equation}\label{condition-robust-center}
    u\ \text{is}\ t\text{-robust w.r.t.}\ \pl, \ \text{where}\  t \ \text{is the smallest  integer satisfying}\  
    10^t \geq d(u,\calW - u )/ 200.
\end{equation}
In this subroutine, we have a set $\calS$ that helps us keep track of which elements violate Condition \ref{condition-robust-center}.
First of all, we find some centers in $\calW$ that might not satisfy this condition and put them into set $\calS$.
We do this by calling \FindSuspects.
Centers detected by \FindSuspects \ are going to be all centers $u \in \calW$ that $u$ might not be $t(u)$-robust w.r.t.~the new space $\pl$, i.e.~if a center $u \in \calW$ is not detected by \FindSuspects, then we are sure that $u$ is $t(u)$-robust w.r.t.~$\pl$.
Note that these centers found by \FindSuspects \ are not necessarily all of the centers violating Condition~\ref{condition-robust-center}.

Then, we make all of the centers in $\calS$ robust, by calling \EmptyS.
In this subroutine, for every $w \in \calS$,
we determine the smallest integer $t$ such that $10^t \geq d(w,\calW - w ) / 100$
\footnote{Note that the denominator is 100 (not 200 as in Condition \ref{condition-robust-center}), this is because we want to prevent multiple calls to \MakeRbst \ for a center. We use this in \Cref{lem:robustify-calls-once}.}
and replace $w$ by a center $w_0$ via a call to \MakeRbst$(w, t)$.
Now, this $w_0$ satisfies Condition \ref{condition-robust-center}.
We also remove $w$ from $\calS$, define $t(w_0) = t$ and save this integer with center $w_0$.

At this point, we are sure that every $u \in \calW$ is $t(u)$-robust w.r.t.~$\pl$.
So, in order to figure out if a center $u \in \calW$ violates Condition \ref{condition-robust-center}, we make use of the integer $t(u)$.
If $t$ is the smallest integer satisfying $10^t \geq d(u,\calW - u ) / 200$ and $t \leq t(u)$, then $u$ definitely satisfies Condition \ref{condition-robust-center}.

Note that since $\calW$ is changing, for every $w \in \calW$, the value of $d(w,\calW -w )/ 200$ is changing as well.
As a result, after each change in $\calW$, we should check whether or not any center in $\calW$ violates Condition \ref{condition-robust-center}.
If such a center exists, we add it to $\calS$.
We keep adding new elements to $\calS$ and removing elements from $\calS$ (by \EmptyS), until all of the centers in the current $\calW$ satisfy Condition \ref{condition-robust-center} and $\calS$ in turn is empty.
So, we have the following algorithm.

\begin{algorithm}[H]
  \DontPrintSemicolon
  $\calS \gets$ \FindSuspects$(\po,\pl,\calW,\uo)$.

  \While{true}{
    \If{$\calS = \emptyset$}{
        \Return
    }
    \EmptyS$(\calS,\calW,\pl)$.\label{emptyS-line-robustify}
    
    \For{$u \in \calW$}{ \label{for-loop-robustify}
        $t \gets \lceil \log \left( d(u,\calW -u) / 200 \right) \rceil  $.

        \If{$t > t(u)$}{\label{line:add-condition}
            $\calS \gets \calS + u$.
        }
    }  
  }
  \caption{\Robustify$(\po,\pl,\calW,\uo)$}
\end{algorithm}

It is not obvious why this algorithm terminates.
But, we show that \Robustify \ calls \MakeRbst \ on each center $w \in \calW$ at most once during a single call to \Robustify.

\begin{algorithm}[H]
  \DontPrintSemicolon
  pick $w$ arbitrarily from $\calS$.

  $\calS \gets \calS - w$.

  $ t \gets \lceil \log \left(  d(w,\calW -w) / 100 \right) \rceil  $.

  \If{$t \geq 0$}{
    $ w_0 \gets \text{\MakeRbst}(\pl, w, t)$.
  }\Else{
    $w_0 \gets w$.

    $t \gets 0$.
  }

  $\calW \gets \calW - w + w_0$.

  $ t(w_0) \gets t $.
  \caption{\EmptyS$(\calS,\calW,\pl)$}
\end{algorithm}

In order to analyse the recourse, we show that the number of points added to $\calS$ by calling \FindSuspects \ in the beginning of \Robustify \ is at most $\tilde{O}(l+1)$. Then, we proceed by proving that the total recourse of the algorithm is $\tilde{O}(1)$ in amortization.
Note that during a single call to \Robustify \ the recourse might be as large as $k$, although we show the amortized recourse is $\tilde{O}(1)$ throughout the total run of the algorithm (not just one epoch).

In order to analyse why the final solution is $50\beta $ approximate, we show that the cost of the solution $\calW$ is not going to increase by more than a factor of $3/2$ by calling \Robustify.

As a result, we get the final solution of this epoch $\calU$ which is robust and is also $32\beta \times \frac{3}{2} \leq 50\beta$ approximate for $\OPT_k(\pl)$.

\subsubsection{\FindSuspects}

In this subroutine, we find all centers $w \in \calW$ that might not be $t(w)$-robust w.r.t.~the new space $\pl$.
What we are looking for are the following centers.
\begin{itemize}
    \item
    New centers $u$ added to $\uo$ (i.e. $u \in \calW - \uo$).
    \item
    Centers $u \in \uo \cap \calW$ whose neighborhood of radius $2 \cdot 10^{t(u)}$ has changed during this epoch (i.e.~$\text{Ball}_{2\cdot 10^{t(u)}}^{\po}(u) \neq \text{Ball}_{2\cdot 10^{t(u)}}^{\pl}(u)$).
\end{itemize}

Note that in both cases, we also consider the weight of the points. For instance, there might be a point $p$ which is present in both $\text{Ball}_{2\cdot 10^{t(u)}}^{\po}(u)$ and $ \text{Ball}_{2\cdot 10^{t(u)}}^{\pl}(u)$, but its weight has changed.
In this case, we consider center $u$ satisfying the second condition.
So, we find all of the centers $u \in \calW$ satisfying at least one of the above conditions and add them to set $\calS$.

\begin{algorithm}[H]
  \DontPrintSemicolon
  $\calS \gets \calW - \uo$.

  \For{$p \in \pl \oplus \po$}{
        \For{$u \in \uo \cap \calW$}{
            \If{$d(u,p) \leq 2 \cdot 10^{t(u)}$}{
                $\calS \gets \calS + u$.
            }
        }
  }
  \Return $\calS$.
  \caption{\FindSuspects$(\po, \pl, \calW, \uo)$}
\end{algorithm}

Note that in line 2, we also consider points $p \in \po \cap \pl$ whose weight have changed.
According to the definition of $t(u)$, we know that $u \in \uo$ is $t(u)$-robust w.r.t.~$\po$.
We will show that if a center $u$ is not detected by \FindSuspects, i.e.~does not satisfy any of the two above conditions, then it remains $t(u)$-robust with respect to the new space $\pl$. 
This fact helps us argue that when the \Robustify \ subroutine is terminated, the solution $\calU = \Robustify(\po, \pl, \calW, \uo)$ is robust.

\subsubsection{\MakeRbst}
In this subroutine, we get a center $w$ together with a non-negative integer $t$ and return a center $w_0$, such that there exist a $t$-robust sequence like $(w_0,w_1,\ldots, w_{t})$ where $w_{t} = w$.
To do this, we simply use the definition of $t$-robust sequence.
Iterating over $i$ from $t$ to $1$, assume that we have found the point $w_i$ (initially $w_{t} = w$).
First, we check whether 
$$\avcost{w_i, \text{Ball}_{10^i}^\calP(w_i)} \geq 10^i / 5. $$
If this inequality holds, we simply let $w_{i-1} = w_i$.
Otherwise, we find a $3$ approximate solution $q_i$ for $1$-median problem on $\text{Ball}_{10^i}^\calP(w_i)$  using \FastOneMedian \ algorithm introduced in \Cref{sec:fast-one-median}.
Then, we let $w_{i-1} = q_i$ if 
$$ \cost{q_i, \text{Ball}_{10^i}^\calP(w_i)} < \cost{w_i, \text{Ball}_{10^i}^\calP(w_i)}.  $$
Otherwise, we let $w_{i-1} = w_i$.
With guarantees of \FastOneMedian \ algorithm, we prove that we achieve a $t$-robust sequence at the end of \MakeRbst. So, we have the following algorithm.

\begin{algorithm}[H]
  \DontPrintSemicolon
  $w_t \gets w$.

  \For{$i=t$ down to $1$}{
    $B_i \gets \text{Ball}_{10^i}^\calP(w_i)$.

    $z \gets \avcost{w_i, B_i}$.

    \If{$z \geq 10^i/5 $}{
        $w_{i-1} \gets w_i$.
    }\Else{
        $q_i \gets \text{\FastOneMedian}(B_i,w_i) $.

        \If{$\cost{w_i, B_i} \leq \cost{q_i, B_i}$}{
            $w_{i-1} \gets w_i$.
        }\Else{
            $w_{i-1} \gets q_i$.
        }
    }
  }
  \Return $w_0$.
  \caption{\MakeRbst$(\calP, w, t)$}
\end{algorithm}

\begin{remark}
    Instead of \FastOneMedian, we can use the algorithm of \cite{MettuP02} to find a $\beta$-approximate solution $q_i$ for $1$-median problem on $B_i$.\footnote{Recall $\beta$ from \Cref{parameters}}.
    But, if we do this, we lose another multiplicative factor of $\beta$ on the approximation ratio in our final algorithm.
    We can prevent it by performing this simple \FastOneMedian \ subroutine instead of \cite{MettuP02}.
\end{remark}

    \section{Correctness of the Algorithm}\label{sec:correctness}

Assume we have $\uo$ at the beginning of an epoch where $\uo$ is $50\beta$ approximate and it is also robust.
The subroutine \RemoveCenters, reduces the size of the solution to $k-l$, which means during the \LazyUpdates, we always maintain a solution of size at most $k$ which is feasible.
The approximation guarantee of \RandLocalSearch \ in \Cref{lem:cost-local-search} holds with high probability. But, since we call \RemoveCenters \ at most $T$ times during any $T$ updates of input stream, we conclude that (by a simple union bound) with high probability all of these guarantees hold.

Next, the $(l+1)$'th update occurs.
Then, algorithm adds some centers to $\uo$ and later reduces the size to $k$ to get $\calW$ which is feasible for the new space $\pl$.

\subsection{\Robustify \ Calls  \MakeRbst \ Once for Each Center}

First, we show that \Robustify \ always terminates.

\begin{lemma}[Lemma 6.1 in \cite{soda/FichtenbergerLN21}]\label{lem:robustify-calls-once}
    If \Robustify$(\calW)$ calls a \MakeRbst \ for a center $w \in \calW$ and replaces it with $w_0$, then it will not call \MakeRbst \ on $w_0$ until the end of the \Robustify \ subroutine.
\end{lemma}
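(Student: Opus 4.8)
The plan is to establish \Cref{lem:robustify-calls-once} by a careful case analysis driven by the distance constants, exactly mirroring the argument for \Cref{lem:robustify-calls-once-part-1} (which is the same statement with the roles of the constants $100$ and $200$ made explicit). First I would fix notation: suppose during some iteration of the \textbf{while} loop, \Robustify\ picks $w \in \calW$, replaces it by $w_0 \gets \MakeRbst(\pl, w, t)$ where $t$ is the smallest integer with $10^t \ge d(w, \calW - w)/100$, and stores $t(w_0) = t$. Let $\calW_1$ denote the state of the solution just before this replacement, and let $w' = \pi_{\calW_1 - w}(w)$, so $d(w, \calW_1 - w) = d(w, w')$. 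From the minimality of $t$ we get both $10^{t} \ge d(w,w')/100$ and $10^{t} < d(w,w')/10$, and then \Cref{lem:robustness-property-1} (applied to the $t$-robust sequence produced by \MakeRbst) yields $d(w_0, w) \le 10^{t}/2 \le d(w,w')/20$.

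Next I would argue by contradiction: assume a later call to \MakeRbst\ is made on $w_0$ within the same \Robustify\ call, and take $w, w_0$ to be the first pair for which this happens. Let $\calW_2$ be the solution just after $w_0$ would be re-made. For a center $u$ to be selected in the \textbf{for} loop of \Robustify, we need $t(u) < \lceil \log(d(u, \calW - u)/200)\rceil$, i.e.\ the smallest $t'$ with $10^{t'} \ge d(u, \calW_2 - u)/200$ must exceed $t(u)$; equivalently $10^{t(u)} < d(u, \calW_2 - u)/200$. So to reach a contradiction it suffices to show $d(w_0, \calW_2 - w_0) \le 2 \cdot d(w,w') \le 200 \cdot 10^{t} = 200 \cdot 10^{t(w_0)}$, which would force $10^{t(w_0)} \ge d(w_0, \calW_2 - w_0)/200$ and hence $w_0$ would \emph{not} be re-selected.

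To bound $d(w_0, \calW_2 - w_0)$ I would split into the two cases that track what happened to the nearest-neighbour certificate $w'$. In the first case $w' \in \calW_2$ is still present, and then $d(w_0, \calW_2 - w_0) \le d(w_0, w') \le d(w_0, w) + d(w, w') \le \tfrac{21}{20} d(w,w') \le 2 d(w,w')$, using the bound on $d(w_0,w)$ above. In the second case $w'$ was itself replaced (exactly once, by minimality of the pair $(w,w_0)$ we chose) by some $w_0' \in \calW_2$ via a call $\MakeRbst(\pl, w', t')$; the minimality of $t'$ gives $10^{t'} \le d(w', \cdot)/10 \le d(w', w_0)/10$ (since $w_0$ is among the candidates the distance is measured against, or by triangle inequality from $w$), hence $d(w_0', w') \le 10^{t'}/2 \le d(w', w_0)/20 \le (d(w_0,w) + d(w,w'))/20 \le d(w,w')/10$ via \Cref{lem:robustness-property-1}; then $d(w_0, \calW_2 - w_0) \le d(w_0, w_0') \le d(w_0,w) + d(w,w') + d(w', w_0') \le (\tfrac{1}{20} + 1 + \tfrac{1}{10}) d(w,w') = \tfrac{23}{20} d(w,w') \le 2 d(w,w')$. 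Either way we get the desired contradiction.

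I expect the main obstacle to be bookkeeping the constants precisely in the weighted setting and making sure the ``denominator $100$ vs.\ $200$'' gap is genuinely what gives the slack: the replacement uses $100$ in the threshold, so $10^{t(w_0)} \ge d(w,w')/100$, whereas re-selection requires $10^{t(w_0)} < d(w_0, \calW_2 - w_0)/200$, and the factor-of-$2$ gap is exactly what the two cases above consume (turning $d(w,w')$ into at most $2\,d(w,w')$). A secondary subtlety is verifying that \MakeRbst\ really does output a $t$-robust sequence in the \emph{weighted} metric so that \Cref{lem:robustness-property-1} applies verbatim — this follows from the guarantees of \FastOneMedian\ (\Cref{lem:guarantee-fast-one-median}) together with the explicit comparison $\cost{q_i, B_i} \le \cost{w_i, B_i}$ in the algorithm, and I would note this but not belabour it. Termination of \Robustify\ is then immediate: each center is made robust at most once, and each iteration of the \textbf{while} loop that does work performs at least one such replacement, so there are at most $|\calW| = k$ of them.
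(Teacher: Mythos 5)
Your proposal is correct and follows essentially the same argument as the paper's proof of \Cref{lem:robustify-calls-once}: the same contradiction setup with the first offending pair $(w,w_0)$, the same bound $d(w_0,w)\le d(w,w')/20$ from \Cref{lem:robustness-property-1}, and the same two-case analysis (whether $w'$ survives in $\calW_2$ or is replaced by some $w_0'$) yielding $d(w_0,\calW_2-w_0)\le 2\,d(w,w')\le 200\cdot 10^{t(w_0)}$, which contradicts the $/200$ re-selection threshold. No gaps.
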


\begin{proof}
Assume a \MakeRbst \ call to a center $w \in \calW$ is made and it is replaced by $w_0$. Let $\calW_1$ be the set of centers just before this call to \MakeRbst \ is made on $w$.
At this point of time, $t(w_0)$ is the smallest integer satisfying $10^{t(w_0)} \geq d(w, \calW_1 - w)/100$ (note that $t(w_0) \geq 0$, otherwise the algorithm should not have called \MakeRbst \ duo to line 10 in the \Robustify).
For the sake of contradiction, assume that another call to \MakeRbst \ is made on $w_0$.
We assume that this pair $w$, $w_0$ is the first pair for which this occurs.
Let $w' = \pi_{\calW_1 - w}(w)$ which means $d(w,\calW_1 - w)=d(w,w')$.
Then,
$10^{t(w_0)} \geq d(w,w')/100$ and $10^{t(w_0)} < d(w,w')/10$ by definition of $t(w_0)$. So, by \Cref{lem:robustness-property-1}, we have
\begin{equation}\label{eq:dw0w-leq-dwwprime-over-20}
   d(w_0,w) \leq 10^{t(w_0)}/2 \leq d(w,w')/20. 
\end{equation}
Assume $\calW_2$ is the set of centers just after the call to \MakeRbst \ is made on $w_0$.
Since we assumed $w,w_0$ is the first pair that another call to \MakeRbst \ on $w_0$ is made, we have one of the two following cases.
\begin{itemize}
    \item $w' \in \calW_2$.
    In this case, we have
    \begin{equation*}
        d(w_0,\calW_2 - w_0) \leq d(w_0,w') \leq d(w_0,w) + d(w,w') \leq \frac{21}{20}\ d(w,w') \leq 2 \ d(w,w').
    \end{equation*}
    The last inequality holds by \Cref{eq:dw0w-leq-dwwprime-over-20}.
    This is in contradiction with the assumption that a call to \MakeRbst \ on $w_0$ is made, because $w_0$ is $t(w_0)$-robust where
    $10^{t(w_0)} \geq d(w,w')/100 \geq d(w_0,\calW_2-w_0)/200$. So, $w_0$ must not be selected to be added to $\calS$ in \Robustify.
    
    \item There is $ w'_0 \in \calW_2$ that replaced $w'$ by a single call to \MakeRbst.
    When \MakeRbst \ is called on $w'$, the algorithm picks integer $t(w_0')$ that satisfies
    $ 10^{t(w_0')} \leq d(w',w_0) / 10 $ (otherwise, it should have picked $t(w_0')-1$ instead of $t(w_0')$).
    Hence,
    \begin{equation}\label{eq:dw0primewprime-leq-dwwprime-over-10}
       d(w_0',w') \leq 10^{t(w_0')}/2 \leq d(w',w_0) / 20 \leq \left( d(w_0,w) + d(w,w') \right)/20 \leq d(w,w')/10.  
    \end{equation}
    The first inequality holds by \Cref{lem:robustness-property-1} and the last inequality is \Cref{eq:dw0w-leq-dwwprime-over-20}.
    As a result,
    \begin{eqnarray*}
        d(w_0, \calW_2-w_0) &\leq& d(w_0,w_0') \\
        &\leq& d(w_0,w) + d(w,w') + d(w',w_0') \\
        &\leq& d(w,w')/20 + d(w,w') + d(w,w')/10 \\
        &=& \frac{23}{20}\ d(w,w') \\
        &\leq& 2 \ d(w,w').
    \end{eqnarray*}
    The last inequality holds by \Cref{eq:dw0w-leq-dwwprime-over-20} and \Cref{eq:dw0primewprime-leq-dwwprime-over-10}.
    This leads to the same contradiction.
\end{itemize}

\end{proof}

\subsection{Correctness of \Robustify}

Consider a call to \Robustify \ occurs on $\calW$.
First, in \FindSuspects, all of the centers $u \in \calW$ satisfying $u \in \calW - \uo$ or $d(p,u) \leq 2\cdot 10^{t(u)}$ for at least one $p \in\po \oplus \pl$ are added to $\calS$.
We know that each center $u \in \calW \cap \uo$ is $t(u)$-robust w.r.t.~$\po$ at the beginning of epoch. We now prove the following.

\begin{lemma}\label{lem:not-recognized-by-find-suspect-remain-robust}
    Assume $u \in \uo$ is $t(u)$-robust w.r.t.~$\po$ and $\calS$ is returned by \FindSuspects. If $u \notin \calS$, then $u$ is $t(u)$-robust w.r.t.~$\pl$ as well.
\end{lemma}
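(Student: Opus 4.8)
The plan is to show that if a center $u \in \uo$ is not added to $\calS$ by \FindSuspects, then the entire $t(u)$-robust sequence witnessing $u$'s robustness w.r.t.~$\po$ is still a valid $t(u)$-robust sequence w.r.t.~$\pl$. The key observation is that the balls appearing in Definition~\ref{def:robust} do not change between $\po$ and $\pl$, so the recursive rule defining the sequence produces the exact same sequence in both spaces.

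\textbf{Key steps.} First I would recall that, since $u = p_0$ is $t(u)$-robust w.r.t.~$\po$, there is a $t(u)$-robust sequence $(p_0, p_1, \ldots, p_{t(u)})$ with $p_0 = u$, and let $B_i = \text{Ball}_{10^i}^{\po}(p_i)$ for $i \in [0, t(u)]$. By Lemma~\ref{lem:robustness-property-1}, we have $B_0 \subseteq B_1 \subseteq \cdots \subseteq B_{t(u)}$ and $d(p_0, p_i) = d(u, p_i) \leq 10^i/2$ for each $i$, so $d(u, p_{t(u)}) \leq 10^{t(u)}/2$. Second, I would show that $B_{t(u)} \subseteq \text{Ball}_{2 \cdot 10^{t(u)}}^{\po}(u)$: any $q \in B_{t(u)}$ satisfies $d(q, u) \leq d(q, p_{t(u)}) + d(p_{t(u)}, u) \leq 10^{t(u)} + 10^{t(u)}/2 \leq 2 \cdot 10^{t(u)}$. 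Third — the crux — since $u \notin \calS$, \FindSuspects{} did not add $u$, so $u \in \uo \cap \calW$ (otherwise $u \in \calW - \uo \subseteq \calS$) and for every $p \in \po \oplus \pl$ we have $d(u, p) > 2 \cdot 10^{t(u)}$; hence every updated point lies outside $\text{Ball}_{2 \cdot 10^{t(u)}}^{\po}(u) \supseteq B_{t(u)}$, and weights of points inside are unchanged. Therefore $\text{Ball}_{10^i}^{\po}(p_i) = \text{Ball}_{10^i}^{\pl}(p_i)$ for all $i \in [0, t(u)]$ (with identical weights), since each such ball is contained in $B_{t(u)}$. Finally, because the defining condition in Definition~\ref{def:robust} — the comparison $\avcost{p_i, B_i} \geq 10^i/5$ and the choice of $q_i$ minimizing $\cost{q, B_i}$ over $q \in B_i + p_i$ — depends only on the ball $B_i$ and the weights of its points, the same sequence $(p_0, \ldots, p_{t(u)})$ satisfies the $t(u)$-robust sequence conditions w.r.t.~$\pl$. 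Hence $u = p_0$ is $t(u)$-robust w.r.t.~$\pl$.

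\textbf{Main obstacle.} The subtle point is matching the radius $2 \cdot 10^{t(u)}$ used in \FindSuspects{} with the actual extent of the relevant balls. One must verify that $\text{Ball}_{2 \cdot 10^{t(u)}}^{\po}(u)$ genuinely contains the largest ball $B_{t(u)} = \text{Ball}_{10^{t(u)}}^{\po}(p_{t(u)})$ appearing in the robust sequence — this uses $d(u, p_{t(u)}) \leq 10^{t(u)}/2$ from Lemma~\ref{lem:robustness-property-1}, so the factor $2$ (rather than, say, $3/2$) gives a comfortable margin. A secondary care point is the weighted setting: ``the ball is unchanged'' must include the assertion that no point in the ball had its weight modified, which is precisely why \FindSuspects{} treats weight changes as membership in $\po \oplus \pl$; I would make this explicit rather than gloss over it.
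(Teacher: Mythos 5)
Your proof is correct and follows essentially the same route as the paper's: use Lemma~\ref{lem:robustness-property-1} to show $B_{t(u)} \subseteq \text{Ball}_{2\cdot 10^{t(u)}}^{\po}(u)$, observe that $u \notin \calS$ means no updated point (including weight changes) lies in that ball, and conclude the same robust sequence remains valid w.r.t.~$\pl$. Your explicit treatment of the nested balls and of weight changes matches the paper's (slightly terser) argument.
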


\begin{proof}
Let $(p_0,\ldots,p_{t(u)})$ be a $t(u)$-robust sequence w.r.t.~$\po$, where $p_0=u$.
According to \Cref{lem:robustness-property-1},
$d(p_0,p_{t(u)}) \leq 10^{t(u)}/2$ which concludes
$$\text{Ball}_{10^{t(u)}}^{\po}(p_{t(u)}) \subseteq \text{Ball}_{2 \cdot 10^{t(u)}}^{\po}(u). $$
Because, the ball around $u$ of radius $2 \cdot 10^{t(u)}$ is not changed (i.e.~$\text{Ball}_{2 \cdot 10^{t(u)}}^{\po}(u) = \text{Ball}_{2 \cdot 10^{t(u)}}^{\pl}(u)$ considering points by their weight), we conclude that ball around $p_{t(u)}$ of radius $10^{t(u)}$ is also not changed. Finally, with the definition of $t(u)$-robust sequence we conclude that $(p_0,\ldots,p_{t(u)})$ is $t(u)$-robust  w.r.t.~$\pl$ as well. So, $p_0=u$ is $t(u)$-robust  w.r.t.~$\pl$.
\end{proof}

Assume $\calU =$ \Robustify$(\po, \pl, \calW, \uo)$ is the final solution of the algorithm for $\pl$.
According to above lemma and description of the algorithm, we show that $\calU$ is robust.

By \Cref{lem:robustify-calls-once} we know that \Robustify \ calls \MakeRbst \ on each $u \in \calW$ at most once.
According to the proof of this lemma, if a center $u \in \calU$ is returned by a call to \MakeRbst \ on center $u' \in \calW$, then $u$ does not violate Condition \ref{condition-robust-center}.
Now, assume no call to \MakeRbst \ was made on $u \in \calW$.
Hence, $u$ was not found by \FindSuspects, and by \Cref{lem:not-recognized-by-find-suspect-remain-robust}, we conclude $u$ is $t(u)$-robust w.r.t.~the new set of points $\pl$.
Let $t$ be the smallest integer such that
$10^t \geq d(u,\calU - u  )/ 200$.
Since $u$ was not added to $\calS$, we conclude that $t(u) \geq t$ (See Line~\ref{line:add-condition} of \Robustify).
So, $u$ is $t$-robust w.r.t.~$\pl$ which means $u$ does not violate Condition \ref{condition-robust-center} for current $\calU$.
As a result, the final $\calU$ returned by \Robustify \ is robust.

Later, in \Cref{sec:approx}, we show that it is also $50\beta$ approximate.
This means that the final solution of the algorithm at the end of the epoch satisfies the conditions needed as the initial solution of the next epoch.

\subsection{Correctness of \MakeRbst}

Assume a call to \MakeRbst \ on $(\calP,w,t)$.
In order to show that the returned value $w_0$ is actually a $t$-robust point w.r.t.~$\calP$, we  check the definition of $t$-robust sequence.
According to the subroutine \MakeRbst \ and the definition of $t$-robust sequence, we only need to show that for every $1 \leq i \leq t$, the returned point $q_i$ by \FastOneMedian$(B_i,w_i)$ satisfies $\cost{q_i,B_i} \leq 3 \ \OPT_1(B_i).$
This is also the guarantee of \FastOneMedian \ in \Cref{lem:guarantee-fast-one-median}.

Later, in \Cref{sec:recourse}, we show that the total number of times that we call \MakeRbst \ throughout the first $T$ updates of the input stream is $\Tilde{O}(T)$.
Hence, with high probability all of these independent calls to \MakeRbst \ return correctly.

\section{Recourse Analysis}\label{sec:recourse}

In this section, we provide the recourse analysis of the algorithm. We show the following lemma.

\begin{lemma}\label{lem:final-recourse}
    The Total recourse of the algorithm throughout the first $T$ updates is at most
    $$O(T(\log\Delta)^2).$$
\end{lemma}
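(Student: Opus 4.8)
\textbf{Proof proposal for Lemma~\ref{lem:final-recourse}.}

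The plan is to decompose the total recourse over the first $T$ updates into the per-epoch contributions, mirroring the structure already set up in \Cref{sec:recourse:part1} of the extended abstract, but now accounting for the extra subroutines (\RemoveCenters, \LazyUpdates, \DevelopCenters, \RandLocalSearch, \Robustify) that the full algorithm actually runs. For a single epoch $\mathcal{E}$ of length $l+1$, I would write $R_{\mathcal{E}} \leq |\uo \oplus \calU'| + \sum_{t=1}^{l+1} |\calU^{(t)} \oplus \calU^{(t-1)}| + |\calU^{(l+1)} \oplus \calW| + |\calW \oplus \calU^{(l+1)}_{\final}|$, where $\calU'$ is the output of \RemoveCenters, the middle sum is the lazy-update recourse, and the last two terms account for the \RandLocalSearch/\DevelopCenters step and the \Robustify step respectively. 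The first three groups are all $O(l+1)$: \RemoveCenters\ deletes exactly $l$ centers from $\uo$; lazy updates incur at most $1$ per update (so $\le l+1$ total); and \Cref{cor:recourse:easy}-style reasoning (via $|\tuo \oplus \uo| \le (8C+2)(l+1)$ from \DevelopCenters, plus $|\pl - \po| \le l+1$, plus the fact that $\calW$ is a size-$k$ subset of $\calV$ with $|\calV| = k + O(l+1)$) gives $|\calW \oplus \uo| = O(l+1)$, hence $|\calU^{(l+1)} \oplus \calW| = O(l+1)$ too. Since an epoch lasts $l+1$ updates, summing the $O(l+1)$ terms over all epochs contributes $O(T)$ to the total recourse.

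The remaining and dominant term is $\sum_{\mathcal{E}} |\calW \oplus \calU_{\final}|$, i.e.\ the recourse caused by the calls to \Robustify\ (equivalently, by calls to \MakeRbst). Here I would port the Type~I / Type~II / Type~III argument from \Cref{sec:recourse:part1} verbatim, adapted to the full-version notation ($t(u)$ in place of $t[u]$, \FindSuspects\ as the mechanism that flags contaminated centers). First, Type~I calls: $w \in \calW \setminus \uo$, of which there are $|\calW \setminus \uo| \le |\calW \oplus \uo| = O(l+1)$, amortizing to $O(1)$ per update. Second, Type~II calls: $w \in \calW \cap \uo$ that are contaminated, i.e.\ flagged by \FindSuspects\ because some $p \in \po \oplus \pl$ has $d(p,u) \le 2\cdot 10^{t(u)}$; the analogue of \Cref{cl:cont} (whose proof only uses \Cref{lem:robustness-property-1}, the choice of $t(\cdot)$ in \MakeRbst, and $1 \le d \le \Delta$) shows each of the $\le l+1$ updated points contaminates $O(\log\Delta)$ centers, so the amortized count is $O(\log\Delta)$. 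Third, Type~III calls: clean centers $w$ with a \MakeRbst$(w,\cdot)$ call; \Cref{lem:not-recognized-by-find-suspect-remain-robust} (the full-version analogue of \Cref{cl:clean}) guarantees $w$ is still $t(w)$-robust w.r.t.~$\pl$, so the call happened only because the current threshold integer $t$ exceeded $t(w)$, and each such call strictly increases the stored index $t(w_0) > t(w)$. Chaining these across epochs, each maximal chain has length $\le \lceil\log\Delta\rceil$ and must be initiated by a Type~I or Type~II call, so the total number of Type~III calls is at most $O(\log\Delta)$ times the total number of Type~I plus Type~II calls, giving $O(\log^2\Delta)$ amortized. Each \MakeRbst\ call changes $\calW$ by exactly one center, so these counts equal the recourse they generate.

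Putting the pieces together: over $T$ updates the $O(l+1)$-per-epoch terms sum to $O(T)$, the Type~I and Type~II calls sum to $O(T\log\Delta)$, and the Type~III calls to $O(T\log^2\Delta)$; the dominant term is $O(T\log^2\Delta)$, which is the claimed bound. I expect the main obstacle to be the bookkeeping around the $t(u)$ integers: one must verify that \FindSuspects\ correctly uses the radius $2\cdot 10^{t(u)}$ (matching inequality~\eqref{eq:distance-p-to-contaminated-center} in the proof of \Cref{cl:cont}, where a contaminated center satisfies $d(p,u) \le 2\cdot 10^{t(u)}$), that \Robustify\ only re-adds a clean center to $\calS$ when $t > t(u)$ (Line~\ref{line:add-condition}), and that the $100$-versus-$200$ constant discrepancy between \EmptyS\ and Condition~\ref{condition-robust-center} is exactly what makes \Cref{lem:robustify-calls-once} go through so that ``each chain step strictly increments $t(\cdot)$'' is legitimate. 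None of these is conceptually hard, but the argument only closes if all three are lined up consistently; I would state them as a short sequence of claims before assembling the final sum.
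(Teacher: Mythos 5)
Your proposal is correct and follows essentially the same route as the paper's proof: the per-epoch $O(l+1)$ terms come from $|\calW \oplus \uo| = O(l+1)$ (the paper's Lemma~\ref{lem:dis-uo-calW}) plus the lazy updates, the contaminated-center count per update is $O(\log\Delta)$ (Lemma~\ref{lem:number-of-invalidation}), and the dominant term is handled by the strictly-increasing-$t(\cdot)$ chain argument. The only cosmetic difference is that the paper formalizes your Type~I/II/III chains as a graph of disjoint directed paths (number of paths $= O(T\log\Delta)$ FindSuspects detections, path length $\le \lceil\log\Delta\rceil+1$), which is the same bookkeeping in different clothing.
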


\subsection{Difference Between $\uo$ and $\calW$}

In \DevelopCenters, we are adding at most $ (8C+2)(l+1)$ centers to $\uo$ to make $\tuo$. Then, we add $\pl-\po$ to get $\calV$. As a result, the size of  $\calV$ is at most $k + (8C+3)(l+1)$. Next, we reduce the size of this set to $k$ by calling \RandLocalSearch \ on $\calV$. Since both of $\uo$ and $\calW$ are subsets of size $k$ of set $\calV$, we conclude that 
$$ |\calW \oplus \uo| \leq 2\cdot((8C+3)(l+1)) \leq 20C(l+1). $$
So, we have the following lemma.

\begin{lemma}\label{lem:dis-uo-calW}
    For every epoch of length $l+1$, we have $|\calW \oplus \uo| \leq 20C(l+1)$.
\end{lemma}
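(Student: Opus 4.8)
Looking at Lemma~\ref{lem:dis-uo-calW}, the claim is that for every epoch of length $l+1$, we have $|\calW \oplus \uo| \leq 20C(l+1)$. This is a purely combinatorial counting argument about set sizes, tracking how the solution evolves from $\uo$ to $\calW$ through the steps of \MainEp.

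The plan is to trace through the construction step by step, bounding the symmetric difference introduced at each stage.

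First, I would recall the sequence of transformations: starting from $\uo$ (a set of $k$ centers), \DevelopCenters$(\po, \uo, (8C+2)(l+1))$ produces $\tuo = \uo + (\calF - u^\star)$ where $|\calF - u^\star| \leq (8C+2)(l+1)$ by the guarantee of that subroutine. Hence $|\tuo \oplus \uo| = |\tuo \setminus \uo| \leq (8C+2)(l+1)$ and moreover $|\tuo| \leq k + (8C+2)(l+1)$. Second, $\calV = \tuo + (\pl - \po)$; since the epoch has length $l+1$, we have $|\pl \oplus \po| \leq l+1$, so at most $l+1$ points are added here, giving $|\calV| \leq k + (8C+2)(l+1) + (l+1) = k + (8C+3)(l+1)$.

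The final step is $\calW \gets$ \RandLocalSearch$(\calV, |\calV|-k)$, which returns a subset $\calW \subseteq \calV$ of size exactly $k$. Now the key observation is that both $\uo$ and $\calW$ are subsets of $\calV$: indeed $\uo \subseteq \tuo \subseteq \calV$, and $\calW \subseteq \calV$ by construction. Two subsets of $\calV$, each of size at most $k + (8C+3)(l+1)$ (in fact $\calW$ has size exactly $k$), and with $\calV \setminus \uo$ having size at most $(8C+3)(l+1)$. I would then bound $|\calW \oplus \uo| \leq |\calW \oplus \calV| + |\calV \oplus \uo| = |\calV \setminus \calW| + |\calV \setminus \uo| \leq (8C+3)(l+1) + (8C+3)(l+1) = 2(8C+3)(l+1)$, and finally check that $2(8C+3) = 16C + 6 \leq 20C$ since $C \geq 1$ (indeed $C = 12 \cdot 3 \cdot 10^5 \gamma \beta^2$ is a large constant). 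This gives $|\calW \oplus \uo| \leq 20C(l+1)$ as claimed.

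I do not anticipate a serious obstacle here — this is a routine bookkeeping argument. The only mild subtlety is being careful that \RandLocalSearch returns a \emph{subset} of its input (so $\calW \subseteq \calV$), and that $\uo \subseteq \calV$ as well, so that the symmetric differences can both be measured against the common superset $\calV$; this is exactly what makes the triangle-inequality-style bound on symmetric differences go through cleanly without needing to know anything about \emph{which} centers \RandLocalSearch keeps or discards. One should also note that the displayed chain in the lemma statement's proof uses $2 \cdot ((8C+3)(l+1)) \leq 20C(l+1)$, which holds because $16C + 6 \leq 20C$ for all $C \geq 2$.
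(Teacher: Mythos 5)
Your proposal is correct and follows essentially the same route as the paper: bound the growth from $\uo$ to $\calV$ by $(8C+3)(l+1)$, then use that both $\uo$ and $\calW$ are size-$k$ subsets of $\calV$ to get $|\calW \oplus \uo| \leq 2(8C+3)(l+1) \leq 20C(l+1)$. Your triangle-inequality formalization of the last step is just a more explicit writing of the paper's one-line conclusion.
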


\subsection{Number of Centers Detected by \FindSuspects}

At the beginning of \FindSuspects, the algorithm sets $\calS = \calW - \uo$.
As a result, the size of this initial $\calS$ is at most $20C(l+1)$ by \Cref{lem:dis-uo-calW}.
Next, for every point $p$ that has gone through an update (either insertion or deletion) in this epoch (i.e. $p \in \po \oplus \pl$) the algorithm finds all centers $u \in \calW \cap \uo$ such that $d(u,p) \leq 2 \cdot 10^{t(u)}$.
We claim that for a fix $p \in \po \oplus \pl$, there are at most $O(\log \Delta)$ many centers $u$ that satisfy this property.

\begin{lemma}\label{lem:number-of-invalidation}
    For each $p \in \po \oplus \pl$, the number of $u \in \calW \cap \uo$ satisfying $d(p,u) \leq 2 \cdot 10^{t(u)}$ is at most $O(\log \Delta)$.
\end{lemma}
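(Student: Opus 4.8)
\textbf{Proof proposal for Lemma~\ref{lem:number-of-invalidation}.}

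The plan is to mimic the charging argument used in the proof of Claim~\ref{cl:cont} (in Section~\ref{sec:cl:cont}), which bounds the number of centers contaminated by a single update. Fix a point $p \in \po \oplus \pl$, and let $\{u_1, u_2, \ldots, u_\mu\} \subseteq \calW \cap \uo$ be the collection of all centers satisfying $d(p, u_i) \leq 2 \cdot 10^{t(u_i)}$, ordered in decreasing order by the time at which the corresponding $\MakeRbst(\cdot,\cdot)$ call (that produced $u_i$) was executed during the run of the algorithm. The key structural fact I want to extract is that the values $t(u_1), t(u_2), \ldots, t(u_\mu)$ are strictly increasing along this ordering; since every $t(u_i)$ lies in $[0, \lceil \log \Delta \rceil]$ (because all pairwise distances are in $[1,\Delta]$), this immediately gives $\mu = O(\log \Delta)$.

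First I would record the basic geometric consequence of the hypothesis: by Lemma~\ref{lem:robustness-property-1}, for the $t(u_i)$-robust sequence $(p_0(u_i),\ldots,p_{t(u_i)}(u_i))$ with $p_0(u_i) = u_i$, every point of $B_{t(u_i)}(u_i)$ lies within distance roughly $10^{t(u_i)}$ of $u_i$, so the condition $d(p,u_i) \le 2\cdot 10^{t(u_i)}$ is exactly the analogue of ``$p$ contaminates $u_i$'' from Section~\ref{sec:recourse:part1}. Next, for consecutive indices $i$ and $i+1$: when $u_i$ was added to the maintained solution via its $\MakeRbst$ call, the center $u_{i+1}$ was already present (by our ordering), so the choice of the exponent in $\EmptyS$ forces $10^{t(u_i)} \le d(u_i, u_{i+1})/10$. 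Combining this with the triangle inequality $d(u_i,u_{i+1}) \le d(p,u_i) + d(p,u_{i+1}) \le 2\cdot 10^{t(u_i)} + 2\cdot 10^{t(u_{i+1})}$ yields $10\cdot 10^{t(u_i)} \le 2\cdot 10^{t(u_i)} + 2\cdot 10^{t(u_{i+1})}$, hence $10^{t(u_{i+1})} \ge 4\cdot 10^{t(u_i)} > 10^{t(u_i)}$, so $t(u_{i+1}) > t(u_i)$. This is the same computation as in Section~\ref{sec:cl:cont}, adapted to the constant $2$ appearing in $\FindSuspects$ rather than the constant $1$ there.

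One subtlety I would need to be careful about: a center $u \in \calW \cap \uo$ might be ``original'' in the sense that it was not produced by a $\MakeRbst$ call during the current or any recent epoch, but carries a value $t(u)$ inherited from further back. The ordering and the inequality $10^{t(u_i)} \le d(u_i,u_{i+1})/10$ should still hold because the invariant ``$t(w_0) \gets$ smallest integer with $10^{t(w_0)} \ge d(w,\calW-w)/100$'' is maintained at every $\EmptyS$ step and is only ever tightened; I would state this as the precise invariant being relied upon and verify that at the moment the later-added center (among $u_i, u_{i+1}$) was installed, the earlier one was present in $\calW$, so the distance bound applies. I expect this bookkeeping — pinning down exactly which snapshot of $\calW$ and which $t(\cdot)$ values are in force, especially for centers carried over across epoch boundaries — to be the main obstacle, though it is essentially the same argument already carried out for Claim~\ref{cl:cont}. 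Once the strict monotonicity $t(u_1) < t(u_2) < \cdots < t(u_\mu)$ is established, the conclusion $\mu \le \lceil \log \Delta \rceil + 1 = O(\log\Delta)$ is immediate.
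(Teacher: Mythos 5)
Your proposal is correct and follows essentially the same argument as the paper's proof: order the centers by the time of the $\MakeRbst$ call that produced them, use the invariant $10^{t(u_i)} \leq d(u_i,u_{i+1})/10$ at installation time together with the triangle inequality and the hypothesis $d(p,u) \leq 2\cdot 10^{t(u)}$ to derive $10^{t(u_{i+1})} \geq 4\cdot 10^{t(u_i)}$, and conclude via strict monotonicity of the $t$-values in $[0,\lceil\log\Delta\rceil]$. The bookkeeping subtlety you flag about centers inherited across epochs is glossed over in the paper as well, and your resolution matches its implicit assumption.
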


\begin{proof}
Assume $p \in \po \oplus \pl$.
Let $\mu$ be the number of centers $u \in \calW \cap \uo$ that satisfy 
$d(p,u) \leq 2 \cdot 10^{t(u)}$.
For each of these centers like $u$, there is a $u'$ that was added to $\calS$ and a call to \MakeRbst \ was made on $u'$ which returned $u$.
Assume $\{u_1,u_2,\ldots,u_\mu \}$ is the set of  centers $u$ satisfying $d(p,u) \leq 2 \cdot 10^{t(u)}$ ordered in decreasing order by the time they replaced $u_i'$ via a call to \MakeRbst.
So, when $u_i$ is added to the main solution, center $u_{i+1}$ was already present in the main solution which concludes
$$ 10^{t(u_i)} \leq d(u_i,u_{i+1})/10, $$
by the definition of $t(u_i)$ at that time
for every $1 \leq i \leq \mu-1$.
Hence,
$$ 10^{t(u_{i+1})} \geq d(p, u_{i+1})/2 \geq (d(u_i, u_{i+1}) - d(p, u_i))/2 \geq 5 \cdot 10^{t(u_i)} - 10^{t(u_i)} = 4 \cdot 10^{t(u_i)}. $$
The first and last inequality holds by the assumption that $d(p,u) \leq 2 \cdot 10^{t(u)}$ for every $u \in \{u_1,\ldots,u_\mu\}$.
Finally, this concludes 
$t(u_\mu)> \cdots > t(u_{2}) > t(u_{1})$. Since we assumed the distances between any two point in the space is between $1$ and $\Delta$, we know $ 0 \leq t(u_{i}) \leq \lceil \log \Delta \rceil$ for each $1 \leq i \leq \mu$ and so, $\mu \leq  (\log \Delta) + 2 = O(\log \Delta)$.

\end{proof}

A quick corollary of this lemma is the following.

\begin{corollary}\label{cor:size-calS-find-suspects}
    The size of final $\calS$ returned by \FindSuspects \ in an epoch of length $l+1$ is at most $O((l+1)\log \Delta)$.
\end{corollary}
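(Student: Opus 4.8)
\textbf{Proof proposal for Corollary~\ref{cor:size-calS-find-suspects}.}

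The plan is to directly track the size of $\calS$ through the execution of \FindSuspects$(\po, \pl, \calW, \uo)$, which builds $\calS$ in exactly two stages: an initialization $\calS \gets \calW - \uo$, followed by a double loop that, for each point $p$ that underwent an update during the epoch, adds to $\calS$ every center $u \in \uo \cap \calW$ with $d(p, u) \leq 2 \cdot 10^{t(u)}$. So it suffices to bound the contributions of these two stages separately and add them.

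For the first stage, I would invoke Lemma~\ref{lem:dis-uo-calW}, which gives $|\calW \oplus \uo| \leq 20C(l+1)$; hence $|\calW - \uo| \leq |\calW \oplus \uo| \leq 20C(l+1) = O(l+1)$, since $C$ is an absolute constant (recall \Cref{parameters}). For the second stage, observe that the set of updated points has size $|\po \oplus \pl| \leq l+1$, because the epoch lasts $l+1$ updates. For each such point $p$, Lemma~\ref{lem:number-of-invalidation} says that the number of centers $u \in \uo \cap \calW$ with $d(p,u) \leq 2 \cdot 10^{t(u)}$ is at most $O(\log \Delta)$. Therefore the total number of centers added across the second stage is at most $(l+1) \cdot O(\log \Delta) = O((l+1)\log\Delta)$. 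Combining the two bounds, the final $\calS$ has size at most $O(l+1) + O((l+1)\log\Delta) = O((l+1)\log\Delta)$, which is the claimed bound.

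I do not expect any genuine obstacle here: the statement is essentially bookkeeping that packages Lemma~\ref{lem:dis-uo-calW} and Lemma~\ref{lem:number-of-invalidation} together. The only point requiring a modicum of care is noting that centers added in the second stage might already be in the initial $\calS$ (since \FindSuspects \ uses $\calS \gets \calS + u$, which is idempotent as a set operation), so adding the two upper bounds is valid — and indeed conservative. A second, equally minor point is that one should count a point $p \in \po \cap \pl$ whose weight changed as an element of $\po \oplus \pl$, consistent with the notational convention adopted earlier in the paper; this does not affect the bound $|\po \oplus \pl| \leq l+1$, since each of the $l+1$ updates in the epoch introduces at most one such point into the symmetric difference.
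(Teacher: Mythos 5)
Your proposal is correct and matches the paper's own proof, which likewise combines Lemma~\ref{lem:dis-uo-calW} (to bound the initial $\calS \gets \calW - \uo$ by $O(l+1)$) with Lemma~\ref{lem:number-of-invalidation} applied to each of the at most $l+1$ points in $\po \oplus \pl$. The extra remarks about idempotent insertion and weight-changed points are fine but not needed.
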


\begin{proof}
    By \Cref{lem:dis-uo-calW} and \Cref{lem:number-of-invalidation}, we have
    $$|\calS| \leq 10C(l+1) + |\po \oplus \pl|\cdot O(\log \Delta) = O((l+1)\log \Delta). $$
\end{proof}

\subsection{Total Recourse}

Assume the sequence of first $T$ updates in the input stream. We show the total recourse of the algorithm during these $T$ updates is at most $O(T(\log \Delta)^2)$.
Consider that the algorithm performs $E$ epochs in total.
For every $1 \leq e \leq E$, let $l_e+1$, $\uo_e$ and $\calW_e$ be the length of, initial solution and the set of centers returned by \RandLocalSearch \ just before calling \Robustify \ in epoch $e$ respectively. 
For every $1 \leq e \leq E$ and every $1 \leq i \leq l_e+1$, let $\calU^{(i)}_e$ be the main solution of the algorithm after the $i$'th update in epoch $e$. As a result, the total recourse of the algorithm equals.
$$ \sum_{e=1}^E \sum_{i=0}^{l_e} |\calU^{(i+1)}_e - \calU^{(i)}_e|. $$
The recourse after each lazy update is at most $1$, which concludes for every $1 \leq e \leq E$, we have
\begin{eqnarray*}
   \sum_{i=0}^{l_e} |\calU^{(i+1)}_e - \calU^{(i)}_e| &\leq& l_e + |\calU^{(l_e+1)}_e - \calU^{(l_e)}_e| \\
   &\leq& l_e + | \calU^{(l_e)}_e - \calU^{(0)}_e| + | \calU^{(0)}_e - \calU^{(l_e+1)}_e| \\
    &\leq& 2l_e + |\calU^{(0)}_e - \calU^{(l_e+1)}_e|.
\end{eqnarray*}
Hence,
$$ \sum_{e=1}^E \sum_{i=0}^{l_e} |\calU^{(i+1)}_e - \calU^{(i)}_e| \leq \sum_{e=1}^E
\left( 2l_e +|\calU^{(0)}_e - \calU^{(l_e+1)}_e| \right)
\leq 2T + \sum_{e=1}^E |\calU^{(0)}_e - \calU^{(l_e+1)}_e|. $$
So, it suffices to prove $$\sum_{e=1}^E |\calU^{(0)}_e - \calU^{(l_e+1)}_e| = O(T(\log \Delta)^2).$$

Recall \Cref{lem:robustify-calls-once}.
This lemma states that the \Robustify \ calls a \MakeRbst \ on every center $w \in \calW$ at most once.
So, if a center $u$ is added to $\calS$ in \Robustify, it is not going to be added to $\calS$ in this call to \Robustify \ again.
According to the \Robustify, it is obvious that $\sum_{e=1}^E |\calU^{(0)}_e - \calU^{(l_e+1)}_e|$ is at most the number of times that elements are added to $\calS$.

Now, we create a graph during the run of the algorithm as follows. The graph is a union of disjoint directed paths.
Each node of the graph is alive or dead.
At any time, every center in the main solution of the algorithm is corresponding to an alive node in this graph.
A center might be corresponding to several nodes in the graph (since it can be added to $\calS$ several times during the total run of the algorithm), but it is corresponding to only one alive node.

Initially, the graph consists of $k$ alive nodes corresponding to the initial solution of $k$ centers at the very beginning of the algorithm.
Now, we explain how we change the graph during the algorithm.
At the end of the epoch, we mark all of the nodes corresponding to centers $\uo - \calW$ as dead.
Assume a center $u \in \calW$ at the end of an epoch.
We have the following cases.

\begin{itemize}
    \item $u \notin \uo$. In this case, we know that there is no alive node in the graph corresponding to $u$. We add a new alive node to the graph corresponding to $u$.
    This would be the starting node of a new path in the graph.

    \item $u \in \uo$ and $u$ is detected by \FindSuspects.
    In this case, we know that $u$ is corresponding to an alive node $V_u$ in the graph.
    We mark this node $V_u$ dead and add a new alive node corresponding to $u$ in the graph.
    This is also the starting node of a new path in the graph.

    \item $u \in \uo$ and $u$ is not detected by \FindSuspects, but $u$ is added to $\calS$ during the \Robustify.
    In this case, we know that $u$ is corresponding to an alive node $V_u$ in the graph and also $u$ is going to be replaced by $u_0$ by a call to \MakeRbst.
    We mark $V_u$ as dead, add a new node $V_{u_0}$ corresponding to $u_0$ in the graph and draw a directed edge from $V_u$ to $V_{u_0}$.

    \item $u \in \uo$ and $u$ is not detected by \FindSuspects, and $u$ is not added to $\calS$ during the \Robustify. In this case, we do not change the graph.
\end{itemize}

\begin{figure}[ht]
\caption{The constructed graph looks like this. If a center is detected by \FindSuspects, it would be the end of the path. If a \MakeRbst \ is called on a center, the path ending in the node corresponding to this center is extended.}
\centering

\tikzset{every picture/.style={line width=0.75pt}} 

\begin{tikzpicture}[x=0.75pt,y=0.75pt,yscale=-1,xscale=1]

\draw    (62.33,45) -- (183.33,45) ;
\draw [shift={(185.33,45)}, rotate = 180] [color={rgb, 255:red, 0; green, 0; blue, 0 }  ][line width=0.75]    (10.93,-3.29) .. controls (6.95,-1.4) and (3.31,-0.3) .. (0,0) .. controls (3.31,0.3) and (6.95,1.4) .. (10.93,3.29)   ;
\draw    (185.33,45) -- (306.33,45) ;
\draw [shift={(308.33,45)}, rotate = 180] [color={rgb, 255:red, 0; green, 0; blue, 0 }  ][line width=0.75]    (10.93,-3.29) .. controls (6.95,-1.4) and (3.31,-0.3) .. (0,0) .. controls (3.31,0.3) and (6.95,1.4) .. (10.93,3.29)   ;
\draw    (308.33,45) -- (429.33,45) ;
\draw [shift={(431.33,45)}, rotate = 180] [color={rgb, 255:red, 0; green, 0; blue, 0 }  ][line width=0.75]    (10.93,-3.29) .. controls (6.95,-1.4) and (3.31,-0.3) .. (0,0) .. controls (3.31,0.3) and (6.95,1.4) .. (10.93,3.29)   ;
\draw    (60.33,108) -- (181.33,108) ;
\draw [shift={(183.33,108)}, rotate = 180] [color={rgb, 255:red, 0; green, 0; blue, 0 }  ][line width=0.75]    (10.93,-3.29) .. controls (6.95,-1.4) and (3.31,-0.3) .. (0,0) .. controls (3.31,0.3) and (6.95,1.4) .. (10.93,3.29)   ;
\draw    (183.33,108) -- (304.33,108) ;
\draw [shift={(306.33,108)}, rotate = 180] [color={rgb, 255:red, 0; green, 0; blue, 0 }  ][line width=0.75]    (10.93,-3.29) .. controls (6.95,-1.4) and (3.31,-0.3) .. (0,0) .. controls (3.31,0.3) and (6.95,1.4) .. (10.93,3.29)   ;
\draw    (306.33,108) -- (427.33,108) ;
\draw [shift={(429.33,108)}, rotate = 180] [color={rgb, 255:red, 0; green, 0; blue, 0 }  ][line width=0.75]    (10.93,-3.29) .. controls (6.95,-1.4) and (3.31,-0.3) .. (0,0) .. controls (3.31,0.3) and (6.95,1.4) .. (10.93,3.29)   ;
\draw    (429.33,108) -- (550.33,108) ;
\draw [shift={(552.33,108)}, rotate = 180] [color={rgb, 255:red, 0; green, 0; blue, 0 }  ][line width=0.75]    (10.93,-3.29) .. controls (6.95,-1.4) and (3.31,-0.3) .. (0,0) .. controls (3.31,0.3) and (6.95,1.4) .. (10.93,3.29)   ;
\draw    (72.33,226) -- (193.33,226) ;
\draw [shift={(195.33,226)}, rotate = 180] [color={rgb, 255:red, 0; green, 0; blue, 0 }  ][line width=0.75]    (10.93,-3.29) .. controls (6.95,-1.4) and (3.31,-0.3) .. (0,0) .. controls (3.31,0.3) and (6.95,1.4) .. (10.93,3.29)   ;
\draw    (195.33,226) -- (316.33,226) ;
\draw [shift={(318.33,226)}, rotate = 180] [color={rgb, 255:red, 0; green, 0; blue, 0 }  ][line width=0.75]    (10.93,-3.29) .. controls (6.95,-1.4) and (3.31,-0.3) .. (0,0) .. controls (3.31,0.3) and (6.95,1.4) .. (10.93,3.29)   ;
\draw  [fill={rgb, 255:red, 0; green, 0; blue, 0 }  ,fill opacity=1 ] (62.33,45) .. controls (62.33,43.62) and (63.45,42.5) .. (64.83,42.5) .. controls (66.21,42.5) and (67.33,43.62) .. (67.33,45) .. controls (67.33,46.38) and (66.21,47.5) .. (64.83,47.5) .. controls (63.45,47.5) and (62.33,46.38) .. (62.33,45) -- cycle ;
\draw  [fill={rgb, 255:red, 0; green, 0; blue, 0 }  ,fill opacity=1 ] (60.33,108) .. controls (60.33,106.62) and (61.45,105.5) .. (62.83,105.5) .. controls (64.21,105.5) and (65.33,106.62) .. (65.33,108) .. controls (65.33,109.38) and (64.21,110.5) .. (62.83,110.5) .. controls (61.45,110.5) and (60.33,109.38) .. (60.33,108) -- cycle ;
\draw  [fill={rgb, 255:red, 0; green, 0; blue, 0 }  ,fill opacity=1 ] (180.83,108) .. controls (180.83,106.62) and (181.95,105.5) .. (183.33,105.5) .. controls (184.71,105.5) and (185.83,106.62) .. (185.83,108) .. controls (185.83,109.38) and (184.71,110.5) .. (183.33,110.5) .. controls (181.95,110.5) and (180.83,109.38) .. (180.83,108) -- cycle ;
\draw  [fill={rgb, 255:red, 0; green, 0; blue, 0 }  ,fill opacity=1 ] (182.83,45) .. controls (182.83,43.62) and (183.95,42.5) .. (185.33,42.5) .. controls (186.71,42.5) and (187.83,43.62) .. (187.83,45) .. controls (187.83,46.38) and (186.71,47.5) .. (185.33,47.5) .. controls (183.95,47.5) and (182.83,46.38) .. (182.83,45) -- cycle ;
\draw  [fill={rgb, 255:red, 0; green, 0; blue, 0 }  ,fill opacity=1 ] (305.83,45) .. controls (305.83,43.62) and (306.95,42.5) .. (308.33,42.5) .. controls (309.71,42.5) and (310.83,43.62) .. (310.83,45) .. controls (310.83,46.38) and (309.71,47.5) .. (308.33,47.5) .. controls (306.95,47.5) and (305.83,46.38) .. (305.83,45) -- cycle ;
\draw  [fill={rgb, 255:red, 0; green, 0; blue, 0 }  ,fill opacity=1 ] (428.83,45) .. controls (428.83,43.62) and (429.95,42.5) .. (431.33,42.5) .. controls (432.71,42.5) and (433.83,43.62) .. (433.83,45) .. controls (433.83,46.38) and (432.71,47.5) .. (431.33,47.5) .. controls (429.95,47.5) and (428.83,46.38) .. (428.83,45) -- cycle ;
\draw  [fill={rgb, 255:red, 0; green, 0; blue, 0 }  ,fill opacity=1 ] (549.83,108) .. controls (549.83,106.62) and (550.95,105.5) .. (552.33,105.5) .. controls (553.71,105.5) and (554.83,106.62) .. (554.83,108) .. controls (554.83,109.38) and (553.71,110.5) .. (552.33,110.5) .. controls (550.95,110.5) and (549.83,109.38) .. (549.83,108) -- cycle ;
\draw  [fill={rgb, 255:red, 0; green, 0; blue, 0 }  ,fill opacity=1 ] (426.83,108) .. controls (426.83,106.62) and (427.95,105.5) .. (429.33,105.5) .. controls (430.71,105.5) and (431.83,106.62) .. (431.83,108) .. controls (431.83,109.38) and (430.71,110.5) .. (429.33,110.5) .. controls (427.95,110.5) and (426.83,109.38) .. (426.83,108) -- cycle ;
\draw  [fill={rgb, 255:red, 0; green, 0; blue, 0 }  ,fill opacity=1 ] (303.83,108) .. controls (303.83,106.62) and (304.95,105.5) .. (306.33,105.5) .. controls (307.71,105.5) and (308.83,106.62) .. (308.83,108) .. controls (308.83,109.38) and (307.71,110.5) .. (306.33,110.5) .. controls (304.95,110.5) and (303.83,109.38) .. (303.83,108) -- cycle ;
\draw  [fill={rgb, 255:red, 0; green, 0; blue, 0 }  ,fill opacity=1 ] (69.83,226) .. controls (69.83,224.62) and (70.95,223.5) .. (72.33,223.5) .. controls (73.71,223.5) and (74.83,224.62) .. (74.83,226) .. controls (74.83,227.38) and (73.71,228.5) .. (72.33,228.5) .. controls (70.95,228.5) and (69.83,227.38) .. (69.83,226) -- cycle ;
\draw  [fill={rgb, 255:red, 0; green, 0; blue, 0 }  ,fill opacity=1 ] (315.83,226) .. controls (315.83,224.62) and (316.95,223.5) .. (318.33,223.5) .. controls (319.71,223.5) and (320.83,224.62) .. (320.83,226) .. controls (320.83,227.38) and (319.71,228.5) .. (318.33,228.5) .. controls (316.95,228.5) and (315.83,227.38) .. (315.83,226) -- cycle ;
\draw  [fill={rgb, 255:red, 0; green, 0; blue, 0 }  ,fill opacity=1 ] (192.83,226) .. controls (192.83,224.62) and (193.95,223.5) .. (195.33,223.5) .. controls (196.71,223.5) and (197.83,224.62) .. (197.83,226) .. controls (197.83,227.38) and (196.71,228.5) .. (195.33,228.5) .. controls (193.95,228.5) and (192.83,227.38) .. (192.83,226) -- cycle ;
\draw  [dash pattern={on 0.84pt off 2.51pt}]  (431.33,45) .. controls (404.14,155.45) and (198.08,216.39) .. (71.73,225.86) ;
\draw [shift={(69.83,226)}, rotate = 355.92] [color={rgb, 255:red, 0; green, 0; blue, 0 }  ][line width=0.75]    (10.93,-3.29) .. controls (6.95,-1.4) and (3.31,-0.3) .. (0,0) .. controls (3.31,0.3) and (6.95,1.4) .. (10.93,3.29)   ;

\draw (77,19) node [anchor=north west][inner sep=0.75pt]   [align=left] {Make-Robust};
\draw (198,19) node [anchor=north west][inner sep=0.75pt]   [align=left] {Make-Robust};
\draw (324,19) node [anchor=north west][inner sep=0.75pt]   [align=left] {Make-Robust};
\draw (75,85) node [anchor=north west][inner sep=0.75pt]   [align=left] {Make-Robust};
\draw (196,82) node [anchor=north west][inner sep=0.75pt]   [align=left] {Make-Robust};
\draw (322,82) node [anchor=north west][inner sep=0.75pt]   [align=left] {Make-Robust};
\draw (292.05,179.54) node [anchor=north west][inner sep=0.75pt]  [rotate=-333.16,xslant=0.06] [align=left] {Find-Suspects};
\draw (441,82) node [anchor=north west][inner sep=0.75pt]   [align=left] {Make-Robust};
\draw (90,205) node [anchor=north west][inner sep=0.75pt]   [align=left] {Make-Robust};
\draw (205,204) node [anchor=north west][inner sep=0.75pt]   [align=left] {Make-Robust};
\draw (56,56) node [anchor=north west][inner sep=0.75pt]   [align=left] {{\tiny dead}};
\draw (177,54) node [anchor=north west][inner sep=0.75pt]   [align=left] {{\tiny dead}};
\draw (299,52) node [anchor=north west][inner sep=0.75pt]   [align=left] {{\tiny dead}};
\draw (426.83,54) node [anchor=north west][inner sep=0.75pt]   [align=left] {{\tiny dead}};
\draw (55,121) node [anchor=north west][inner sep=0.75pt]   [align=left] {{\tiny dead}};
\draw (174,119) node [anchor=north west][inner sep=0.75pt]   [align=left] {{\tiny dead}};
\draw (299,114) node [anchor=north west][inner sep=0.75pt]   [align=left] {{\tiny dead}};
\draw (420,118) node [anchor=north west][inner sep=0.75pt]   [align=left] {{\tiny dead}};
\draw (60,236) node [anchor=north west][inner sep=0.75pt]   [align=left] {{\tiny dead}};
\draw (186,234) node [anchor=north west][inner sep=0.75pt]   [align=left] {{\tiny dead}};
\draw (551.83,121) node [anchor=north west][inner sep=0.75pt]   [align=left] {{\tiny alive}};
\draw (312.83,235) node [anchor=north west][inner sep=0.75pt]   [align=left] {{\tiny alive}};

\end{tikzpicture}

\end{figure}
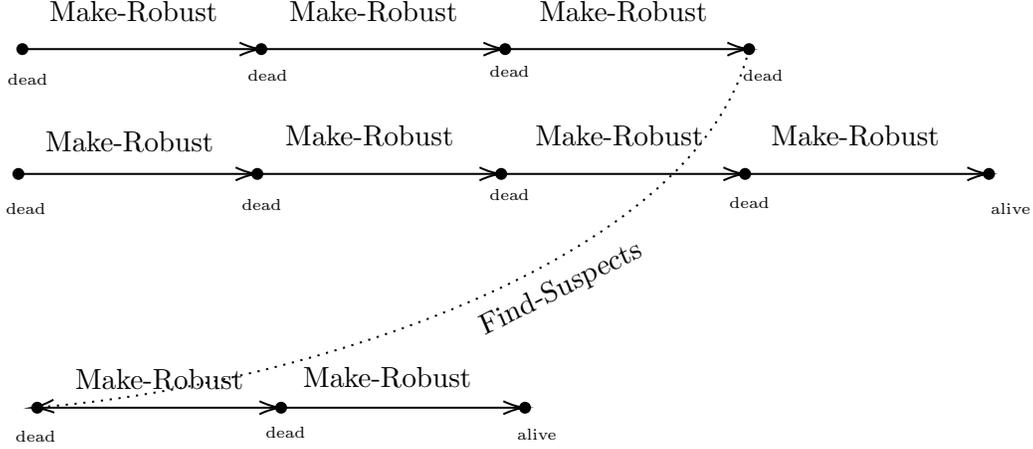

According to this construction of the graph, the total number of paths in the graph is equal to the total number of centers detected by \FindSuspects \ and the total number of nodes in the graph is equal to the total number of elements added to $\calS$ during the entire run of the algorithm.

By \Cref{cor:size-calS-find-suspects}, the number of centers detected by \FindSuspects \  at the end of epoch $e$ is at most $O((l_e+1)\log \Delta)$.
As a result, the total number of centers detected by \FindSuspects, during the run of the algorithm for the first $T$ updates of the input stream is at most $O(T\log\Delta)$.
This concludes the number of paths of the graph is $O(T\log\Delta)$.

Now, we show that the length of each path in the graph is at most $O(\log \Delta)$. This completes the analysis. 
Consider centers
$u_1,u_2,\ldots,u_r$ corresponding to the directed path
$V_{u_1},V_{u_2},\ldots,V_{u_r}$
in this graph.
First, note that for every $1 \leq i \leq r-1$, center $u_i$ is not detected by \FindSuspects \ during the algorithm. 
Otherwise, we would have marked $V_{u_i}$ as dead and $V_{u_i}$ would have been the end of the path.
The only reason that $u_i$ is added to $\calS$ is that
$10^{t(u_i)} < d(u_i, \calU-u_i)/200$.
In this case, the algorithm picks the smallest integer $t$ satisfying $10^t \geq d(u_i,\calU-u_i)/100$, replaces $u_i$ with $u_{i+1}$ by a call to \MakeRbst$(u_i,t)$ and sets $t(u_{i+1})=t$.
As a result, we have
$$ 10^{t(u_i)} < d(u_i, \calU-u_i)/200 < d(u_i,\calU-u_i)/100 \leq 10^{t(u_{i+1})} $$
for every $1 \leq i \leq r-1$.
Hence,
$t(u_{1}) < t(u_{2}) < \cdots < t(u_{r})$.
Since, we assumed the maximum distance between any two points in the metric space is at most $\Delta$, we conclude that
$t(u_{i}) \leq \lceil \log\Delta \rceil $ for every $1 \leq i \leq r$.
Together with $t(u_{1}) < t(u_{2}) < \cdots < t(u_{r})$, we conclude that $r \leq \lceil\log\Delta\rceil + 1 = O(\log \Delta)$.

Putting everything together, the number of paths in the graph is at most $O(T\log \Delta)$ and the length of each path is at most $O(\log\Delta)$, which concludes the total number of nodes in the created graph is at most $O(T(\log \Delta)^2)$. Hence, the total number of elements added to $\calS$ during the run of the algorithm is at most $O(T(\log \Delta)^2)$.
Finally, this implies
$$ \sum_{e=1}^E |\calU^{(0)}_e - \calU^{(l_e+1)}_e| = O(T(\log \Delta)^2), $$
which completes the analysis of recourse.

\section{Approximation Ratio Analysis}

In this section we prove that the approximation ratio of the algorithm is constant.

\noindent
\textbf{Important note:}
Since, subroutines \RandLocalSearch \ and \FastOneMedian \ are randomised and their approximation guarantee hold with high probability, some of the lemmas in this section occur with high probability.
But, we omit repeating this phrase over and over and provide the statements without it.
Assume $T$ many updates in the input stream.
According to the algorithm, we call \RandLocalSearch \ at most $O(T)$ times.
According to the recourse analysis in \Cref{sec:recourse}, we call \FastOneMedian \ at most $\tilde{O}(T)$ times.
As a result, we can conclude that all of these subroutines simultaneously return correctly with high probability.
So, we omit using w.h.p.~in the analysis in this section.

Assume that $\uo$ at the beginning of an epoch has approximation ratio $50\beta$. We will prove during the epoch, approximation ratio of our solution at any time is at most $ 2\cdot 10^7\gamma\beta^2$.
Then, we prove that the approximation ratio of final $\calU$ at the end of epoch is $50\beta$ which is the starting point of the next epoch.
Since the approximation ratio of the algorithm would be very large even if we optimize the parameters, we do not bother to use tight inequalities in the proofs.

\subsection{Analysis of \RemoveCenters \ and \LazyUpdates}
We start by analyzing \RemoveCenters.
Assume $\calR = \{ 0, 2^0,2^1,2^2,\ldots,2^{\lfloor \log k \rfloor} \}$ and
\begin{equation*}
    l^* = \arg\max \left\{ 0 \leq l \leq k \mid  \OPT_{k-l}(\po) \leq 400\gamma\beta \ \OPT_k(\po) \right\}.
\end{equation*}
According to \RemoveCenters, the value of $l'$ is the maximum number in $\calR$ such that
$$ \cost{\calU_{k-l'}, \po} \leq 14 \cdot 400\gamma\beta\ \cost{\uo, \po}. $$
Next, the algorithm sets $ l = \lfloor l' / C \rfloor$.
Now, we show good properties of this choice of $l$.

\begin{lemma}\label{lem:opt_{k-l'}<O(1)opt_k}
    We have
    $$ \OPT_{k-l'}(\po) \leq 3 \cdot 10^5\gamma\beta^2 \ \OPT_k(\po). $$    
\end{lemma}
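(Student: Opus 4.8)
The plan is to simply unwind the stopping rule of the \textbf{for} loop in \RemoveCenters\ and then substitute the approximation guarantee on $\uo$ available at the start of the epoch.

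The first step is to pin down what the loop actually guarantees about $l'$. I would argue that $l'$ is an element of $\calR$ at which the loop did \emph{not} trigger its break: if the loop breaks at some $r_{\mathrm{break}}\in\calR$, then $l'=\lfloor r_{\mathrm{break}}/2\rfloor$ is exactly the element of $\calR$ immediately below $r_{\mathrm{break}}$, so the iteration $r=l'$ was executed and did not break; and if the loop never breaks, the iteration $r=l'$ was likewise executed without breaking. Two degenerate cases need a quick remark: the break cannot occur at $r=0$, because there $\calU_{k}=\RandLocalSearch(\po,\uo,0)=\uo$ and hence $\cost{\calU_{k},\po}=\cost{\uo,\po}\leq 14\cdot400\gamma\beta\cdot\cost{\uo,\po}$, so the break condition fails; and the small-$k$ case is handled by the same observation. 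In all cases this yields
$$\cost{\calU_{k-l'},\po}\ \leq\ 14\cdot400\gamma\beta\cdot\cost{\uo,\po}.$$

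The second step is routine. Since $\RandLocalSearch(\po,\uo,l')$ returns a subset of $\uo$ (hence of $\ground$) of size exactly $k-l'$, the set $\calU_{k-l'}$ is a feasible improper $(k-l')$-median solution for $\po$, so $\OPT_{k-l'}(\po)\leq\cost{\calU_{k-l'},\po}$. Combining this with the previous display and with the inductive hypothesis that $\uo$ is $50\beta$-approximate at the start of the epoch, i.e.\ $\cost{\uo,\po}\leq 50\beta\cdot\OPT_k(\po)$, gives
$$\OPT_{k-l'}(\po)\ \leq\ 14\cdot400\gamma\beta\cdot\cost{\uo,\po}\ \leq\ 14\cdot400\cdot50\cdot\gamma\beta^2\cdot\OPT_k(\po)\ =\ 280000\,\gamma\beta^2\cdot\OPT_k(\po),$$
which is at most $3\cdot10^5\gamma\beta^2\cdot\OPT_k(\po)$, as claimed.

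The only part requiring any care is the first step: the value $l'$ is defined somewhat indirectly through the rounding $\lfloor r/2\rfloor$, so one has to match it precisely to the iterations of the loop and check the boundary cases. Note that this argument is deterministic — it does not invoke the high-probability approximation guarantee of \RandLocalSearch, only the structural fact about the size of its output — and since the final constant is loose anyway, there is no reason to optimise it.
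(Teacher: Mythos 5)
Your proof is correct and follows essentially the same route as the paper's: the stopping rule of the \textbf{for} loop gives $\cost{\calU_{k-l'},\po}\leq 14\cdot 400\gamma\beta\cdot\cost{\uo,\po}$, feasibility of $\calU_{k-l'}$ as an improper $(k-l')$-median solution gives $\OPT_{k-l'}(\po)\leq\cost{\calU_{k-l'},\po}$, and the $50\beta$-approximation invariant on $\uo$ yields the constant $14\cdot 400\cdot 50\,\gamma\beta^2\leq 3\cdot 10^5\gamma\beta^2$. Your extra care in matching $l'=\lfloor r/2\rfloor$ to an executed, non-breaking iteration (including the $r=0$ case) is a detail the paper glosses over but does not change the argument.
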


\begin{proof}
According to the definition of $l'$, we have
\begin{eqnarray*}
    \OPT_{k-l'}(\po) &\leq& 
    \cost{\calU_{k-l'},\po} \\
    &\leq&
    14 \cdot 400 \gamma \beta \ \cost{\uo, \po} \\
    &\leq&
    50\beta \cdot 14 \cdot 400\gamma\beta \ \OPT_k(\po) \\
    &\leq&
    3 \cdot 10^5\gamma\beta^2 \ \OPT_k(\po).
\end{eqnarray*}

\end{proof}

\begin{lemma}\label{cor:(l+1)-bigger-than(lstar+1)/(2C)}
    We have $l+1 \geq (l^*+1)/(2C)$.
\end{lemma}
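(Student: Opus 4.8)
The plan is to show that the integer $l'$ chosen inside \RemoveCenters \ already satisfies $l' \geq (l^*+1)/2$, and then simply pass to $l = \lfloor l'/C\rfloor$. The only substantive ingredient is the approximation guarantee of \RandLocalSearch \ (\Cref{lem:cost-local-search}), combined with the epoch hypothesis that $\uo$ is $50\beta$-approximate for $\po$ (hence in particular $\OPT_k(\po) \leq \cost{\uo, \po}$).

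\emph{Step 1: every $s \in \calR$ with $s \leq l^*$ passes the acceptance test.} Fix such an $s$ and set $\calU_{k-s} = \RandLocalSearch(\po, \uo, s)$. By \Cref{lem:cost-local-search}, w.h.p.\ $\cost{\calU_{k-s}, \po} \leq 2\cost{\uo, \po} + 12\,\OPT_{k-s}(\po)$. Since $m \mapsto \OPT_m(\po)$ is non-increasing and $s \leq l^*$, we get $\OPT_{k-s}(\po) \leq \OPT_{k-l^*}(\po) \leq 400\gamma\beta\cdot\OPT_k(\po)$, the last step being exactly the maximality defining $l^*$. Plugging in $\OPT_k(\po) \leq \cost{\uo,\po}$ and $400\gamma\beta \geq 1$ yields $\cost{\calU_{k-s}, \po} \leq (2 + 12\cdot 400\gamma\beta)\cdot\cost{\uo, \po} \leq 14\cdot 400\gamma\beta\cdot\cost{\uo, \po}$, which is precisely the test used in \RemoveCenters.

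\emph{Step 2: from Step 1 to $l$.} By Step 1, the \textbf{for} loop in \RemoveCenters \ never breaks while $r$ ranges over the elements of $\calR$ not exceeding $l^*$, so the selected value satisfies $l' \geq s^* := \max\{s \in \calR : s \leq l^*\}$. A routine power-of-two estimate gives $s^* \geq (l^*+1)/2$ when $l^* \geq 1$: writing $s^* = 2^{\lfloor \log_2 l^*\rfloor}$, the next power $2s^*$ strictly exceeds $l^*$ and both are integers, so $2s^* \geq l^*+1$. Hence $l' \geq (l^*+1)/2$, and therefore $l = \lfloor l'/C\rfloor > l'/C - 1 \geq (l^*+1)/(2C) - 1$, i.e.\ $l+1 > (l^*+1)/(2C)$, which is the claim. (The case $l^* = 0$ is immediate, since then the right-hand side is $1/(2C) \leq 1 \leq l+1$.)

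There is no real obstacle; the lemma is essentially the composition of \Cref{lem:cost-local-search} with the $50\beta$-approximation hypothesis, followed by elementary arithmetic. The two places needing a little care are (i) the guarantee of \RandLocalSearch \ is probabilistic, which is harmless here because the analysis of this section already unions over the (at most) $O(T)$ calls to obtain a high-probability statement, and (ii) the exact bookkeeping relating the breaking index $r$ of the \textbf{for} loop, the reported value $l'$, and $s^*$ (including the degenerate possibility that the loop exhausts all of $\calR$), which is absorbed by the geometric spacing of $\calR$ and is precisely the reason for the slack factor $2$ in the statement.
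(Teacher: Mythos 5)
Your proposal is correct and follows essentially the same route as the paper: the same chain of inequalities (the \RandLocalSearch{} guarantee of \Cref{lem:cost-local-search}, monotonicity of $\OPT_m(\po)$, the maximality defining $l^*$, and $\OPT_k(\po)\leq\cost{\uo,\po}$) shows that every $s\in\calR$ with $s\leq l^*$ passes the test, which the paper packages as a contradiction to the maximality of $l'$ and you package directly via $l'\geq\max\{s\in\calR: s\leq l^*\}$. The final floor-division arithmetic is an equally valid variant of the paper's integer argument, so no issues.
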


\begin{proof}
First, we show that $l' \geq l^*/2$.
We have three cases:
\begin{itemize}
    \item 
    $l^* = 0$. The inequality is obvious.
    
    \item 
    $l^* \geq 1$ and $l' = 0$. In this case, the value of $l'' = 1 \in \calR$ (which is bigger than $l'$) also satisfies
    \begin{eqnarray*}
        \cost{\calU_{k-l''}, \po} &\leq& 2 \ \cost{\uo,\po} + 12 \ \OPT_{k-l''}(\po) \\
        &\leq& 2 \ \cost{\uo,\po} + 12 \ \OPT_{k-l^*}(\po) \\
        &\leq&
        2 \ \cost{\uo,\po} + 12\cdot 400\gamma\beta \ \OPT_{k}(\po)  \\
        &\leq&
        (14\cdot 400\gamma\beta) \ \cost{\uo,\po}.
    \end{eqnarray*}
    The first inequality holds by the guarantee of \RandLocalSearch \ in \Cref{lem:cost-local-search}, the second one holds since $l'' \geq l^*$, and the third one holds by the definition of $l^*$.
    This is in contradiction with the choice of $l' \in \calR$ in the algorithm, since $l'' > l' $ and the algorithm must have picked $l''$ instead of $l'$.
    
    \item
    $l^* \geq 1$ and $l' \geq 1$. In this case if we have $l' < l^*/2$, then the above argument works for $l'' = 2l' \in \calR$ which leads to the same contradiction (note that since $l^*/2 \leq k/2$, then $l'' \leq k$ must be in $\calR$).
\end{itemize}
So, we have $l' \geq l^*/2$.
Since both of $l'$ and $C$ are integers, we have $l+1 \geq l'/C + 1/C$. Hence,
$ l+1 \geq (l'+1)/C \geq (l^*+1)/(2C) $.

\end{proof}

\begin{lemma}\label{optk-within-constant-factor}
    We have
    $$
  \frac{1}{3 \cdot 10^5\gamma\beta^2} \ \OPT_{k-l}(\po) \leq \OPT_{k}(\po) \leq 4 \ \OPT_{k+l}(\po).  $$
\end{lemma}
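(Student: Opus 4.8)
\textbf{Proof Proposal for Lemma~\ref{optk-within-constant-factor}.}

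The plan is to derive both inequalities from the two facts already established: Lemma~\ref{lem:opt_{k-l'}<O(1)opt_k}, which gives $\OPT_{k-l'}(\po) \leq 3 \cdot 10^5 \gamma \beta^2 \cdot \OPT_k(\po)$, and the Double-Sided Stability Lemma (Lemma~\ref{lem:double-sided-stability}). First I would prove the left inequality $\OPT_{k-l}(\po) \leq 3 \cdot 10^5 \gamma \beta^2 \cdot \OPT_k(\po)$, which is the easy half. Since $l = \lfloor l'/C \rfloor \leq l'$, we have $k - l \geq k - l'$, and because $\OPT_m(\po)$ is a monotonically non-increasing function of $m$, it follows that $\OPT_{k-l}(\po) \leq \OPT_{k-l'}(\po) \leq 3 \cdot 10^5 \gamma \beta^2 \cdot \OPT_k(\po)$ by Lemma~\ref{lem:opt_{k-l'}<O(1)opt_k}. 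This is exactly the left inequality after rearranging.

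For the right inequality, the plan is to invoke Lemma~\ref{lem:double-sided-stability} with the parameters $r = l'$ and $\eta = 3 \cdot 10^5 \gamma \beta^2$. The hypothesis of that lemma, namely $\OPT_{k - r}(\po) \leq \eta \cdot \OPT_k(\po)$, is precisely the content of Lemma~\ref{lem:opt_{k-l'}<O(1)opt_k} with these parameter choices. The conclusion of Lemma~\ref{lem:double-sided-stability} then yields $\OPT_k(\po) \leq 4 \cdot \OPT_{k + \lfloor r/(12\eta) \rfloor}(\po) = 4 \cdot \OPT_{k + \lfloor l'/(12 \cdot 3 \cdot 10^5 \gamma \beta^2) \rfloor}(\po)$. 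Now recall from \Cref{parameters} that $C = 12 \cdot 3 \cdot 10^5 \gamma \beta^2$, so $\lfloor l'/(12\eta) \rfloor = \lfloor l'/C \rfloor = l$ exactly by the definition of $l$ in \RemoveCenters. Hence $\OPT_k(\po) \leq 4 \cdot \OPT_{k+l}(\po)$, as required.

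The only subtlety — and the one place I would be careful — is making sure the value of $C$ is chosen to match the denominator $12\eta$ that appears in Lemma~\ref{lem:double-sided-stability}, so that the floor in the lemma's conclusion is literally the integer $l$ the algorithm uses, rather than something only within a constant factor of it. This bookkeeping is already baked into the definition $C = 12 \cdot 3 \cdot 10^5 \gamma \beta^2$ and the choice $\eta = 3 \cdot 10^5 \gamma \beta^2$, so once those are lined up the argument is immediate. No genuine obstacle remains; the lemma is essentially a direct combination of Lemma~\ref{lem:opt_{k-l'}<O(1)opt_k}, monotonicity of $\OPT_m$ in $m$, and the Double-Sided Stability Lemma, with the parameter choices $r = l'$ and $\eta = 3\cdot 10^5\gamma\beta^2$ and the observation that $C = 12\eta$.
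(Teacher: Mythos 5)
Your proposal is correct and follows exactly the same route as the paper's proof: the left inequality comes from $l \leq l'$, monotonicity of $\OPT_m$ in $m$, and Lemma~\ref{lem:opt_{k-l'}<O(1)opt_k}, while the right inequality is the Double-Sided Stability Lemma applied with $r = l'$ and $\eta = 3\cdot 10^5\gamma\beta^2$, together with the identity $\lfloor l'/(12\eta)\rfloor = \lfloor l'/C\rfloor = l$. Nothing is missing.
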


\begin{proof}
Using Double-Sided Stability \Cref{lem:double-sided-stability}, we easily conclude that the algorithm picks an $l$ with the property that both of $\OPT_{k-l}(\po)$ and $\OPT_{k+l}(\po)$ are within a constant factor of $\OPT_{k}(\po)$ as desired.
More precisely, let $\eta = 3 \cdot 10^5\gamma\beta^2$ and $r=l'$ in Double-Sided Stability \Cref{lem:double-sided-stability}.
Note that 
$$\lfloor r/(12\eta) \rfloor = \lfloor l'/(12\cdot 3 \cdot 10^5\gamma\beta^2) \rfloor = \lfloor l'/C \rfloor = l.$$
According to \Cref{lem:opt_{k-l'}<O(1)opt_k}, we have $\OPT_{k-r}(\po) \leq \eta \ \OPT_k(\po)$.
Hence,
$$\OPT_{k}(\po) \leq 4 \ \OPT_{k + \lfloor r / (12\eta) \rfloor }(\po) = 4 \ \OPT_{k + l}(\po). $$
Since $l' \geq l$, \Cref{lem:opt_{k-l'}<O(1)opt_k} concludes
$$ \OPT_{k-l}(\po) \leq \OPT_{k-l'}(\po) \leq 3 \cdot 10^5\gamma\beta^2 \ \OPT_k(\po). $$

\end{proof}

Next, we bound the cost of $\calU'$ returned by Rand-Local-Search$(\uo, l)$.

\begin{lemma}\label{cost-calU'}
    We have
    $$
   \cost{\calU', \po} \leq  2\cdot 10^7\gamma\beta^2 \ \OPT_{k+l}(\po). $$
\end{lemma}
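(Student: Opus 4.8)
The plan is to bound $\cost{\calU', \po}$ by chaining the guarantee of \RandLocalSearch\ (from \Cref{lem:cost-local-search}) with the structural facts about $l$ established in Lemmas~\ref{lem:opt_{k-l'}<O(1)opt_k}--\ref{optk-within-constant-factor}. Recall that $\calU' \gets \RandLocalSearch(\po, \uo, l)$, so by \Cref{lem:cost-local-search} we have, with high probability,
\[
\cost{\calU', \po} \leq 2 \cdot \cost{\uo, \po} + 12 \cdot \OPT_{k-l}(\po).
\]
So the proof reduces to bounding the two terms on the right hand side by $O(1) \cdot \OPT_{k+l}(\po)$.

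For the first term, I would use the hypothesis that $\uo$ is a $50\beta$-approximate solution at the start of the epoch, i.e. $\cost{\uo, \po} \leq 50\beta \cdot \OPT_k(\po)$, and then apply the right inequality of \Cref{optk-within-constant-factor}, namely $\OPT_k(\po) \leq 4 \cdot \OPT_{k+l}(\po)$, to get $2 \cdot \cost{\uo, \po} \leq 400\beta \cdot \OPT_{k+l}(\po)$. For the second term, I would combine the left inequality of \Cref{optk-within-constant-factor}, $\OPT_{k-l}(\po) \leq 3 \cdot 10^5 \gamma\beta^2 \cdot \OPT_k(\po)$, with the same right inequality $\OPT_k(\po) \leq 4 \cdot \OPT_{k+l}(\po)$, yielding $12 \cdot \OPT_{k-l}(\po) \leq 12 \cdot 4 \cdot 3 \cdot 10^5 \gamma\beta^2 \cdot \OPT_{k+l}(\po)$. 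Adding the two bounds gives $\cost{\calU', \po} \leq (400\beta + 144 \cdot 10^5\gamma\beta^2) \cdot \OPT_{k+l}(\po)$, which is comfortably below $2 \cdot 10^7 \gamma\beta^2 \cdot \OPT_{k+l}(\po)$ since $\gamma, \beta \geq 1$ and $\gamma = 4000$ makes the constants work out with room to spare.

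There is essentially no hard obstacle here — this is a routine constant-chasing argument. The only point requiring a little care is making sure the constants genuinely fit inside $2 \cdot 10^7 \gamma\beta^2$; since $144 \cdot 10^5 = 1.44 \cdot 10^7 < 2 \cdot 10^7$ and the $400\beta$ term is lower order, the slack is real but not enormous, so I would keep the arithmetic explicit rather than hand-waving it. I would also note at the outset (as the paper's ``important note'' already does) that the bound holds with high probability because it relies on the \RandLocalSearch\ guarantee, and that across the at most $O(T)$ invocations over $T$ updates a union bound preserves the high-probability claim. The one genuine dependency to flag is that this lemma consumes the inductive hypothesis ``$\uo$ is $50\beta$-approximate'', which is exactly what the epoch invariant supplies and what the post-processing analysis will later have to restore.
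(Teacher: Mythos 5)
Your proposal is correct and matches the paper's proof essentially step for step: both start from the \RandLocalSearch\ guarantee $\cost{\calU',\po} \leq 2\cost{\uo,\po} + 12\,\OPT_{k-l}(\po)$, then invoke the $50\beta$-approximation hypothesis on $\uo$ and the two inequalities of \Cref{optk-within-constant-factor}; the only cosmetic difference is that the paper first collects everything into a multiple of $\OPT_k(\po)$ before applying $\OPT_k(\po)\leq 4\,\OPT_{k+l}(\po)$, whereas you apply that conversion to each term separately. The constants check out either way.
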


\begin{proof}
According to \Cref{lem:cost-local-search} and \Cref{optk-within-constant-factor},
\begin{eqnarray*}
    \cost{\calU',\po} &\leq& 2\ \cost{\uo,\po} + 12 \ \OPT_{k-l}(\po) \\
    &\leq& 2\cdot(50\beta)\ \OPT_{k}(\po) + 12\cdot(3 \cdot 10^5\gamma\beta^2) \ \OPT_{k}(\po) \\
     &\leq&
     4\cdot 10^6\gamma\beta^2 \ \OPT_{k}(\po) \\
     &\leq&
     4\cdot (4\cdot 10^6\gamma\beta^2) \ \OPT_{k+l}(\po) \\
     &\leq&
     2\cdot 10^7\gamma\beta^2 \ \OPT_{k+l}(\po).
\end{eqnarray*}
\end{proof}

Finally, we show that the maintained solution 
during the lazy updates is a constant approximation of $\OPT_k$ on the current space.
Assume that $1 \leq i \leq l$ updates has happened to $\po$ and we have the current set of points $\calP^{(i)}$.
According to the algorithm, the solution for this set of points is $\calU^{(i)}$ such that $\calU' + (\calP^{(i)} - \po) \subseteq \calU^{(i)}$.
Note that there might be some point $p$ that has been added during these $i$ updates and then deleted. So, $p$ is not present in $\calP^{(i)} - \po$, but it is present in our solution $\calU^{(i)}$. As a result, \LazyUpdates \ \Cref{lem:lazy-updates} and \Cref{cost-calU'} conclude
\begin{eqnarray*}
    \cost{\calU^{(i)}, \calP^{(i)}} &\leq&
    \cost{\calU' + (\calP^{(i)} - \po), \calP^{(i)}} \\
    &\leq&
    \cost{\calU', \po} \\
    &\leq&
    2\cdot 10^7\gamma\beta^2 \ \OPT_{k+l}(\po) \\
    &\leq&
    2\cdot 10^7\gamma\beta^2 \ \OPT_{k}(\calP^{(i)}).
\end{eqnarray*}

So, we have the following.

\begin{corollary}\label{approx-ratio-during-lazy-updates}
    The approximation ratio of algorithm during \LazyUpdates \ is at most $2\cdot 10^7\gamma\beta^2$.
\end{corollary}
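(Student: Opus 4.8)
\textbf{Proof proposal for Corollary~\ref{approx-ratio-during-lazy-updates}.}

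The plan is to chain together the cost guarantee for $\calU'$ coming out of \RemoveCenters\ (Lemma~\ref{cost-calU'}), the monotonicity of the objective under the lazy update rule, and the Lazy-Updates Lemma (Lemma~\ref{lem:lazy-updates}). Fix $i \in [1,l]$ and let $\calP^{(i)}$ be the current point-set after the $i$-th update in the epoch. First I would recall that the maintained solution $\calU^{(i)}$ is obtained from $\calU'$ purely by lazy updates: every inserted point is added as a center and no center is ever removed, so $\calU' + (\calP^{(i)} - \po) \subseteq \calU^{(i)}$, and hence $\cost{\calU^{(i)}, \calP^{(i)}} \leq \cost{\calU' + (\calP^{(i)} - \po), \calP^{(i)}}$ since adding centers only decreases the cost. (One must be mildly careful here: a point could be inserted and then deleted within the epoch, so it lies in $\calU^{(i)}$ but not in $\calP^{(i)} - \po$; in the improper setting this is harmless, and it only makes $\calU^{(i)}$ a superset of $\calU' + (\calP^{(i)} - \po)$, which is all we need.)

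Next I would observe that the lazy rule makes the objective non-increasing in the sense needed: $\cost{\calU' + (\calP^{(i)} - \po), \calP^{(i)}} \leq \cost{\calU', \po}$. The reason is that for any point of $\calP^{(i)}$ that was already in $\po$, its distance to $\calU' + (\calP^{(i)} - \po)$ is at most its distance to $\calU'$, while any point newly inserted is itself a center in $\calU' + (\calP^{(i)} - \po)$ and so contributes $0$; summing over $\calP^{(i)}$ gives a total no larger than $\cost{\calU', \po}$ (the points deleted from $\po$ simply drop out of the sum). Combining this with Lemma~\ref{cost-calU'} yields $\cost{\calU^{(i)}, \calP^{(i)}} \leq 2 \cdot 10^7 \gamma \beta^2 \cdot \OPT_{k+l}(\po)$.

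Finally I would invoke the Lazy-Updates Lemma. Since at most $i \leq l$ updates have occurred, $|\calP^{(i)} \oplus \po| \leq l$, and Lemma~\ref{lem:lazy-updates} (with the roles of the two point-sets taken so that we bound $\OPT_{k+l}(\po)$ from above by $\OPT_{k}(\calP^{(i)})$) gives $\OPT_{k+l}(\po) \leq \OPT_k(\calP^{(i)})$. Substituting, $\cost{\calU^{(i)}, \calP^{(i)}} \leq 2 \cdot 10^7 \gamma \beta^2 \cdot \OPT_k(\calP^{(i)})$, which is exactly the claimed approximation bound; since $\calU^{(i)}$ also has at most $k$ centers (it starts with $k-l$ and grows by at most $l$), it is a valid $k$-median solution. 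I do not anticipate a genuine obstacle here — the statement is essentially a bookkeeping assembly of three previously established facts — but the point requiring the most care is making the set-inclusion / monotonicity argument airtight in the presence of points that are inserted and then deleted within the same epoch, and being consistent about which direction Lemma~\ref{lem:lazy-updates} is applied in (it bounds the optimum on the \emph{perturbed} instance with \emph{more} centers by the optimum on the original instance).
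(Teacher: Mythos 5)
Your proposal is correct and follows essentially the same route as the paper's own argument: the inclusion $\calU' + (\calP^{(i)} - \po) \subseteq \calU^{(i)}$, the monotonicity of the cost under the lazy rule, Lemma~\ref{cost-calU'}, and then Lemma~\ref{lem:lazy-updates} to pass from $\OPT_{k+l}(\po)$ to $\OPT_k(\calP^{(i)})$. The subtlety you flag about points inserted and then deleted within the epoch is exactly the one the paper also notes, and you resolve it the same way.
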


\subsection{Analysis of Develop-Centers}\label{sec:approx}

In this section, we show $\tuo$ returned by Develop-Centers is $10\beta$ approximate for $\OPT_{k+l+1}(\po)$.

We start by proving that it is possible to add at most $O(l+1)$ points to $\uo$ to obtain a set of centers $\tu$ which is $10$ approximate of $\OPT_{k+l+1}(\po)$.
Then, we show \DevelopCenters \ finds $\tuo$ that is $10\beta$ approximate of $\OPT_{k+l+1}(\po)$.
Assume $\calV^*_{k+l+1}$ is the optimum \textbf{proper} solution of size $k+l+1$ for $(k+l+1)$-median problem on $\po$ (we will use the assumption that this is a proper solution in \Cref{cor:existential-final}).
Note that $\OPT_{k+l+1}(\po)$ stands for the optimum improper solution.
According to \Cref{lem:proper-is-2-approx-of-improper}, $\calV^*_{k+l+1}$ is a $2$ approximate solution for the improper $(k+l+1)$-median on $\po$, i.e.
\begin{equation}\label{eq:calV-star-is-2-approx-for-improper-OPT}
  \cost{\calV^*_{k+l+1}, \po} \leq 2 \ \OPT_{k+l+1}(\po).  
\end{equation}

\begin{lemma}\label{cor:we-can-add-m+l+1-elements}
    If the number of well-separated pairs with respect to $(\uo, \calV^*_{k+l+1})$ is $k - m$, then there exists $\calV' \subseteq \calV^*_{k+l+1}$ of size at most $m+l+1$ such that
    $$ \cost{\uo + \calV', \po} \leq 10 \ \OPT_{k+l+1}(\po). $$
\end{lemma}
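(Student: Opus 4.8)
The plan is to essentially replay the argument behind Claim~\ref{cl:approx:key:11}, now in the weighted setting of the full version and with the modified robustness constant. First I would set $\calV' \subseteq \calV^*_{k+l+1}$ to be the collection of centers of $\calV^*_{k+l+1}$ that do \emph{not} form a well-separated pair with any center of $\uo$. Using the observation recorded immediately after the definition of a well-separated pair (each $u \in \uo$ forms a well-separated pair with at most one $v \in \calV^*_{k+l+1}$, and symmetrically, since $\gamma$ is large enough), the well-separated relation between $\uo$ and $\calV^*_{k+l+1}$ is a partial matching, so exactly $k-m$ centers of $\calV^*_{k+l+1}$ lie in some well-separated pair. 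Hence $|\calV'| = |\calV^*_{k+l+1}| - (k-m) = (k+l+1) - (k-m) = m+l+1$, as required.

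Next I would exhibit an assignment $\sigma : \po \to \uo + \calV'$ witnessing the cost bound. For each $p \in \po$, let $v_p := \pi_{\calV^*_{k+l+1}}(p)$ be the center of $\calV^*_{k+l+1}$ closest to $p$. If $v_p$ forms a well-separated pair with the (unique) center $u \in \uo$, set $\sigma(p) := u \in \uo$; otherwise $v_p \in \calV'$ by definition of $\calV'$, and we set $\sigma(p) := v_p$. In either case $\sigma(p) \in \uo + \calV'$, so $\cost{\uo + \calV', \po} \leq \sum_{p \in \po} w(p)\, d(p, \sigma(p))$.

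To bound the right-hand side I would split $\po$ according to the clusters $\{C_v(\calV^*_{k+l+1}, \po)\}_{v \in \calV^*_{k+l+1}}$, which partition $\po$. For a cluster $C_v(\calV^*_{k+l+1}, \po)$ whose center $v$ is \emph{not} in a well-separated pair, every $p$ in it has $\sigma(p) = v$, contributing exactly $\cost{v, C_v(\calV^*_{k+l+1}, \po)}$. For a cluster whose center $v$ \emph{is} in a well-separated pair with some $u \in \uo$, every $p$ in it has $\sigma(p) = u$, contributing $\cost{u, C_v(\calV^*_{k+l+1}, \po)}$; here I would invoke \Cref{lem:cost-well-sep} — applicable since $\uo$ is robust w.r.t.~$\po$ by the inductive hypothesis maintained at the start of each epoch — to get $\cost{u, C_v(\calV^*_{k+l+1}, \po)} \leq 5 \cdot \cost{v, C_v(\calV^*_{k+l+1}, \po)}$. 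Summing over all clusters yields $\sum_{p \in \po} w(p)\, d(p, \sigma(p)) \leq 5 \cdot \cost{\calV^*_{k+l+1}, \po}$. Finally, since $\calV^*_{k+l+1}$ is an optimal \emph{proper} $(k+l+1)$-median solution for $\po$, \Cref{lem:proper-is-2-approx-of-improper} gives $\cost{\calV^*_{k+l+1}, \po} \leq 2 \cdot \OPT_{k+l+1}(\po)$, and chaining the inequalities gives $\cost{\uo + \calV', \po} \leq 10 \cdot \OPT_{k+l+1}(\po)$.

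The only genuinely delicate points are bookkeeping ones: confirming that the well-separated relation is a partial matching (so that $\sigma$ is well defined and the two families of clusters used in the cost split are disjoint), and tracking the constant — the full version's weaker definition of robustness costs a factor $5$ rather than $3$ in \Cref{lem:cost-well-sep}, which is exactly why the bound is $10$ here rather than $6$ as in Claim~\ref{cl:approx:key:11}. Everything else is a routine weighted analogue of the unweighted computation, so I do not anticipate any real obstacle beyond these.
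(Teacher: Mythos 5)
Your proposal is correct and follows essentially the same route as the paper's proof: the same choice of $\calV'$ (the $m+l+1$ centers of $\calV^*_{k+l+1}$ not in any well-separated pair), the same cluster-by-cluster reassignment using \Cref{lem:cost-well-sep} with the factor $5$, and the same final appeal to \Cref{lem:proper-is-2-approx-of-improper} to pass from the proper optimum to $\OPT_{k+l+1}(\po)$. The only cosmetic difference is that the paper fixes an explicit ordering of the two center sets rather than phrasing the argument via an assignment map $\sigma$.
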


\begin{proof}
    Consider orderings $\uo = \{u_1,u_2,\ldots,u_k\}$ and $\calV_{k+l+1}^* = \{v_1,v_2,\ldots,v_{k+l+1}\}$ such that $(u_i,v_i)$ are all $k-m$ well-separated pairs w.r.t.~$(\uo, \calV_{k+l+1}^*)$ for $1 \leq i \leq k-m$. 
    Define
    $$\calV' = \{v_{k-m+1},v_{k-m+2},\ldots, v_{k+l+1}\}$$
    of size $m+l+1$.
    For simplicity, denote $C_{v_i}(\calV^*_{k+l+1}, \po)$ by $C_{v_i}$ for each $1 \leq i \leq k+l+1$. Note that these sets give a partition of $\po$.
    Now, consider $\uo + \calV'$.
    We show how to assign points of $\po$ to centers in $\uo + \calV'$.
    For every $1 \leq i \leq k-m$, we assign all of the points of $C_{v_i}$ to $u_i$ which is present in $\uo + \calV'$.
    This gives us, \begin{equation}\label{eq:i-from-1-to-k-m}
        \sum_{i=1}^{k-m} \cost{\uo + \calV', C_{v_i}}
        \leq \sum_{i=1}^{k-m} \cost{u_i, C_{v_i}}
        \leq
        \sum_{i=1}^{k-m} 5 \ \cost{v_i, C_{v_i}}.
    \end{equation}
    The last inequality is by \Cref{lem:cost-well-sep} for $\calP=\po$, $\calU = \uo$, $\calV = \calV^*_{k+l+1}$ and every $1 \leq i \leq k-m$.
    For each $k-m+1 \leq i \leq k+l+1$, we assign all of the points of $C_{v_i}$ to $v_i$ which is present in $\uo + \calV'$. This gives us, \begin{equation}\label{eq:i-from-k-m+1-to-k+l+1}
        \sum_{i=k-m+1}^{k+l+1} \cost{\uo + \calV', C_{v_i}} 
        \leq
        \sum_{i=k-m+1}^{k+l+1} \cost{v_i, C_{v_i}}.
    \end{equation}
    As a result,
    \begin{eqnarray*}
        \cost{\uo + \calV', \po} &=& \sum_{i=1}^{k+l+1} \cost{\uo + \calV', C_{v_i}} \\
        &=& \sum_{i=1}^{k-m} \cost{\uo + \calV', C_{v_i}} + \sum_{i=k-m+1}^{k+l+1} \cost{\uo + \calV', C_{v_i}} \\
        &\leq& \sum_{i=1}^{k-m} 5 \ \cost{v_i, C_{v_i}} + \sum_{i=k-m+1}^{k+l+1} \cost{v_i, C_{v_i}} \\
        &\leq& 5 \ \sum_{i=1}^{k+l+1} \cost{v_i, C_{v_i}} \\
        &=& 5 \ \cost{\calV_{k+l+1}^*, \po} \\
        &\leq& 10 \ \OPT_{k+l+1}(\po).
    \end{eqnarray*}
    The first inequality holds by \Cref{eq:i-from-1-to-k-m} and \Cref{eq:i-from-k-m+1-to-k+l+1} and the last inequality holds by \Cref{eq:calV-star-is-2-approx-for-improper-OPT}.     
\end{proof}

Next, we show that the number of well-separated pairs w.r.t.~$(\uo, \calV^*_{k+l+1})$ is large. 

\begin{lemma}\label{lem:m=O(l+1)}
    $m \leq (8C+1)(l+1)$.
\end{lemma}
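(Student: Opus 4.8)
The plan is to bound $m$ via the generalized version of Lemma~7.3 from~\cite{soda/FichtenbergerLN21}, namely \Cref{lem:generalize-lemma-7.3-in-FLNS21}, exactly in the spirit of the earlier \Cref{cl:approx:key:10} in Part~\ref{part:0}. Recall that $m$ is defined by: the number of well-separated pairs w.r.t.~$(\uo, \calV^*_{k+l+1})$ is $k-m$, and $|\calV^*_{k+l+1}| = k+l+1$. So first I would invoke \Cref{lem:generalize-lemma-7.3-in-FLNS21} with $\calP = \po$, $\calU = \uo$, $\calV = \calV^*_{k+l+1}$ and $r = l+1$. This yields a subset $\bar\calU \subseteq \uo$ of size at most $k - \lfloor (m - (l+1))/4 \rfloor$ with
$$\cost{\bar\calU, \po} \leq 6\gamma\left(\cost{\uo, \po} + \cost{\calV^*_{k+l+1}, \po}\right).$$

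Next I would upper-bound the right-hand side by a constant multiple of $\OPT_k(\po)$. By the induction hypothesis at the start of the epoch, $\cost{\uo, \po} \leq 50\beta \cdot \OPT_k(\po)$. For the second term, by \Cref{eq:calV-star-is-2-approx-for-improper-OPT} we have $\cost{\calV^*_{k+l+1}, \po} \leq 2\,\OPT_{k+l+1}(\po) \leq 2\,\OPT_k(\po)$, using monotonicity of $\OPT_\cdot(\po)$ in the number of centers. Hence $\cost{\bar\calU, \po} \leq 6\gamma(50\beta + 2)\,\OPT_k(\po) \leq 400\gamma\beta \cdot \OPT_k(\po)$ (with room to spare, since $\beta \geq 1$). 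Writing $b := \lfloor (m-(l+1))/4 \rfloor$, since $\OPT_{k-b}(\po) \leq \cost{\bar\calU, \po}$ this gives $\OPT_{k-b}(\po) \leq 400\gamma\beta \cdot \OPT_k(\po)$.

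Now I would compare $b$ with the quantity $l^\star$ defined in \Cref{def:l^*}. By the definition of $l^\star$ as the \emph{maximum} value with $\OPT_{k-l^\star}(\po) \leq 400\gamma\beta\,\OPT_k(\po)$, the bound just derived forces $b \leq l^\star$. On the other hand, by \Cref{cor:(l+1)-bigger-than(lstar+1)/(2C)} we have $l+1 \geq (l^\star+1)/(2C)$, i.e.~$l^\star \leq 2C(l+1) - 1$. Combining, $b \leq 2C(l+1) - 1$. Finally, unwinding the floor, $b \geq \frac{m-(l+1)}{4} - 1$, so $\frac{m-(l+1)}{4} - 1 \leq 2C(l+1) - 1$, which rearranges to $m \leq (8C+1)(l+1)$, as claimed.

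The only mildly delicate point — and the one I would be most careful about — is the bookkeeping of constants: making sure that $6\gamma(50\beta+2)$ really is dominated by $400\gamma\beta$ (it is, since $50\beta + 2 \leq 52\beta \leq 66\beta$ and $6 \cdot 66 = 396 < 400$), and then that the floor-rounding in passing from $b$ to $m$ and from $l^\star$ to $l$ does not lose more than the slack baked into the constant $8C$. There is no conceptual obstacle here; it is the same argument as \Cref{cl:approx:key:10}, transplanted to the weighted setting with the approximation constant $50\beta$ of the maintained solution in place of the constant $8$ used in Part~\ref{part:0}, and with $r = l+1$ coming from $|\po \oplus \pl| \leq l+1$ rather than from the slack of the hierarchy.
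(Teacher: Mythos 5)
Your proposal is correct and follows essentially the same route as the paper's proof: both invoke \Cref{lem:generalize-lemma-7.3-in-FLNS21} with $\calU = \uo$, $\calV = \calV^*_{k+l+1}$, $r = l+1$, bound the resulting cost by $400\gamma\beta\cdot\OPT_k(\po)$ using the $50\beta$-approximation of $\uo$ and \Cref{eq:calV-star-is-2-approx-for-improper-OPT}, conclude $b \leq l^*$ from the maximality in \Cref{def:l^*}, and finish via \Cref{cor:(l+1)-bigger-than(lstar+1)/(2C)}. The arithmetic in your final rearrangement checks out and matches the paper's bound exactly.
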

    
\begin{proof}
    Let $b = \lfloor (m- (l+1)) / 4 \rfloor$.
    By plugging $\calP = \po$, $\calU = \uo$, $\calV = \calV^*_{k+l+1}$ and $r=l+1$ in \Cref{lem:generalize-lemma-7.3-in-FLNS21}, there exist a $\bar{\calU} \subseteq \uo$ of size at most $k - b$ such that
    $$ \cost{\bar{\calU}, \po} \leq 6\gamma\left( \cost{\uo, \po} + \cost{\calV^*_{k+l+1}, \po} \right). $$
    We also have
    $$\cost{\uo, \po} \leq 50\beta \ \OPT_k(\po) $$
    and
    $$ \cost{\calV^*_{k+l+1}, \po} \leq 2 \ \OPT_{k+l+1}(\po) \leq 2 \ \OPT_k(\po). $$
    Combining these three inequalities together, we have
    \begin{eqnarray*}
        \OPT_{k - b}(\po) &\leq& \cost{\bar{\calU}, \po} \\
        &\leq& 6\gamma\left( \cost{\uo, \po} + \cost{\calV^*_{k+l+1}, \po} \right) \\
        &\leq& 6\gamma\left( 50\beta \  \OPT_k(\po) + 2\  \OPT_k(\po) \right) \\
        &\leq& (6\gamma)\cdot(52\beta) \  \OPT_k(\po)  \\
        &\leq& 400\gamma\beta \ \OPT_k(\po).
    \end{eqnarray*}
    Now, with the maximality in the definition of $l^*$ in \Cref{def:l^*}, we have $\left\lfloor \frac{m-(l+1)}{4} \right\rfloor = b \leq l^*$ which concludes
    $$ m \leq 4(l^*+1)+(l+1).$$
    Together with \Cref{cor:(l+1)-bigger-than(lstar+1)/(2C)}, we have
    $m \leq (8C+1)(l+1)$.
\end{proof}

\begin{lemma}\label{cor:existential-final}
    Let $s = (8C+2)(l+1)$. We have
    $$ \min_{\substack{\calF \subseteq \po, \\ |\calF| \leq s}} \cost{\uo+\calF, \po}
   \leq 10 \ \OPT_{k+l+1}(\po). $$
\end{lemma}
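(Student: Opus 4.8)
The plan is to combine the two previously established lemmas — Lemma~\ref{cor:we-can-add-m+l+1-elements} and Lemma~\ref{lem:m=O(l+1)} — essentially by substitution. Recall that in Lemma~\ref{cor:we-can-add-m+l+1-elements} we fixed $\calV^*_{k+l+1}$ to be the optimum \emph{proper} $(k+l+1)$-median solution for $\po$, and we defined $m \in [0,k]$ via the property that the number of well-separated pairs with respect to $(\uo, \calV^*_{k+l+1})$ is exactly $k - m$. With this same $m$, Lemma~\ref{cor:we-can-add-m+l+1-elements} produces a subset $\calV' \subseteq \calV^*_{k+l+1}$ with $|\calV'| \leq m + l + 1$ and $\cost{\uo + \calV', \po} \leq 10 \cdot \OPT_{k+l+1}(\po)$. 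So the set $\calF := \calV'$ is a valid candidate for the minimization on the left-hand side, provided its size is at most $s = (8C+2)(l+1)$.

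Concretely, the steps I would carry out are: (i) invoke Lemma~\ref{cor:we-can-add-m+l+1-elements} to get $\calV' \subseteq \calV^*_{k+l+1} \subseteq \po$ with $|\calV'| \leq m + l + 1$ and $\cost{\uo + \calV', \po} \leq 10 \cdot \OPT_{k+l+1}(\po)$; (ii) invoke Lemma~\ref{lem:m=O(l+1)} to bound $m \leq (8C+1)(l+1)$; (iii) add $l+1$ to both sides to get $m + l + 1 \leq (8C+1)(l+1) + (l+1) = (8C+2)(l+1) = s$, so that $|\calV'| \leq s$; (iv) conclude that $\calV'$ is feasible for the minimization, hence $\min_{\calF \subseteq \po,\, |\calF| \leq s} \cost{\uo + \calF, \po} \leq \cost{\uo + \calV', \po} \leq 10 \cdot \OPT_{k+l+1}(\po)$.

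Since this is a short bridging corollary, there is essentially no serious obstacle; the only thing to be careful about is consistency of the quantity $m$ across the three statements — it must be the \emph{same} $m$ (defined by the well-separated pair count for $(\uo, \calV^*_{k+l+1})$) in both Lemma~\ref{cor:we-can-add-m+l+1-elements} and Lemma~\ref{lem:m=O(l+1)}, which it is by construction, since both lemmas are stated with respect to the same fixed optimal solution $\calV^*_{k+l+1}$. One should also double-check that $\calV' \subseteq \po$ (needed so that $\calF \subseteq \po$ in the minimization): this holds because $\calV^*_{k+l+1}$ is a \emph{proper} solution, i.e.\ $\calV^*_{k+l+1} \subseteq \po$, and $\calV' \subseteq \calV^*_{k+l+1}$. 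Finally, since $s$ is defined to be exactly $(8C+2)(l+1)$, the arithmetic in step (iii) is exact and requires no slack, so the choice of constants in the definition of $s$ (see line~4 of \MainEp) is precisely calibrated for this step to go through.
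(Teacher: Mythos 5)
Your proposal is correct and follows exactly the paper's own proof: apply Lemma~\ref{cor:we-can-add-m+l+1-elements} to obtain $\calV'$ with the cost bound, use Lemma~\ref{lem:m=O(l+1)} to get $|\calV'| \leq m+l+1 \leq (8C+2)(l+1) = s$, and note $\calV' \subseteq \po$ since $\calV^*_{k+l+1}$ is proper. Nothing is missing.
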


\begin{proof}
    According to \Cref{cor:we-can-add-m+l+1-elements}, there exist a $\calV' \subseteq \calV^*_{k+l+1}$ of size at most $m+l+1$ such that
    $$\cost{\uo + \calV', \po} \leq 10 \ \OPT_{k+l+1}(\po).$$
    Note that $\calV' \subseteq \calV^*_{k+l+1} \subseteq \po$ since $\calV^*_{k+l+1}$ is a proper solution.
    We also have
    $$ |\calV'| \leq m+l+1 \leq (8C+2)(l+1) = s, $$
    where the last inequality holds by \Cref{lem:m=O(l+1)}.
    This concludes the lemma.
\end{proof}

\begin{lemma}\label{eq:calFstar+existential}
Assume $\tuo$ is returned by \DevelopCenters$(\po, \uo, s)$.
Then,
$$ \cost{\tuo, \po} 
    \leq 10\beta \  \OPT_{k+l+1}(\po). $$
\end{lemma}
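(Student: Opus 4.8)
The plan is to combine two ingredients that are both already available: the existential guarantee of \Cref{cor:existential-final}, and the approximation guarantee of the \DevelopCenters\ subroutine itself, which is \Cref{lem:guarantee-develop-centers}. Recall that \Cref{lem:guarantee-develop-centers} states that if $\tuo$ is the output of $\DevelopCenters(\po, \uo, s)$, then
\[
    \cost{\tuo, \po} \;\leq\; \beta \cdot \min_{\substack{\calF \subseteq \po : |\calF| \leq s}} \cost{\uo + \calF, \po}.
\]
So the entire argument is just chaining this with the bound on the right-hand minimum provided by \Cref{cor:existential-final}.

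First I would set $s = (8C+2)(l+1)$, matching the third argument passed to \DevelopCenters\ in \Cref{sec:develop-centers} (recall $C = 12 \cdot 3 \cdot 10^5 \gamma \beta^2$). Then I would invoke \Cref{lem:guarantee-develop-centers} with this value of $s$, the point-set $\po$, and the center-set $\uo$, to get $\cost{\tuo, \po} \leq \beta \cdot \min_{\calF \subseteq \po, |\calF| \leq s} \cost{\uo + \calF, \po}$. Next I would apply \Cref{cor:existential-final}, which says exactly that this minimum is at most $10 \cdot \OPT_{k+l+1}(\po)$. Multiplying through by $\beta$ gives $\cost{\tuo, \po} \leq 10\beta \cdot \OPT_{k+l+1}(\po)$, which is the claim.

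Concretely, the derivation is the two-line chain
\[
    \cost{\tuo, \po} \;\leq\; \beta \cdot \min_{\substack{\calF \subseteq \po : |\calF| \leq s}} \cost{\uo + \calF, \po} \;\leq\; \beta \cdot 10 \cdot \OPT_{k+l+1}(\po) \;=\; 10\beta \cdot \OPT_{k+l+1}(\po),
\]
where the first inequality is \Cref{lem:guarantee-develop-centers} and the second is \Cref{cor:existential-final}. One caveat worth a sentence in the write-up: \Cref{lem:guarantee-develop-centers} holds deterministically (it only uses the metric-contraction reduction in $\DevelopCenters$ plus the $\beta$-approximate static $(s+1)$-median call of \cite{MettuP02}), so no "with high probability" qualifier is needed here, unlike some of the surrounding lemmas. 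There is essentially no obstacle — the real work was done earlier in establishing \Cref{cor:existential-final} (via the well-separated-pairs argument of \Cref{cor:we-can-add-m+l+1-elements} and the bound $m \leq (8C+1)(l+1)$ of \Cref{lem:m=O(l+1)}) and in proving \Cref{lem:guarantee-develop-centers} (via \Cref{lem:costd-costdp} relating the $d$ and $d'$ objectives). Given those, the present lemma is a one-step corollary, so I would keep the proof to the displayed chain above with a brief pointer to the two cited results.
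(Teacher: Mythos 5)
Your proof is correct and is essentially identical to the paper's own proof: both chain the guarantee of \Cref{lem:guarantee-develop-centers} with the existential bound of \Cref{cor:existential-final} for $s = (8C+2)(l+1)$ to get $\cost{\tuo,\po} \leq \beta \cdot \min_{\calF \subseteq \po, |\calF|\leq s} \cost{\uo+\calF,\po} \leq 10\beta \cdot \OPT_{k+l+1}(\po)$. No gaps.
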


\begin{proof}
According to \Cref{cor:existential-final} and the guarantee of \DevelopCenters \ in \Cref{lem:guarantee-develop-centers}, we have
\begin{eqnarray*}
    \cost{\tuo, \po} 
    \leq \beta\ 
    \min_{\substack{\calF \subseteq \po, \\ |\calF| \leq s}} \cost{\uo+\calF, \po}
   \leq 10\beta \ \OPT_{k+l+1}(\po).
\end{eqnarray*}
\end{proof}

\subsection{Analysis of Cost of $\calW$}

After termination of \DevelopCenters, the algorithm lets $\calV = \tuo + (\pl - \po)$ and calls \RandLocalSearch$ (\calV, |\calV| - k)$ to get $\calW$.

\begin{lemma}\label{lem:cost-calU-final}
    We have,
    $$
    \cost{\calV, \pl} \leq 10\beta\  \OPT_{k}(\pl).$$
\end{lemma}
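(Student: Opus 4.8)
\textbf{Proof proposal for Lemma~\ref{lem:cost-calU-final}.}

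The plan is to relate the cost of $\calV = \tuo + (\pl - \po)$ with respect to the new point-set $\pl$ to the cost of $\tuo$ with respect to the old point-set $\po$, and then invoke Lemma~\ref{eq:calFstar+existential} together with the Lazy-Updates Lemma~\ref{lem:lazy-updates}. First I would observe that $\calV$ contains all the inserted points, i.e.~$\pl - \po \subseteq \calV$, so $\calV = \tuo + (\pl - \po)$ satisfies the hypothesis of Lemma~\ref{lem:lazy-updates}-style reasoning: no point of $\pl$ that is not in $\po$ can be ``expensive'' for $\calV$, since each such point is itself a center of $\calV$ and thus is served at distance $0$. More precisely, I would argue $\cost{\calV, \pl} \leq \cost{\tuo, \po}$: every point $p \in \pl \cap \po$ is served by $\calV \supseteq \tuo$ at cost no larger than $d(p, \tuo)$, which is exactly its cost in $\cost{\tuo, \po}$, and every point $p \in \pl - \po$ is served at cost $0$ because $p \in \calV$. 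Summing over $p \in \pl$ (with weights, in the weighted version) gives $\cost{\calV, \pl} \leq \sum_{p \in \pl \cap \po} w(p)\, d(p, \tuo) \leq \cost{\tuo, \po}$.

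Next I would chain this with the two bounds already established. By Lemma~\ref{eq:calFstar+existential}, $\cost{\tuo, \po} \leq 10\beta \cdot \OPT_{k+l+1}(\po)$. Finally, since the epoch lasts $l+1$ updates, we have $\left| \po \oplus \pl \right| \leq l+1$, so the Lazy-Updates Lemma~\ref{lem:lazy-updates} (applied with $\calP = \po$, $\calP' = \pl$, and $s = l+1$) yields $\OPT_{k+l+1}(\po) \geq \OPT_{k+(l+1)}(\po) \geq \OPT_k(\pl)$. Wait — I need to be careful about the direction: Lemma~\ref{lem:lazy-updates} states $\OPT_{k+s}(\calP') \leq \OPT_k(\calP)$, i.e.~with $\calP = \pl$, $\calP' = \pl$ this is not what I want. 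The correct application is: take $\calP = \pl$ and $\calP' = \po$ is also not right. Let me reconsider: I want $\OPT_{k+l+1}(\po) \geq \OPT_k(\pl)$. Apply Lemma~\ref{lem:lazy-updates} with the roles $\calP \mapsto \pl$, $\calP' \mapsto \po$? That gives $\OPT_{k+(l+1)}(\po) \leq \OPT_k(\pl)$, the opposite. Hmm. Actually the inequality I want runs the other way, so the right move is to apply Lemma~\ref{lem:lazy-updates} with $\calP = \po$, $\calP' = \pl$: this gives $\OPT_{k+l+1}(\pl) \leq \OPT_k(\po)$ — still not directly it. The cleanest route is instead: $\OPT_{k+l+1}(\po) \geq$ \emph{something}. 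Since $\OPT$ is monotone non-increasing in the number of centers, $\OPT_{k+l+1}(\po) \le \OPT_k(\po)$, which is the wrong direction too. So I believe the genuinely needed inequality is $\OPT_{k+l+1}(\po) \ge \OPT_k(\pl)$ proven as follows: an optimal $(k+l+1)$-median solution $\calZ$ for $\po$, augmented to $\calZ' = \calZ + (\pl \setminus \po)$ where $|\pl \setminus \po| \le l+1$... no, $\calZ'$ has up to $k + 2(l+1)$ centers. The right statement is actually $\OPT_{k+l+1}(\po) \ge \OPT_{k+2(l+1)}(\pl)$ — also not it.

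Given this subtlety, I expect the main obstacle to be getting the index bookkeeping exactly right: the correct chain is almost certainly $\cost{\calV,\pl} \le \cost{\tuo,\po} \le 10\beta\,\OPT_{k+l+1}(\po)$ and then one must show $\OPT_{k+l+1}(\po) \le O(1)\cdot\OPT_k(\pl)$ — but this last step is the content of the \emph{double-sided stability} argument, not the lazy-updates lemma. Indeed, at the start of the epoch we arranged $\OPT_k(\po) \le 4\,\OPT_{k+l}(\po)$ (Lemma~\ref{optk-within-constant-factor}), hence $\OPT_{k+l+1}(\po) \le \OPT_{k+l}(\po)$ — wrong direction again; rather $\OPT_{k+l}(\po) \ge \OPT_{k+l+1}(\po)$ and $\OPT_k(\po) \le 4\OPT_{k+l}(\po)$ does not bound $\OPT_{k+l+1}$ below by $\OPT_k$. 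The resolution I would pursue: use $\left|\po \oplus \pl\right| \le l+1$ with Lemma~\ref{lem:lazy-updates} to get $\OPT_{k+(l+1)}(\po) \ge \OPT_k(\pl)$ is FALSE as written, but $\OPT_{k}(\pl) \le \OPT_{k - (l+1) + (l+1)}(\dots)$... The actual lemma gives, with $\calP = \pl, \calP' = \po, s = l+1$: $\OPT_{k+(l+1)}(\po) \le \OPT_k(\pl)$, i.e. $\OPT_k(\pl) \ge \OPT_{k+l+1}(\po)$. \emph{That is exactly what I want.} So: apply Lemma~\ref{lem:lazy-updates} with $\calP := \pl$ and $\calP' := \po$ (legitimate since $|\pl \oplus \po| \le l+1$), concluding $\OPT_{k+l+1}(\po) \le \OPT_k(\pl)$. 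Combining everything: $\cost{\calV,\pl} \le \cost{\tuo,\po} \le 10\beta\,\OPT_{k+l+1}(\po) \le 10\beta\,\OPT_k(\pl)$, which is precisely the claim. The only real work is the first inequality $\cost{\calV,\pl}\le\cost{\tuo,\po}$, which is the standard ``adding the inserted points as centers never hurts'' observation; everything else is a one-line appeal to lemmas already in hand.
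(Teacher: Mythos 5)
Your final chain $\cost{\calV,\pl}\le\cost{\tuo,\po}\le 10\beta\cdot\OPT_{k+l+1}(\po)\le 10\beta\cdot\OPT_k(\pl)$ is exactly the paper's proof, including the concluding application of the Lazy-Updates Lemma with $\calP=\pl$, $\calP'=\po$, $s=l+1$, which you correctly identify after some back-and-forth. The meandering over the direction of the inequality resolves to the right instantiation, so the argument is correct and essentially identical to the paper's.
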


\begin{proof}
According to \Cref{eq:calFstar+existential},
\begin{eqnarray*}
    \cost{\calV, \pl} &=&
    \cost{\tuo + (\pl - \po), \pl} \\
    &\leq&
    \cost{\tuo, \po} \\ 
    &\leq& 10\beta\ \OPT_{k+l+1}(\po) \\
    &\leq& 10\beta\  \OPT_{k}(\pl).
\end{eqnarray*}
\end{proof}

\begin{lemma}\label{eq:final-cost-of-calW}
    We have
    $$
    \cost{\calW, \pl} \leq  32\beta\  \OPT_k(\pl). $$
\end{lemma}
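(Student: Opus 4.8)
The plan is to obtain this bound as an immediate consequence of two facts already established: the cost guarantee for $\calV$ from \Cref{lem:cost-calU-final}, namely $\cost{\calV, \pl} \leq 10\beta \cdot \OPT_k(\pl)$, and the approximation guarantee of \RandLocalSearch \ from \Cref{lem:cost-local-search}. Recall that the algorithm sets $\calW \gets \RandLocalSearch(\calV, |\calV| - k)$, so $\calW$ is a subset of $\calV$ of size $|\calV| - (|\calV| - k) = k$, hence a feasible $k$-median solution for $\pl$.

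Concretely, I would apply \Cref{lem:cost-local-search} with point-set $\pl$, center-set $\calV$, and slack $s = |\calV| - k$. Since the relevant optimum index in that lemma is $|\calV| - s = k$, this yields (with high probability)
\begin{equation*}
\cost{\calW, \pl} \leq 2 \cdot \cost{\calV, \pl} + 12 \cdot \OPT_{k}(\pl).
\end{equation*}
Plugging in $\cost{\calV, \pl} \leq 10\beta \cdot \OPT_k(\pl)$ from \Cref{lem:cost-calU-final} gives $\cost{\calW, \pl} \leq (20\beta + 12) \cdot \OPT_k(\pl)$, and since $\beta \geq 1$ we have $20\beta + 12 \leq 32\beta$, which is exactly the claimed bound.

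There is no real obstacle here; the only points requiring a bit of care are bookkeeping ones. First, the guarantee of \RandLocalSearch \ holds only with high probability, so one should invoke the union bound already noted in the ``Important note'' at the start of this section (over the $O(T)$ calls to \RandLocalSearch) to justify treating it as deterministic in this analysis. Second, one must match the index conventions: the ``$k$'' appearing inside the statement of \Cref{lem:cost-local-search} refers to the size of the input center-set (here $|\calV|$), so that the error term $\OPT_{k-s}$ instantiates to $\OPT_{|\calV| - (|\calV|-k)} = \OPT_k$ with the paper's global parameter $k$. Once these are pinned down, the computation above completes the proof.
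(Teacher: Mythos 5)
Your proof is correct and matches the paper's argument exactly: both apply the \RandLocalSearch\ guarantee of \Cref{lem:cost-local-search} with slack $s = |\calV| - k$ to get $\cost{\calW, \pl} \leq 2\cdot\cost{\calV,\pl} + 12\cdot\OPT_k(\pl)$, then plug in \Cref{lem:cost-calU-final} and use $\beta \geq 1$ to conclude $20\beta + 12 \leq 32\beta$. Your remarks on the high-probability union bound and index bookkeeping are consistent with how the paper handles these points.
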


\begin{proof}
    
With guarantee of \RandLocalSearch \ in \Cref{lem:cost-local-search}, and with \Cref{lem:cost-calU-final}, we have
\begin{eqnarray*}
   \cost{\calW, \pl} &\leq& 2\ \cost{\calV, \pl} + 12\ \OPT_{k}(\pl) \nonumber \\
   &\leq& 20\beta\  \OPT_k(\pl) + 12\ \OPT_{k}(\pl) \nonumber \\
   &\leq& 32\beta\  \OPT_k(\pl)
\end{eqnarray*}

\end{proof}

\subsection{Analysis of Cost of \Robustify}
Finally, we analyze the cost of \Robustify\  on $\calW$.

\begin{lemma}[Lemma 7.5 in \cite{soda/FichtenbergerLN21}]\label{lem:cost-robustify}
    Assume $\calW$ is a set of $k$ centers. If $\calU$ is the output of \Robustify\ on $\calW$, then
    $$ \cost{\calU, \pl} \leq \frac{3}{2} \ \cost{\calW, \pl}. $$
\end{lemma}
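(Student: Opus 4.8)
# Proof Plan for Lemma (Cost of \Robustify)

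The plan is to reduce this statement to the already-established cost guarantee of the abstract \Robustify\ subroutine, namely Lemma~\ref{lem:cost-robustify-part-1} (which appears in the extended abstract as "\Cref{lem:cost-robustify-part-1}") and, in the full-version notation, the same statement as it was proved for the original \cite{soda/FichtenbergerLN21}-style procedure. The key point is that our \Robustify\ (together with its helper subroutines \FindSuspects, \EmptyS, \MakeRbst) is designed so that the only operations that actually \emph{change} the center set $\calW$ are the calls to \MakeRbst; everything else (adding a center to $\calS$, checking conditions) leaves $\calW$ untouched. So the net effect on $\calW$ is: for some subset of centers, $w$ is replaced by $w_0 \leftarrow \MakeRbst(\pl, w, t)$. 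Thus I would first argue that \Robustify\ calls \MakeRbst\ at most once per center (this is exactly Lemma~\ref{lem:robustify-calls-once}, already proved in the excerpt), so the transformation from the input $\calW$ to the output $\calU$ is precisely of the form handled by the original cost-robustify argument.

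The heart of the argument I would reuse is the following. Order the centers of the input $\calW = \{w_1,\dots,w_k\}$ so that $w_1,\dots,w_r$ are those on which \MakeRbst\ is eventually called (in chronological order of those calls), and $w_{r+1},\dots,w_k$ are untouched; let $w_i'$ be the replacement of $w_i$ (with $w_i' = w_i$ for $i > r$). For each $j \le r$, the integer $t(w_j')$ chosen in \EmptyS\ is the smallest with $10^{t(w_j')} \ge d(w_j, \calW_j - w_j)/100$ where $\calW_j$ is the center set just before that call; hence $10^{t(w_j')} \le d(w_j, w_i')/10$ for $i<j$ and $10^{t(w_j')} \le d(w_j,w_i)/10$ for $i>j$. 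Combining these with \Cref{lem:robustness-property-1} (which bounds $d(w_i, w_i') \le 10^{t(w_i')}/2$), one gets, for all $i \ne j$, $d(w_i, w_j) > 2\cdot 10^{t(w_j')}$. This forces $\text{Ball}^{\pl}_{10^{t(w_j')}}(w_j) \subseteq C_{w_j}(\calW, \pl)$ — the ball used to define the $t(w_j')$-robust sequence is contained in the cluster of $w_j$ in the \emph{input} partition. I would then apply \Cref{lem:robustness-property-2} to conclude $\cost{w_j', C_{w_j}(\calW, \pl)} \le \tfrac{3}{2}\cost{w_j, C_{w_j}(\calW,\pl)}$ for $j \le r$, and since $w_j' = w_j$ trivially gives equality for $j > r$, summing over the partition $\{C_{w_j}(\calW,\pl)\}_{j=1}^k$ of $\pl$ yields $\cost{\calU,\pl} \le \cost{\{w_1',\dots,w_k'\},\pl} \le \tfrac{3}{2}\cost{\calW,\pl}$.

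The one subtlety specific to the full-version setup — and the step I expect to need the most care — is that our modified \MakeRbst\ picks $q_i$ via \FastOneMedian, so $q_i$ is only guaranteed to be a $3$-approximate $1$-median of $B_i$ with high probability, rather than an exact minimizer; and the robustness definition in the full version (\Cref{def:bounded-robust}'s underlying $t$-robust-sequence definition) is stated with the relaxed condition $\cost{q_i,B_i} \le \min\{3\,\OPT_1(B_i), \cost{p_i,B_i}\}$ precisely to absorb this. I would verify that \Cref{lem:robustness-property-1} and \Cref{lem:robustness-property-2} — which are the only structural facts used above — still hold under this relaxed definition (they do, since their proofs only use $\cost{p_{j-1},B_j} \le \cost{p_j,B_j}$, which is part of the relaxed condition). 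Hence the entire argument goes through verbatim, and the $3/2$ factor is unaffected by the relaxation.

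So in summary: (i) invoke Lemma~\ref{lem:robustify-calls-once} to see $\calU$ is obtained from $\calW$ by at-most-one \MakeRbst\ per center; (ii) order centers by call time and extract the distance inequalities $10^{t(w_j')} \le d(w_j,\cdot)/10$ from the choice of $t$ in \EmptyS; (iii) deduce $d(w_i,w_j) > 2\cdot 10^{t(w_j')}$ for all $i\ne j$, hence $\text{Ball}^{\pl}_{10^{t(w_j')}}(w_j) \subseteq C_{w_j}(\calW,\pl)$; (iv) apply \Cref{lem:robustness-property-2} clusterwise to get the $3/2$ per-cluster bound; (v) sum over the partition of $\pl$. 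The main obstacle is step (iv)'s dependence on the robustness machinery holding under the \FastOneMedian-relaxed definition, which I would dispatch by pointing out that the proofs of \Cref{lem:robustness-property-1} and \Cref{lem:robustness-property-2} only use the comparison $\cost{p_{j-1},B_j}\le\cost{p_j,B_j}$ guaranteed by that definition.
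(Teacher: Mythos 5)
Your proposal is correct and follows essentially the same route as the paper's proof: one call to \MakeRbst\ per center via \Cref{lem:robustify-calls-once}, the chronological ordering and the $10^{t(w_j')} \le d(w_j,\cdot)/10$ inequalities, the resulting ball-in-cluster containment, and a clusterwise application of \Cref{lem:robustness-property-2}. Your added check that the relaxed (\FastOneMedian-compatible) definition of a $t$-robust sequence still supports \Cref{lem:robustness-property-1,lem:robustness-property-2} is a correct observation already built into the full-version definitions.
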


\begin{proof}
By \Cref{lem:robustify-calls-once}, \Robustify \ calls \MakeRbst \ on each center at most once. 
Let $r$ be the number of centers in $\calW$ for which a call to \MakeRbst \ has happened.
Assume
$\calW = \{w_1,\ldots, w_k\}$ is an ordering by the time that \MakeRbst \ is called on the centers $w_1,w_2,\ldots,w_r$ and the last elements are ordered arbitrarily.
Assume also that $\calU =  \{w'_1,\ldots, w'_k\}$ where $w_i'$ is obtained by the call \MakeRbst \ on $w_i$ for each $1 \leq i\leq r$ and $w_i' = w_i$ for each $r+1 \leq i \leq k$.

For every $1 \leq j \leq r$, integer $t(w_j')$ is the smallest integer satisfying
$$ 10^{t(w_j')} \geq d(w_j,\{ w_1',\ldots,w_{j-1}',w_{j+1},\ldots,w_k \})/100. $$
Hence,
\begin{equation}\label{eq:proof-cost-robustify-dwjwiprime}
    10^{t(w_j')} \leq d(w_j,w_i')/10 \quad \forall i < j
\end{equation}
and
\begin{equation}\label{eq:proof-cost-robustify-dwjwi}
    10^{t(w_j')} \leq d(w_j,w_i)/10 \quad \forall i > j.
\end{equation}
Now, for every $1 \leq i < j \leq r$, we conclude
\begin{equation*}
    10\cdot 10^{t(w_j')} \leq d(w_i',w_j) \leq  d(w_i,w_j) + d(w_i,w_i') \leq d(w_i,w_j) + 10^{t(w_i')}/2 \leq \frac{11}{10} \ d(w_i,w_j).
\end{equation*}
The first inequality holds by \Cref{eq:proof-cost-robustify-dwjwiprime}, the third one holds by \Cref{lem:robustness-property-1}  (note that $w_i' = $ \MakeRbst$(w_i)$) and the last one holds by \Cref{eq:proof-cost-robustify-dwjwi} (by exchanging $i$ and $j$ since $i < j$).
This concludes for each $i < j$, we have $d(w_i,w_j) > 2 \cdot 10^{t(w_j')}$.
The same inequality holds for each $i>j$ by \Cref{eq:proof-cost-robustify-dwjwi}.
As a result,
$d(w_i,w_j) > 2 \cdot 10^{t(w_j')}$ for all $i \neq j$.
This concludes
$$\text{Ball}_{10^{t(w_j')}}^\calP(w_j) \subseteq C_{w_j}(\calW, \calP), $$
for every $1 \leq j \leq r$.
By applying \Cref{lem:robustness-property-2}, we have
\begin{equation}\label{eq:proof-cost-robustify-cost-wjprime-less-3-over-2}
   \cost{w_j',C_{w_j}(\calW, \calP)} \leq \frac{3}{2}\  \cost{w_j, C_{w_j}(\calW, \calP)} \quad \forall 1 \leq j \leq r. 
\end{equation}
Finally, since $C_{w_j}(\calW, \calP)$ for $1 \leq j \leq k$ form a partition of $\calP$, we conclude
\begin{eqnarray*}
    \cost{\calU, \calP} &=& 
    \sum_{j=1}^k \cost{\calU, C_{w_j}(\calW, \calP)} \\
    &\leq& 
    \sum_{j=1}^k \cost{w_j', C_{w_j}(\calW, \calP)} \\
    &=&
    \sum_{j=1}^r \cost{w_j', C_{w_j}(\calW, \calP)} + \sum_{j=r+1}^k \cost{w_j', C_{w_j}(\calW, \calP)} \\
    &\leq&
    \frac{3}{2}\  \sum_{j=1}^r \cost{w_j, C_{w_j}(\calW, \calP)} + \sum_{j=r+1}^k \cost{w_j, C_{w_j}(\calW, \calP)} \\
    &\leq&
    \frac{3}{2}\  \sum_{j=1}^k \cost{w_j, C_{w_j}(\calW, \calP)} \\
    &=&\frac{3}{2}\  \cost{\calW, \calP}.
\end{eqnarray*}
The second inequality holds by \Cref{eq:proof-cost-robustify-cost-wjprime-less-3-over-2}.

\end{proof}

\begin{corollary}\label{approx-final-end-of-epoch}
    If $\calU$ is the final solution of the algorithm for the epoch, then
    $$\cost{\calU, \pl} \leq 50\beta \ \OPT_{k}(\pl). $$
\end{corollary}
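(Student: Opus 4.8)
The statement to prove is \Cref{approx-final-end-of-epoch}: the final solution $\calU$ at the end of an epoch satisfies $\cost{\calU, \pl} \leq 50\beta \cdot \OPT_k(\pl)$. The plan is to combine two ingredients that have already been established in the preceding subsections: the cost bound on $\calW$ right before the call to \Robustify, namely $\cost{\calW, \pl} \leq 32\beta \cdot \OPT_k(\pl)$ from \Cref{eq:final-cost-of-calW}, and the multiplicative guarantee of \Robustify, namely $\cost{\calU, \pl} \leq \frac{3}{2} \cdot \cost{\calW, \pl}$ from \Cref{lem:cost-robustify}.

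First I would recall that $\calU = \Robustify(\po, \pl, \calW, \uo)$ is the final solution of the epoch, and that the input to this call, $\calW$, is a set of exactly $k$ centers (this is guaranteed by the analysis of \RandLocalSearch \ which reduces $\calV$ to size $k$). Then, by \Cref{lem:cost-robustify} applied with $\calP = \pl$, we have $\cost{\calU, \pl} \leq \frac{3}{2} \cdot \cost{\calW, \pl}$. Next, by \Cref{eq:final-cost-of-calW}, we have $\cost{\calW, \pl} \leq 32\beta \cdot \OPT_k(\pl)$. Chaining these two inequalities gives
$$\cost{\calU, \pl} \leq \frac{3}{2} \cdot 32\beta \cdot \OPT_k(\pl) = 48\beta \cdot \OPT_k(\pl) \leq 50\beta \cdot \OPT_k(\pl),$$
which is exactly the claim. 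The slack between $48\beta$ and $50\beta$ is deliberate, so that the hypothesis needed at the start of the next epoch (that $\uo$ is $50\beta$-approximate) is restored, closing the induction.

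This corollary is essentially a two-line bookkeeping step, so I do not anticipate any real obstacle — the substantive work was done earlier in establishing \Cref{eq:final-cost-of-calW} (which itself rests on the guarantees of \DevelopCenters, \RandLocalSearch, and the Double-Sided Stability Lemma) and \Cref{lem:cost-robustify} (which rests on the robustness machinery and \Cref{lem:robustify-calls-once}). The only point requiring a moment of care is confirming that the randomized guarantees hold with high probability over the whole stream; but as noted in the ``Important note'' at the start of the section, a union bound over the $O(T)$ calls to \RandLocalSearch \ and $\tilde{O}(T)$ calls to \FastOneMedian \ ensures all of them succeed w.h.p., so we may use these bounds without repeatedly invoking ``w.h.p.'' Together with the recourse analysis (\Cref{sec:recourse}) and running time analysis, this corollary completes the proof that the algorithm maintains a $O(1)$-approximate solution throughout, since \Cref{approx-ratio-during-lazy-updates} handles the within-epoch guarantee and \Cref{approx-final-end-of-epoch} handles the epoch boundaries.
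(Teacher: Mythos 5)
Your proof is correct and matches the paper's own argument exactly: both chain \Cref{lem:cost-robustify} with \Cref{eq:final-cost-of-calW} to get $\frac{3}{2}\cdot 32\beta = 48\beta \leq 50\beta$. No issues.
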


\begin{proof}
    According to \Cref{eq:final-cost-of-calW} and \Cref{lem:cost-robustify}, we have
    $$ \cost{\calU, \pl} \leq \frac{3}{2}\ \cost{\calW, \pl} \leq \frac{3}{2}\cdot 32\beta \ \OPT_{k}(\pl) \leq 50\beta \ \OPT_{k}(\pl). $$
\end{proof}

\subsection{Final Approximation Ratio}

According to \Cref{approx-ratio-during-lazy-updates}, the approximation ratio of the main solution during the epoch is at most $2\cdot 10^7\gamma\beta^2$. The approximation ratio of the main solution at the beginning of (and consequently end of) each epoch is at most $50\beta$ by \Cref{approx-final-end-of-epoch}.
As a result, the approximation ratio of the main solution is always at most 
$$2\cdot 10^7\gamma\beta^2 = \max \{2\cdot 10^7\gamma\beta^2, 50\beta \}. $$

    \section{Analysis of Update Time}

In this section, we prove that we can implement the algorithm that runs in amortized $\Tilde{O}(n)$ time.

\subsection{Sorted Distances to Main Centers}
Assume that $\calU$ is the current solution and $\calP$ is the current space.
For any point $p \in \calP + \calU$, we maintain a balanced binary search tree $\calT_p$ that contains the centers $\calU$ sorted by their distances to $p$.
Every time a new $p$ is inserted to $\calP$, we build the tree $\calT_p$ which takes $\tilde{O}(k)$ time (since we should sort the distance of $p$ to centers in $\calU$).
The total time needed to construct these trees after $T$ updates in the input stream is at most $\tilde{O}(Tk)$.

Whenever a point is deleted from $\calP$, if it is not present in our solution $\calU$, we remove the tree $\calT_p$, otherwise we keep the tree.
As a result, at any time the number of trees that we have is at most $n+k$, where $n$ is the size of the current space.

Every time a center $u$ is deleted from the main solution $\calU$ and $u'$ is inserted to $\calU$, for every tree $\calT_p$, we remove the element $u$ from $\calT_p$ and add $u'$ to $\calT_p$.
For each $p$, this can be done in $O(\log k)$ time.
So, the overall time consumed for updating these $n+k$ trees after each change in $\calU$ would be at most $\tilde{O}(n+k) = \tilde{O}(n)$, where $n$ is the size of the current space.
Since, we have already proved in \Cref{sec:recourse}, that the total change of the main solution during $T$ updates is at most $O(T(\log \Delta)^2)$, then the total time consumed for updating all of the trees throughout the first $T$ updates in the input stream is at most
$\tilde{O}(Tn)$,
where $n$ is the maximum size of the space during these $T$ updates.

As you can see, the total time needed for constructing and maintaining these trees throughout the first $T$ updates in the input stream is at most $\tilde{O}(Tn)$, where $n$ is the maximum size of the space during these $T$ updates.
Obviously, this is $\tilde{O}(n)$ in amortization.

So, we assume that we always have access to sorted distances of $p$ to all centers $\calU$, where $p$ is either present in the current space or in the main solution $\calU$.

\subsection{\RandLocalSearch}

In \cite{FOCS24kmedian}, the authors provide an implementation for this subroutine that runs in  $\tilde{O}(ns) $ time.
This is with the assumption that we have access to sorted distances of each $p \in \calP$ to centers in $\calU$.
We have these sorted distances by the argument in the previous section.
So, we conclude that we can run \RandLocalSearch, in $\tilde{O}(ns)$ time.
We refer the reader to Lemma 7.2 in \cite{FOCS24kmedian}, to see the full implementation and arguments.

\subsection{\RemoveCenters}

In this subroutine, we run a \RandLocalSearch \ for each $r \in \calR$ starting from the smallest value, until we reach the largest $r$ specified in the algorithm.
For each $r \in \calR$ that we call \RandLocalSearch$(\po, \uo, r)$,
the running time of \RandLocalSearch \ is $\tilde{O}(nr) $.
Since $|\calR| = O(\log k)$, by the description of the algorithm, it is obvious that the most costly \RandLocalSearch \ that we call is the last one which takes at most $\tilde{O}(n(2l')) = \tilde{O}(nl')$ time.
Finally, since $l+1 = \Omega(l'+1)$,
we can charge the total running time of this part to the $l+1$ updates in the epoch, which is $\tilde{O}(n)$ in amortization.
More precise, since we have not performed these $l+1$ updates yet at the beginning of the epoch, we can not charge the running time to these updates.
But, we can first charge them to the previous $l+1$ elements and at the end of the epoch, we take them back and charge them to $l+1$ updates in the epoch.
So, at any time during the algorithm, we charged $\tilde{O}(n)$ units to any update at most twice which is still $\tilde{O}(n)$.

\subsection{\DevelopCenters}

In this subroutine, we define a new space $\calP' = (\calP - \calU) + u^*$ with a new metric $d'$ as \Cref{def:new-metric}.
Since we have access to trees $\calT_x$, we know the value of $d(x,\calU)$ in constant time.
Note that in this subroutine, we only search for $\tu \subseteq \calP + \calU$.
So, we only need access to sorted distances of points in $\calP + \calU$ (instead of every point in $ \ground$) to centers in $\calU$ to be able to do this fast.

So, according to the definition of metric $d'$ in \Cref{def:new-metric}, for every $x \in \calP'-u^*$, we can evaluate $d'(x,u^*)$ in constant time using tree $\calT_x$ and for every $x,y \in \calP' - u^*$ we can evaluate $d'(x,y)$ in constant time using trees $\calT_x$ and $\calT_y$.
This means that we can design an oracle $D'$ that evaluates $d'(x,y)$ for every given $x,y \in \calP'$ in constant time.

Next, we run any fast $\beta$ approximate algorithm for \textbf{proper} $(s+1)$-median problem on $\calP'$ using oracle $D'$.
So, we have the desired $\calF$ in $\tilde{O}(|\calP'|(s+1))$ time which is $\tilde{O}(n(l+1))$ since $s = O(l+1)$ and $|\calP'|$ is at most the size of $\calP$.
Finally, we can charge this running time to the $l+1$ updates of the epoch which leads to an amortized running time of $\tilde{O}(n)$.

Note that in above arguments $n$ is the size of the current space $\calP$, not the underlying ground metric space $\ground$.

\subsection{\FastOneMedian}

The sampling process can be done easily.
First, consider an arbitrary ordering $B= \{ b_1,b_2, \ldots, b_{|B|}\}$.
Then sample a uniformly random $r \in [0,1]$.
Finally, we pick the smallest ${i^*}$ such that 
$$\sum_{j=1}^{i^*} w(b_j)/w(B) \geq r. $$
This ${i^*}$ indicates that our sample is $b_{i^*}$.
It is very easy to see that each $b_i$ is samples proportional to its weight $w(b_i)$ and also sampling a point takes $O(|B|)$ time.
The number of samples is $\tilde{O}(1)$.
So, the total time we need to sample these points is at most $\tilde{O}(|B|)$.
For each sample $s$, we iterate over all elements of $B$ to find $\cost{s,B}$ which takes $|B| = O(n)$ time.
Hence, the total running time of \FastOneMedian\ is at most $\tilde{O}(n)$.

\subsection{\FindSuspects}

Assume an epoch of length $l+1$.
We can find $\calW - \uo$ in $O(k)$ time by a simple iteration over centers $w \in \calW$ and check whether $w$ is in our main solution $\uo$ at the beginning of the epoch.
The number of points $p \in \po \oplus \pl$ is at most $l+1$.
The number of centers $u \in \uo \cap \calW$ is at most $k$ and these centers can be found in the same way as $\calW - \uo$.
For each $p$ and $u$ we can check $d(p,u) \leq 2 \cdot 10^{t(u)}$ in constant time.
In total, we can run \FindSuspects \ in $O(k(l+1))$ time.
Now, we charge this running time to $l+1$ updates of this epoch which gives an amortized update time of $O(k)$.

\subsection{\MakeRbst}

The main for loop is performed $t$ times where $t \leq \lceil \log \Delta \rceil = \tilde{O}(1)$ (since $\Delta$ is the aspect ratio of the metric space). For each $1 \leq i \leq t$, first we iterate over all points of $\calP$ to find $B_i$. This takes $O(n)$ time.
Finding $\avcost{w_i, B_i}$ also takes at most $|B_i| = O(n)$ time.
The most expensive part is calling \FastOneMedian, which takes $\tilde{O}(n)$ time.
Hence, the total running time of \MakeRbst\ is at most $\tilde{O}(n)$, where $n$ is the size of the current space.

\subsection{\Robustify}

At the beginning, a call to \FindSuspects \ is made which runs in amortized $O(k)$ time.
The analysis of \Robustify \ highly depends on the fact that total change of the main solution is $\tilde{O}(1)$ in amortization.
A single call to \Robustify \ might be costly, although we show that the amortized update time remains $\tilde{O}(n)$.

We show that throughout the first $T$ updates of the input stream, the running time of all calls to \Robustify \ is at most $\tilde{O}(Tn)$. 
\Robustify \ calls \EmptyS \ in Line~\ref{emptyS-line-robustify}, and have a loop in Line~\ref{for-loop-robustify}.
Assume the number of times that the algorithm calls \EmptyS \ is $N$. Note that $N$ is also the number of times \Robustify \  performs the for loop in Line~\ref{for-loop-robustify}.

$N$ is at most the number of elements added to $\calS$. This is because a call to \EmptyS \ is made only if $\calS$ is not empty, and when that happens, $\calS$ becomes empty through a call to \EmptyS.
So, to call \EmptyS \ again, at least one element should be added to $\calS$.
Finally, by the analysis in \Cref{sec:recourse}, the number of elements added to $\calS$ is at most $\tilde{O}(T)$ in total, which concludes $N = \tilde{O}(T)$.
The most costly part in \EmptyS \ is \MakeRbst.
Note that we have the value of $d(w,\calW-w)$ in constant time using tree $\calT_w$. The first element of the tree is $w$ since $d(w,w)=0$ and the second element is $w'$ such that $d(w,w') = d(w,\calW-w)$.
We will show that we can perform \MakeRbst \ in $\tilde{O}(n)$ time.
So, the total time spent by the algorithm for performing all of the calls to \EmptyS \ is at most  $N \cdot \tilde{O}(n) = \tilde{O}(Tn)$.

Now, consider the for loop in Line~\ref{for-loop-robustify}.
In this for loop, we are iterating over all elements of $\calW$ and the total running time to perform it is $O(k)$.
Note that we have the value of $d(u,\calW - u)$ in constant time for each $u \in \calW$ using tree $\calT_u$.
The total running time of this for loop is $ N \cdot O(k) = \tilde{O}(Tk)$.

In total, the running time of \Robustify \ throughout the first $T$ updates is at most $\tilde{O}(Tn)$ which is $\tilde{O}(n)$ in amortization.

\subsection{Total Running Time}

According to the implementation provided in this section, the total running time of the algorithm after $T$ updates in the input stream is at most $\tilde{O}(Tn)$, where $n$ is the maximum size of the space throughout these $T$ updates.
Note that $n$ is not the size of the underlying ground metric space $\ground$.
This is important since we are reducing the amortized running time to $\tilde{O}(k)$ in next section and we need the guarantee that $n$ is the maximum size of the space throughout the run of the algorithm.

    \section{Obtaining $\tilde{O}(k)$ Amortized Update Time}\label{sec:towards-update-time-O(k)}

In this section, we show how to reduce the amortized running time of the algorithm to $\tilde{O}(k)$ instead of $\tilde{O}(n)$.
Note that $n$ is the maximum size of the space at any time during the whole run of the algorithm (it is \textbf{not} the size of the underlying ground metric space $\ground$).
As a result, if we make sure that the size of the space is at most $\tilde{O}(k)$ at any point in time, then the amortized running time of the algorithm would be $\tilde{O}(k)$ as desired.

We use the result of \cite{ourneurips2023} to sparsify the input.
A simple generalization of this result is presented in Section 10 of \cite{FOCS24kmedian} which extends this sparsifier to weighted metric spaces.
The authors provided an algorithm to sparsify the space to $\tilde{O}(k)$ weighted points. 
More precisely, given a dynamic metric space $(\calP, w, d)$ and parameter $k \in \mathbb{N}$, there is an algorithm that maintains a dynamic metric space $(\calP', w', d)$ in $\tilde{O}(k)$ amortized update time such that the following holds.
\begin{itemize}
    \item $\calP' \subseteq \calP$  and the size of $\calP'$ at any time is $\tilde{O}(k)$.
    
    \item A sequence of $T$ updates in $(\calP, w, d)$, leads to a sequence of $\lambda \cdot T$ updates in $(\calP', w', d)$. In other words, the amortized recourse of $\calP'$ is at most $\lambda$.

    \item Every $\alpha$ approximate solution to the $k$-median problem in the metric space $(\calP', w', d)$ is also a $O(\alpha)$ approximate solution to the $k$-median problem in the metric space
    $(\calP,w,d)$ with probability at least $1- \tilde{O}(1/n^c)$.
\end{itemize}

The sparsifier of \cite{FOCS24kmedian} works by maintaining $O(\log \Delta)$ copies of the sparsifier of \cite{ourneurips2023}, feeding each one a subset of the metric space consisting of the points whose weights are in a given range, and taking the union of these spasifiers. Thus, it follows that the amortized recourse of this sparsifier is at most the amortized recourse of the sparsifier of \cite{ourneurips2023}. The authors of \cite{FOCS24kmedian} argue that the recourse of \cite{ourneurips2023} is at most $\tilde  O(1)$, and thus $\lambda = \tilde O(1)$. In \cite{simple-kcenter}, the authors give a more refined analysis of the recourse of \cite{ourneurips2023}, showing that it is constant. Thus, it follows that $\lambda = O(1)$.

Now, suppose that we are given a sequence of updates $\sigma_1, \sigma_2, \ldots, \sigma_T$ in a dynamic metric space $(\calP,w,d)$.
Instead of feeding the metric space $(\calP,w,d)$ directly to our algorithm, we can
perform this dynamic sparsification to obtain a sequence of updates $\sigma'_1, \sigma'_2, \ldots, \sigma'_{T'}$ for a metric space $(\calP',w',d)$, where $T' = O(T)$, and feed the metric space $(\calP',w',d)$ to our dynamic algorithm instead.
Since our algorithm maintains a $O(1)$ approximate solution $\calU$ to the $k$-median problem in $(\calP',w',d)$, then $\calU$ is also a $O(1)$ approximate solution for the $k$-median problem in $(\calP,w,d)$ with high probability.

Since the length of the stream is multiplied by $O(1)$, we would have a multiplicative overhead of $O(1)$ in the amortized update time and recourse (amortized w.r.t.~the original input stream). 
As a result, we have \Cref{main-theorem-2}.

    \newpage

    \bibliography{bibl.bib}
    \bibliographystyle{alpha}

\end{document}